\documentclass[aps,10pt,twocolumn,showpacs,pra,citeautoscript,amsmath,amssymb,floatfix,nofootinbib]{revtex4-1}
\usepackage[utf8]{inputenc}
\usepackage[T1]{fontenc}
\usepackage{amsthm}
\usepackage{braket}
\usepackage{color}
\usepackage{graphicx}
\usepackage{verbatim}

\definecolor{dullmagenta}{rgb}{0.4,0,0.4}   
\definecolor{darkblue}{rgb}{0,0,0.4}
\usepackage[bookmarks,colorlinks,breaklinks]{hyperref}  
\hypersetup{linkcolor=red,citecolor=blue,filecolor=dullmagenta,urlcolor=darkblue} 

\usepackage{pgfplots}
\pgfplotsset{compat=1.14}

\DeclareMathOperator{\sign}{sign}

\DeclareMathOperator{\Span}{span}

\newcommand{\vertiii}[1]{{\left\vert\kern-0.25ex\left\vert\kern-0.25ex\left\vert #1 
    \right\vert\kern-0.25ex\right\vert\kern-0.25ex\right\vert}}
    
\allowdisplaybreaks




\newtheorem{corollary}{Corollary}
\newtheorem{lemma}{Lemma}

\newtheorem{proposition}{Proposition}
\newtheorem{theorem}{Theorem}




\usepackage[normalem]{ulem} 

\newcommand{\ketbra}[2]{\ket{#1}\!\!\bra{#2}}

\makeatletter
\newcommand*{\defeq}{\mathrel{\rlap{%
                     \raisebox{0.3ex}{$\m@th\cdot$}}%
                     \raisebox{-0.3ex}{$\m@th\cdot$}}%
                     =}
\makeatother


\newcommand\mpwS[1]{{\let\helpcmd\sout\parhelp#1\par\relax\relax}}
\long\def\parhelp#1\par#2\relax{%
	\helpcmd{#1}\ifx\relax#2\else\par\parhelp#2\relax\fi%
}



\usepackage{chapterbib}

\begin{document}

\title{Approximate quantum non-demolition measurements}
\author{Sami Boulebnane}
\author{Mischa P.\ Woods}
\author{Joseph M.\ Renes}
    \affiliation{%
     Institute for Theoretical Physics, ETH Zürich, Switzerland
    }%

\begin{abstract}
With the advent of gravitational wave detectors employing squeezed light, quantum waveform estimation---estimating a time-dependent signal by means of a quantum-mechanical probe---is of increasing importance.
As is well known, backaction of quantum measurement limits the precision with which the waveform can be estimated, though these limits can in principle be overcome by ``quantum nondemolition'' (QND) measurement setups found in the literature. Strictly speaking, however, their implementation would require infinite energy, as their mathematical description involves Hamiltonians unbounded from below. 
This raises the question of how well one may approximate nondemolition setups with finite energy or finite-dimensional realizations. 
Here we consider a finite-dimensional waveform estimation setup based on the ``quasi-ideal clock'' and show that the estimation errors due to approximating the QND condition decrease slowly, as a power law, with increasing dimension. 
As a result, we find that good QND approximations require large energy or dimensionality.  
We argue that this result can be expected to also hold for setups based on truncated oscillators or spin systems.
\end{abstract}

\maketitle


The general problem of waveform estimation is to estimate a classical time-dependent signal $x(t)$ by coupling it to a probe system and repeatedly measuring the probe.
The difficulty in using quantum probe systems is that measurement causes back-action on the probe, limiting the overall precision of the scheme~\cite{braginsky_quantum_1992}. 
This is relevant not only to very small probes, e.g.\ optomechanical systems~\cite{kippenberg_cavity_2008,cripe_measurement_2019}, but also very large, such as LIGO~\cite{ligo_scientific_collaboration_enhanced_2013}, which has recently begun an observing run employing squeezed light~\cite{ligopr} as suggested by Caves~\cite{caves_quantum-mechanical_1981}.

To circumvent these limitations, a specific class of measurements---known as ``quantum nondemolition''---was identified~\cite{braginskii_quantum-mechanical_1975} and explored, particularly with application to gravitational wave detectors (see, e.g.\ \cite{braginsky_quantum_1980,caves_measurement_1980} and references therein). 
An observable $\hat O(t)$ (regarded in the Heisenberg picture) is quantum nondemolition if $[\hat O(t),\hat O(t')]=0$ for all $t$ and $t'$. 
When this condition only holds at discrete times, the observable is termed stroboscopic, otherwise continuous.

A static observable is a simple case of a QND observable in which $\hat O(t)=\hat O(t')$, either for all times (continuous) or periodically spaced times (stroboscopic). 
A prominent example useful for metrology is the ``back-action evading measurement'' of the co-rotating position quadrature of a harmonic oscillator~\cite{thorne_quantum_1978,caves_measurement_1980}.
More recently, building on Koopman's formulation of classical mechanics in Hilbert space~\cite{koopman_hamiltonian_1931}, Tsang and Caves showed how appropriate coupling of several quantum systems enables one to construct a collection of continuous QND  observables which satisfy any desired classical equations of motion~\cite{tsang_evading_2012}. 
For instance, the center of mass $\frac12(\hat q_1+\hat q_2)$ and relative momentum $\hat p_1-\hat p_2$ of two uncoupled oscillators, one of mass $m$ and the other of \emph{negative} mass $-m$, are QND observables of the position and momentum of a classical oscillator~\cite{tsang_coherent_2010}.

Unfortunately, as in this example, the mathematical description of non-static QND observables relies on unphysical Hamiltonians whose implementation would require infinite energy~\cite{sami}.
The mathematical issues are similar to those first raised by Pauli, of whether or not a time observable can exist in quantum theory~\cite{pauli_quantentheorie_1926,galapon_paulis_2002}. 
Of course, one need only approximate the QND condition by finite-dimensional or finite-energy truncations. 
For instance, to emulate the negative mass oscillator, one can employ symmetric red and blue sidebands of a carrier frequency~\cite{tsang_evading_2012} or use spin systems~\cite{julsgaard_experimental_2001}. 
The latter have already been applied to magnetometry~\cite{wasilewski_quantum_2010} and position measurement~\cite{moller_quantum_2017}; possible applications to LIGO were recently examined by Khalili and Polzik~\cite{khalili_overcoming_2018}. 
The question then becomes how the approximation limits the estimation scheme. 
These limitations may bear (among others) upon the properties of the waveform $x\left(t\right)$ to be estimated, or the frequency at which one is allowed to perform measurements. 
It is an interesting question of principle, and potentially of practical relevance in the near future, to determine how stringent these restrictions are for a given dimension or energy constraint. 
For instance, does the approximate quantum nondemolition setup approach the exact one exponentially fast as a function of dimension/energy, or with a slower convergence? 
One may regard the approximation as especially forgiving in the former case, as only a very modest investment would be needed to obtain excellent performance.

Another possible approximate QND system that we explore here involves the ``quasi-ideal clock'' of \cite{woods_autonomous_2019}, a finite-dimensional approximation of an idealized clock governed by the (unbounded) Hamiltonian $\hat H=v\hat p$, for some constant velocity $v$.
The dynamics of the idealized clock are just pure translation, $\hat q(t)=\hat q(0)+vt$ and $\hat p(t)=\hat p(0)$, as for the case of a classical free particle, and so its position records the time. 
Indeed, the idealized clock can be viewed as an instance of Tsang and Caves's construction, since the center of mass and relative momentum of two free (quantum) particles with opposite masses also satisfy the equations of motion of the classical free particle (with $v=(\hat p_1-\hat p_2)/m$ in this case). 

The quasi-ideal clock is a particularly simple QND system, as its free dynamics approximate the idealized case exponentially well in the dimension $d$~\cite{woods_autonomous_2019}. 
Nevertheless, as we show in this Letter, when subject to repeated measurement for waveform estimation, the quasi-ideal clock only approximates the idealized nondemolition setting with errors decaying as a power law in the dimension.
Namely, backaction limits the minimum achievable measurement precision as well as the minimum detectable signal strength to scale as $d^{-\frac 14}$ and $d^{-\frac 12}$, respectively, which translates into an energy scaling of $E^{-\frac 14}$ and $E^{-\frac12}$.  
We also discuss the other options and argue that they may be expected to have a similar power law scaling. 

\emph{Description of quantum measurement}.---%
For context, we begin by reviewing the general framework of quantum measurement, following \cite[Chapter 5]{braginsky_quantum_1992}. 
Given a quantum system prepared in a state $\ket{\Psi}$, 
consider $n$ sequential measurements corresponding to the observables $\hat{q}_1, \ldots, \hat{q}_n$. 
The (non-normalized) post-measurement state, given that one observed $\widetilde{q}_1, \ldots, \widetilde{q}_n$, is given by $\ket{\Psi'} = \hat{\Omega}_n\left(\widetilde{q}_n\right)\cdots\hat{\Omega}_1\left(\widetilde{q}_1\right)\ket{\Psi}$, where $\hat{\Omega}_j(\widetilde{q}_j)$ is the Kraus operator corresponding to outcome $\widetilde q_j$ of the $j$th measurement. 
The joint probability distribution of the outcomes $\widetilde{q}_1, \ldots, \widetilde{q}_n$ is obtained from the norm squared of the latter. Typically, the Kraus operator $\hat{\Omega}_j\left(\widetilde{q}_j\right)$ is constructed by ``smearing'' the projector onto the eigenspace of $\hat{q}_j$ associated to the eigenvalue $\widetilde{q}_j$, e.g.\ by a Gaussian. 
Physically, this corresponds to making an imprecise measurement of $\hat{q}_j$.
We denote the imprecision by $\sigma_{\text{m}}$, and assume it is the same for all $j$. 
 An interesting particular case is when the observables $\hat{q}_1, \ldots, \hat{q}_n$ are given by a Heisenberg picture operator $\hat{q}$ evaluated at different times:\, $\hat{q}_j\! \defeq \hat{q}(t_j)$ ($t_1 \leq \ldots \leq t_n$), which amounts to considering a measurement of a fixed observable $\hat{q}$ at times $t_1, \ldots, t_n$.

This gives a prescription for computing the joint probability distribution of the measurement outcomes for any series of measurements. 
The moments of this disribution enjoy particularly simple expressions when the measurements are \emph{linear}, meaning the corresponding observables commute up to a scalar: $\left[\Hat{q}_j, \Hat{q}_l\right] = ic_{jl}$, for $c_{jl}\in \mathbb C$. 
Linear measurements cover several elementary quantum systems including the harmonic oscillator and free particle, which are of special interest to metrology.
Here, measurement backaction does not show up in the first moments, as $\langle \widetilde q_j\rangle = \braket{\Psi|\Hat q_j|\Psi}$. 
It does however show up in the second moments. 
Letting $B_{jl}=\left\langle(\widetilde{q}_j - \left\langle\widetilde{q}_j\right\rangle)(\widetilde{q}_l - \left\langle\widetilde{q}_l\right\rangle)\right\rangle$ and $B_{jl}^{\text{init}}$ the symmetrized correlation evaluated using $\ket{\Psi}$, \cite[Chapter 5]{braginsky_quantum_1992} shows that 
\begin{align}
B_{jl}=B_{jl}^{\text{init}}+\delta_{jl}\sigma_{\text{m}}^2 + \sum_{1 \leq n \leq j, l}\frac{c_{jn}c_{ln}}{4\sigma_{\text{m}}^2}\label{eq:linear_measurements_correlations}\,.
\end{align}
The first term describes the contributions from the wavefunction, the second term the imprecision of measurement, and the third term the quantum backaction. 
Precise measurements make the third term large, and with it the variance of the measurement result.

\emph{The quasi-ideal clock}.---%
Consider an odd $d$-dimensional Hilbert space, whose basis elements $\ket{k}$ we label using the integers $\mathbb Z_d$, ranging from $-\frac{d-1}2$ to $\frac{d-1}2$.
The Hamiltonian of the quasi-ideal clock is simply $\hat H_d=\frac{2\pi}{\sqrt d}\sum_{k\in \mathbb Z_d} k \ketbra{k}{k}$.
The discrete Fourier transform of the energy eigenstates defines the ``time eigenstates'' $\ket{\theta_j} \defeq \frac{1}{\sqrt{d}}\sum_{k\in \mathbb Z_d} e^{-\frac{2\pi ijk}{d}}\ket{k}$, which are eigenvalues of the ``time operator'' $\hat{T}_d \defeq \sum_{j\in \mathbb Z_d} j \ketbra{\theta_j}{\theta_j}$.
The time eigenstates have the property that $\ket{\theta_j}$ is transformed to $\ket{\theta_{j+1}}$ under evolution by time $1/\sqrt d$, meaning the the system stroboscopically emulates the idealized case of pure translation~\cite{salecker_quantum_1958}.
Remarkably, this feature persists for all evolution times, up to an \emph{exponentially} small error (as a function of the dimension or energy), provided the state is restricted to the ``quasi-ideal states'' of \cite{woods_autonomous_2019} (and not necessarily otherwise~\cite{peres_measurement_1980}). 
Essentially, these states consist of a Gaussian superposition of time eigenstates, with mean energy $E\propto d$ above the ground state and width growing as $d^\lambda$ for some $\lambda\in (0,1)$.

Consider, now, the quasi-ideal clock coupled to a classical waveform through the time-dependent Hamiltonian $(1 + x(t))\hat{H}_d$.
We eschew the question of how to engineer such a coupling, as our focus is on in principle limits. 
For the idealized clock, $\hat q(t)=\hat q(0)+t+\int_{0}^t\!\textrm{d}\tau\,x(\tau)$, so the waveform can in principle be reconstructed from $\widetilde q(t)$. 
The finite-dimensional analog of the position operator $\hat q$ is the rescaled time operator $\hat \xi_{j}\defeq \hat T_d(t_j)/\sqrt{d}$, as evolution by time $1/\sqrt d$ advances the clock value by precisely this amount when $x=0$. 
Hence, setting $t_j=j/\sqrt{d}$ for integer $j$, one finds that $\hat \xi_{j}-\hat \xi_{j-1}$ furnishes an estimate of $\int_{t_{j-1}}^{t_j}\!\textrm{d}\tau\,(1+x(\tau))$. 
Observe that the resulting measurements are not linear.

We take the initial state to be a quasi-ideal state of variance $\frac{\sigma_{\text{s}}^2 d}{4\pi}$, where $\frac1{ d}\ll\sigma_{\text{s}}^2\ll  d$, and the measurement precision to be given by $\sigma_{\text{m}}$. 
The evolution of the clock state between $t_{j-1}$ and $t_j$ is given by the unitary $e^{-i \hat H_d \Delta t_j/\sqrt{d}}$, where 
$\Delta t_j=1+\int_{j-1}^j \!\mathrm{d}\tau\, x(\tau/\sqrt{d})$.
Equivalently, one may regard this as a measurement of a \textit{freely} evolving quasi-ideal clock at successive time intervals $\frac{\Delta t_1}{\sqrt{d}}, \ldots, \frac{\Delta t_n}{\sqrt{d}}$.

\emph{Backaction scaling}.---It would be desirable to compute the lowest-order moments $\langle\widetilde{\xi}_j\rangle$ and $\langle \widetilde{\xi}_j\widetilde{\xi}_l\rangle$, but this is technically awkward due to periodic boundary conditions. 
Instead, for integers $\ell$ and $m$, investigating $\langle e^{\frac{2\pi i\ell \widetilde{\xi}_j}{\sqrt{d}}}e^{\frac{2\pi im\widetilde{\xi}_k}{\sqrt{d}}}\rangle$ leads to a much more tractable problem and nonetheless allows for a nice analogy with the theory of linear measurements.  
These quantities carry information about both the expected values of the measurements and their correlations.
For illustration, the random variable $X\sim\mathcal{N}(\mu, \sigma^2)$ with $\alpha \in \mathbb{R}$, yields $\langle e^{i\alpha X}\rangle=e^{i\alpha\mu}e^{-\frac{1}{2}\alpha^2\sigma^2}$. 
In general, the phase of $\langle e^{\frac{2\pi i\widetilde{\xi}_j}{\sqrt{d}}}\rangle$ carries information about the expectation of $\widetilde{\xi}_j$ (provided the latter is symmetrically distributed around its mean), while the modulus carries information about its dispersion.

Let us sketch the result for $\langle e^{-\frac{2\pi i\widetilde{\xi}_n}{\sqrt{d}}}\rangle$,  corresponding to $\ell = -1, m = 0$, reserving details for \S\ref{sec:IId1}. 
Similarly to the case of linear measurements, the expectation value can be divided into three contributions, as
\begin{align}
\label{eq:pseudo_variance_formula}
     \big\langle e^{-\frac{2\pi i\widetilde{\xi}_n}{\sqrt{d}}}\big\rangle & = e^{-\frac{2\pi i}{d}\sum_{j=1}^n \Delta t_j}C_1C_2 C_3\,,
\end{align}
where $C_1$ depends only on $\sigma_{\text{s}}$, $C_2$ only on $\sigma_{\text{m}}$, and $C_3$ on both  $\sigma_{\text{s}}$ and $\sigma_{\text{m}}$ as well as $\{\Delta t_j\}_j$. 
The first contribution, consisting of $C_1$ and the phase factor, is roughly analogous to the contribution of the wavefunction to the first and second moments in the linear case: It turns out that $C_1$ behaves as 
$e^{-\frac{\pi\sigma_{\text{s}}^2}{2d}}$ for large $d$, and $e^{-\frac{2\pi i}{d}\sum_{j=1}^n\Delta t_j - \frac{\pi\sigma_{\text{s}}^2}{2d}}$ is exactly the classical expectation of $e^{-\frac{2\pi iX}{\sqrt{d}}}$ when $X \sim \mathcal{N}\left(\frac{1}{\sqrt{d}}\sum_{j=1}^n\Delta t_j, \frac{\sigma_{\text{s}}^2}{4\pi}\right)$. 
Meanwhile, the factor $C_2$ is essentially $e^{-\frac{\pi\sigma_{\text{m}}^2}{2d}}$, provided $\sigma_{\text m}^2 \ll d$. 
Therefore, it becomes trivial, i.e.\ 1, when $\sigma_{\text{m}} \rightarrow 0$. 
For this reason one may regard it as analogous to the second term in \eqref{eq:linear_measurements_correlations}.
Since we have identified the analogs of all the factors appearing in \eqref{eq:linear_measurements_correlations} \textit{except} the term coming from the quantum backaction, one may by default regard the $C_3$ contribution as analogous to the backaction. 
There is also a more positive argument supporting this conclusion, as one can show that $C_3=1$ if all the $\Delta t_j$ are integers. 
This corresponds to commuting $\hat T_d$ operators, i.e.\ the case of no backaction. 

It turns out that $C_3$ has a simple form related to a random walk. 
Here we give the general picture; the precise details are spelled out in \S\ref{sec:measured_quasi_ideal_clock} 
 and hold for all $\ell$ and $m$. 
The walk takes place on the discrete ring $\mathbb Z_d$, and the step size varies according to a roughly Gaussian distribution of zero mean and variance $\frac{d}{4\pi\sigma_\text{m}^2}$. 
Nontrivial contributions to $C_3$ occur whenever the walk lands on $\frac{d-1}2$. 
Calling $Z_j$ the position at step $j$, we have 
\begin{align}
\label{eq:pseudo_variance_backaction}
C_3=\sum_{z_1\in \mathbb Z_d}P(z_1)\mathbb E^{z_1}\prod_{j=1}^n1-(1-e^{2\pi i\Delta t_j})\mathbf 1_{Z_j=\frac{d-1}2}\,,
\end{align}
where $\mathbb{E}^{z_1}$ denotes the expectation for a walk starting at $z_1$ and the distribution $P$ depends on $\sigma_{\text{s}}$.

To proceed further, we must specify the $\Delta t_j$ (or equivalently, $x(t)$). 
As previously mentioned, the case of integer $\Delta t_j$ gives $C_3=1$. 
Heuristically, the half-integer choice, e.g.\ $\Delta t_j=\tfrac 12$, can be expected to generate the largest backaction on the system, as this is the ``furthest away'' (for a comparable spacing of measurements) from the case of no backaction. 
Now let the number of measurements scale with $d$ as $n = 2t\sqrt{d}$,  so that the total measurement time is fixed at $\sum_{j=1}^n \frac{\Delta t_j}{\sqrt{d}} = t$, independently of $d$. 
The behavior of $C_3$ in terms of the scaling of $\sigma_{\text{m}}^2$ with respect to $d$ is worked out in detail in \S\ref{sec:measured_quasi_ideal_clock_derivation_formulae}, 
but the results can be appreciated from the form of \eqref{eq:pseudo_variance_backaction}. 
The walk will spend a significant amount of time on the last position when the variance after $n$ steps is the size of the ring: $n\frac d{\sigma_{\text{m}}^2}\gg d^2$, which for the given choice of $n$ gives the condition $\sigma_{\text{m}}^2\ll 1/\sqrt d$.
The detailed calculation shows that in this case $C_3$ is bounded away from $1$ by $2t/\sqrt d$, while $C_3\approx 1$ up to a deviation exponentially small in $d$ when $\sigma_{\text{m}}^2\gg 1/\sqrt d$. 
A numerical simulation of the difference of the two cases is illustrated in Figure~\ref{fig:simulations}.
\begin{figure}
\centering
\begin{tikzpicture}
\begin{loglogaxis}[
    xmax=10^4,
    xmin=500,
    width=0.5*\textwidth,
    height=0.25*\textwidth,
    xlabel=$d$,
    xlabel near ticks,
    title={$\sigma_{\text{m}}^2\propto d^{-0.12}$, $\sigma_{\text{s}}^2\propto d^{-0.5}$},
    title style={at={(0.5,0.92)}},
    xlabel shift={-12pt},
    legend style={legend cell align=left}]
\addplot[only marks,mark=+] table[x index=0,y index=2] {plot1.dat};
\addlegendentry{$1-C_1C_2C_3$}
\addplot[only marks,mark=x] table[x index=0,y index=1] {plot1.dat};
\addlegendentry{$1-C_1C_2$}
\end{loglogaxis}
\end{tikzpicture}
\vskip 10pt
\begin{tikzpicture}
\begin{loglogaxis}[
    xmax=10^4,
    xmin=500,
    xminorticks=true,
    yminorticks=true,
    ytick={0.000001,0.00001,0.0001,0.001,0.01,0.1},
    yticklabels={,$10^{-5}$, ,$10^{-3}$, ,$10^{-1}$},
    xlabel near ticks,
    xlabel shift={-12pt},
    width=0.5*\textwidth,
    height=0.25*\textwidth,
    xlabel=$d$,
    title style={at={(0.5,0.92)}},
    title={$\sigma_{\text{m}}^2\propto d^{-0.65}$, $\sigma_{\text{s}}^2\propto d^{\,0}$},
    legend style={legend cell align=left, 
        at={(0.977,0.5)},
        anchor=east}]
\addplot[only marks,mark=+] table[x index=0,y index=2] {plot2.dat};
\addlegendentry{$1-C_1C_2C_3$}
\addplot[only marks,mark=x] table[x index=0,y index=1] {plot2.dat};
\addlegendentry{$1-C_1C_2$}
\end{loglogaxis}
\end{tikzpicture}
\caption{\label{fig:simulations} Scaling of $C_1, C_2, C_3$ with dimension $d$ for different scalings of $\sigma_{\text{s}}$ and $\sigma_{\text{m}}$. Monte Carlo simulations of $5000$ samples, with $t = 1$ fixed.}
    
\end{figure}
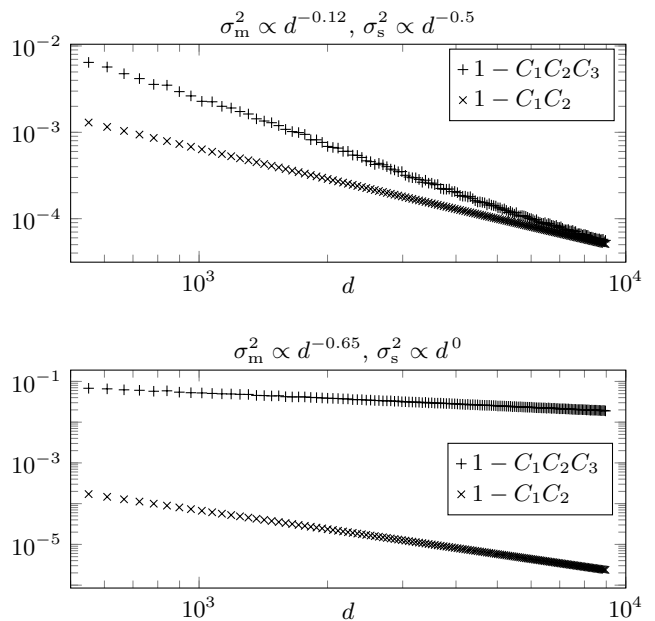

Combining the scaling behavior of $C_2$ and $C_3$, it is apparent that one cannot achieve a variance smaller than $1/\sqrt d$ on the measurement of $\widetilde{\xi}_n$.
Due to the form of the Hamiltonian, this is essentially $1/\sqrt E$, for $E$ the mean energy of the quasi-ideal clock. 
This scaling can be approached by taking $\sigma_{\text{m}}^2 \propto d^{-\frac12 + \varepsilon}$, with $\varepsilon > 0$ small. 
If, however, $\sigma_{\text{m}}^2 \ll 1/\sqrt d$, the variance on $\widetilde{\xi}_n$ is at least $\sqrt d$. 
Although these results were derived for the case of half-integer $\Delta t_j$, they generalize in a straightforward way to the case of a random waveform consisting of  white noise of variance $\sigma^2$: $x(t) = \sigma\frac{\mathrm{d}W(t)}{\mathrm{d}t}$~, as shown in \S\ref{sec:IId3} 
(a standard way of benchmarking a statistical estimator; see the Cramér-Rao bound in~\cite{van_trees_detection_2001}). 
In case the waveform is completely general, it is still possible to show that one may achieve a variance as small as $d^{-\frac12 + \varepsilon}$ (for all fixed $\varepsilon > 0$) on the measurement of $\widetilde{\xi}_n$, but we have no clear proof that this scaling is optimal. 

\emph{Waveform estimation}.---%
It remains to be seen whether one can perform efficient waveform estimation given the above constraints on $\widetilde{\xi}_j$.
Returning to the case of continuous $x$, given that typical errors of $\widetilde{\xi}_j - \widetilde{\xi}_{j - 1}$ will be roughly of size $d^{-\frac 14}$ and from this quantity we aim to estimate $\int_{\frac{j - 1}{\sqrt{d}}}^{\frac{j}{\sqrt{d}}}\!\mathrm{d}\tau\,\left(1 + x(\tau)\right)$, the magnitude $|x|$ of the waveform must satisfy $ |x|\frac{1}{\sqrt{d}} \gg d^{-\frac14}$ for the error not to overwhelm the expectation.
This is unsatisfying, as it means the smallest detectable signal strength \emph{increases} with increasing dimension as $d^{\frac14}$. 
The difficulty is that the measurements are too sharp for how frequently they are occurring, and we should either contemplate weaker measurements at the same frequency or less frequent measurements. 
Suppose that the number of measurements $n$ in fixed time $t$ scales as $d^{\gamma}$ for $0\leq \gamma\leq \tfrac12$. The condition on the variance is now $\sigma_{\text{m}}^2\gg d^{\gamma-1}$, so that the condition on the waveform becomes $|x| \gg d^{\frac{3}{2}\gamma - \frac{1}{2}}$. 
As a result, provided $\gamma < \frac{1}{3}$, the smallest detectable signal is allowed to vanish as $d \to \infty$, though strictly slower than $d^{-\frac12}$.
Of course, measuring less frequently impacts the useful bandwidth of the procedure, and the maximum detectable frequency $f_{\max}$ is bounded by $f_{\max} < d^{\gamma}$. 
For example, choosing $\gamma=\frac16$ and working in terms of $E$ gives $|x|\ll 1/\sqrt E$ and $f_{\max}<E^{\frac13}$. 


Improved scalings can be obtained by more sophisticated estimation procedures, but likely not an exponential improvement. 
For instance, one might like to employ smoothing, as formulated in the quantum case by Tsang~\cite{tsang_time-symmetric_2009,tsang_optimal_2009,tsang_optimal_2010}, or indeed any other estimation technique designed for continuous signals. 
To do so requires a formulation of the measurement process in a suitable limit 
as a continuous-time process.
Note that our setup is outside the usual limiting procedure of ever weaker measurements made ever more often~\cite{barchielli_statistics_1983,caves_quantum-mechanical_1987,braginsky_quantum_1992,wiseman_quantum_2009}.
In contrast, here we call for stronger measurements made ever more often on an ever larger system, for particular scaling of the former two as a function of the latter. 
Using our results and techniques from \cite[sections 7 and 13]{billingsley_convergence_1999}, it can be shown that as $d\to \infty$, the measurement process converges weakly to a well-defined continuous-time process in certain cases: to a deterministic motion if the measurement is moderately sharp ($1/\sqrt{d}	 \ll \sigma_{\text{m}}^2 \ll 1$) and to a Cauchy process with drift for sharp measurements ($\sigma_{\text{m}} = 0$).
The former case is precisely the behavior we expect for an idealized clock, namely zero-error, which reinforces our conclusions above.
A different limit procedure is needed to construct continuous time estimators for finite $d$, but studying the speed of convergence of this limit may be useful in this regard. 
Finally, to underscore the relative crude nature of our estimator, we note that estimating the position of a forced oscillator, as considered e.g.\ in~\cite{tsang_fundamental_2011}, using finite differences does not seem to work, as their variance diverges in the limit, as detailed in~\S\ref{sec:measured_oscillator}.

\emph{Other approximate QND systems}.---%
The oscillators in the QND construction of~\cite{tsang_evading_2012} can be approximated by truncated oscillators or by spin systems via the Holstein-Primakoff approximation. 
Both have free evolution that well-approximates the idealized case. 
Indeed, the time evolution of spin coherent states exactly emulates the idealized case, since a rotation of a spin coherent state by $J_z$ produces another spin coherent state. 
The former behaves similarly to the quasi-ideal clock in that the free evolution is exponentially good, provided the wavefunction is taken much wider than $1/\sqrt d$ but much narrower than $\sqrt d$, where $d$ is the truncated dimension, and the energy scales as the dimension.   
However, it appears from numerical investigation that the accuracy of waveform estimation scales not as favorably with the dimension. 
Coupled with the fact that two oscillators are needed for the QND setup of~\cite{tsang_evading_2012}, it is not unreasonable to expect a worse scaling in estimator accuracy with energy.

\emph{Conclusion}.---%
The polynomial scaling of the error in waveform estimation in dimension or energy of the quasi-ideal clock echoes similar error scalings when it is used for timekeeping~\cite{woods_quantum_2018} or covariant quantum error correction~\cite{woods_continuous_2019}. 
Indeed, in \cite{woods_continuous_2019} the bound achieved via the quasi-ideal clock is proven to be the optimally achievable rate permissible by quantum mechanics~\cite{faist_continuous_2019}---suggesting that the scaling derived in this Letter may also be optimal.
Its simple structure enables relatively straightforward mathematical analysis, compared with the double oscillator systems; though in light of their practical application~\cite{tsang_coherent_2010,tsang_evading_2012,khalili_overcoming_2018}, it would be interesting and useful to more thoroughly characterize the error scaling in those cases. 
To enable a more sophisticated error analysis, it would also be useful to formulate a continuous limit. 
Perhaps, unlike our considerations above, one can fix $d$ and scale $\sigma_{\text{m}}$ and $n$ to obtain a useful limit. 

\emph{Acknowledgments}.---%
We acknowledge useful discussions with Carlton M.\ Caves. 
This work was supported by the Swiss National Science Foundation (SNSF) via the National Centre of Competence in Research ``QSIT''. M.P.W. acknowledges funding from the SNSF (AMBIZIONE Fellowship PZ00P2\_179914).

\bibliography{bibliography}

\onecolumngrid

\newpage

\appendix

\section*{Appendix}
Here we present detailed proofs and more extensive discussions of the main results of the paper. In section \ref{sec:measured_oscillator}, we treat the repeatedly measured quantum harmonic oscillator coupled to a classical time-dependent force; we show that owing to quantum backaction, one may not precisely estimate the latter by means of continuous measurements (hence the motivation behind non-demolition measurement). We then establish (section \ref{sec:measured_quasi_ideal_clock}) the main result of the paper, involving a waveform estimation setup consisting of a continuously measured quasi-ideal clock coupled to the time-dependent signal. Before moving to the main matter, we prove some instructive preliminary estimates on the freely evolving (i.e.\ without measurement) quasi-ideal clock. Finally, section \ref{sec:calculus} gathers all the mathematical results used in the body of this document.
\section{Measured quantum harmonic oscillator coupled to a time-dependent classical force}
\label{sec:measured_oscillator}

The case of the driven harmonic oscillator, described by the time-dependent Hamiltonian
\begin{align}
    \Hat{H}(t) & = \frac{1}{2}\Hat{q}^2 + \frac{1}{2}\Hat{p}^2 - x(t)\hat{q}\,,
\end{align}
is of special interest to metrology. In this setup, generally speaking, the purpose is to infer something on the time-dependent force $x$ by monitoring the position of the oscillator (or more generally any quadrature, the angle of the quadrature being possibly time-dependent). The Heisenberg-picture $\hat{q}$ and $\hat{p}$ operators for this system are given by:
\begin{align}
    \hat{q}(t) & = \cos(t)\hat{q} + \sin(t)\hat{p} + \int_0^t\!\mathrm{d}t'\,x\left(t'\right)\sin\left(t - t'\right)\,,\label{eq:q had 1}\\
    \hat{p}(t) & = \cos(t)\hat{p} - \sin(t)\hat{q} + \int_0^t\!\mathrm{d}t'\,x\left(t'\right)\cos\left(t - t'\right)\,.\label{eq:q had 2}
\end{align}
The two-times commutators of these observables are manifestly independent from the classical force. In particular, for the position
\begin{align}
    \left[\hat{q}\left(t'\right), \hat{q}\left(t\right)\right] & = i\sin(t - t')\,.
\end{align}
Now, it is interesting to figure out what one may learn about $x$ from measuring the position $q$. Since
\begin{align}
    \frac{\mathrm{d}^2\hat{q}(t)}{\mathrm{d}t^2} + \hat{q}(t) & = x(t)\,,
\end{align}
it may be tempting (at least conceptually) to think of continuously measuring $q$, yielding some continuous time series $\widetilde{q}(t)$ and then use\footnote{In the following we use primes to indicate derivatives w.r.t. $t$.} $\widetilde{q}''(t) + \widetilde{q}(t)$ as an estimator for $x(t)$. Since the formalism described up to now deals with discrete instead of continuous measurements, one should start from a discrete setting and approach the continuum one by a limiting process where the scaling of the free parameters (essentially the $\Delta q$) is to be specified. Let then $\tau > 0$ denote a ``unit time step''. This means that one measures the positions at times $t_1 = 0, t_2 = \tau, \ldots, t_n = (n - 1)\tau$ and that we approximate the second derivative of the idealized estimator $y''(t) + y(t)$ by a finite difference. In other words, our estimate for $x(t_j)$ reads
\begin{align}
    & \widetilde{q}_j + \frac{\widetilde{q}_{j + 1} + \widetilde{q}_{j - 1} - 2\widetilde{q}_j}{\tau^2}\, .
\end{align}
One may now compute the variance of this estimator:
\begin{align}
    & \left\langle\left(\widetilde{q}_j + \frac{\widetilde{q}_{j + 1} + \widetilde{q}_{j - 1} - 2\widetilde{q}_j}{\tau^2} - \left\langle\widetilde{q}_j + \frac{\widetilde{q}_{j + 1} + \widetilde{q}_{j - 1} - 2\widetilde{q}_j}{\tau^2}\right\rangle\right)^2\right\rangle\nonumber\\
    & = \frac{1}{\tau^4}\left\langle\left[(\tau^2 - 2)\left(\widetilde{q}_j - \langle\widetilde{q}_j\rangle\right) + \left(\widetilde{q}_{j + 1} - \langle\widetilde{q}_{j + 1}\rangle\right) + \left(\widetilde{q}_{j - 1} - \langle\widetilde{q}_{j - 1}\rangle\right)\right]^2\right\rangle\\
    & = \frac{1}{\tau^4}\left\langle(\tau^2 - 2)^2\left(\widetilde{q}_j - \langle\widetilde{q}_j\rangle\right)^2 + \left(\widetilde{q}_{j + 1} - \langle\widetilde{q}_{j + 1}\rangle\right)^2 + \left(\widetilde{q}_{j - 1} - \langle\widetilde{q}_{j - 1}\rangle\right)^2\right.\nonumber\\
    & \left.\hspace{0.1\textwidth} + 2(\tau^2 - 2)\left(\widetilde{q}_j - \langle\widetilde{q}_j\rangle\right)\left(\widetilde{q}_{j + 1} - \langle\widetilde{q}_{j + 1}\rangle\right) + 2\left(\widetilde{q}_j - \langle\widetilde{q}_j\rangle\right)\left(\widetilde{q}_{j - 1} - \langle\widetilde{q}_{j - 1}\rangle\right)\right.\nonumber\\
    & \left.\hspace{0.1\textwidth} + 2\left(\widetilde{q}_{j + 1} - \langle\widetilde{q}_{j + 1}\rangle\right)\left(\widetilde{q}_{j - 1} - \langle\widetilde{q}_{j - 1}\rangle\right)\right. \Big\rangle\\
    & = \frac{1}{\tau^4}\Bigg(\left.(\tau^2 - 2)^2\braket{\Psi|\left(\hat{q}_j - \langle\hat{q}_j\rangle\right)^2|\Psi} + \braket{\Psi|\left(\hat{q}_{j + 1} - \langle\hat{q}_{j + 1}\rangle\right)^2|\Psi} + \braket{\Psi|\left(\hat{q}_{j - 1} - \langle\hat{q}_{j - 1}\rangle\right)^2|\Psi}\right.\nonumber\\
    & \left.\hspace{0.1\textwidth} + 2(\tau^2 - 2)\braket{\Psi|\frac{1}{2}(\hat{q}_j\hat{q}_{j + 1} + \hat{q}_{j + 1}\hat{q}_j)|\Psi} + 2(\tau^2 - 2)\braket{\Psi|\frac{1}{2}(\hat{q}_j\hat{q}_{j - 1} + \hat{q}_{j - 1}\hat{q}_j)|\Psi}\right.\nonumber\\
    & \left.\hspace{0.1\textwidth} + 2\braket{\Psi|\frac{1}{2}(\hat{q}_{j - 1}\hat{q}_{j + 1} + \hat{q}_{j + 1}\hat{q}_{j - 1})|\Psi}\right.\nonumber\\
    & \left. \hspace{0.1\textwidth} + (\tau^2 - 2)^2(\Delta q_j)^2 + (\Delta q_{j + 1})^2 + (\Delta q_{j - 1})^2\right.\nonumber\\
    & \left. \hspace{0.1\textwidth} + (\tau^2 - 2)^2\sum_{1 \leq n \leq j - 1}k_{j, n}^2\sigma_n^2 + \sum_{1 \leq n \leq j - 2}k_{j - 1, n}^2\sigma_n^2 + \sum_{1 \leq n \leq j}k_{j + 1, n}^2\sigma_n^2\right.\nonumber\\
    & \left. \hspace{0.1\textwidth} + 2(\tau^2 - 2)\sum_{1 \leq n \leq j - 1}k_{j, n}k_{j - 1, n}\sigma_n^2 +  2(\tau^2 - 2)\sum_{1 \leq n \leq j}k_{j, n}k_{j + 1, n}\sigma_n^2\right.\nonumber\\
    & \left.\hspace{0.1\textwidth} + 2\sum_{1 \leq n \leq j - 1}k_{j - 1, n}k_{j + 1, n}\sigma_n^2 \right) \,.
\end{align}
We will now focus on the contribution of the backaction terms. Furthermore, we will assume that the Kraus operator is Gaussian, so that $\Delta q_j$ is minimized given $\sigma_j$ and therefore equals $\frac{1}{4\sigma_j^2}$. This leads to:
\begin{align}
\label{eq:qho_force_estimator_all_expanded}
    & \frac{1}{\tau^4}\left.\Bigg((\tau^2 - 2)^2(\Delta q_j)^2 + (\Delta q_{j + 1})^2 + (\Delta q_{j - 1})^2\right.\nonumber\\
    & \left. \hspace{0.1\textwidth} + (\tau^2 - 2)^2\sum_{1 \leq n \leq j - 1}k_{j, n}^2\sigma_n^2 + \sum_{1 \leq n \leq j - 2}k_{j - 1, n}^2\sigma_n^2 + \sum_{1 \leq n \leq j}k_{j + 1, n}^2\sigma_n^2\right.\nonumber\\
    & \left. \hspace{0.1\textwidth} + 2(\tau^2 - 2)\sum_{1 \leq n \leq j - 1}k_{j, n}k_{j - 1, n}\sigma_n^2 +  2(\tau^2 - 2)\sum_{1 \leq n \leq j}k_{j, n}k_{j + 1, n}\sigma_n^2\right.\nonumber\\
    & \left.\hspace{0.1\textwidth} + 2\sum_{1 \leq n \leq j - 1}k_{j - 1, n}k_{j + 1, n}\sigma_n^2 \right)\nonumber\\
    & = \frac{1}{\tau^4}\left(\frac{1}{4}\left(\frac{(\tau^2 - 2)^2}{\sigma_j^2} + \frac{1}{\sigma_{j + 1}^2} + \frac{1}{\sigma_{j - 1}^2}\right) + \sum_{1 \leq n \leq j - 2}\left((\tau^2 - 2)k_{j, n} + k_{j - 1, n} + k_{j + 1, n}\right)^2\sigma_n^2\right.\nonumber\\
    & \left.\hspace{0.05\textwidth} + \left((\tau^2 - 2)k_{j, j - 1} + k_{j + 1, j - 1}\right)^2\sigma_{j - 1}^2 + k_{j + 1, j}^2\sigma_j^2\right.\Bigg).
\end{align}
One now rewrites the terms in $k$ explicitly to exhibit their scaling in $\tau$:
\begin{align}
    & (\tau^2 - 2)k_{j, n} + k_{j - 1, n} + k_{j + 1, n}\nonumber\\
    & = (\tau^2 - 2)\sin(t_n - t_j) + \sin(t_n - t_{j - 1}) + \sin(t_n - t_{j + 1})\\
    & = (\tau^2 - 2)\sin(t_n - t_j) + \sin(t_n - t_j - \tau) + \sin(t_n - t_j + \tau)\\
    & = (2\cos(\tau) - 2 + \tau^2)\sin(t_n - t_j)\\
    & = \left(2\frac{\tau^4}{4!} - 2\frac{\tau^6}{6!} + \ldots\right)\sin(t_n - t_j) \,, \\
   {}\nonumber\\
    & (\tau^2 - 2)k_{j, j - 1} + k_{j + 1, j - 1}\nonumber\\
    & = (\tau^2 - 2)\sin(t_{j - 1} - t_j) + \sin(t_{j - 1} - t_{j + 1})\\
    & = -(\tau^2 - 2)\sin(\tau) - \sin(2\tau)\\
    & = -\sin(\tau)\left(2\cos(\tau) - 2 + \tau^2\right)\\
    & = -\left(\tau - \frac{\tau^3}{3!} + \ldots\right)\left(2\frac{\tau^4}{4!} - 2\frac{\tau^6}{6!} + \ldots\right)\, , \\
{}\nonumber\\
    k_{j + 1, j} & = \sin(t_j - t_{j + 1})\\
    & = -\sin(\tau) \,.
\end{align}
But assuming $\tau \leq 1$ (which is reasonable since one wants $\tau \downarrow 0$ in the end) and using therefore $(\tau - 2)^2 \geq 1, \sin(\tau) \geq \frac{2}{\pi}\tau$,
\begin{align}
    \frac{1}{\tau^4}\left(\frac{(\tau^2 - 2)^2}{\sigma_j^2} + k_{j + 1, j}^2\sigma_j^2\right) & \geq \frac{1}{\tau^4}\left(\frac{1}{4\sigma_j^2} + \frac{4}{\pi^2}\sigma_j^2\tau^2\right)\\
    & = \frac{1}{\tau^3}\left(\frac{1}{4\sigma_j^2\tau} + \frac{4}{\pi^2}\sigma_j^2\tau\right)\\
    & \geq \frac{2}{\pi}\frac{1}{\tau^3}\,,
\end{align}
and this minimum is achieved only if $\sigma_j^2$ scales as $\frac{1}{\tau}$. One can check that the other terms in equation \eqref{eq:qho_force_estimator_all_expanded} do not grow more rapidly as $\tau \downarrow 0$.\footnote{Keeping in mind that in the limiting process, one shall also take the index of the measurement $n$ to scale as $\frac{1}{\tau}$, since one considers a measurement at a fixed time and $\tau$ is the interval between two consecutive measurements.} This means that for a given force $x$, one may certainly not let $\tau \downarrow 0$ if one wants to get any information at all about the force! To compute the expectation of the estimator (therefore for finite $\tau$), let us remark that using equations \eqref{eq:q had 1}, \eqref{eq:q had 2} we obtain 
\begin{align}
    & \hat{q}(t) + \frac{\hat{q}(t + \tau) + \hat{q}(t - \tau) - 2\hat{q}(t)}{\tau^2}\nonumber\\
    & = \frac{2\cos(\tau) - 2 + \tau^2}{\tau^2}\cos(t)\hat{q} + \frac{2\cos(\tau) - 2 + \tau^2}{\tau^2}\sin(t)\hat{p}\nonumber\\
    & \hspace{0.05\textwidth} + \frac{2\cos(\tau) - 2 + \tau^2}{\tau^2}\int_0^t\!\mathrm{d}t'\,x(t')\sin(t - t') + \frac{1}{\tau^2}\int_0^{\tau}\!\mathrm{d}u\,\sin(\tau - u)\left(x(t + u) + x(t - u)\right)\,.
\end{align}
Provided $x$ is regular enough and $\int_0^t\!\mathrm{d}t'\,x(t')\sin(t - t')$ is bounded by some constant when $t \geq 0$, the expectation of this quantity approximates $x(t)$ with an error $\mathcal{O}(\tau^2)$.

\section{Quasi-ideal clock: general properties and our setup}
\label{sec:quasi_ideal_clock}

In this part, which will lead to the proof of our main results, we will focus on the so-called ``quasi ideal clock''. This system, studied extensively in \cite{woods_autonomous_2019,woods_quantum_2018}, rests upon a discretization of the phase-space making extensive use the discrete Fourier transform. This allows for more tractable exact computations or estimates than with other discretization schemes such as the Holstein-Primakoff approximation or the mere truncation of the infinite-dimensional annihilation operator. An example of application \cite{woods_autonomous_2019} is the proof that the accuracy $R\in[1,\infty)$  \footnote{$R=\infty$ being a clock with zero error, $R=1$ being a useless clock.} of quantum clocks may scale with dimension $d$ as $d^{2 - \eta}$ for arbitrarily small fixed $\eta \in (0, 2)$ --- while $d$ is an upper bound for classical stochastic clocks. In this section, one will be concerned with the incorporation of measurement into the quasi-ideal clock --- keeping as close as possible to the general framework exposed in the main text and in greater detail in \cite[Chapter 5]{braginsky_quantum_1992}. As our setting and definitions slightly differ from those of the aforementioned papers\footnote{The difference lying most essentially in the use of Jacobi $\theta$ functions, which are more suitable for the boundary conditions of the problem, instead of Gaussians. This may be unimportant for simple estimates on the quasi-ideal clock, such as its approximation of the canonical commutation relations and the QND condition, but will prove crucial for the more involved analysis of measurement we will perform here.}, we will review them first. Then, we will apply the definitions and results hereby introduced to show that the quasi-ideal clock approximates the QND condition up to exponentially decaying terms in the dimension. Finally, we will move on to the analysis of the measurement; an interesting result will be the emergence of an error decreasing as a power law in $d$ (as opposed to exponentially without measurement) compared to the idealized infinite-dimensional system.

\subsection{General setting}\label{General setting}
The setting considered here is that of the quasi-ideal clock, described in detail in \cite{woods_autonomous_2019,woods_quantum_2018}. We recall in this paragraph the elements essential to the understanding of the following.

The $d$-dimensional --- take $d$ odd for simplicity --- Hilbert space of the quasi-ideal clock is $\Span\{ \ket{n} \}_{-\frac{d - 1}{2} \leq n \leq \frac{d - 1}{2}}$ (where the $\{\ket{n}\}$ form an orthonormal basis). The Hamiltonian governing the evolution of the system is:
\begin{align}
    \hat{H}_d & := \sum_{-\frac{d - 1}{2} \leq n \leq \frac{d - 1}{2}}n\ket{n}\,.
\end{align}
From the energy eigenstates $\ket{n},\, \left(-\frac{d - 1}{2} \leq n \leq \frac{d - 1}{2}\right)$, one may define what we will subsequently refer to as ``time eigenstates'' by way of the discrete Fourier transform:
\begin{align}
    \ket{\theta_k} & := \frac{1}{\sqrt{d}}\sum_{-\frac{d - 1}{2} \leq n \leq \frac{d - 1}{2}}e^{-\frac{2\pi ikn}{d}}\ket{n}, \qquad -\frac{d - 1}{2} \leq k \leq \frac{d - 1}{2}\,.
\end{align}
By unitarity of the discrete Fourier transform, the $\{\ket{\theta_k}\}$ form an orthonormal basis. One may define accordingly a ``time operator'' $\hat{t}_d$:
\begin{align}
    \hat{t}_d & := \sum_{-\frac{d - 1}{2} \leq k \leq \frac{d - 1}{2}}k\ket{\theta_k}\bra{\theta_k}
\end{align}
for which $\ket{\theta_k}$ is an eigenstate of eigenvalue $k$. Actually, in the following, it will be convenient to generalize the notation $\ket{\theta_k}$ (defined for the moment for $k$ an integer in $\left[-\frac{d - 1}{2}, \frac{d - 1}{2}\right]$) to $\ket{\theta_t}$ where $t \in \mathbf{R}$. Let us then simply define:
\begin{align}
    \ket{\theta_t} & := \frac{1}{\sqrt{d}}\sum_{-\frac{d - 1}{2} \leq n \leq \frac{d - 1}{2}}e^{-\frac{2\pi int}{d}}\ket{n}, \qquad t \in \mathbf{R}\,.
\end{align}
Note that it remains true for all $t \in \mathbf{R}$ and all sequences $(k_j)_{-\frac{d - 1}{2} \leq j \leq \frac{d - 1}{2}}$ modulo $d$,  that $\{\ket{\theta_{t + k_j}}\}_{-\frac{d - 1}{2} \leq j \leq \frac{d - 1}{2}}$ forms an orthonormal set.

We now describe the states on which we will study the dynamics of the clock. Here, we still follow \cite{woods_autonomous_2019} in essence, namely we use a Gaussian superposition of time eigenstates. However, instead of using actually Gaussian weights, we resort to Jacobi $\theta$ functions which, in essence, are periodized Gaussians. These Jacobi $\theta$ functions are more adapted to periodic systems and the use of the discrete Fourier transform. Some important properties of these functions are spelt out in section \ref{sec:jacobi_theta_functions}. We will consider an initial state $\ket{\Psi_0}$ of the form:
\begin{align}
\label{eq:quasi_ideal_states}
    \ket{\Psi_0} & = \sum_{-\frac{d - 1}{2} \leq k \leq \frac{d - 1}{2}}\psi(k)\ket{k}\,,\\
    \psi(k) & = \frac{\sqrt{2/d}}{\sqrt{\theta_3\left(0, \frac{i\xi^2}{2d}\right)\theta_3\left(0, \frac{id\xi^2}{2}\right) + \theta_3\left(\frac{1}{2}, \frac{i\xi^2}{2d}\right)\theta_3\left(\frac{d}{2}, \frac{id\xi^2}{2}\right)}}\theta_3\left(\frac{k}{d}, \frac{i\xi^2}{d}\right)\exp\left(\frac{2\pi in_0k}{d}\right)\\
    & =: \mathcal{N}\theta_3\left(\frac{k}{d}, \frac{i\xi^2}{d}\right)\exp\left(\frac{2\pi in_0k}{d}\right)\,,
\end{align}
where $n_0$ is an integer and $\xi > 0$ parametrizes the width of the state. This width will scale with $d$:
\begin{align}
\label{eq:quasi_ideal_state_scaling_xi}
    \xi^2 & := d^{\beta}\,, \qquad -1 < \beta < 1\,.
\end{align}
We also introduce --- following the notations of the aforementioned papers --- a parameter $\alpha_0 \in [0, 1]$ which measures the distance of $n_0$ from its extreme possible values $-\frac{d - 1}{2}$ and $\frac{d - 1}{2}$:
\begin{align}
\label{eq:definition_alpha_0}
    \alpha_0 & := 1 - \left|1 - n_0\left(\frac{2}{d - 1}\right)\right|
\end{align}
This state is normalized according to the results of section \ref{sec:jacobi_theta_functions}. Furthermore, the discrete Fourier transform $\widetilde{\psi}$ of $\psi$ is given by:
\begin{align}
\label{eq:quasi_ideal_states_ft}
    \widetilde{\psi}(p) & = \frac{1}{\sqrt{d}}\sum_{-\frac{d - 1}{2} \leq k \leq \frac{d - 1}{2}}e^{-\frac{2\pi ipk}{d}}\psi(k)\nonumber\\
    & = \frac{1}{\xi}\frac{\sqrt{2/d}}{\sqrt{\theta_3\left(0, \frac{i\xi^2}{2d}\right)\theta_3\left(0, \frac{id\xi^2}{2}\right) + \theta_3\left(\frac{1}{2}, \frac{i\xi^2}{2d}\right)\theta_3\left(\frac{d}{2}, \frac{id\xi^2}{2}\right)}}\theta_3\left(\frac{p - n_0}{d}, \frac{i}{\xi^2d}\right)\\
    & = \frac{\sqrt{2/d}}{\sqrt{\theta_3\left(0, \frac{i}{2\xi^2d}\right)\theta_3\left(0, \frac{id}{2\xi^2}\right) + \theta_3\left(\frac{1}{2}, \frac{i}{2\xi^2d}\right)\theta_3\left(\frac{d}{2}, \frac{id}{2\xi^2}\right)}}\theta_3\left(\frac{p - n_0}{d}, \frac{i}{\xi^2d}\right)\\
    & =: \mathcal{N}'\theta_3\left(\frac{p - n_0}{d}, \frac{i}{\xi^2d}\right)\,.
\end{align}
Using one of the two expressions of $\mathcal{N}'$ stated above, recalling that both $\frac{d}{\xi^2}$ and $\xi^2d$ go to infinity (following some power law in $d$) given the scaling for $\xi^2$ we chose and using the transformation property \ref{eq:theta3_modular_transformation}, one obtains:
\begin{align}
\label{eq:quasi_ideal_states_scaling_nprime}
    \mathcal{N}' & \sim \frac{2^{1/4}}{d^{3/4 + \beta/4}} = \frac{2^{1/4}}{d^{3/4}\xi^{1/2}}\,, \qquad d \to \infty\,.
\end{align}

The following lemma, combined with Poisson's summation formula, will be very useful to derive various estimates concerning the quasi-ideal clock.
\begin{lemma}
\label{lemma:truncation_theta_sum}
    Let $f$ denote a complex-valued function defined on $\mathbf{R}$ such that there exists $c, \beta \geq 0$ for which:
    \begin{align*}
        |f(x)| & \leq c|x|^{\beta}\,, \qquad x \in \mathbf{R}\,.
    \end{align*}
    Let $z \in \left(-\frac{1}{2}, \frac{1}{2}\right), \xi > 0$ and $d \in \mathbf{N}$ odd. Then the following estimate holds:
    \begin{align*}
        \sum_{-\frac{d - 1}{2} \leq k \leq \frac{d - 1}{2}}f(k)\theta_3\left(z + \frac{k}{d}, \frac{i\xi^2}{d}\right) & = \sqrt{\frac{\xi^2}{d}}\sum_{k \in \mathbf{Z}}f(k)\exp\left(-\frac{\pi}{\xi^2d}(dz + k)^2\right) + \varepsilon\,,\\
        |\varepsilon| & \leq 2^{2 - \beta}c\sqrt{\frac{\xi^2}{d}}e^{\frac{\beta^2\xi^2}{\pi d}}(d + 1)^{\beta}\frac{e^{-\frac{\pi d}{\xi^2}\left(\frac{1}{2} + \frac{1}{2d} - |z|\right)^2}}{1 - e^{-\frac{2\pi}{\xi^2}\left(\frac{1}{2} + \frac{1}{2d} - |z|\right)}}\,.
    \end{align*}
\begin{proof}
    \begin{align}
        & \sum_{-\frac{d - 1}{2} \leq k \leq \frac{d - 1}{2}}f(k)\theta_3\left(z + \frac{k}{d}, \frac{i\xi^2}{d}\right)\nonumber\\
        & = \sum_{-\frac{d - 1}{2} \leq k \leq \frac{d - 1}{2}}f(k)\sqrt{\frac{\xi^2}{d}}\sum_{p \in \mathbf{Z}}\exp\left(-\frac{\pi d}{\xi^2}\left(z + \frac{k}{d} + p\right)^2\right)\\
        %
        %
        & = \sqrt{\frac{\xi^2}{d}}\sum_{\substack{-\frac{d - 1}{2} \leq k \leq \frac{d - 1}{2}\\p \in \mathbf{Z}}}f(k + pd)\exp\left(-\frac{\pi d}{\xi^2}\left(z + \frac{k + pd}{d}\right)^2\right)\nonumber\\
        & \hspace{0.05\textwidth} + \sqrt{\frac{\xi^2}{d}}\sum_{\substack{-\frac{d - 1}{2} \leq k \leq \frac{d - 1}{2}\\p \in \mathbf{Z} - \{ 0 \}}}\left(f(k) - f(k + pd)\right)\exp\left(-\frac{\pi d}{\xi^2}\left(z + \frac{k + pd}{d}\right)^2\right)\\
        & = \sqrt{\frac{\xi^2}{d}}\sum_{m \in \mathbf{Z}}f(m)\exp\left(-\frac{\pi}{\xi^2d}\left(dz + m\right)^2\right)\nonumber\\
        & \hspace{0.05\textwidth} + \sqrt{\frac{\xi^2}{d}}\sum_{|m| \geq \frac{d + 1}{2}}\left(f\left(m - d\left\lfloor\frac{m + \frac{d - 1}{2}}{d}\right\rfloor\right) - f(m)\right)\exp\left(-\frac{\pi}{\xi^2d}(dz + m)^2\right)\,.\label{line:1st recorded}
    \end{align}
    Now, using the bound on $f$ as well as proposition \ref{prop:bound_tail_sum_gaussians}, one can upper bound the line \eqref{line:1st recorded} as 
    \begin{align}
    	& \sqrt{\frac{\xi^2}{d}}\sum_{|m| \geq \frac{d + 1}{2}}\left(f\left(m - d\left\lfloor\frac{m + \frac{d - 1}{2}}{d}\right\rfloor\right) - f(m)\right)\exp\left(-\frac{\pi}{\xi^2d}(dz + m)^2\right)\nonumber\\
	 & \leq 2c\sqrt{\frac{\xi^2}{d}}e^{\frac{\beta^2}{4\left(\frac{d + 1}{2}\right)^2\frac{\pi}{\xi^2d}}}\left[\left(\frac{d + 1}{2}\right)^{\beta}\frac{e^{-\frac{\pi}{\xi^2d}\left(dz + \frac{d + 1}{2}\right)^2}}{1 - e^{-\frac{2\pi}{\xi^2d}\left(dz + \frac{d + 1}{2}\right)}} + \left(\frac{d + 1}{2}\right)^{\beta}\frac{e^{-\frac{\pi}{\xi^2d}\left(-dz + \frac{d + 1}{2}\right)^2}}{1 - e^{-\frac{2\pi}{\xi^2d}\left(-dz + \frac{d + 1}{2}\right)}}\right]\\
        & \leq 2^{2 - \beta}c\sqrt{\frac{\xi^2}{d}}e^{\frac{\beta^2\xi^2}{\pi d}}(d + 1)^{\beta}\frac{e^{-\frac{\pi d}{\xi^2}\left(\frac{1}{2} + \frac{1}{2d} - |z|\right)^2}}{1 - e^{-\frac{2\pi}{\xi^2}\left(\frac{1}{2} + \frac{1}{2d} - |z|\right)}}\,.
    \end{align}
\end{proof}
\end{lemma}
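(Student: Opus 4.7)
The plan is to apply the modular (Jacobi) transformation to rewrite each $\theta_3$ as a Gaussian sum, identify the natural periodic extension that produces the proposed main term, and bound the periodization error using a tail-of-Gaussians estimate. Concretely, I would first invoke the identity $\theta_3(w, i\tau) = \tau^{-1/2}\sum_{p\in\mathbb Z} e^{-\pi(w+p)^2/\tau}$ (a standard Jacobi inversion, which should appear among the preliminaries in \S\ref{sec:jacobi_theta_functions}) with $w = z + k/d$ and $\tau = \xi^2/d$. This converts the left-hand side into the double sum $\sqrt{\xi^2/d}\sum_{|k|\leq (d-1)/2}\sum_{p\in\mathbb Z} f(k)\exp\bigl(-\pi(dz+k+pd)^2/(\xi^2 d)\bigr)$.

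The key algebraic step is then the change of variables $m = k+pd$, which bijects $\{(k,p) : |k|\leq (d-1)/2,\ p\in\mathbb Z\}$ with $\mathbb Z$. Adding and subtracting $f(k+pd)$ inside the double sum splits the result into the proposed main term $\sqrt{\xi^2/d}\sum_{m\in\mathbb Z} f(m)\exp(-\pi(dz+m)^2/(\xi^2 d))$ plus a correction of the form $\sqrt{\xi^2/d}\sum_{|m|\geq (d+1)/2}[f(k(m)) - f(m)]\exp(-\pi(dz+m)^2/(\xi^2 d))$, where $k(m) := m - d\lfloor (m+(d-1)/2)/d\rfloor$ is the representative of $m$ modulo $d$ lying in $[-(d-1)/2,(d-1)/2]$. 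Only $|m|\geq (d+1)/2$ contributes, since $k(m)=m$ otherwise.

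For the correction, the polynomial bound on $f$ combined with $|k(m)| \leq (d-1)/2 \leq |m|$ gives $|f(k(m)) - f(m)| \leq 2c|m|^\beta$. I would then split the sum into its $m \geq (d+1)/2$ and $m \leq -(d+1)/2$ halves, each of which reduces to a weighted Gaussian tail of the shape $\sum_{m\geq (d+1)/2} m^\beta e^{-\pi(dz+m)^2/(\xi^2 d)}$, to be handled by Proposition \ref{prop:bound_tail_sum_gaussians}. That proposition should produce the two denominators $1 - e^{-2\pi(1/2 + 1/(2d) \mp z)/\xi^2}$ from geometric summation and the prefactor $e^{\beta^2\xi^2/(\pi d)}$ via the standard completion-of-the-square trick that absorbs a polynomial weight into a shifted Gaussian; a final estimate $|m|^\beta \leq ((d+1)/2)^\beta$ at the tail boundary yields the $(d+1)^\beta/2^\beta$ factor in the stated constant.

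The main obstacle is essentially bookkeeping rather than analysis: checking that $m = k+pd$ is a bijection, that contributions with $|m|\leq (d-1)/2$ in the correction cancel exactly, that the two sign halves of the tail combine into the symmetric estimate in terms of $|z|$, and that the numerical prefactor $2^{2-\beta}c$ emerges correctly from merging the two halves. No new analytic content is needed beyond the Jacobi identity and the Gaussian tail bound, so provided Proposition \ref{prop:bound_tail_sum_gaussians} is stated with a matching $e^{\beta^2/\cdots}$ prefactor, the estimate follows.
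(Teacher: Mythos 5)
Your proposal matches the paper's proof essentially step for step: the Jacobi inversion $\theta_3(w,i\tau)=\tau^{-1/2}\sum_{p}e^{-\pi(w+p)^2/\tau}$, the reindexing $m=k+pd$ with the add-and-subtract of $f(k+pd)$ to isolate the periodized main term plus a correction supported on $|m|\geq\frac{d+1}{2}$, and the bound $|f(k(m))-f(m)|\leq 2c|m|^{\beta}$ fed into Proposition \ref{prop:bound_tail_sum_gaussians} (whose prefactor $e^{\beta^2/(4a^2\alpha)}$ with $a=\frac{d+1}{2}$, $\alpha=\frac{\pi}{\xi^2 d}$ indeed yields $e^{\beta^2\xi^2/(\pi d)}$) for the two sign halves. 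This is the same argument, and the bookkeeping you flag goes through exactly as you describe.
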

Following similar steps to the proof of lemma \ref{lemma:truncation_theta_sum}, one can prove:
\begin{lemma}
\label{lemma:truncation_theta_prime_sum}
    Let $f$ denote a complex-valued function defined on $\mathbf{R}$ such that there exists $c, \beta \geq 0$ for which:
    \begin{align*}
        |f(x)| & \leq c|x|^{\beta}\,, \qquad x \in \mathbf{R}\,.
    \end{align*}
    Let $z \in \left(-\frac{1}{2}, \frac{1}{2}\right), \xi > 0$ and $d \in \mathbf{N}$ odd. Then the following estimate holds:
    \begin{align}
        \sum_{-\frac{d - 1}{2} \leq k \leq \frac{d - 1}{2}}f(k)\theta_3'\left(z + \frac{k}{d}, \frac{i\xi^2}{d}\right) & = -\frac{2\pi}{\sqrt{\xi^2d}}\sum_{k \in \mathbf{Z}}f(k)(dz + k)\exp\left(-\frac{\pi}{\xi^2d}(dz + k)^2\right) + \varepsilon\,,\\
        |\varepsilon| & \leq 2^{3 - \beta}\pi(|z| + 1)c\sqrt{\frac{1}{\xi^2d}}e^{\frac{\beta^2\xi^2}{\pi d}}(d + 1)^{\beta + 1}\frac{e^{-\frac{\pi d}{\xi^2}\left(\frac{1}{2} + \frac{1}{2d} - |z|\right)^2}}{1 - e^{-\frac{2\pi}{\xi^2}\left(\frac{1}{2} + \frac{1}{2d} - |z|\right)}}\,.
    \end{align}
\end{lemma}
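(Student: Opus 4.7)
The plan is to mirror the proof of Lemma~\ref{lemma:truncation_theta_sum} step by step, replacing the Poisson summation representation of $\theta_3$ by that of its $z$-derivative. Starting from the expansion used in the previous proof and differentiating with respect to $z$ gives
\begin{align*}
\theta_3'\!\left(z + \tfrac{k}{d}, \tfrac{i\xi^2}{d}\right) = -\frac{2\pi}{\sqrt{\xi^2 d}} \sum_{p \in \mathbf{Z}} (dz + k + pd)\,\exp\!\left(-\tfrac{\pi}{\xi^2 d}(dz + k + pd)^2\right),
\end{align*}
so that substituting into $\sum_{|k|\leq(d-1)/2} f(k)\,\theta_3'(z+k/d, i\xi^2/d)$ and re-indexing by $m = k+pd$ (with $k = m - d\lfloor(m+(d-1)/2)/d\rfloor$) yields a sum over $m \in \mathbf{Z}$ with the same structure as in Lemma~\ref{lemma:truncation_theta_sum}, but carrying an additional linear factor $(dz+m)$ in each summand.

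I would then split the reindexed sum by writing $f(m - d\lfloor\cdots\rfloor) = f(m) + \bigl[f(m-d\lfloor\cdots\rfloor) - f(m)\bigr]$. The first piece, once completed over all $m\in\mathbf{Z}$, reproduces the announced main term exactly; the second piece vanishes for $|m|\leq(d-1)/2$ and, for $|m|\geq(d+1)/2$, is bounded by $2c\,(|m|+d)^\beta$ using the hypothesis on $f$. The additional multiplicative factor $|dz+m|\leq |z|d + |m|$ in each summand contributes, after collecting terms on the relevant tail, a further prefactor no larger than $(|z|+1)(d+1)$ on top of the polynomial weight $(d+1)^\beta$ already present in Lemma~\ref{lemma:truncation_theta_sum}. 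This explains both the upgrade from $(d+1)^\beta$ to $(d+1)^{\beta+1}$ and the appearance of the $(|z|+1)$ factor in the final error estimate.

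The residual Gaussian tail $\sum_{|m|\geq(d+1)/2}\exp(-\pi(dz+m)^2/(\xi^2 d))$ is controlled by Proposition~\ref{prop:bound_tail_sum_gaussians} in exactly the manner used in Lemma~\ref{lemma:truncation_theta_sum}, producing the ratio $e^{-\pi d/\xi^2\,(1/2 + 1/(2d) - |z|)^2}/(1 - e^{-2\pi/\xi^2\,(1/2 + 1/(2d) - |z|)})$ together with the factor $e^{\beta^2\xi^2/(\pi d)}$. Combining this with the $2\pi/\sqrt{\xi^2 d}$ prefactor inherited from the derivative and with the polynomial contribution $(|z|+1)(d+1)^{\beta+1}$ reproduces the announced bound; the numerical constant is upgraded from $2^{2-\beta}$ to $2^{3-\beta}$ because of the extra $(dz+m)$ weight, which at worst doubles the crude estimates used in the previous proof. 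The only real obstacle is bookkeeping: one must verify that Proposition~\ref{prop:bound_tail_sum_gaussians} can still be applied once the additional linear weight $(dz+m)$ is introduced, which is handled either by invoking the proposition with $\beta$ shifted up by one (absorbing $(dz+m)$ into the polynomial bound on $f$) or, equivalently, by noting the identity $x\,e^{-\alpha x^2} = -(2\alpha)^{-1}\partial_x e^{-\alpha x^2}$ and summing by parts in the tail.
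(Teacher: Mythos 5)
Your proposal is correct and is exactly the route the paper intends: the paper gives no separate proof of this lemma, stating only that it follows the same steps as Lemma~\ref{lemma:truncation_theta_sum}, which is what you carry out — differentiate the periodized-Gaussian representation, reindex by $m = k + pd$, split off the completed main sum, and control the wrapped tail via Proposition~\ref{prop:bound_tail_sum_gaussians} applied with exponent $\beta+1$ so as to absorb the linear weight $(dz+m)$ (using $|dz+m|\leq 2(|z|+1)|m|$ on the tail), which indeed reproduces the prefactor $2^{3-\beta}\pi(|z|+1)c(d+1)^{\beta+1}$. The only cosmetic discrepancies are that your intermediate claim that the extra weight costs merely a uniform factor $(|z|+1)(d+1)$ is not literally true on the tail and is only repaired by the shift $\beta\to\beta+1$ you invoke at the end, and that this shift technically yields the subexponential factor $e^{(\beta+1)^2\xi^2/(\pi d)}$ rather than the stated $e^{\beta^2\xi^2/(\pi d)}$ — immaterial for every use of the lemma, where $\xi^2 \ll d$.
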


\subsection{Approximation of the QND condition by the quasi-ideal clock}
In this section, we will be concerned with how well the time operator $\Hat{t}_d$ of the quasi-ideal clock approximates the QND condition with respect to the Hamiltonian $\Hat{H}_d$. More precisely, we will derive a bound for how close $[\Hat{t}_d(t_1), \Hat{t}_d(t_0)]\ket{\Psi_0}$ is to $0$ given two times $t_0, t_1$ and some quasi-ideal state $\ket{\Psi_0}$.

\subsubsection{With a ``linear'' time operator}
We now want to show that the commutator $[\hat{t}_d(t_1), \hat{t}_d(t_0)]$ (where $\hat{t}_d(t) := e^{i\hat{H}_dt}\hat{t}_de^{-i\hat{H}_dt}$), when applied to a Gaussian state, gives a vector whose magnitude decays exponentially with the dimension. It will be convenient to introduce parameters $\eta_0, \eta_1$ for $t_0, t_1$ which measure their distance to the points $\pm \frac{d - 1}{2}$, playing the same role as $\alpha_0$ with respect to $n_0$ (equation \ref{eq:definition_alpha_0}):
\begin{align}
    \eta_0 & := 1 - \frac{2}{d - 1}|t_0|\,,\\
    \eta_1 & := 1 - \frac{2}{d - 1}|t_1|\,.
\end{align}
From now on, we assume $|t_0|, |t_1| < \frac{d - 1}{2}$ and hence $0 \leq \eta_0, \eta_1 < 1$. Namely, we will show:
\begin{proposition}
    For all $-\frac{d - 1}{2} \leq k \leq \frac{d - 1}{2}$,
    \begin{align}
        \left|\braket{\theta_k|[\Hat{t}_d(t_1), \Hat{t}_d(t_0)]|\Psi_0}\right| & = \mathcal{O}\left(d^{\left(3 - \frac{\beta}{2}\right) \vee \left(\frac{1}{4} + \frac{\beta}{4}\right)}\frac{e^{-\frac{\pi d^{1 + \beta}}{4}(1 - \alpha_0)^2}}{1 - e^{-\pi d^{\beta}(1 - \alpha_0)}} + d^{\frac{3}{4} - \frac{\beta}{4}}\frac{e^{-\frac{\pi d^{1 - \beta}}{4}\left[(1 - \eta_0 \vee \eta_1)^2\right]}}{1 - e^{-\pi d^{-\beta}\left[(1 - \eta_0 \vee \eta_1)\right]}}\right)
    \end{align}
\end{proposition}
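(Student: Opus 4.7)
\emph{Proof proposal.} The plan is to expand $\braket{\theta_k|[\hat t_d(t_1),\hat t_d(t_0)]|\Psi_0}$ as a finite iterated sum of overlaps and quasi-ideal amplitudes, use the closed-form theta representation of $\Psi_0$, and then apply Lemmas~\ref{lemma:truncation_theta_sum} and~\ref{lemma:truncation_theta_prime_sum} to replace each truncated theta sum by its infinite Gaussian analog plus an exponentially small error. The infinite Gaussian sums recombine into the matrix element of the commutator of the \emph{idealized} clock $\hat H = v\hat p$ on $L^2(\mathbf{R})$, which vanishes identically because $\hat q(t)=\hat q(0)+vt$ is QND. Only the lemma error terms then contribute to the claimed bound.

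First I would derive the shift rule $e^{-i\hat H_d t}\ket{\theta_s}=\ket{\theta_{s+dt/(2\pi)}}$ for the continuous-parameter time eigenstates introduced in Section~\ref{General setting} and use it to write
\[
\hat t_d(t)=\sum_{j\in\mathbb Z_d} j\,\ketbra{\theta_{j-dt/(2\pi)}}{\theta_{j-dt/(2\pi)}}.
\]
Inserting this twice into the commutator and using $\braket{\theta_a|\theta_b}=\tfrac{1}{d}\sum_n e^{2\pi i n(a-b)/d}$ reduces the matrix element to a finite nested sum whose only $\Psi_0$-dependent factor is an amplitude $\braket{\theta_s|\Psi_0}$ evaluated at shifted $s$. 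By the Fourier identity for the quasi-ideal state,
\[
\braket{\theta_s|\Psi_0}=\mathcal N'\,e^{-2\pi i n_0 s/d}\,\theta_3\!\bigl(\tfrac{s-n_0}{d},\,\tfrac{i}{\xi^2 d}\bigr),
\]
so after interchanging summations one is left with sums of exactly the type handled by the two lemmas, either directly or with $\theta_3$ replaced by $\theta_3'$ (the latter arising when a $j$-weight inside $\hat t_d$ must be resummed against a time-eigenstate overlap).

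Next I would apply the lemmas in two modular regimes. Applying Lemma~\ref{lemma:truncation_theta_sum} with parameter $i/(\xi^2 d)$ (the one appearing in $\widetilde\psi$) yields the rate $e^{-\pi d^{1+\beta}(\tfrac12-|z|)^2}$ after substituting $\xi^2=d^\beta$; the quantity $\tfrac12-|z|$ is bounded below by $1-\alpha_0$ via \eqref{eq:definition_alpha_0}, since $z$ is offset by $n_0/d$. Applying Lemma~\ref{lemma:truncation_theta_sum} (and Lemma~\ref{lemma:truncation_theta_prime_sum}) with the dual parameter $i\xi^2/d$ (the one appearing in $\psi$) yields the rate $e^{-\pi d^{1-\beta}(\tfrac12-|z|)^2}$, where now $\tfrac12-|z|$ is bounded below by $1-\eta_0\vee\eta_1$ because the shift of $z$ is governed by $t_0,t_1$. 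The polynomial prefactors $d^{(3-\beta/2)\vee(1/4+\beta/4)}$ and $d^{3/4-\beta/4}$ then emerge by combining the $c(d+1)^\beta$ and $(d+1)^{\beta+1}$ factors from the lemmas (with $c=O(d^2)$ coming from two nested $j$-weights, each of size at most $(d-1)/2$) with the normalization $\mathcal N'\sim d^{-3/4-\beta/4}$ from~\eqref{eq:quasi_ideal_states_scaling_nprime} and the $\sqrt{\xi^2/d}$ or $1/\sqrt{\xi^2 d}$ prefactor inside each lemma.

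The main obstacle is verifying the last step: the leading ``bulk'' terms produced by the two lemmas must reassemble into a commutator of two continuum time operators acting on a genuine Gaussian state, so that they cancel \emph{exactly}; any residual mismatch would leave a polynomially large remainder instead of the desired exponentially small bound. Cancellation goes through because the map $\theta_3(z+k/d,i\xi^2/d)\mapsto \sqrt{\xi^2/d}\sum_{k\in\mathbf Z}e^{-\pi(dz+k)^2/(\xi^2d)}$ inside Lemma~\ref{lemma:truncation_theta_sum} replaces every finite sum by its exact unbounded-lattice analog, and on $L^2(\mathbf R)$ the Heisenberg evolution of the time operator under $\hat H=v\hat p$ is a scalar shift, which commutes with itself. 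A secondary subtlety is uniformity in $k$: the $z$ entering the lemma error bounds depends on $n_0,t_0,t_1$ but on $k$ only through a phase that drops out upon taking absolute value, so the bound holds with no $k$-dependence as claimed.
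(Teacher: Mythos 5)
Your proposal follows essentially the same route as the paper: expand the commutator matrix element in the time eigenbasis, insert the theta-function form of $\ket{\Psi_0}$, apply Lemmas~\ref{lemma:truncation_theta_sum} and~\ref{lemma:truncation_theta_prime_sum} iteratively so that only the lemmas' exponentially small errors survive, and track the prefactors (including $\mathcal N'$) to obtain the two error families governed by $1-\alpha_0$ and $1-\eta_0\vee\eta_1$. The only difference is cosmetic: where you argue the resummed bulk terms cancel because they reproduce the idealized continuum commutator, the paper establishes the cancellation directly by showing that, after the four lemma applications, the leading term of each ordering is literally independent of $t_0,t_1$, so the two orderings subtract to zero — which is the rigorous version of your ``reassembly'' step.
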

\begin{proof}
To start with, we first express this commutator in the time eigenbasis:
\begin{align}
    & [\hat{t}_d(t_1), \hat{t}_d(t_0)]\nonumber\\
    & = \sum_{-\frac{d - 1}{2} \leq k_0, k_1 \leq \frac{d - 1}{2}}k_0k_1\left(\ket{\theta_{k_1 - t_1}}\braket{\theta_{k_1 - t_1}|\theta_{k_0 - t_0}}\bra{\theta_{k_0 - t_0}} - \ket{\theta_{k_0 - t_0}}\braket{\theta_{k_0 - t_0}|\theta_{k_1 - t_1}}\bra{\theta_{k_1 - t_1}}\right)\\
    & = \sum_{\substack{-\frac{d - 1}{2} \leq k_0, k_1 \leq \frac{d - 1}{2}\\-\frac{d - 1}{2} \leq k, l \leq \frac{d - 1}{2}}}k_0k_1\left(\braket{\theta_k|\theta_{k_1 - t_1}}\braket{\theta_{k_1 - t_1}|\theta_{k_0 - t_0}}\braket{\theta_{k_0 - t_0}|\theta_l}\right.\nonumber\\
    & \left. \hspace{0.25\textwidth} - \braket{\theta_k|\theta_{k_0 - t_0}}\braket{\theta_{k_0 - t_0}|\theta_{k_1 - t_1}}\braket{\theta_{k_1 - t_1}|\theta_l}\right)\ket{\theta_k}\bra{\theta_l}\\
    & = \frac{1}{d^3}\sum_{\substack{-\frac{d - 1}{2} \leq k_0, k_1 \leq \frac{d - 1}{2}\\-\frac{d - 1}{2} \leq k, l \leq \frac{d - 1}{2}\\-\frac{d - 1}{2} \leq p, q, r \leq \frac{d - 1}{2}}}k_0k_1\left[\exp\left(\frac{2\pi i}{d}\left(p(k - k_1 + t_1) + q(k_1 - k_0 - t_1 + t_0) + r(k_0 - l - t_0)\right)\right)\right.\nonumber\\
    & \left. \hspace{0.15\textwidth} - \exp\left(\frac{2\pi i}{d}\left(p(k - k_0 + t_0) + q(k_0 - k_1 + t_1 - t_0) + r(k_1 - l - t_1)\right)\right)\right]\ket{\theta_k}\bra{\theta_l}\,.
\end{align}
It follows:
\begin{align}
    & \braket{\theta_k|[\hat{t}_d(t_1), \hat{t}_d(t_0)]|\Psi_0}\nonumber\\
    & = \frac{1}{d^{5/2}}\sum_{\substack{-\frac{d - 1}{2} \leq k_0, k_1 \leq \frac{d - 1}{2}\\-\frac{d - 1}{2} \leq p, q, r \leq \frac{d - 1}{2}}}k_0k_1\left[\exp\left(\frac{2\pi i}{d}\left(p(k - k_1 + t_1) + q(k_1 - k_0 - t_1 + t_0) + r(k_0 - t_0)\right)\right)\right.\nonumber\\
    & \left. \hspace{0.15\textwidth} - \exp\left(\frac{2\pi i}{d}\left(p(k - k_0 + t_0) + q(k_0 - k_1 + t_1 - t_0) + r(k_1 - t_1)\right)\right)\right]\widetilde{\psi}(r)\,.
\end{align}
Let us focus on estimating the first term above, 
\begin{align}
    S & := \frac{1}{d^{5/2}}\sum_{\substack{-\frac{d - 1}{2} \leq k_0, k_1 \leq \frac{d - 1}{2}\\-\frac{d - 1}{2} \leq p, q, r \leq \frac{d - 1}{2}}}k_0k_1\exp\left(\frac{2\pi i}{d}\left(p(k - k_1 + t_1) + q(k_1 - k_0 - t_1 + t_0) + r(k_0 - t_0)\right)\right)\,\widetilde{\psi}(r)\,.
\end{align}
By lemma \ref{lemma:truncation_theta_sum} (applied with $c = 1, \beta = 0, z = -\frac{n_0}{d}$), one may first perform the summation in $r$:
\begin{align}
    & \sum_{-\frac{d - 1}{2} \leq r \leq \frac{d - 1}{2}}\exp\left(\frac{2\pi ir(k_0 - t_0)}{d}\right)\widetilde{\psi}(r)\nonumber\\
    & = \mathcal{N}'\sum_{-\frac{d - 1}{2} \leq r \leq \frac{d - 1}{2}}\exp\left(\frac{2\pi ir(k_0 - t_0)}{d}\right)\theta_3\left(\frac{r - n_0}{d}, \frac{i}{\xi^2d}\right)\\
    & = \mathcal{N}'\sqrt{\frac{1}{\xi^2d}}\sum_{m \in \mathbf{Z}}\exp\left(\frac{2\pi im(k_0 - t_0)}{d}\right)\exp\left(-\frac{\pi\xi^2}{d}(m - n_0)^2\right) + \varepsilon_1\,,
\end{align}
where
\begin{align}
    |\varepsilon_1| & = \mathcal{O}\left(\sqrt{\frac{1}{\xi^2d}}\frac{e^{-\frac{\pi\xi^2 d}{4}(1 - \alpha_0)^2}}{1 - e^{-\pi\xi^2(1 - \alpha_0)}}\right)\,.
\end{align}
The leading above term can be cast to a $\theta$ function:
\begin{align}
    & \mathcal{N}'\sqrt{\frac{1}{\xi^2d}}\sum_{m \in \mathbf{Z}}\exp\left(\frac{2\pi im(k_0 - t_0)}{d}\right)\exp\left(-\frac{\pi\xi^2}{d}(m - n_0)^2\right)\nonumber\\
    & = \mathcal{N}'\sqrt{\frac{1}{\xi^2d}}\exp\left(\frac{2\pi in_0(k_0 - t_0)}{d}\right)\sum_{m \in \mathbf{Z}}\exp\left(\frac{2\pi im(k_0 - t_0)}{d}\right)\exp\left(-\frac{\pi\xi^2}{d}m^2\right)\\
    & = \mathcal{N}'\sqrt{\frac{1}{\xi^2d}}\exp\left(\frac{2\pi in_0(k_0 - t_0)}{d}\right)\theta_3\left(\frac{k_0}{d} - \frac{t_0}{d}, \frac{i\xi^2}{d}\right) \,.
\end{align}
Therefore
\begin{align}
    S & = \frac{1}{d^{5/2}}\sum_{\substack{-\frac{d - 1}{2} \leq p, q \leq \frac{d - 1}{2}\\-\frac{d - 1}{2} \leq k_0, k_1 \leq \frac{d - 1}{2}}}k_0k_1\exp\left(\frac{2\pi i}{d}\left(p(k - k_1 + t_1) + q(k_1 - k_0 - t_1 + t_0)\right)\right)\nonumber\\
    & \hspace{0.15\textwidth} \times \mathcal{N}'\sqrt{\frac{1}{\xi^2d}}\exp\left(\frac{2\pi in_0(k_0 - t_0)}{d}\right)\theta_3\left(\frac{k_0}{d} - \frac{t_0}{d}, \frac{i\xi^2}{d}\right) + \varepsilon_1'\,,\\
    \varepsilon_1' & = \mathcal{O}\left(d^{7/2}\sqrt{\frac{1}{\xi^2d}}\frac{e^{-\frac{\pi\xi^2d}{4}(1 - \alpha_0)^2}}{1 - e^{-\pi\xi^2(1 - \alpha_0)}}\right)\,.
\end{align}
We will now use lemma \ref{lemma:truncation_theta_sum} again to perform the summation over $k_0$:
\begin{align}
    & \sum_{-\frac{d - 1}{2} \leq k_0 \leq \frac{d - 1}{2}}k_0\exp\left(\frac{2\pi ik_0(n_0 - q)}{d}\right)\theta_3\left(\frac{k_0}{d} - \frac{t_0}{d}, \frac{i\xi^2}{d}\right)\nonumber\\
    & = \sqrt{\frac{\xi^2}{d}}\sum_{m \in \mathbf{Z}}m\exp\left(\frac{2\pi im(n_0 - q)}{d}\right)\exp\left(-\frac{\pi}{\xi^2d}(m - t_0)^2\right) + \varepsilon_2\,,
\end{align}
where
\begin{align}
    |\varepsilon_2| & = \mathcal{O}\left(\sqrt{\frac{\xi^2}{d}}d\frac{e^{-\frac{\pi d}{4\xi^2}(1 - \eta_0)^2}}{1 - e^{-\frac{\pi}{\xi^2}(1 - \eta_0)}}\right)\,.
\end{align}
The leading term above can be written as the derivative of a $\theta$ function:
\begin{align}
    & \sqrt{\frac{\xi^2}{d}}\sum_{m \in \mathbf{Z}}m\exp\left(\frac{2\pi im(n_0 - q)}{d}\right)\exp\left(-\frac{\pi}{\xi^2d}(m - t_0)^2\right)\nonumber\\
    & = \sqrt{\frac{\xi^2}{d}}\exp\left(\frac{2\pi it_0(n_0 - q)}{d}\right)\sum_{m \in \mathbf{Z}}m\exp\left(\frac{2\pi im(n_0 - q)}{d}\right)\exp\left(-\frac{\pi}{\xi^2d}m^2\right)\\
    & = -\frac{id}{2\pi}\sqrt{\frac{\xi^2}{d}}\exp\left(\frac{2\pi it_0(n_0 - q)}{d}\right)\theta_3'\left(\frac{q}{d} - \frac{n_0}{d}, \frac{i}{\xi^2d}\right)\,.
\end{align}
Therefore, we have established 
\begin{align}
    S & = \frac{1}{d^{5/2}}\sum_{\substack{-\frac{d - 1}{2} \leq k_1 \leq \frac{d - 1}{2}\\-\frac{d - 1}{2} \leq p, q \leq \frac{d - 1}{2}}}k_1\exp\left(\frac{2\pi i}{d}\left(p(k - k_1 + t_1) + q(k_1 - t_1)\right)\right)\frac{-i\mathcal{N}'}{2\pi}\theta_3'\left(\frac{q}{d} - \frac{n_0}{d}, \frac{i}{\xi^2d}\right)\nonumber\\
    & \hspace{0.05\textwidth} + \varepsilon_1' + \varepsilon_2'\,,
\end{align}
with
\begin{align}
    |\varepsilon_2'| & \leq \mathcal{O}\left(\mathcal{N}'d^{3/2}\frac{e^{-\frac{\pi d}{4\xi^2}(1 - \eta_0)^2}}{1 - e^{-\frac{\pi}{\xi^2}(1 - \eta_0)}}\right)\,.\\\nonumber
\end{align}
One may now perform the summation in $q$ using lemma \ref{lemma:truncation_theta_prime_sum}:
\begin{align}
    & \sum_{-\frac{d - 1}{2} \leq q \leq \frac{d - 1}{2}}\exp\left(\frac{2\pi iq(k_1 - t_1)}{d}\right)\theta_3'\left(\frac{q}{d} - \frac{n_0}{d}, \frac{i}{\xi^2d}\right)\nonumber\\
    & = -2\pi\sqrt{\frac{\xi^2}{d}}\sum_{m \in \mathbf{Z}}\exp\left(\frac{2\pi im(k_1 - t_1)}{d}\right)(m - n_0)\exp\left(-\frac{\pi\xi^2}{d}(m - n_0)^2\right) + \varepsilon_3\,,
\end{align}
where
\begin{align}
    |\varepsilon_3| & = \mathcal{O}\left(\sqrt{\frac{\xi^2}{d}}d\frac{e^{-\frac{\pi\xi^2d}{4}(1 - \alpha_0)^2}}{1 - e^{-\pi\xi^2(1 - \alpha_0)}}\right)\,.
\end{align}
The leading above term can again be written as the derivative of a $\theta$ function:
\begin{align}
    & -2\pi\sqrt{\frac{\xi^2}{d}}\sum_{m \in \mathbf{Z}}\exp\left(\frac{2\pi im(k_1 - t_1)}{d}\right)(m - n_0)\exp\left(-\frac{\pi\xi^2}{d}(m - n_0)^2\right)\nonumber\\
    & = -2\pi\sqrt{\frac{\xi^2}{d}}\exp\left(\frac{2\pi in_0(k_1 - t_1)}{d}\right)\sum_{m \in \mathbf{Z}}\exp\left(\frac{2\pi im(k_1 - t_1)}{d}\right)m\exp\left(-\frac{\pi\xi^2}{d}m^2\right)\\
    & = id\sqrt{\frac{\xi^2}{d}}\exp\left(\frac{2\pi in_0(k_1 - t_1)}{d}\right)\theta_3'\left(\frac{k_1}{d} - \frac{t_1}{d}, \frac{i\xi^2}{d}\right)\,.
\end{align}
Therefore
\begin{align}
    S & = \frac{1}{d^{5/2}}\sum_{\substack{-\frac{d - 1}{2} \leq k_1 \leq \frac{d - 1}{2}\\-\frac{d - 1}{2} \leq p \leq \frac{d - 1}{2}}}k_1\exp\left(\frac{2\pi i}{d}\left(p(k - k_1 + t_1) + n_0(k_1 - t_1)\right)\right)\frac{d}{2\pi}\mathcal{N}'\sqrt{\frac{\xi^2}{d}}\theta_3'\left(\frac{k_1}{d} - \frac{t_1}{d}, \frac{i\xi^2}{d}\right)\nonumber\\
    & \hspace{0.05\textwidth} + \varepsilon_1' + \varepsilon_2' + \varepsilon_3'\,,
\end{align}
with
\begin{align}
    |\varepsilon_3'| & = \mathcal{O}\left(\mathcal{N}'\sqrt{\frac{\xi^2}{d}}d^{3/2}\frac{e^{-\frac{\pi\xi^2d}{4}(1 - \alpha_0)^2}}{1 - e^{-\pi\xi^2(1 - \alpha_0)}}\right)\,.
\end{align}
One now carries out the summation over $k_1$:
\begin{align}
    & \sum_{-\frac{d - 1}{2} \leq k_1 \leq \frac{d - 1}{2}}k_1\exp\left(\frac{2\pi ik_1(n_0 - p)}{d}\right)\theta_3'\left(\frac{k_1}{d} - \frac{t_1}{d}, \frac{i\xi^2}{d}\right)\nonumber\\
    & = -\frac{2\pi}{\sqrt{\xi^2d}}\sum_{m \in \mathbf{Z}}m(m - t_1)\exp\left(\frac{2\pi im(n_0 - p)}{d}\right)\exp\left(-\frac{\pi}{\xi^2d}(m - t_1)^2\right) + \varepsilon_4\,,
\end{align}
where
\begin{align}
    |\varepsilon_4| & = \mathcal{O}\left(\frac{1}{\sqrt{\xi^2d}}d^2\frac{e^{-\frac{\pi d}{4\xi^2}(1 - \eta_1)^2}}{1 - e^{-\frac{\pi}{\xi^2}(1 - \eta_1)}}\right)\,.
\end{align}
The leading terms can be rewritten as
\begin{align}
    & -\frac{2\pi}{\sqrt{\xi^2d}}\sum_{m \in \mathbf{Z}}m(m - t_1)\exp\left(\frac{2\pi im(n_0 - p)}{d}\right)\exp\left(-\frac{\pi}{\xi^2d}(m - t_1)^2\right)\nonumber\\
    & = -\frac{2\pi}{\sqrt{\xi^2d}}\exp\left(\frac{2\pi it_1(n_0 - p)}{d}\right)\sum_{m \in \mathbf{Z}}m^2\exp\left(\frac{2\pi im(n_0 - p)}{d}\right)\exp\left(-\frac{\pi}{\xi^2d}m^2\right)\\
    & = \frac{d^2}{2\pi\sqrt{\xi^2d}}\exp\left(\frac{2\pi it_1(n_0 - p)}{d}\right)\theta_3''\left(\frac{p}{d} - \frac{n_0}{d}, \frac{i}{\xi^2d}\right)\,.
\end{align}
Therefore
\begin{align}
    S & = \frac{1}{d^{5/2}}\sum_{-\frac{d - 1}{2} \leq p \leq \frac{d - 1}{2}}\exp\left(\frac{2\pi ipk}{d}\right)\mathcal{N}'\frac{d^2}{4\pi^2}\theta_3''\left(\frac{p}{d} - \frac{n_0}{d}, \frac{i}{\xi^2d}\right)\\
    & \hspace{0.05\textwidth} + \varepsilon_1' + \varepsilon_2' + \varepsilon_3' + \varepsilon_4'\,,
\end{align}
with
\begin{align}
    |\varepsilon_4'| & = \mathcal{O}\left(\mathcal{N}'d^{1/2}\frac{e^{-\frac{\pi d}{4\xi^2}(1 - \eta_1)^2}}{1 - e^{-\frac{\pi}{\xi^2}(1 - \eta_1)}}\right)\,.
\end{align}
All in all, after recalling the scalings for $\xi^2$ and $\mathcal{N}'$ we enforced or established, one obtains:
\begin{align}
    S & = \frac{1}{d^{5/2}}\sum_{-\frac{d - 1}{2} \leq p \leq \frac{d - 1}{2}}\exp\left(\frac{2\pi ipk}{d}\right)\mathcal{N}'\frac{d^2}{4\pi^2}\theta_3''\left(\frac{p}{d} - \frac{n_0}{d}, \frac{i}{\xi^2d}\right)\nonumber\\
    & \hspace{0.05\textwidth} + \varepsilon_1' + \varepsilon_2' + \varepsilon_3' + \varepsilon_4'\,,\\
    |\varepsilon_1'| & = \mathcal{O}\left(d^{3 - \frac{\beta}{2}}\frac{e^{-\frac{\pi d^{1 + \beta}}{4}(1 - \alpha_0)^2}}{1 - e^{-\pi d^{\beta}(1 - \alpha_0)}}\right)\,,\\
    |\varepsilon_2'| & = \mathcal{O}\left(d^{\frac{3}{4} - \frac{\beta}{4}}\frac{e^{-\frac{\pi d^{1 - \beta}}{4}(1 - \eta_0)^2}}{1 - e^{-\pi d^{-\beta}(1 - \eta_0)}}\right)\,,\\
    |\varepsilon_3'| & = \mathcal{O}\left(d^{\frac{1}{4} + \frac{\beta}{4}}\frac{e^{-\frac{\pi d^{1 + \beta}}{4}(1 - \alpha_0)^2}}{1 - e^{-\pi d^{\beta}(1 - \alpha_0)}}\right)\,,\\
    |\varepsilon_4'| & = \mathcal{O}\left(d^{-\frac{1}{4} - \frac{\beta}{4}}\frac{e^{-\frac{\pi d^{1 - \beta}}{4}(1 - \eta_1)^2}}{1 - e^{-\pi d^{-\beta}(1 - \eta_1)}}\right)\,.
\end{align}
Note that the leading term does not depend at all on $t_0, t_1$, so in particular not on their ordering. (However, $t_0, t_1$ do contribute to the errors; this essentially says that for the latter to be under control, the wavefunction should not have moved too close to the ``boundary times'' $\pm\frac{d - 1}{2}$ at times $t_0, t_1$.) Therefore, we have indeed established that $\braket{\theta_k|[\hat{t}_d(t_1), \hat{t}_d(t_0)]|\Psi_0}$ vanishes up to exponential errors.
\end{proof}

\subsubsection{With a ``periodic'' time operator}
In this section, we will essentially repeat the calculation we have just performed in the previous subsection, except that we will replace the time operator by a $d$-periodized version. More precisely, given fix integers $m, n$ $\left(-\frac{d - 1}{2} \leq m, n \leq \frac{d - 1}{2}\right)$, we will estimate
\begin{equation}
    \left[\exp\left(\frac{2\pi in\hat{t}_d(t_1)}{d}\right), \exp\left(\frac{2\pi im\hat{t}_d(t_0)}{d}\right)\right]\,.
\end{equation}
Precisely, we will prove:
\begin{proposition}
For all $-\frac{d - 1}{2} \leq k \leq \frac{d - 1}{2}$,
\begin{align}
    \left|\braket{\theta_k|\left[\exp\left(\frac{2\pi in\Hat{t}_d(t_1)}{d}\right), \exp\left(\frac{2\pi im\Hat{t}_d(t_0)}{d}\right)\right]|\Psi_0}{}\right| & = \mathcal{O}\left(\xi\mathcal{N}'\frac{e^{-\frac{\pi\xi^2}{d}\left(\frac{d + 1}{2} - |m + n| \vee |m| - n_0\right)^2}}{1 - e^{-\frac{2\pi\xi^2}{d}\left(\frac{d + 1}{2} - |m + n| \vee |m| - n_0\right)}}\right)
\end{align}
\end{proposition}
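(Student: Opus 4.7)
The plan is to mirror the five-step sum-reduction scheme carried out for $[\hat{t}_d(t_1),\hat{t}_d(t_0)]\ket{\Psi_0}$ in the previous subsection, with the key structural change that Lemma~\ref{lemma:truncation_theta_sum} now replaces Lemma~\ref{lemma:truncation_theta_prime_sum} at every step, because the summands will carry pure phases rather than the polynomial factors $k_0,k_1$.

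I would first expand each Heisenberg-picture exponential in the time eigenbasis via its spectral decomposition,
\begin{align*}
    e^{\frac{2\pi in\hat{t}_d(t_1)}{d}} & = \sum_{k_1}e^{\frac{2\pi ink_1}{d}}\ketbra{\theta_{k_1-t_1}}{\theta_{k_1-t_1}},
\end{align*}
and analogously for the $m,t_0$ factor. Rewriting each inner product $\braket{\theta_a|\theta_b}$ as a discrete Fourier phase over an auxiliary index then casts the matrix element $\braket{\theta_k|[\,\cdot\,,\,\cdot\,]|\Psi_0}$ as a five-fold sum over $r, k_0, q, k_1, p$ that is structurally identical to the one appearing in the linear-operator case, the only changes being that the polynomial weight $k_0k_1$ is replaced by the phase $e^{2\pi i(mk_0+nk_1)/d}$ and an overall prefactor $e^{-2\pi i(mt_0+nt_1)/d}$ is produced by the Heisenberg conjugations.

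Next, I would collapse the sums in the same order $r\to k_0\to q\to k_1\to p$. At each step, the extra phase factor carried through the sum combines with the Fourier phases from the inner products, and the exponentially weighted Gaussian sum produced by Lemma~\ref{lemma:truncation_theta_sum} (applied with $f\equiv 1$) is recognized, via the standard identity $\sum_\ell e^{2\pi i\ell\alpha}e^{-\pi(\ell-a)^2/T}\propto e^{2\pi ia\alpha}\theta_3(\alpha+\ldots)$, as a new $\theta_3$ function whose argument has been translated by $m/d$ or $n/d$ --- rather than differentiated, as was the case when Lemma~\ref{lemma:truncation_theta_prime_sum} was used. After the five reductions, the leading contribution is a single $\theta_3$ (evaluated at an argument shifted from $-n_0/d$ by $(m+n)/d$) multiplied by a prefactor depending on $m,n,t_0,t_1$ in a manner manifestly symmetric under the simultaneous swap $(m,t_0)\leftrightarrow(n,t_1)$. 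The two orderings in the commutator therefore produce identical leading terms which cancel exactly, just as in the linear case the leading term was independent of $t_0,t_1$.

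What survives is the cumulative error from the five applications of Lemma~\ref{lemma:truncation_theta_sum}. Each such error is Gaussian in the distance of the corresponding $\theta_3$ argument from $\pm\tfrac12$. Tracking the shifts, the two most dangerous arguments turn out to be $(n_0+m)/d$ and $(n_0+m+n)/d$; the closer of them to the boundary has distance $\tfrac12+\tfrac{1}{2d}-(n_0+|m|\vee|m+n|)/d$, and the corresponding exponential dominates all other error contributions. Combining it with the prefactor $\mathcal{N}'$ inherited from $\widetilde{\psi}$ and the $\sqrt{\xi^2/d}$ scaling supplied by Lemma~\ref{lemma:truncation_theta_sum} reproduces the announced $\xi\mathcal{N}'$-prefactored bound. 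The main obstacle I anticipate is the precise bookkeeping of how the shifts generated by the two distinct phases $e^{2\pi imk_0/d}$ and $e^{2\pi ink_1/d}$ accumulate through the five successive reductions, and verifying both (i) that the leading terms of the two orderings genuinely coincide and (ii) that among the five a priori different exponential factors only the worst boundary distance $\frac{d+1}{2}-n_0-|m|\vee|m+n|$ governs the final estimate. Both verifications follow exactly the template of the previous proposition, the only new complication being the non-trivial interaction between $n_0$ and the two independent shifts $m$ and $n$.
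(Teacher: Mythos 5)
There is a genuine gap. Your plan transplants the five-step Gaussian-approximation reduction of the linear case, invoking Lemma~\ref{lemma:truncation_theta_sum} at every stage, but this both misses the structural simplification that makes the exponential case work and produces error terms that are not of the stated form. Because the weights attached to $k_0$ and $k_1$ are now pure phases, the sums over $k_0,k_1$ (and subsequently over $p,q$) run over complete residue systems and are \emph{exact}: they yield Kronecker deltas modulo $d$, namely $p\equiv m+n+r$ and $q\equiv m+r \pmod d$, with no approximation at all. The paper's proof uses exactly this, and the only error then comes from the fact that, since $t_0,t_1$ need not be integers, the phases $e^{2\pi i p t_1/d}$, $e^{2\pi i q(t_0-t_1)/d}$ are not invariant under $p\to p\pm d$, $q\to q\pm d$; the wraparound occurs only for $|r|\geq \frac{d+1}{2}-|m|\vee|m+n|$, so the error is bounded by the tail of $\widetilde{\psi}(r)$, which gives precisely the $t_0,t_1$-independent bound in the statement, while the leading term $e^{\frac{2\pi i}{d}(n(k+t_1)+m(k+t_0))}\psi(k)$ is symmetric under $(n,t_1)\leftrightarrow(m,t_0)$ and cancels in the commutator.

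By contrast, each application of Lemma~\ref{lemma:truncation_theta_sum} in your scheme (e.g.\ to $\sum_{k_0}e^{2\pi i k_0(\cdots)/d}\,\theta_3\!\left(\frac{k_0-t_0}{d},\frac{i\xi^2}{d}\right)$ and its $k_1$ analogue) carries an error governed by the distance of $t_0/d$, respectively $t_1/d$, from $\pm\frac12$ --- the analogues of the $\eta_0,\eta_1$ terms that appear explicitly in the linear-operator proposition. These errors survive the cancellation of the leading terms (which are only equal up to those very errors) and are in general \emph{not} dominated by the boundary distance $\frac{d+1}{2}-|m|\vee|m+n|-n_0$: take $n_0=0$ and $t_0$ close to $\pm\frac{d-1}{2}$ and your item (ii) fails, since the claimed bound is then exponentially small while your $\eta_0$-type error is not. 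So the route as described can at best prove a weaker statement with additional $t_0,t_1$-dependent error terms (or under extra hypotheses on $t_0,t_1$), not the proposition as stated. The fix is to exploit periodicity: perform the pure-phase sums exactly, isolate the mod-$d$ wraparound as the sole source of error, and bound it by the tail of $\widetilde{\psi}$ via the modular transformation and the Gaussian tail estimate, which is where the prefactor $\xi\mathcal{N}'$ and the exponent $\frac{\pi\xi^2}{d}\left(\frac{d+1}{2}-|m+n|\vee|m|-n_0\right)^2$ actually come from.
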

\begin{proof}
One has:
\begin{align}
    & \exp\left(\frac{2\pi in\hat{t}_d(t_1)}{d}\right)\exp\left(\frac{2\pi im\hat{t}_d(t_0)}{d}\right)\nonumber\\
    & = \frac{1}{d^3}\sum_{\substack{-\frac{d - 1}{2} \leq k_0, k_1 \leq \frac{d - 1}{2}\\-\frac{d - 1}{2} \leq k, l \leq \frac{d - 1}{2}\\-\frac{d - 1}{2} \leq p, q, r \leq \frac{d - 1}{2}}}\exp\left(\frac{2\pi i}{d}\left(nk_1 + mk_0 + p(k - k_1 + t_1) + q(k_1 - k_0 - t_1 + t_0)\right.\right.\nonumber\\
    & \left.\left. \hspace{0.3\textwidth} + r(k_0 - l - t_0)\right)\right)\ket{\theta_k}\bra{\theta_l}\\
    & = \frac{1}{d}\sum_{\substack{-\frac{d - 1}{2} \leq k, l \leq \frac{d - 1}{2}\\-\frac{d - 1}{2} \leq r \leq \frac{d - 1}{2}\\p := n + q\,[d]\\q := m + r\,[d]}}\exp\left(\frac{2\pi i}{d}\left(p(k + t_1) + q(t_0 - t_1) - r(l + t_0)\right)\right)\ket{\theta_k}\bra{\theta_l}\\
    & = \frac{1}{d}\sum_{\substack{-\frac{d - 1}{2} \leq k, l \leq \frac{d - 1}{2}\\-\frac{d - 1}{2} \leq r \leq \frac{d - 1}{2}\\p := m + n + r\,[d]\\q := m + r\,[d]}}\exp\left(\frac{2\pi i}{d}\left(p(k + t_1) + q(t_0 - t_1) - r(l + t_0)\right)\right)\ket{\theta_k}\bra{\theta_l}\,.
\end{align}
Applying this operator to the initial state $\ket{\Psi_0}$ and projecting onto $\ket{\theta_k}$:
\begin{align}
    & \braket{\theta_k|\exp\left(\frac{2\pi in\hat{t}_d(t_1)}{d}\right)\exp\left(\frac{2\pi im\hat{t}_d(t_0)}{d}\right)|\Psi_0}\nonumber\\
    & = \frac{1}{\sqrt{d}}\sum_{\substack{-\frac{d - 1}{2} \leq k, l \leq \frac{d - 1}{2}\\-\frac{d - 1}{2} \leq r \leq \frac{d - 1}{2}\\p := m + n + r\,[d]\\q := m + r\,[d]}}\exp\left(\frac{2\pi i}{d}\left(p(k + t_1) + q(t_0 - t_1) - rt_0\right)\right)\widetilde{\psi}(r)\,.
\end{align}
Since the times $t_1, t_0$ are not necessarily integers, the summand is a priori not invariant in $p, q$ modulo $d$. Therefore, one must distinguish between the case where $|m + n + r|, |m + r| \leq \frac{d  - 1}{2}$, and the case where at least one of these conditions is violated. One can write:
\begin{align}
    & \frac{1}{\sqrt{d}}\sum_{\substack{-\frac{d - 1}{2} \leq r \leq \frac{d - 1}{2}\\p := m + n + r\,[d]\\q := m + r\,[d]}}\exp\left(\frac{2\pi i}{d}\left(p(k + t_1) + q(t_0 - t_1) - r(l + t_0)\right)\right)\widetilde{\psi}(r)\nonumber\\
    & = \frac{1}{\sqrt{d}}\sum_{\substack{-\frac{d - 1}{2} \leq r \leq \frac{d - 1}{2}\\p = m + n + r\\q = m + r}}\exp\left(\frac{2\pi i}{d}\left(p(k + t_1) + q(t_0 - t_1) - rt_0\right)\right)\widetilde{\psi}(r) + \varepsilon_1\\
    & = \frac{1}{\sqrt{d}}\sum_{\substack{-\frac{d - 1}{2} \leq r \leq \frac{d - 1}{2}}}\exp\left(\frac{2\pi i}{d}\left(k(m + n + r) + nt_1 + mt_0\right)\right)\widetilde{\psi}(r) + \varepsilon_1\\
    & = \exp\left(\frac{2\pi i}{d}\left(n(k + t_1) + m(k + t_0)\right)\right)\psi(k) + \varepsilon_1\,,
\end{align}
where
\begin{align}
    |\varepsilon_1| & \leq \frac{2}{\sqrt{d}}\sum_{\left(\frac{d + 1}{2} - |m + n| \vee |m|\right) \leq |r| \leq \frac{d - 1}{2}}|\widetilde{\psi}(r)|\,,
\end{align}
which is indeed small provided $m, n$ are of order unity. More precisely, assume for simplicity $|m| \vee |m + n| \leq n_0 \leq \frac{d + 1}{2} - |m + n| \vee |m|$. Then:
\begin{align}
    |\varepsilon_1| & \leq \frac{2}{\sqrt{d}}\sum_{\left(\frac{d + 1}{2} - |m + n| \vee |m|\right) \leq |r| \leq \frac{d - 1}{2}}\mathcal{N'}\theta_3\left(\frac{r - n_0}{d}, \frac{i}{\xi^2d}\right)\\
    & = \frac{2}{\sqrt{d}}\sum_{\frac{d + 1}{2} - |m + n| \vee |m| \leq r \leq \frac{d - 1}{2}}\mathcal{N'}\theta_3\left(\frac{r - n_0}{d}, \frac{i}{\xi^2d}\right)\\
    & \hspace{0.05\textwidth} + \frac{2}{\sqrt{d}}\sum_{-\frac{d - 1}{2} \leq r \leq -\frac{d + 1}{2} + |m + n| \vee |m|}\mathcal{N'}\theta_3\left(\frac{r - n_0}{d}, \frac{i}{\xi^2d}\right)\\
    & = \frac{2}{\sqrt{d}}\mathcal{N}'\left[\sum_{\frac{d + 1}{2} - |m + n| \vee |m| - n_0 \leq k \leq \frac{d - 1}{2} - n_0}\theta_3\left(\frac{k}{d}, \frac{i}{\xi^2d}\right)\right.\nonumber\\
    & \left. \hspace{0.15\textwidth} + \sum_{\frac{d + 1}{2} - n_0 \leq k \leq \frac{d - 1}{2} + |m + n| \vee |m| - n_0}\theta_3\left(\frac{k}{d}, \frac{i}{\xi^2d}\right)\right]\,.
\end{align}
One then rewrites each $\theta_3\left(\frac{k}{d}, \frac{i}{\xi^2d}\right)$ in the form:
\begin{align}
    \theta_3\left(\frac{k}{d}, \frac{i}{\xi^2d}\right) & = \sqrt{\xi^2d}\exp\left(-\frac{\pi\xi^2}{d}k^2\right)\theta_3\left(i\xi^2k, i\xi^2d\right)\\
    & = \sqrt{\xi^2d}\exp\left(-\frac{\pi\xi^2}{d}k^2\right)\left(1 + \mathcal{O}\left(\frac{e^{-\pi\xi^2d\left(1 - \frac{2}{d}\left(\frac{d - 1}{2} + |m + n| \vee |m| - n_0\right)\right)}}{1 - e^{-2\pi\xi^2d}}\right)\right)\\
    & = \sqrt{\xi^2d}\exp\left(-\frac{\pi\xi^2}{d}k^2\right)\left(1 + \mathcal{O}\left(\frac{e^{-2\pi\xi^2\left(\frac{1}{2} + n_0 - |m + n| \vee |m|\right)}}{1 - e^{-2\pi\xi^2d}}\right)\right)\\
    & = \sqrt{\xi^2d}\exp\left(-\frac{\pi\xi^2}{d}k^2\right)\left(1 + \mathcal{O}\left(e^{-2\pi\xi^2}\right)\right)\,.
\end{align}
Finally, applying \ref{prop:bound_tail_sum_gaussians} to sum over $k$ yields:
\begin{align}
    |\varepsilon_1| & \leq 4\xi\mathcal{N'}\frac{e^{-\frac{\pi\xi^2}{d}\left(\frac{d + 1}{2} - |m + n| \vee |m| - n_0\right)^2}}{1 - e^{-\frac{2\pi\xi^2}{d}\left(\frac{d + 1}{2} - |m + n| \vee |m| - n_0\right)}}\mathcal{O}(1)\,.
\end{align}
\end{proof}

\subsection{Measured quasi-ideal clock}
\label{sec:measured_quasi_ideal_clock}
In this section, we will incorporate measurement into the quasi-ideal clock. As the quasi-ideal clock is a finite-dimensional system --- for which in particular one cannot implement exact canonical commutation relations --- the analysis of measurement will not lend itself to the methods in \cite[Chapter 5]{braginsky_quantum_1992}. Before moving to the general setting, we will pause to describe in detail a specific subcase which will serve as a reference for what follows afterwards.

\subsubsection{Measurement in the time basis}
\label{sec:quasi_ideal_clock_time_basis}
In this section, we consider the degenerate case where the initial state is a time eigenstate and one repeatedly measures the clock in this same basis. Therefore, after each measurement, the state of the clock collapses to a time eigenstate and the state of the system at any given time is completely described by the measurement results one has obtained up to this time.

Recall from section \ref{General setting} the following definitions of the Hamiltonian $\hat{H}_d$ and time operator $\hat{t}_d$  --- which will allow us to exhibit in a convenient form the scaling of the measurement statistics as $d \to \infty$:
\begin{align}
    \hat{H}_d & := \frac{2\pi}{\sqrt{d}}\sum_{-\frac{d - 1}{2} \leq n \leq \frac{d - 1}{2}}n\ket{n}\bra{n}\,,\\
    \hat{t}_d & := \frac{1}{\sqrt{d}}\sum_{-\frac{d - 1}{2} \leq k \leq \frac{d - 1}{2}}k\ket{\theta_k}\bra{\theta_k}\,.
\end{align}
Here the ``time'' operator has eigenvalues ranging from $-\frac{\sqrt{d}}{2}\left(1 - \frac{1}{d}\right)$ to $\frac{\sqrt{d}}{2}\left(1 - \frac{1}{d}\right)$, spaced by $\frac{1}{\sqrt{d}}$; therefore, roughly speaking, time becomes continuous and unbounded as $d \to \infty$ which is what one would expect from taking the infinite-dimensional limit.

Now, suppose the clock is initially prepared in the state $\ket{\theta_k}$ and is let to evolve freely for a time $\frac{\delta}{\sqrt{d}}$ before being measured in the time eigenbasis. Then the probability to collapse to the state $\ket{\theta_l}$ is given by:
\begin{align}
    & \left|\braket{\theta_l|\exp\left(-\frac{i\delta\hat{H}_d}{\sqrt{d}}\right)|\theta_k}\right|^2\nonumber\\
    & = \left|\braket{\theta_l|\theta_{k + \delta}}\right|^2\\
    & = \left|\frac{1}{d}\sum_{-\frac{d - 1}{2} \leq p \leq \frac{d - 1}{2}}e^{\frac{2\pi ip(l - k - \delta)}{d}}\right|^2\\
    & = \frac{1}{d^2}\sum_{-\frac{d - 1}{2} \leq p, q \leq \frac{d - 1}{2}}e^{\frac{2\pi i(p - q)l}{d}}e^{-\frac{2\pi i(p - q)k}{d}}e^{-\frac{2\pi i\delta(p - q)}{d}}\\
    & = \frac{1}{d^2}\sum_{-(d - 1) \leq r \leq d - 1}e^{\frac{2\pi irl}{d}}e^{-\frac{2\pi irk}{d}}e^{-\frac{2\pi i\delta r}{d}}\sum_{\substack{-\frac{d - 1}{2} \leq p, q \leq \frac{d - 1}{2}\\p - q = r}}1\\
    & = \frac{1}{d^2}\sum_{-(d - 1) \leq r \leq d - 1}e^{\frac{2\pi irl}{d}}e^{-\frac{2\pi irk}{d}}e^{-\frac{2\pi i\delta r}{d}}(d - |r|)\\
    & = \frac{1}{d^2}d + \sum_{0 < r \leq \frac{d - 1}{2}}e^{\frac{2\pi irl}{d}}e^{-\frac{2\pi irk}{d}}\left(e^{-\frac{2\pi i\delta r}{d}}(d - |r|) + e^{-\frac{2\pi i\delta(r - d)}{d}}(d - |r - d|)\right)\nonumber\\
    & \hspace{0.05\textwidth} + \sum_{-\frac{d - 1}{2} \leq r < 0}e^{\frac{2\pi irl}{d}}e^{-\frac{2\pi irk}{d}}\left(e^{-\frac{2\pi i\delta r}{d}}(d - |r|) + e^{-\frac{2\pi i\delta(r + d)}{d}}(d - |r + d|)\right)\\
    & = \frac{1}{d^2}d + \frac{1}{d^2}\sum_{0 < r \leq \frac{d - 1}{2}}e^{\frac{2\pi irl}{d}}e^{-\frac{2\pi irk}{d}}\left(e^{-\frac{2\pi i\delta r}{d}}(d - |r|) + e^{-\frac{2\pi i\delta(r - d)}{d}}r\right)\nonumber\\
    & \hspace{0.05\textwidth} + \frac{1}{d^2}\sum_{-\frac{d - 1}{2} \leq r < 0}e^{\frac{2\pi irl}{d}}e^{-\frac{2\pi irk}{d}}\left(e^{-\frac{2\pi i\delta r}{d}}(d - |r|) + e^{-\frac{2\pi i\delta(r + d)}{d}}(-r)\right)\\
    & = \frac{1}{d}\sum_{-\frac{d - 1}{2} \leq r \leq \frac{d - 1}{2}}e^{\frac{2\pi irl}{d}}e^{-\frac{2\pi irk}{d}}e^{-\frac{2\pi i\delta r}{d}}\left(1 - \left(1 - e^{2\pi i\delta\sign(r)}\right)\frac{|r|}{d}\right)\,.
\end{align}
One may regard this expression as the coefficient of a Markov transition matrix $M$ ($M_{lk}$ giving the probability of transitioning from $k$ to $l$). It is clear that it can be diagonalized by the eigenvectors $v_n = \left(\frac{1}{\sqrt{d}}e^{\frac{2\pi ink}{d}}\right)_{-\frac{d - 1}{2} \leq k \leq \frac{d - 1}{2}}$ for $-\frac{d - 1}{2} \leq n \leq \frac{d - 1}{2}$, where $v_n$ is associated to the eigenvalue $e^{-\frac{2\pi i\delta n}{d}}\left(1 - \left(1 - e^{2\pi i\delta\sign(n)}\right)\frac{|n|}{d}\right)$. Also, for all $I \geq 0$:
\begin{align}
    \left(M^I\right)_{lk} & = \frac{1}{d}\sum_{-\frac{d - 1}{2} \leq r \leq \frac{d - 1}{2}}e^{\frac{2\pi irl}{d}}e^{-\frac{2\pi irk}{d}}e^{-\frac{2\pi iI\delta r}{d}}\left(1 - \left(1 - e^{2\pi i\delta\sign(r)}\frac{|r|}{d}\right)\right)^I\,.
\end{align}
As an example, if one starts the clock in the state $\ket{\theta_k}$ before performing $I$ measurements on it at time intervals $\frac{\delta}{\sqrt{d}}$, the expectation of $\exp\left(\frac{2\pi mil_I}{d}\right)$ --- where $m$ is any integer in $\left[-\frac{d - 1}{2}, \frac{d - 1}{2}\right]$ and $l_I$ is the integer in $\left[-\frac{d - 1}{2}, \frac{d - 1}{2}\right]$ corresponding to the result of the $I^{\textrm{th}}$ measurement --- is:
\begin{align}
    \left\langle\exp\left(\frac{2\pi iml_I}{d}\right)\right\rangle & = \sum_{-\frac{d - 1}{2} \leq l_I \leq \frac{d - 1}{2}}\left(M^I\right)_{l_Ik}\exp\left(\frac{2\pi iml_I}{d}\right)\\
    & = e^{\frac{2\pi im(k + I\delta)}{d}}\left(1 - \left(1 - e^{2\pi i\delta\sign(-m)}\right)\frac{|m|}{d}\right)^I\,.
\end{align}
The factor $e^{\frac{2\pi im(k + I\delta)}{d}}$ 
in the result indicates that the ``expected angle'' of the clock after $I$ measurements is essentially $k + I\delta$ --- as one might have anticipated. The factor $\left(1 - \left(1 - e^{2\pi\delta\sign(-m)}\right)\frac{|m|}{d}\right)^I$ 
conveys information about the ``dispersion'' of this angle: in the limit where the angle is certain, it is $1$; in the limit where it is completely uncertain, it is $0$.\footnote{It is not entirely true that the ``expected angle'' is $k + I\delta$. Indeed, unless $\delta$ is a half-integer, the factor decaying exponentially in $d$ is not a positive real and therefore also contributes to the argument of the expectation value.} Reassuringly, in case $\delta$ is an integer, this factor is manifestly always $1$ --- the measured time is certain. This can be generalized to:
\begin{align}
\label{eq:pseudo_correlations_time_eigenbasis}
    & \left\langle\prod_{1 \leq p \leq N}\exp\left(\frac{2\pi im_{I_p}l_{I_p}}{d}\right)\right\rangle\nonumber\\
    & = \prod_{1 \leq p \leq N}\exp\left(\frac{2\pi im_p}{d}\left(k + \delta\sum_{1 \leq q \leq k}I_q\right)\right)\left(1 - \left(1 - e^{2\pi i\delta\sign\left(-\sum_{1 \leq q \leq p}m_q\right)}\right)\frac{\left|\sum_{1 \leq q \leq p}m_q\right|}{d}\right)^{I_p - I_{p - 1}}\,,
\end{align}
where $I_0 := 0$.

Now, if one wants to consider the limit of ``continuous measurement'' and see how the expectation above scales as $d \to \infty$, one may set $I := \lceil \frac{\tau}{\delta}\sqrt{d} \rceil$ (where $\tau > 0$ is fixed and corresponds to the ``continuous time interval'' during which one measures) so that to leading order in $d$, $I\frac{\delta}{\sqrt{d}}$ is independent of both $\delta$ and $d$ as $d \to \infty$. Taking $k = 0$ for simplicity, the above expectation behaves as follows as $d \to \infty$:
\begin{align}
    \left\langle\exp\left(\frac{2\pi iml_I}{d}\right)\right\rangle & = \exp\left(\frac{2\pi im\tau}{\sqrt{d}}\right)\left(1 - \tau\frac{1 - e^{2\pi i\delta\sign(-m)}}{\delta}\frac{|m|}{\sqrt{d}}\right)\left(1 + \mathcal{O}\left(\frac{1}{d}\right)\right)\,.
\end{align}

\subsubsection{Measured quasi-ideal clock with pseudo-Gaussian Kraus operators and states: derivation of formulae}
\label{sec:measured_quasi_ideal_clock_derivation_formulae}
In this section, we will describe a more general treatment of measurement for the quasi-ideal clock. Namely, we will allow for more or less sharp measurements (instead of a sharp time measurement in the previous paragraph) and quasi-ideal states for the initial state (instead of a time eigenstate in the previous paragraph).

A common approach for the treatment of measurement in infinite-dimensional systems is to choose Kraus operators that are Gaussian in the measured observable. It is also common to use Gaussian states for the initial state of the system. A natural transposition of this setting to the quasi-ideal clock is to use Kraus operators that are Jacobi $\theta$ functions in the time eigenbasis and, similarly, initial states which are quasi-ideal states. More precisely, following the notations of the main text, we define the Kraus operators as follows:
\begin{align}
    \hat{\Omega}\left(\widetilde{\xi}\right) & := \frac{1}{\sqrt{\theta_3\left(0, \frac{2i\sigma_m^2}{d}\right)}}\frac{1}{d^{1/4}}\sum_{-\frac{d - 1}{2} \leq k \leq \frac{d - 1}{2}}\theta_3\left(\frac{k - \widetilde{\xi}\sqrt{d}}{d}, \frac{i\sigma_m^2}{d}\right)\ket{\theta_k}\bra{\theta_k}\,, \qquad -\frac{\sqrt{d}}{2} \leq \widetilde{\xi} \leq \frac{\sqrt{d}}{2}\label{eq:Krau ops def}\\
    & =: \sum_{-\frac{d - 1}{2} \leq k \leq \frac{d - 1}{2}}\Omega_k\left(\widetilde{\xi}\right)\ket{\theta_k}\bra{\theta_k}\,.
\end{align}
The rationale behind this definition is that we interpret two consecutive time eigenstates $\ket{\theta_k}, \ket{\theta_{k + 1}}$ as representing two times $\widetilde{\xi} = \frac{k}{\sqrt{d}}, \widetilde{\xi}' = \frac{k + 1}{\sqrt{d}}$ 
separated by $\frac{1}{\sqrt{d}}$. This explains the $k - \widetilde{\xi}\sqrt{d}$ in the $\theta$ function and the range $\left[-\frac{\sqrt{d}}{2}, \frac{\sqrt{d}}{2}\right]$ for $\widetilde{\xi}$. As for the $\sigma_m^2$ parameter, it controls the precision of the measurement; more precisely, $\sigma_m$ is exactly the precision with which one measures $\widetilde{\xi}$. Therefore, if one wants to keep measuring $\widetilde{\xi}$ with a fixed precision in the limiting process $d \to \infty$, $\sigma_m$ must scale as a constant in this process. Concerning the initial state, we keep using the quasi-ideal state $\ket{\Psi_0}$ defined in equations  \eqref{eq:quasi_ideal_states} and \eqref{eq:quasi_ideal_states_ft}.

We start by showing that the Kraus operators above indeed define a normalized POVM:
\begin{lemma}\label{lem:kraus ops are normalised}
    The Kraus operators defined in equation \eqref{eq:Krau ops def} are properly normalized, i.e
    \begin{align}
        \int_{-\frac{\sqrt{d}}{2}}^{\frac{\sqrt{d}}{2}}\!\mathrm{d}\widetilde{\xi}\,\hat{\Omega}\left(\widetilde{\xi}\right)^{\dagger}\Hat{\Omega}\left(\widetilde{\xi}\right) & = \mathbf{1}_d\,.
    \end{align}
\begin{proof}
    \begin{align}
        & \int_{-\frac{\sqrt{d}}{2}}^{\frac{\sqrt{d}}{2}}\!\mathrm{d}\widetilde{\xi}\,\hat{\Omega}\left(\widetilde{\xi}\right)^{\dagger}\Hat{\Omega}\left(\widetilde{\xi}\right)\nonumber\\
        & = \frac{1}{\theta_3\left(0, \frac{2i\sigma_m^2}{d}\right)}\frac{1}{d^{1/2}}\sum_{-\frac{d - 1}{2} \leq k \leq \frac{d - 1}{2}}\int_{-\frac{\sqrt{d}}{2}}^{\frac{\sqrt{d}}{2}}\!\mathrm{d}\widetilde{\xi}\,\theta_3\left(\frac{k - \widetilde{\xi}\sqrt{d}}{d}, \frac{i\sigma_m^2}{d}\right)^2\ket{\theta_k}\bra{\theta_k}\,.
    \end{align}
    One then transforms $\theta_3\left(\frac{k - \widetilde{\xi}\sqrt{d}}{d}, \frac{i\sigma_m^2}{d}\right)^2$ according to the first equation of proposition \ref{prop:theta3_multiplication}:
        \begin{align}
        & \theta_3\left(\frac{k - \widetilde{\xi}\sqrt{d}}{d}, \frac{i\sigma_m^2}{d}\right)^2\nonumber\\
        & = \theta_3\left(0, \frac{2i\sigma_m^2}{d}\right)\theta_3\left(2\frac{k - \widetilde{\xi}\sqrt{d}}{d}, \frac{2i\sigma_m^2}{d}\right)\\
        & \hspace{0.05\textwidth} + \exp\left(-\frac{\pi\sigma_m^2}{d} + 2\pi i\frac{k - \widetilde{\xi}\sqrt{d}}{d}\right)\theta_3\left(\frac{i\sigma_m^2}{d}, \frac{2i\sigma_m^2}{d}\right)\theta_3\left(2\frac{k - \widetilde{\xi}\sqrt{d}}{d} + \frac{i\sigma_m^2}{d}, \frac{2i\sigma_m^2}{d}\right)\,.
    \end{align}
    Now, notice that under $\widetilde{\xi} \to \widetilde{\xi} + \frac{\sqrt{d}}{2}$, $\exp\left(-\frac{2\pi\sigma_m^2}{d} + 2\pi i\frac{k - \widetilde{\xi}\sqrt{d}}{d}\right)$ is odd whereas $\theta_3\left(2\frac{k - \widetilde{\xi}\sqrt{d}}{d} + \frac{i\sigma_m^2}{d}, \frac{2i\sigma_m^2}{d}\right)$ is even. Therefore, the second term of the sum cancels when integrated over $\xi \in \left[-\frac{\sqrt{d}}{2}, \frac{\sqrt{d}}{2}\right]$. As for the first term,
    \begin{align}
        \int_{-\frac{\sqrt{d}}{2}}^{\frac{\sqrt{d}}{2}}\!\mathrm{d}\widetilde{\xi}\,\theta_3\left(2\frac{k - \widetilde{\xi}\sqrt{d}}{d}, \frac{2i\sigma_m^2}{d}\right) & = \int_{-\frac{\sqrt{d}}{2}}^{\frac{\sqrt{d}}{2}}\!\mathrm{d}\widetilde{\xi}\,\theta_3\left(-2\frac{\widetilde{\xi}}{\sqrt{d}}, \frac{2i\sigma_m^2}{d}\right)\\
        & = 2\int_0^{\frac{\sqrt{d}}{2}}\!\mathrm{d}\widetilde{\xi}\,\theta_3\left(2\frac{\widetilde{\xi}}{\sqrt{d}}, \frac{2i\sigma_m^2}{d}\right)\\
        & = \sqrt{d}\int_0^1\!\mathrm{d}x\,\theta_3\left(x, \frac{2i\sigma_m^2}{d}\right)\\
        & = \sqrt{d}\sqrt{\frac{d}{2\sigma_m^2}}\int_{\mathbf{R}}\!\mathrm{d}x\,\exp\left(-\frac{\pi d}{2\sigma_m^2}x^2\right)\\
        & = \sqrt{d}\,.
    \end{align}
Therefore,
\begin{align}
    \int_{-\frac{\sqrt{d}}{2}}^{\frac{\sqrt{d}}{2}}\!\mathrm{d}\widetilde{\xi}\,\theta_3\left(\frac{k - \widetilde{\xi}\sqrt{d}}{d}, \frac{i\sigma_m^2}{d}\right)^2 & = \sqrt{d}\theta_3\left(0, \frac{2i\sigma_m^2}{d}\right)
\end{align}
and the result follows.
\end{proof}
\end{lemma}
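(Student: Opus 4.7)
Since each Kraus operator $\hat{\Omega}(\widetilde\xi)$ is diagonal in the time eigenbasis $\{\ket{\theta_k}\}$ by construction, so is $\hat{\Omega}(\widetilde\xi)^{\dagger}\hat{\Omega}(\widetilde\xi)$, and the desired identity $\int\hat{\Omega}(\widetilde\xi)^{\dagger}\hat{\Omega}(\widetilde\xi)\,\mathrm d\widetilde\xi = \mathbf 1_d$ decouples into the scalar identity
\begin{align*}
\int_{-\sqrt d/2}^{\sqrt d/2}\!\mathrm d\widetilde\xi\,\bigl|\Omega_k(\widetilde\xi)\bigr|^2 = 1
\end{align*}
for every $-\tfrac{d-1}{2}\le k\le \tfrac{d-1}{2}$. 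Plugging in the definition, the plan is therefore to establish
\begin{align*}
\int_{-\sqrt d/2}^{\sqrt d/2}\!\mathrm d\widetilde\xi\,\theta_3\!\left(\tfrac{k-\widetilde\xi\sqrt d}{d},\tfrac{i\sigma_{\rm m}^2}{d}\right)^{\!2}
= \sqrt d\,\theta_3\!\left(0,\tfrac{2i\sigma_{\rm m}^2}{d}\right),
\end{align*}
which is $k$-independent on the right-hand side and must therefore also be so on the left.

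The natural tool is the duplication formula for $\theta_3$ (Proposition~\ref{prop:theta3_multiplication}), which expresses $\theta_3(z,\tau)^2$ as a linear combination of two products $\theta_3(0,2\tau)\theta_3(2z,2\tau)$ and $e^{-\pi\tau+2\pi i z}\,\theta_3(\tau,2\tau)\theta_3(2z+\tau,2\tau)$ in which the variable $z$ appears only inside $\theta$-functions of modulus $2\tau$. Applied to our integrand with $z = (k-\widetilde\xi\sqrt d)/d$, this splits the integral into a ``clean'' contribution and a ``twisted'' contribution carrying the phase $e^{-\pi\sigma_{\rm m}^2/d + 2\pi i(k-\widetilde\xi\sqrt d)/d}$.

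The twisted term vanishes by a parity argument, which is the crux of the proof. Under the translation $\widetilde\xi\mapsto \widetilde\xi+\sqrt d/2$ the $\widetilde\xi$-dependent phase $e^{-2\pi i\widetilde\xi/\sqrt d}$ picks up a factor $e^{-i\pi}=-1$, while the accompanying theta factor $\theta_3(2(k-\widetilde\xi\sqrt d)/d+i\sigma_{\rm m}^2/d,\,2i\sigma_{\rm m}^2/d)$ is invariant thanks to the period-$1$ symmetry of $\theta_3$ in its first argument. The integrand is thus antiperiodic with half-period $\sqrt d/2$ on the full period $[-\sqrt d/2,\sqrt d/2]$, so its integral is zero. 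The $k$-dependence of the clean term similarly disappears by the same period-$1$ translation symmetry, reducing the problem to evaluating $\int_{-\sqrt d/2}^{\sqrt d/2}\theta_3(-2\widetilde\xi/\sqrt d,\,2i\sigma_{\rm m}^2/d)\,\mathrm d\widetilde\xi$.

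Rescaling $x = -2\widetilde\xi/\sqrt d$ turns this into $\tfrac{\sqrt d}{2}\int_{-1}^{1}\theta_3(x,2i\sigma_{\rm m}^2/d)\,\mathrm dx$, and the remaining integral over two full periods of $\theta_3$ is best handled by the Jacobi imaginary transformation: it rewrites $\theta_3$ as a (rapidly convergent) sum of translates of a Gaussian, whereupon integration over one period glues the translates into a single Gaussian on all of $\mathbb R$ whose integral equals $\sqrt{2\sigma_{\rm m}^2/d}$. Multiplying everything by the prefactor $\theta_3(0,2i\sigma_{\rm m}^2/d)$ inherited from the duplication formula then yields the target. The only real obstacle is bookkeeping: matching conventions with Proposition~\ref{prop:theta3_multiplication} and verifying that the quasi-periodicities used in the parity and periodicity steps apply with the right signs; once this is settled, all remaining steps reduce to standard Gaussian integration.
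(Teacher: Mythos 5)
Your proposal is correct and takes essentially the same approach as the paper's proof: the $a=b=1$ duplication formula of Proposition \ref{prop:theta3_multiplication}, the half-period ($\widetilde{\xi}\to\widetilde{\xi}+\tfrac{\sqrt{d}}{2}$) antiperiodicity argument that kills the phase-twisted term, and the period-$1$ shift plus periodized-Gaussian (Jacobi imaginary transformation) evaluation giving $\int\theta_3^2\,\mathrm{d}\widetilde{\xi}=\sqrt{d}\,\theta_3\!\left(0,\tfrac{2i\sigma_m^2}{d}\right)$. No substantive differences.
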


To perform the computations to come, one will systematically have to compute integrals of the following form:

\begin{lemma}
\label{lemma:quasi_ideal_clock_kraus_moment}
    Let $k, k', n$ denote integers. Then the following holds:
    \begin{align}
        \int_{-\frac{\sqrt{d}}{2}}^{\frac{\sqrt{d}}{2}}\!\mathrm{d}\widetilde{\xi}\,\Omega_{k'}\left(\widetilde{\xi}\right)\Omega_k\left(\widetilde{\xi}\right)\exp\left(\frac{2\pi in\widetilde{\xi}}{\sqrt{d}}\right) & = \frac{\theta_3\left(\frac{k - k'}{d} + \frac{i\sigma_m^2}{d}n, \frac{2i\sigma_m^2}{d}\right)}{\theta_3\left(0, \frac{2i\sigma_m^2}{d}\right)}\exp\left(\frac{2\pi ink}{d} - \frac{\pi\sigma_m^2n^2}{d}\right)\\
        & = \frac{\theta_3\left(\frac{k - k'}{d} - \frac{i\sigma_m^2}{d}n, \frac{2i\sigma_m^2}{d}\right)}{\theta_3\left(0, \frac{2i\sigma_m^2}{d}\right)}\exp\left(\frac{2\pi ink'}{d} - \frac{\pi\sigma_m^2n^2}{d}\right)
    \end{align}
\begin{proof}
The integral to be evaluated is
\begin{align}
    & \int_{-\frac{\sqrt{d}}{2}}^{\frac{\sqrt{d}}{2}}\!\mathrm{d}\widetilde{\xi}\,\Omega_{k'}\left(\widetilde{\xi}\right)\Omega_k\left(\widetilde{\xi}\right)\exp\left(\frac{2\pi in\widetilde{\xi}}{\sqrt{d}}\right)\nonumber\\
    & = \frac{1}{\theta_3\left(0, \frac{2i\sigma_m^2}{d}\right)}\frac{1}{d^{1/2}}\int_{-\frac{\sqrt{d}}{2}}^{\frac{\sqrt{d}}{2}}\!\mathrm{d}\widetilde{\xi}\,\theta_3\left(\frac{k' - \widetilde{\xi}\sqrt{d}}{d}, \frac{i\sigma_m^2}{d}\right)\theta_3\left(\frac{k - \widetilde{\xi}\sqrt{d}}{d}, \frac{i\sigma_m^2}{d}\right)\exp\left(\frac{2\pi in\widetilde{\xi}}{\sqrt{d}}\right)\,.
\end{align}
Similar to what was done to prove the normalization of the POVM in the proof of lemma \ref{lem:kraus ops are normalised}, one writes:
\begin{align}
    & \theta_3\left(\frac{k' - \widetilde{\xi}\sqrt{d}}{d}, \frac{i\sigma_m^2}{d}\right)\theta_3\left(\frac{k - \widetilde{\xi}\sqrt{d}}{d}, \frac{i\sigma_m^2}{d}\right)\exp\left(\frac{2\pi in\widetilde{\xi}}{\sqrt{d}}\right)\nonumber\\
    & = \exp\left(\frac{2\pi in\widetilde{\xi}}{\sqrt{d}}\right)\left[\theta_3\left(\frac{k - k'}{d}, \frac{2i\sigma_m^2}{d}\right)\theta_3\left(\frac{k' + k}{d} - 2\frac{\widetilde{\xi}}{\sqrt{d}}, \frac{2i\sigma_m^2}{d}\right)\right.\nonumber\\
    & \left. + \exp\left(-\frac{\pi\sigma_m^2}{d} + 2\pi i\frac{k - \widetilde{\xi}\sqrt{d}}{d}\right)\theta_3\left(\frac{k - k'}{d} + \frac{i\sigma_m^2}{d}, \frac{2i\sigma_m^2}{d}\right)\theta_3\left(\frac{k + k'}{d} - 2\frac{\widetilde{\xi}}{\sqrt{d}} + \frac{i\sigma_m^2}{d}, \frac{2i\sigma_m^2}{d}\right)\right]\,.
\end{align}
One can now use the same parity arguments as in the proof of the normalization of the POVM. For even $n$, we therefore need to evaluate:
\begin{align}
    & \int_{-\frac{\sqrt{d}}{2}}^{\frac{\sqrt{d}}{2}}\!\mathrm{d}\widetilde{\xi}\,\exp\left(\frac{2\pi in\widetilde{\xi}}{\sqrt{d}}\right)\theta_3\left(\frac{k + k'}{d} - 2\frac{\widetilde{\xi}}{\sqrt{d}}, \frac{2i\sigma_m^2}{d}\right)\nonumber\\
    & = \frac{\sqrt{d}}{2}\int_{-1}^1\!\mathrm{d}x\,\exp\left(i\pi nx\right)\theta_3\left(x - \frac{k + k'}{d}, \frac{2i\sigma_m^2}{d}\right)\\
    & = \sqrt{d}\exp\left(i\pi n\frac{k + k'}{d}\right)\int_0^1\!\mathrm{d}x\,\exp\left(i\pi n x\right)\theta_3\left(x, \frac{2i\sigma_m^2}{d}\right)\\
    & = \sqrt{d}\exp\left(i\pi n\frac{k + k'}{d} - \frac{\pi n^2\sigma_m^2}{2d}\right)\,.
\end{align}
Therefore:
\begin{align}
    & \int_{-\frac{\sqrt{d}}{2}}^{\frac{\sqrt{d}}{2}}\!\mathrm{d}\widetilde{\xi}\,\Omega_{k'}\left(\widetilde{\xi}\right)\Omega_k\left(\widetilde{\xi}\right)\exp\left(\frac{2\pi in\widetilde{\xi}}{\sqrt{d}}\right)\nonumber\\
    & = \frac{\theta_3\left(\frac{k - k'}{d}, \frac{2i\sigma_m^2}{d}\right)}{\theta_3\left(0, \frac{2i\sigma_m^2}{d}\right)}\exp\left(i\pi n\frac{k + k'}{d} - \frac{\pi\sigma_m^2n^2}{2d}\right)\\
    & = \frac{\theta_3\left(\frac{k - k'}{d}, \frac{2i\sigma_m^2}{d}\right)}{\theta_3\left(0, \frac{2i\sigma_m^2}{d}\right)}\exp\left(\frac{2\pi\sigma_m^2}{d}\left(\frac{n}{2}\right)^2 - 2\pi i\frac{n}{2}\frac{k - k'}{d} + \frac{2\pi ink}{d} - \frac{\pi\sigma_m^2n^2}{d}\right)\\
    & = \frac{\theta_3\left(\frac{k - k'}{d} + \frac{i\sigma_m^2}{d}n, \frac{2i\sigma_m^2}{d}\right)}{\theta_3\left(0, \frac{2i\sigma_m^2}{d}\right)}\exp\left(\frac{2\pi ink}{d} - \frac{\pi\sigma_m^2n^2}{d}\right)\\
    & = \frac{\theta_3\left(\frac{k - k'}{d} - \frac{i\sigma_m^2}{d}n, \frac{2i\sigma_m^2}{d}\right)}{\theta_3\left(0, \frac{2i\sigma_m^2}{d}\right)}\exp\left(\frac{2\pi\sigma_m^2}{d}n^2 -2\pi in\left(\frac{k - k'}{d} - \frac{i\sigma_m^2}{d}n\right) + \frac{2\pi ink}{d} - \frac{\pi\sigma_m^2n^2}{d}\right)\\
    & = \frac{\theta_3\left(\frac{k - k'}{d} - \frac{i\sigma_m^2}{d}n, \frac{2i\sigma_m^2}{d}\right)}{\theta_3\left(0, \frac{2i\sigma_m^2}{d}\right)}\exp\left(\frac{2\pi ink'}{d} - \frac{\pi\sigma_m^2n^2}{d}\right)\,.
\end{align}
For odd $n$, we need to evaluate:
\begin{align}
    & \int_{-\frac{\sqrt{d}}{2}}^{\frac{\sqrt{d}}{2}}\!\mathrm{d}\widetilde{\xi}\,\exp\left(\frac{2\pi i(n - 1)\widetilde{\xi}}{\sqrt{d}}\right)\theta_3\left(\frac{k + k'}{d} - 2\frac{\widetilde{\xi}}{\sqrt{d}} + \frac{i\sigma_m^2}{d}, \frac{2i\sigma_m^2}{d}\right)\nonumber\\
    & = \sqrt{d}\exp\left(i\pi(n - 1)\frac{k + k'}{d}\right)\int_0^1\!\mathrm{d}x\,\exp\left(i\pi(n - 1)x\right)\theta_3\left(x - \frac{i\sigma_m^2}{d}, \frac{2i\sigma_m^2}{d}\right)\\
    & = \sqrt{d}\exp\left(i\pi(n - 1)\frac{k + k'}{d} - \frac{\pi\sigma_m^2n^2}{2d} + \frac{\pi\sigma_m^2}{2d}\right)\,.
\end{align}
It follows:
\begin{align}
    & \int_{-\frac{\sqrt{d}}{2}}^{\frac{\sqrt{d}}{2}}\!\mathrm{d}\widetilde{\xi}\,\Omega_{k'}\left(\widetilde{\xi}\right)\Omega_k\left(\widetilde{\xi}\right)\exp\left(\frac{2\pi in\widetilde{\xi}}{\sqrt{d}}\right)\nonumber\\
    & = \exp\left(-\frac{\pi\sigma_m^2}{2d} - \frac{\pi\sigma_m^2n^2}{2d} + \frac{2\pi ik}{d} + i\pi(n - 1)\frac{k + k'}{d}\right)\frac{\theta_3\left(\frac{k - k'}{d} + \frac{i\sigma_m^2}{d}, \frac{2i\sigma_m^2}{d}\right)}{\theta_3\left(0, \frac{2i\sigma_m^2}{d}\right)}\\
    & = \exp\left(\frac{2\pi\sigma_m^2}{d}\left(\frac{n - 1}{2}\right)^2 - 2\pi i\left(\frac{n - 1}{2}\right)\left(\frac{k - k'}{d} + \frac{i\sigma_m^2}{d}\right)\right.\nonumber\\
    & \left.\hspace{0.05\textwidth} + \frac{2\pi ik}{d} + \frac{2\pi i(n - 1)}{d}k - \frac{\pi\sigma_m^2}{d}n^2\right)\frac{\theta_3\left(\frac{k - k'}{d} + \frac{i\sigma_m^2}{d}, \frac{2i\sigma_m^2}{d}\right)}{\theta_3\left(0, \frac{2i\sigma_m^2}{d}\right)}\\
    & = \frac{\theta_3\left(\frac{k - k'}{d} + \frac{i\sigma_m^2}{d}n, \frac{2i\sigma_m^2}{d}\right)}{\theta_3\left(0, \frac{2i\sigma_m^2}{d}\right)}\exp\left(\frac{2\pi ink}{d} - \frac{\pi\sigma_m^2n^2}{d}\right)\,. 
\end{align}
\end{proof}
\end{lemma}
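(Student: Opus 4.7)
The plan is to bypass the multiplication formula for $\theta_3$ used in the normalization lemma and instead work directly with the Fourier series representation of $\theta_3$, which turns the integral into a single orthogonality extraction. Substituting the definition \eqref{eq:Krau ops def} into the integrand, I would expand each $\theta_3$ factor as
\begin{align*}
\theta_3\!\left(\tfrac{k - \widetilde\xi\sqrt{d}}{d},\tfrac{i\sigma_m^2}{d}\right)
 = \sum_{p\in\mathbf{Z}} e^{-\pi\sigma_m^2 p^2/d}\, e^{2\pi i p k/d}\, e^{-2\pi i p\widetilde\xi/\sqrt{d}}\,,
\end{align*}
so the product of the two $\theta_3$ factors becomes a double sum indexed by $(p,q)\in\mathbf{Z}^2$ in which the $\widetilde\xi$-dependence is entirely concentrated in $e^{-2\pi i(p+q)\widetilde\xi/\sqrt{d}}$.

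Second, I would interchange summation and integration (justified by the Gaussian decay $e^{-\pi\sigma_m^2(p^2+q^2)/d}$) and use the orthogonality
\begin{align*}
\int_{-\sqrt{d}/2}^{\sqrt{d}/2} e^{2\pi i(n-p-q)\widetilde\xi/\sqrt{d}}\,d\widetilde\xi \;=\; \sqrt{d}\,\delta_{p+q,\,n}
\end{align*}
to collapse the double sum to a single sum over $p$ with $q = n-p$. Third, I would simplify the surviving sum using $p^2 + (n-p)^2 = 2p^2 - 2pn + n^2$ and factor out $e^{2\pi i n k'/d - \pi\sigma_m^2 n^2/d}$, leaving
\begin{align*}
\sum_{p\in\mathbf{Z}} e^{-2\pi\sigma_m^2 p^2/d}\, e^{2\pi i p\bigl[(k-k')/d - i n\sigma_m^2/d\bigr]} \;=\; \theta_3\!\left(\tfrac{k-k'}{d} - \tfrac{i n\sigma_m^2}{d},\tfrac{2i\sigma_m^2}{d}\right)\,.
\end{align*}
Together with the prefactor $1/(\sqrt{d}\,\theta_3(0,2i\sigma_m^2/d))$ coming from the two copies of $\hat\Omega$, the $\sqrt{d}$ from orthogonality cancels, giving the second of the two stated forms.

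Finally, to pass to the first stated form, I would apply the quasi-periodicity identity $\theta_3(z + n\sigma,\sigma) = e^{-\pi i n^2\sigma - 2\pi i n z}\,\theta_3(z,\sigma)$ with $\sigma = 2i\sigma_m^2/d$, $z = (k-k')/d - in\sigma_m^2/d$, and shift parameter $n$ in the imaginary direction; after simplification, the accumulated exponential factor converts $e^{2\pi i n k'/d}$ into $e^{2\pi i n k/d}$ and flips the sign of the imaginary shift inside the $\theta_3$ argument, yielding the first form. No step appears difficult in itself; the only real care needed is in bookkeeping of the phase factors during the quasi-periodicity step, to ensure that the two forms reconcile exactly rather than up to an unwanted constant.
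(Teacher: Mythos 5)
Your proposal is correct, and it takes a genuinely different route from the paper's proof. The paper first applies the $\theta_3$ multiplication formula of proposition \ref{prop:theta3_multiplication} (with $a=b=1$) to rewrite the product $\theta_3\bigl(\tfrac{k'-\widetilde\xi\sqrt d}{d},\tfrac{i\sigma_m^2}{d}\bigr)\theta_3\bigl(\tfrac{k-\widetilde\xi\sqrt d}{d},\tfrac{i\sigma_m^2}{d}\bigr)$ as a sum of two products with nome $\tfrac{2i\sigma_m^2}{d}$, then uses the parity argument from lemma \ref{lem:kraus ops are normalised} to kill one of the two terms, and has to treat even and odd $n$ separately because the surviving term differs in the two cases; the first and second stated forms are then reconciled by explicit phase manipulations. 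You instead expand each $\theta_3$ factor in its defining Fourier series, exploit that the $\widetilde\xi$-dependence sits entirely in $e^{-2\pi i(p+q)\widetilde\xi/\sqrt d}$, and use orthogonality over the period $\sqrt d$ (legitimate since $n-p-q\in\mathbf{Z}$, and the interchange of sum and integral is covered by the Gaussian decay) to collapse to the single sum $\sum_p e^{-2\pi\sigma_m^2p^2/d}e^{2\pi ip[(k-k')/d-in\sigma_m^2/d]}$, which is exactly $\theta_3\bigl(\tfrac{k-k'}{d}-\tfrac{in\sigma_m^2}{d},\tfrac{2i\sigma_m^2}{d}\bigr)$; the prefactor bookkeeping ($\sqrt d$ from orthogonality against $d^{-1/2}$ from the two Kraus operators) checks out, giving the second form directly with no case split on the parity of $n$. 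The final step is also sound: applying the quasi-periodicity \eqref{eq:theta3_quasi_periodicity} with $a=0$, $b=n$, $\tau=\tfrac{2i\sigma_m^2}{d}$ and $z=\tfrac{k-k'}{d}-\tfrac{in\sigma_m^2}{d}$ produces precisely the factor $e^{-2\pi in(k-k')/d}$ (the Gaussian pieces $e^{\pm2\pi\sigma_m^2n^2/d}$ cancel), which converts $e^{2\pi ink'/d}$ into $e^{2\pi ink/d}$ and yields the first form with no leftover constant. What your route buys is economy: it needs only the definition of $\theta_3$ and its quasi-periodicity, avoids the multiplication formula and the even/odd dichotomy, and makes the equality of the two stated forms a one-line consequence of quasi-periodicity; the paper's route, in exchange, stays within the $\theta_3$-product machinery it has already set up for the normalization lemma and reuses it verbatim.
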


Having established these lemmas, one can now derive an expression for the measurement statistics. We adopt the general description of measurement developed in the main text. We consider a sequence of $J \geq 2$ measurements such that the $j^{\textrm{th}}$ measurement ($j \geq 1$) is separated from the $(j - 1)^{\textrm{th}}$ by a time interval $\frac{\delta_j}{\sqrt{d}}$. The outcomes of the measurements are denoted by $\left(\widetilde{\xi}_j\right)_{1 \leq j \leq J}$ and their joint probability distribution $f$ is given by:
\begin{align}
    f\left(\widetilde{\xi}_1, \ldots, \widetilde{\xi}_J\right) & = \left\lVert\left(\prod_{1 \leq j \leq J}\hat{\Omega}\left(\widetilde{\xi}_j\right)e^{-\frac{2\pi i\hat{H}_d\delta_j}{\sqrt{d}}}\right)\ket{\Psi_0}\right\rVert_2^2\,.
\end{align}
One will be interested in finding an expression for the moments of order at most $2$ of this distribution. More precisely, given integers $m, n$, we will compute:
\begin{align}
    & \int_{\left[-\frac{\sqrt{d}}{2}, \frac{\sqrt{d}}{2}\right]^J}\!\left(\prod_{1 \leq j \leq J}\mathrm{d}\widetilde{\xi}_j\right)\,\exp\left(\frac{2\pi in\widetilde{\xi}_J}{\sqrt{d}}\right)\exp\left(\frac{2\pi im\widetilde{\xi}_I}{\sqrt{d}}\right)f\left(\widetilde{\xi}_1, \ldots, \widetilde{\xi}_J\right)\,.
\end{align}
These quantities constitute a natural transposition of two-times correlation functions to the setting of the quasi-ideal clock.

First note,
\begin{align}
    & \left(\prod_{1 \leq j \leq J}\hat{\Omega}\left(\widetilde{\xi}_j\right)e^{-\frac{2\pi i\hat{H}_d\delta_j}{\sqrt{d}}}\right)\ket{\Psi_0}\nonumber\\
    & = \sum_{\substack{-\frac{d - 1}{2} \leq k \leq \frac{d - 1}{2}\\-\frac{d - 1}{2} \leq k_1, \ldots, k_J \leq \frac{d - 1}{2}}}\left(\prod_{1 \leq j \leq J}\Omega_{k_j}\left(\widetilde{\xi}_j\right)\right)\left(\prod_{1 \leq j < J}\braket{\theta_{k_{j + 1} + \delta_{j + 1}}|\theta_{k_j}}\right)\braket{\theta_{k_1 + \delta_1}|\theta_k}\braket{\theta_k|\Psi_0}\ket{\theta_{k_J}}\\
    & = \frac{1}{d^J}\sum_{\substack{-\frac{d - 1}{2} \leq k_0, k_1, \ldots, k_J \leq \frac{d - 1}{2}\\-\frac{d - 1}{2} \leq p_1, \ldots, p_J \leq \frac{d - 1}{2}}}\left(\prod_{1 \leq j \leq J}\Omega_{k_j}\left(\widetilde{\xi}_j\right)e^{\frac{2\pi ip_j(k_j + \delta_j - k_{j - 1})}{d}}\right)\psi(k_0)\ket{\theta_{k_J}}\,.
\end{align}
Therefore:
\begin{align*}
    & \left\lVert\left(\prod_{1 \leq j \leq J}\hat{\Omega}\left(\widetilde{\xi}_j\right)e^{-\frac{2\pi i\hat{H}_d\delta_j}{\sqrt{d}}}\right)\ket{\Psi_0}\right\rVert_2^2\\
    & = \frac{1}{d^{2J}}\sum_{\substack{-\frac{d - 1}{2} \leq k_0, \ldots, k_J \leq \frac{d - 1}{2}\\-\frac{d - 1}{2} \leq k_0', \ldots, k_J' \leq \frac{d - 1}{2}\\-\frac{d - 1}{2} \leq p_1, \ldots, p_J \leq \frac{d - 1}{2}\\-\frac{d - 1}{2} \leq p_1', \ldots, p_J' \leq \frac{d - 1}{2}}}\delta_{k_J' - k_J}\left(\prod_{1 \leq j \leq J}\Omega_{k_j}\left(\widetilde{\xi}_j\right)\Omega_{k_j'}\left(\widetilde{\xi}_j\right)e^{\frac{2\pi i}{d}\left(p_j(k_j + \delta_j - k_{j - 1}) - p_j'(k_j' + \delta_j - k_{j - 1}')\right)}\right)\,\psi(k_0)\psi(k_0')^*\,.
\end{align*}
Integrating $f\left(\widetilde{\xi}_1, \ldots, \widetilde{\xi}_J\right)$ over $\widetilde{\xi}_1, \ldots, \widetilde{\xi}_J$ against $\exp\left(\frac{2\pi in\widetilde{\xi}_J}{\sqrt{d}}\right)\exp\left(\frac{2\pi im\widetilde{\xi}_I}{\sqrt{d}}\right)$ yields (using lemma \ref{lemma:quasi_ideal_clock_kraus_moment}):
\begin{align*}
    & \int_{\left[-\frac{\sqrt{d}}{2}, \frac{\sqrt{d}}{2}\right]^J}\!\left(\prod_{1 \leq j \leq J}\mathrm{d}\widetilde{\xi}_j\right)\,\exp\left(\frac{2\pi in\widetilde{\xi}_J}{\sqrt{d}}\right)\exp\left(\frac{2\pi im\widetilde{\xi}_I}{\sqrt{d}}\right)f\left(\widetilde{\xi}_1, \ldots, \widetilde{\xi}_J\right)\\
    & = \frac{1}{d^{2J}}\sum_{\substack{-\frac{d - 1}{2} \leq k_0, \ldots, k_J \leq \frac{d - 1}{2}\\-\frac{d - 1}{2} \leq k_0', \ldots, k_J' \leq \frac{d - 1}{2}\\-\frac{d - 1}{2} \leq p_1, \ldots, p_J \leq \frac{d - 1}{2}\\-\frac{d - 1}{2} \leq p_1', \ldots, p_J' \leq \frac{d - 1}{2}}}\delta_{k_J' - k_J}\left(\prod_{\substack{1 \leq j \leq J\\j \neq I, J}}\frac{\theta_3\left(\frac{k_j' - k_j}{d}, \frac{2i\sigma_m^2}{d}\right)}{\theta_3\left(0, \frac{2i\sigma_m^2}{d}\right)}e^{\frac{2\pi i}{d}\left(p_j(k_j + \delta_j - k_{j - 1}) - p_j'(k_j' + \delta_j - k_{j - 1}')\right)}\right)\\
    & \hspace{0.1\textwidth} \times \frac{\theta_3\left(\frac{k_I' - k_I}{d} - \frac{i\sigma_m^2}{d}m, \frac{2i\sigma_m^2}{d}\right)}{\theta_3\left(0, \frac{2i\sigma_m^2}{d}\right)}e^{-\frac{\pi\sigma_m^2m^2}{d} + \frac{2\pi i}{d}\left(mk_I + p_I(k_I + \delta_I - k_{I - 1}) - p_I'(k_I' + \delta_I - k_{I - 1}')\right)}\\
    & \hspace{0.1\textwidth} \times \frac{\theta_3\left(\frac{k_J' - k_J}{d} - \frac{i\sigma_m^2}{d}n, \frac{2i\sigma_m^2}{d}\right)}{\theta_3\left(0, \frac{2i\sigma_m^2}{d}\right)}e^{-\frac{\pi\sigma_m^2n^2}{d} + \frac{2\pi i}{d}\left(nk_J + p_J(k_J + \delta_J - k_{J - 1}) - p_J'(k_J' + \delta_J - k_{J - 1}')\right)}\\
    & \hspace{0.1\textwidth} \times \psi(k_0)\psi(k_0')^*\\
    & = \frac{1}{d^{2J}}\sum_{\substack{-\frac{d - 1}{2} \leq k_0, \ldots, k_J \leq \frac{d - 1}{2}\\-\frac{d - 1}{2} \leq k_0', \ldots, k_{J - 1}' \leq \frac{d - 1}{2}\\-\frac{d - 1}{2} \leq p_1, \ldots, p_J \leq \frac{d - 1}{2}\\-\frac{d - 1}{2} \leq p_1', \ldots, p_J' \leq \frac{d - 1}{2}}}\left(\prod_{\substack{1 \leq j \leq J\\j \neq I, J}}\frac{\theta_3\left(\frac{k_j' - k_j}{d}, \frac{2i\sigma_m^2}{d}\right)}{\theta_3\left(0, \frac{2i\sigma_m^2}{d}\right)}e^{\frac{2\pi i}{d}\left(p_j(k_j + \delta_j - k_{j - 1}) - p_j'(k_j' + \delta_j - k_{j - 1}')\right)}\right)\\
    & \hspace{0.1\textwidth} \times \frac{\theta_3\left(\frac{k_I' - k_I}{d} - \frac{i\sigma_m^2}{d}m, \frac{2i\sigma_m^2}{d}\right)}{\theta_3\left(0, \frac{2i\sigma_m^2}{d}\right)}e^{-\frac{\pi\sigma_m^2m^2}{d} + \frac{2\pi i}{d}\left(mk_I + p_I(k_I + \delta_I - k_{I - 1}) - p_I'(k_I' + \delta_I - k_{I - 1}')\right)}\\
    & \hspace{0.1\textwidth} \times \frac{\theta_3\left(-\frac{i\sigma_m^2}{d}n, \frac{2i\sigma_m^2}{d}\right)}{\theta_3\left(0, \frac{2i\sigma_m^2}{d}\right)}e^{-\frac{\pi\sigma_m^2n^2}{d} + \frac{2\pi i}{d}\left(nk_J + p_J(k_J + \delta_J - k_{J - 1}) - p_J'(k_J + \delta_J - k_{J - 1}')\right)}\\
    & \hspace{0.1\textwidth} \times \psi(k_0)\psi(k_0')^*\,.
\end{align*}
One can then perform the summations over $k_1', \ldots, k_{J - 1}'$ which amounts to discrete Fourier transforms of $\theta_3$ functions (we therefore use equations \eqref{eq:theta3_dft} and \eqref{eq:theta3_inv_dft}). For example:
\begin{align}
    & \sum_{-\frac{d - 1}{2} \leq k_I' \leq \frac{d - 1}{2}}\theta_3\left(\frac{k_I' - k_I}{d} - \frac{i\sigma_m^2}{d}m, \frac{2i\sigma_m^2}{d}\right)e^{-\frac{2\pi i\left(p_I - p_{I + 1}'\right)k_I'}{d}}\nonumber\\
    & = \sqrt{\frac{d}{2\sigma_m^2}}\exp\left(-\frac{\pi d}{2\sigma_m^2}\left(-\frac{i\sigma_m^2}{d}m - \frac{k_I}{d}\right)^2\right)\theta_3\left(\frac{i}{2\sigma_m^2}\left(-\frac{k_I}{d} - \frac{i\sigma_m^2}{d}m\right) - \frac{p_I' - p_{I + 1}'}{d}, \frac{i}{2\sigma_m^2d}\right)\\
    & = \sqrt{\frac{d}{2\sigma_m^2}}\exp\left(\frac{\pi\sigma_m^2m^2}{2d} - \frac{\pi k_I^2}{2\sigma_m^2d} - \frac{i\pi mk_I}{d}\right)\theta_3\left(\frac{m}{2d} - \frac{ik_I}{2\sigma_m^2d} - \frac{p_I' - p_{I + 1}'}{d}, \frac{i}{2\sigma_m^2d}\right)\\
    & = \sqrt{\frac{d}{2\sigma_m^2}}\exp\left(\frac{\pi\sigma_m^2m^2}{2d} - \frac{\pi k_I^2}{2\sigma_m^2d} - \frac{i\pi mk_I}{d} + \frac{\pi}{2\sigma_m^2d}(-k_I)^2 - 2\pi i(-k_I)\left(\frac{m}{2d} - \frac{p_I'}{d}\right)\right)\nonumber\\
    & \hspace{0.05\textwidth} \times \theta_3\left(\frac{m}{2d} - \frac{p_I' - p_{I + 1}'}{d}, \frac{i}{2\sigma_m^2d}\right)\\
    & = \sqrt{\frac{d}{2\sigma_m^2}}\exp\left(\frac{\pi\sigma_m^2m^2}{2d} - \frac{2\pi ik_I\left(p_I' - p_{I + 1}'\right)}{d}\right)\theta_3\left(\frac{m}{2d} - \frac{p_I' - p_{I + 1}'}{d}, \frac{i}{2\sigma_m^2d}\right)\,.
\end{align}
One can now write:
\begin{align}
    & \int_{\left[-\frac{\sqrt{d}}{2}, \frac{\sqrt{d}}{2}\right]^\frac{J - 1}{2}}\!\left(\prod_{1 \leq j \leq J}\mathrm{d}\widetilde{\xi}_j\right)\,\exp\left(\frac{2\pi in\widetilde{\xi}_J}{\sqrt{d}}\right)\exp\left(\frac{2\pi im\widetilde{\xi}_I}{\sqrt{d}}\right)f\left(\widetilde{\xi}_1, \ldots, \widetilde{\xi}_J\right)\nonumber\\
    & = \frac{\left(\frac{d}{2\sigma_m^2}\right)^{J/2}}{d^{2J}\theta_3\left(0, \frac{2i\sigma_m^2}{d}\right)^J}\nonumber\\
    & \hspace{0.02\textwidth} \times \sum_{\substack{-\frac{d - 1}{2} \leq k_0, \ldots, k_J \leq \frac{d - 1}{2}\\-\frac{d - 1}{2} \leq k_0' \leq \frac{d - 1}{2}\\-\frac{d - 1}{2} \leq p_1, \ldots, p_J \leq \frac{d - 1}{2}\\-\frac{d - 1}{2} \leq p_1', \ldots, p_J' \leq \frac{d - 1}{2}}}\left(\prod_{\substack{1 \leq j \leq J\\j \neq I, J}}\theta_3\left(\frac{p_j' - p_{j + 1}'}{d}, \frac{i}{2\sigma_m^2d}\right)e^{\frac{2\pi i}{d}\left(-k_j(p_j' - p_{j + 1}') + p_j(k_j + \delta_j - k_{j - 1}) - p_j'\delta_j\right)}\right)\nonumber\\
    & \hspace{0.05\textwidth} \times \theta_3\left(\frac{p_I' - p_{I + 1}'}{d} - \frac{m}{2d}, \frac{i}{2\sigma_m^2d}\right)e^{-\frac{\pi\sigma_m^2m^2}{2d} + \frac{2\pi i}{d}\left(k_I(m - p_I' + p_{I + 1}') + p_I(k_I + \delta_I - k_{I - 1}) - p_I'\delta_I\right)}\nonumber\\
    & \hspace{0.05\textwidth} \times \theta_3\left(-\frac{i\sigma_m^2}{d}n, \frac{2i\sigma_m^2}{d}\right)e^{-\frac{\pi\sigma_m^2n^2}{d} + \frac{2\pi i}{d}\left(nk_J + p_J(k_J + \delta_J - k_{J - 1}) - p_J'(k_J + \delta_J)\right)}\nonumber\\
    & \hspace{0.05\textwidth} \times e^{\frac{2\pi ik_0'p_1'}{d}}\psi(k_0)\psi(k_0')^*\,.
\end{align}
As one can see from the expression above, it is now easy to perform the summation over $k_1, \ldots, k_J$. For convenience, we introduce the notation (pseudo-Kronecker delta):
\begin{align}
    \delta^{[d]}_{a} & := \left\{\begin{array}{ll}
        1 & \textrm{ if } a := 0\,[d]\\
        0 & \textrm{ otherwise} \,.
    \end{array}\right.
\end{align}
One obtains:
\begin{align}
    & \int_{\left[-\frac{\sqrt{d}}{2}, \frac{\sqrt{d}}{2}\right]^J}\!\left(\prod_{1 \leq j \leq J}\mathrm{d}\widetilde{\xi}_j\right)\,\exp\left(\frac{2\pi in\widetilde{\xi}_J}{\sqrt{d}}\right)\exp\left(\frac{2\pi im\widetilde{\xi}_I}{\sqrt{d}}\right)f\left(\widetilde{\xi}_1, \ldots, \widetilde{\xi}_J\right)\\
    & = \frac{\left(\frac{d}{2\sigma_m^2}\right)^{\frac{J - 1}{2}}}{d^J\theta_3\left(0, \frac{2i\sigma_m^2}{d}\right)^J}\sum_{\substack{-\frac{d - 1}{2} \leq k_0, k_0' \leq \frac{d - 1}{2}\\-\frac{d - 1}{2} \leq p_1, \ldots, p_J \leq \frac{d - 1}{2}\\-\frac{d - 1}{2} \leq p_1', \ldots, p_J' \leq \frac{d - 1}{2}}}\left(\prod_{\substack{1 \leq j \leq J\\j \neq I, J}}\theta_3\left(\frac{p_j' - p_{j + 1}'}{d}, \frac{i}{2\sigma_m^2d}\right)\right)\\
    & \hspace{0.05\textwidth} \times \theta_3\left(\frac{p_I' - p_{I + 1}'}{d} - \frac{m}{2d}, \frac{i}{2\sigma_m^2d}\right)\theta_3\left(-\frac{i\sigma_m^2}{d}n, \frac{2i\sigma_m^2}{d}\right)e^{-\frac{\pi\sigma_m^2m^2}{2d} - \frac{\pi\sigma_m^2n^2}{d} + \frac{2\pi i}{d}\sum_{1 \leq j \leq J}\delta_j(p_j - p_j')}\\
    & \hspace{0.05\textwidth} \times \left(\prod_{\substack{1 \leq j \leq J\\j \neq I, J}}\delta^{[d]}_{p_{j + 1}' - p_j' - p_{j + 1} + p_j}\right)\delta^{[d]}_{m + p_{I + 1}' - p_I' - p_{I + 1} + p_I}\delta^{[d]}_{n + p_J - p_J'}\\
    & \hspace{0.05\textwidth} \times e^{-\frac{2\pi ip_1k_0}{d}}e^{\frac{2\pi ip_1'k_0'}{d}}\psi(k_0)\psi(k_0')^*\,.
\end{align}
Now, one may simplify the product of pseudo-Kronecker deltas as follows:
\begin{align}
    & \left(\prod_{\substack{1 \leq j \leq J\\j \neq I, J}}\delta^{[d]}_{p_{j + 1}' - p_j' - p_{j + 1} + p_j}\right)\delta^{[d]}_{m + p_{I + 1}' - p_{I + 1} - p_I' + p_I}\delta^{[d]}_{n + p_J - p_J'}\\
    & = \left(\prod_{1 \leq j \leq I}\delta^{[d]}_{m + n - p_j' + p_j}\right)\left(\prod_{I + 1 \leq j \leq J}\delta^{[d]}_{n - p_j' + p_j}\right)\,.
\end{align}
One should be careful though before substituting (for instance) $p_j \to p_j' - m - n\,(1 \leq j \leq I)$ in the summand. Indeed, unless all the $\delta_j$ are integers, $\exp\left(\frac{2\pi i}{d}\sum_{1 \leq j \leq J}\delta_j(p_j - p_j')\right)$ is not invariant in the variables $p_j, p_j'$ modulo $d$. However, if for the sake of simplicity one assumes $|m + n| \leq d$, one knows that for $m + n \geq 0$ (respectively $m + n \leq 0$), $p_j' - m - n$ lies either in $\left[-\frac{3d - 1}{2}, -\frac{d + 1}{2}\right]$ (resp. $\left[\frac{d + 1}{2}, \frac{3d - 1}{2}\right]$) or $\left[-\frac{d - 1}{2}, \frac{d - 1}{2}\right]$. In the former case, the solution $p_j$ to $p_j := p_j' - m - n\,[d]$ is $p_j = p_j' - m - n + d$ (resp. $p_j = p_j' - m - n - d$) whereas it is simply $p_j = p_j' - m - n$ in the latter situation. All in all, it is legitimate to write:
\begin{align}
    & \delta_{m + n - p_j' + p_j}e^{\frac{2\pi i\delta_j(p_j - p_j')}{d}}\\
    & = e^{-\frac{2\pi i\delta_j(m + n)}{d}}\left(1\mathbf{1}_{-\frac{d - 1}{2} \leq p_j' - m - n \leq \frac{d - 1}{2}} + e^{2\pi i\delta_j}\mathbf{1}_{p_j' - m - n < -\frac{d - 1}{2}} + e^{-2\pi i\delta_j}\mathbf{1}_{p_j' - m - n > \frac{d - 1}{2}}\right)\delta^{[d]}_{m + n - p_j' + p_j}\\
    & = e^{-\frac{2\pi i\delta_j(m + n)}{d}}\left(1 - \left(1 - e^{2\pi i\delta_j}\right)\mathbf{1}_{p_j' - m - n < -\frac{d - 1}{2}} - \left(1 - e^{-2\pi i\delta_j}\right)\mathbf{1}_{p_j' - m - n > \frac{d - 1}{2}}\right)\delta^{[d]}_{m + n - p_j' + p_j}\\
    & = e^{-\frac{2\pi i\delta_j(m + n)}{d}}\left(1 - \left(1 - e^{2\pi i\delta_j\sign(m + n)}\right)\mathbf{1}_{|p_j' - m - n| > \frac{d - 1}{2}}\right)\delta^{[d]}_{m + n - p_j' + p_j}\,.
\end{align}
Therefore, the original integral becomes:
\begin{align}
\label{eq:pseudo_correlation_derivation_step1}
    & \int_{\left[-\frac{\sqrt{d}}{2}, \frac{\sqrt{d}}{2}\right]^J}\!\left(\prod_{1 \leq j \leq J}\mathrm{d}\widetilde{\xi}_j\right)\,\exp\left(\frac{2\pi in\widetilde{\xi}_J}{\sqrt{d}}\right)\exp\left(\frac{2\pi im\widetilde{\xi}_I}{\sqrt{d}}\right)f\left(\widetilde{\xi}_1, \ldots, \widetilde{\xi}_J\right)\nonumber\\
    & = \frac{\left(\frac{d}{2\sigma_m^2}\right)^{\frac{J - 1}{2}}}{d^{J - 1}\theta_3\left(0, \frac{2i\sigma_m^2}{d}\right)^J}\sum_{\substack{-\frac{d - 1}{2} \leq p_1, \ldots, p_J \leq \frac{d - 1}{2}\\-\frac{d - 1}{2} \leq p_1', \ldots, p_J' \leq \frac{d - 1}{2}}}\left(\prod_{\substack{1 \leq j \leq J\\j \neq I, J}}\theta_3\left(\frac{p_j' - p_{j + 1}'}{d}, \frac{i}{2\sigma_m^2d}\right)\right)\nonumber\\
    & \hspace{0.05\textwidth} \times \theta_3\left(\frac{p_I' - p_{I + 1}'}{d} - \frac{m}{2d}, \frac{i}{2\sigma_m^2d}\right)\theta_3\left(-\frac{i\sigma_m^2}{d}n, \frac{2i\sigma_m^2}{d}\right)e^{-\frac{\pi\sigma_m^2m^2}{2d} - \frac{\pi\sigma_m^2n^2}{d} + \frac{2\pi i}{d}\sum_{1 \leq j \leq J}\delta_j(p_j - p_j')}\nonumber\\
    & \hspace{0.05\textwidth} \times \left(\prod_{1 \leq j \leq I}\delta^{[d]}_{m + n - p_j' + p_j}\right)\left(\prod_{I + 1 \leq j \leq J}\delta^{[d]}_{n - p_j' + p_j}\right)\, \widetilde{\psi}(p_1)\widetilde{\psi}(p_1')^*\nonumber\\
    & = \frac{\left(\frac{d}{2\sigma_m^2}\right)^{\frac{J - 1}{2}}}{d^{J - 1}\theta_3\left(0, \frac{2i\sigma_m^2}{d}\right)^J}e^{-\frac{\pi\sigma_m^2m^2}{2d} - \frac{\pi\sigma_m^2n^2}{d} -\frac{2\pi i}{d}\left(m\sum_{1 \leq j \leq I}\delta_j + n\sum_{1 \leq j \leq J}\delta_j\right)}\nonumber\\
    & \hspace{0.05\textwidth} \times \sum_{\substack{-\frac{d - 1}{2} \leq p_1, \ldots, p_J \leq \frac{d - 1}{2}\\-\frac{d - 1}{2} \leq p_1', \ldots, p_J' \leq \frac{d - 1}{2}}}\left(\prod_{\substack{1 \leq j \leq J\\j \neq I, J}}\theta_3\left(\frac{p_j' - p_{j + 1}'}{d}, \frac{i}{2\sigma_m^2d}\right)\right)\nonumber\\
    & \hspace{0.2\textwidth} \times \theta_3\left(\frac{p_I' - p_{I + 1}'}{d} - \frac{m}{2d}, \frac{i}{2\sigma_m^2d}\right)\theta_3\left(-\frac{i\sigma_m^2}{d}n, \frac{2i\sigma_m^2}{d}\right)\nonumber\\
    & \hspace{0.2\textwidth} \times \left(\prod_{1 \leq j \leq I}\delta^{[d]}_{m + n - p_j' + p_j}\right)\left(\prod_{I + 1 \leq j \leq J}\delta^{[d]}_{n - p_j' + p_j}\right)\nonumber\\
    & \hspace{0.2\textwidth} \times \prod_{1 \leq j \leq I}\left(1 - \left(1 - e^{2\pi i\delta_j\sign(m + n)}\right)\mathbf{1}_{|p_j' - m - n| > \frac{d - 1}{2}}\right)\nonumber\\
    & \hspace{0.2\textwidth} \times \prod_{I + 1 \leq j \leq J}\left(1 - \left(1 - e^{2\pi i\delta_j\sign(n)}\right)\mathbf{1}_{|p_j' - n| > \frac{d - 1}{2}}\right)\, \widetilde{\psi}(p_1)\widetilde{\psi}(p_1')^*\nonumber\\
    & = \frac{\left(\frac{d}{2\sigma_m^2}\right)^{\frac{J - 1}{2}}}{d^{J - 1}\theta_3\left(0, \frac{2i\sigma_m^2}{d}\right)^J}e^{-\frac{\pi\sigma_m^2m^2}{2d} - \frac{\pi\sigma_m^2n^2}{d} -\frac{2\pi i}{d}\left(m\sum_{1 \leq j \leq I}\delta_j + n\sum_{1 \leq j \leq J}\delta_j\right)}\nonumber\\
    & \hspace{0.05\textwidth} \times \sum_{\substack{-\frac{d - 1}{2} \leq p_1', \ldots, p_J' \leq \frac{d - 1}{2}}}\left(\prod_{\substack{1 \leq j \leq J\\j \neq I, J}}\theta_3\left(\frac{p_j' - p_{j + 1}'}{d}, \frac{i}{2\sigma_m^2d}\right)\right)\nonumber\\
    & \hspace{0.2\textwidth} \times \theta_3\left(\frac{p_I' - p_{I + 1}'}{d} - \frac{m}{2d}, \frac{i}{2\sigma_m^2d}\right)\theta_3\left(-\frac{i\sigma_m^2}{d}n, \frac{2i\sigma_m^2}{d}\right)\nonumber\\
    & \hspace{0.2\textwidth} \times \prod_{1 \leq j \leq I}\left(1 - \left(1 - e^{2\pi i\delta_j\sign(m + n)}\right)\mathbf{1}_{|p_j' - m - n| > \frac{d - 1}{2}}\right)\nonumber\\
    & \hspace{0.2\textwidth} \times \prod_{I + 1 \leq j \leq J}\left(1 - \left(1 - e^{2\pi i\delta_j\sign(n)}\right)\mathbf{1}_{|p_j' - n| > \frac{d - 1}{2}}\right)\nonumber\\
    & \hspace{0.2\textwidth} \times \widetilde{\psi}(p_1' - m - n)\widetilde{\psi}(p_1')^*\,.
\end{align}
Next, one wants to ``normalize'' the $\theta_3$ functions appearing in the sum so that they add up to $1$ when summed over one of the $p_j'$ appearing in their first argument. To achieve this, we use the formula (derived from equations \eqref{eq:theta3_dft} and \eqref{eq:theta3_inv_dft}):
\begin{align}
\label{eq:theta_3_sum_over_p}
    \sum_{-\frac{d - 1}{2} \leq p \leq \frac{d - 1}{2}}\theta_3\left(\frac{p}{d} + \frac{m}{2d}, \frac{i}{2\sigma_m^2d}\right) & = \sqrt{2\sigma_m^2d}e^{-\frac{\pi\sigma_m^2m^2}{2d}}\theta_3\left(\frac{i\sigma_m^2}{d}m, \frac{2i\sigma_m^2}{d}\right)\,.
\end{align}
Therefore by plugging equation \eqref{eq:theta_3_sum_over_p} into \eqref{eq:pseudo_correlation_derivation_step1}, we ``normalize'' the $\theta_3$ functions. Performing this step followed by renaming the indices $p' \to p$ for simplicity, one ends up with:
\begin{align}
    & \int_{\left[-\frac{\sqrt{d}}{2}, \frac{\sqrt{d}}{2}\right]^J}\!\left(\prod_{1 \leq j \leq J}\mathrm{d}\widetilde{\xi}_j\right)\,\exp\left(\frac{2\pi in\widetilde{\xi}_J}{\sqrt{d}}\right)\exp\left(\frac{2\pi im\widetilde{\xi}_I}{\sqrt{d}}\right)f\left(\widetilde{\xi}_1, \ldots, \widetilde{\xi}_J\right)\nonumber\\
    & = \frac{\theta_3\left(\frac{i\sigma_m^2}{d}m, \frac{2i\sigma_m^2}{d}\right)\theta_3\left(\frac{i\sigma_m^2}{d}n, \frac{2i\sigma_m^2}{d}\right)}{\theta_3\left(0, \frac{2i\sigma_m^2}{d}\right)^2}e^{-\frac{\pi\sigma_m^2}{d}(m^2 + n^2) -\frac{2\pi i}{d}\left(m\sum_{1 \leq j \leq I}\delta_j + n\sum_{1 \leq j \leq J}\delta_j\right)}\nonumber\\
    & \hspace{0.05\textwidth} \times \sum_{-\frac{d - 1}{2} \leq p_1, \ldots, p_J \leq \frac{d - 1}{2}}\left(\prod_{\substack{1 \leq j < J\\j \neq I}}\frac{\theta_3\left(\frac{p_{j + 1} - p_j}{d}, \frac{i}{2\sigma_m^2d}\right)}{\sqrt{2\sigma_m^2d}\theta_3\left(0, \frac{2i\sigma_m^2}{d}\right)}\right)\frac{\theta_3\left(\frac{p_{I + 1} - p_I}{d} + \frac{m}{2d}, \frac{i}{2\sigma_m^2d}\right)}{\sqrt{2\sigma_m^2d}e^{-\frac{\pi\sigma_m^2m^2}{2d}}\theta_3\left(\frac{i\sigma_m^2}{d}m, \frac{2i\sigma_m^2}{d}\right)}\nonumber\\
    & \hspace{0.1\textwidth} \times \prod_{1 \leq j \leq I}\left(1 - \left(1 - e^{2\pi i\delta_j\sign(m + n)}\right)\mathbf{1}_{|p_j - m - n| > \frac{d - 1}{2}}\right)\nonumber\\
    & \hspace{0.1\textwidth} \times \prod_{I + 1 \leq j \leq J}\left(1 - \left(1 - e^{2\pi i\delta_j\sign(n)}\right)\mathbf{1}_{|p_j - n| > \frac{d - 1}{2}}\right)\nonumber\\
    & \hspace{0.1\textwidth} \times \widetilde{\psi}(p_1 - m - n)\widetilde{\psi}(p_1)^*\,.
\end{align}
Finally, one will normalize the wavefunction part in the sum so that summing $\textrm{constant} \times \widetilde{\psi}(p_1 - m - n)\widetilde{\psi}(p_1)^*$ over $p_1$ yields $1$. Given that $\widetilde{\psi}(p) = \mathcal{N}'\theta_3\left(\frac{p - n_0}{d}, \frac{i}{\xi^2d}\right)$, one needs to compute for all integer $r$:
\begin{align}
    & \sum_{-\frac{d - 1}{2} \leq k \leq \frac{d - 1}{2}}\theta_3\left(\frac{k - r}{d}, \frac{i}{\xi^2d}\right)\theta_3\left(\frac{k}{d}, \frac{i}{\xi^2d}\right)\nonumber\\
    & = \sum_{-\frac{d - 1}{2} \leq k \leq \frac{k - 1}{2}}\frac{1}{2}\left[\theta_3\left(-\frac{r}{2d}, \frac{i}{2\xi^2d}\right)\theta_3\left(\frac{k}{d} - \frac{r}{2d}, \frac{i}{2\xi^2d}\right)\right.\nonumber\\
    & \left. \hspace{0.1\textwidth} + \theta_3\left(-\frac{r}{2d} + \frac{1}{2}, \frac{i}{2\xi^2d}\right)\theta_3\left(\frac{k}{d} - \frac{r}{2d} + \frac{1}{2}, \frac{i}{2\xi^2d}\right)\right]\\
    & = \frac{1}{2}\sqrt{2\xi^2d}\left[\theta_3\left(-\frac{r}{2d}, \frac{i}{2\xi^2d}\right)\exp\left(-2\pi\xi^2d\left(-\frac{r}{2d}\right)^2\right)\theta_3\left(2i\xi^2\left(-\frac{r}{2d}\right), \frac{2i\xi^2}{d}\right)\right.\nonumber\\
    & \left.\hspace{0.1\textwidth} + \theta_3\left(-\frac{r}{2d} + \frac{1}{2}, \frac{i}{2\xi^2d}\right)\exp\left(-2\pi\xi^2d\left(-\frac{r}{2d} + \frac{1}{2}\right)^2\right)\theta_3\left(2i\xi^2\left(-\frac{r}{2d} + \frac{1}{2}\right), \frac{2i\xi^2}{d}\right)\right]\\
    & = \frac{d}{2}\left(\theta_3\left(\frac{r}{2d}, \frac{i}{2\xi^2d}\right)\theta_3\left(\frac{r}{2}, \frac{id}{2\xi^2}\right) + \theta_3\left(-\frac{r}{2d} + \frac{1}{2}, \frac{i}{2\xi^2d}\right)\theta_3\left(\frac{d}{2}- \frac{r}{2}, \frac{id}{2\xi^2}\right)\right)\\
    & = \frac{d}{2}\left(\theta_3\left(\frac{r}{2d}, \frac{i}{2\xi^2d}\right)\theta_3\left(\frac{r}{2}, \frac{id}{2\xi^2}\right) + \theta_3\left(-\frac{r}{2d} + \frac{1}{2}, \frac{i}{2\xi^2d}\right)\theta_3\left(\frac{r}{2} + \frac{1}{2}, \frac{id}{2\xi^2}\right)\right)\,,
\end{align}
where we used in the last line that $d$ is odd. For $r = 0$, this is simply $\frac{1}{\mathcal{N}'^2}$. Therefore, the sought normalized expression of the integral under consideration is:
\begin{align}
\label{eq:quasi_ideal_clock_correlation}
    & \int_{\left[-\frac{\sqrt{d}}{2}, \frac{\sqrt{d}}{2}\right]^J}\!\left(\prod_{1 \leq j \leq J}\mathrm{d}\widetilde{\xi}_j\right)\,\exp\left(\frac{2\pi in\widetilde{\xi}_J}{\sqrt{d}}\right)\exp\left(\frac{2\pi im\widetilde{\xi}_I}{\sqrt{d}}\right)f\left(\widetilde{\xi}_1, \ldots, \widetilde{\xi}_J\right)\nonumber\\
    & = \frac{\theta_3\left(\frac{i\sigma_m^2}{d}m, \frac{2i\sigma_m^2}{d}\right)\theta_3\left(\frac{i\sigma_m^2}{d}n, \frac{2i\sigma_m^2}{d}\right)}{\theta_3\left(0, \frac{2i\sigma_m^2}{d}\right)^2}e^{-\frac{\pi\sigma_m^2}{d}(m^2 + n^2) +\frac{2\pi i}{d}\left(m\sum_{1 \leq j \leq I}\delta_j + n\sum_{1 \leq j \leq J}\delta_j\right)}\nonumber\\
    & \hspace{0.05\textwidth} \times \frac{\theta_3\left(\frac{m + n}{2d}, \frac{i}{2\xi^2d}\right)\theta_3\left(\frac{m + n}{2}, \frac{id}{2\xi^2}\right) + \theta_3\left(\frac{1}{2} - \frac{m + n}{2d}, \frac{i}{2\xi^2d}\right)\theta_3\left(\frac{m + n}{2} + \frac{1}{2}, \frac{id}{2\xi^2}\right)}{\theta_3\left(0, \frac{i}{2\xi^2d}\right)\theta_3\left(0, \frac{id}{2\xi^2}\right) + \theta_3\left(\frac{1}{2}, \frac{i}{2\xi^2d}\right)\theta_3\left(\frac{1}{2}, \frac{id}{2\xi^2}\right)}\nonumber\\
    & \hspace{0.05\textwidth} \times \sum_{-\frac{d - 1}{2} \leq p_1, \ldots, p_J \leq \frac{d - 1}{2}}\left(\prod_{\substack{1 \leq j < J\\j \neq I}}\frac{\theta_3\left(\frac{p_{j + 1} - p_j}{d}, \frac{i}{2\sigma_m^2d}\right)}{\sqrt{2\sigma_m^2d}\theta_3\left(0, \frac{2i\sigma_m^2}{d}\right)}\right)\frac{\theta_3\left(\frac{p_{I + 1} - p_I}{d} + \frac{m}{2d}, \frac{i}{2\sigma_m^2d}\right)}{\sqrt{2\sigma_m^2d}e^{-\frac{\pi\sigma_m^2m^2}{2d}}\theta_3\left(\frac{i\sigma_m^2}{d}m, \frac{2i\sigma_m^2}{d}\right)}\nonumber\\
    & \hspace{0.1\textwidth} \times \prod_{1 \leq j \leq I}\left(1 - \left(1 - e^{-2\pi i\delta_j\sign(m + n)}\right)\mathbf{1}_{|p_j - m - n| > \frac{d - 1}{2}}\right)\nonumber\\
    & \hspace{0.1\textwidth} \times \prod_{I + 1 \leq j \leq J}\left(1 - \left(1 - e^{-2\pi i\delta_j\sign(n)}\right)\mathbf{1}_{|p_j - n| > \frac{d - 1}{2}}\right)\nonumber\\
    & \hspace{0.1\textwidth} \times \frac{\theta_3\left(\frac{p_1 - n_0 - m - n}{d}, \frac{i}{\xi^2d}\right)\theta_3\left(\frac{p_1 - n _0}{d}, \frac{i}{\xi^2d}\right)}{\frac{d}{2}\left(\theta_3\left(\frac{m + n}{2d}, \frac{i}{2\xi^2d}\right)\theta_3\left(\frac{m + n}{2}, \frac{id}{2\xi^2}\right) + \theta_3\left(\frac{1}{2} - \frac{m + n}{2d}, \frac{i}{2\xi^2d}\right)\theta_3\left(\frac{m + n}{2} + \frac{1}{2}, \frac{id}{2\xi^2}\right)\right)}\,.
\end{align}
The sum has now a clear probabilitic interpretation as an expectation computed over a random walk on a ring. $p_1, \ldots, p_J$ are to be interpreted as the steps of this random walk and the probability distribution for the amplitude of a jump is
\begin{align}
    q & \longmapsto \frac{\theta_3\left(\frac{q}{d}, \frac{i}{2\sigma_m^2d}\right)}{\sqrt{2\sigma_m^2d}\theta_3\left(0, \frac{2i\sigma_m^2}{d}\right)}
\end{align}
for all jumps except for jump $I$ (leading from position $p_I$ to position $p_{I + 1}$) where it is given by:
\begin{align}
    q_I & \longmapsto \frac{\theta_3\left(\frac{q_I}{d} + \frac{m}{2d}, \frac{i}{2\sigma_m^2d}\right)}{\sqrt{2\sigma_m^2d}e^{-\frac{\pi\sigma_m^2m^2}{d}}\theta_3\left(\frac{i\sigma_m^2}{d}m, \frac{2i\sigma_m^2}{d}\right)}\,.
\end{align}
Roughly speaking, this means that the jumps $j \neq I$ have expectation $0$, with a typical standard deviation $\frac{\sqrt{d}}{\sigma_m}$. As for the jump $I$, it has an expectation $-\frac{m}{2}$ and the same standard deviation. Finally, the initial distribution of the random walk is prescribed by the initial state as follows:
\begin{align}
    p_1 & \longmapsto \frac{\theta_3\left(\frac{p_1 - n_0 - m - n}{d}, \frac{i}{\xi^2d}\right)\theta_3\left(\frac{p_1 - n _0}{d}, \frac{i}{\xi^2d}\right)}{\frac{d}{2}\left(\theta_3\left(\frac{m + n}{2d}, \frac{i}{2\xi^2d}\right)\theta_3\left(\frac{m + n}{2}, \frac{id}{2\xi^2}\right) + \theta_3\left(\frac{1}{2} - \frac{m + n}{2d}, \frac{i}{2\xi^2d}\right)\theta_3\left(\frac{m + n}{2} + \frac{1}{2}, \frac{id}{2\xi^2}\right)\right)}\,.
\end{align}
With this interpretation in mind, the sum could be rewritten as a probabilistic expectation:
\begin{align}
    & \mathbf{E}^{\ket{\Psi_0}}\left[\prod_{1 \leq j \leq I}\left(1 - \left(1 - e^{2\pi i\delta_j\sign(m + n)}\right)\mathbf{1}_{|p_j - m - n| > \frac{d - 1}{2}}\right)\right.\nonumber\\
    & \left.\hspace{0.1\textwidth} \times \prod_{I + 1 \leq j \leq J}\left(1 - \left(1 - e^{2\pi i\delta_j\sign(n)}\right)\mathbf{1}_{|p_j - n| > \frac{d - 1}{2}}\right)\right]\,,
\end{align}
where we used the standard notation $\mathbf{E}^{\mu}[\cdot]$ for the expectation given an initial distribution $\mu$. Note that the expectation is trivially bounded by $1$ in norm as the random variable is. This implies that the parametrization of the Kraus operators (through $\sigma_m$) and that of the initial state (through $\widetilde{\xi}^2$) alone enforce bounds on the modulus of $\left\langle\exp\left(\frac{2\pi in\widetilde{\xi}_J}{\sqrt{d}}\right)\exp\left(\frac{2\pi im\widetilde{\xi}_I}{\sqrt{d}}\right)\right\rangle$, as specified in the following two propositions.

The proposition below provides a bound on $\left\langle\exp\left(\frac{2\pi in\widetilde{\xi}_J}{\sqrt{d}}\right)\exp\left(\frac{2\pi im\widetilde{\xi}_I}{\sqrt{d}}\right)\right\rangle$ based on the scaling of $\xi$ with respect to $d$.
\begin{proposition}
\label{prop:variance_suppression_wavefunction}
    Let $r$ be an integer, $c > 0$ and $-1 < \alpha < 1$. Suppose $\xi^2 = cd^{\alpha}$. Then as $d \to \infty$, the following estimate holds:
    \begin{align}
        & \frac{\theta_3\left(\frac{r}{2d}, \frac{i}{2\xi^2d}\right)\theta_3\left(\frac{r}{2}, \frac{id}{2\xi^2}\right) + \theta_3\left(\frac{1}{2} - \frac{r}{2d}, \frac{i}{2\xi^2d}\right)\theta_3\left(\frac{1}{2} - \frac{r}{2}, \frac{id}{2\xi^2}\right)}{\theta_3\left(0, \frac{i}{2\xi^2d}\right)\theta_3\left(0, \frac{id}{2\xi^2}\right) + \theta_3\left(\frac{1}{2}, \frac{i}{2\xi^2d}\right)\theta_3\left(\frac{1}{2}, \frac{id}{2\xi^2}\right)}\nonumber\\
        & = e^{-\frac{\pi cr^2d^{\alpha - 1}}{2}}\left(1 + \mathcal{O}\left(\frac{e^{-\frac{\pi cd^{1 + \alpha}}{2}\left(1 - \frac{r}{d}\right)^2}}{1 - e^{-2\pi cd^{1 + \alpha}\left(1 - \frac{r}{d}\right)}}\right)\right)\\
        & = 1 - \frac{\pi c r^2}{2}\frac{1}{d^{1 - \alpha}} + \mathcal{O}\left(\frac{1}{d^{2(1 - \alpha)}}\right)\,.
    \end{align}
\begin{proof}
    Let us inject the stated scaling for $\xi^2$ in the above expression.
    \begin{align}
        & \frac{\theta_3\left(\frac{r}{2d}, \frac{i}{2\xi^2d}\right)\theta_3\left(\frac{r}{2}, \frac{id}{2\xi^2}\right) + \theta_3\left(\frac{1}{2} - \frac{r}{2d}, \frac{i}{2\xi^2d}\right)\theta_3\left(\frac{1}{2} - \frac{r}{2}, \frac{id}{2\xi^2}\right)}{\theta_3\left(0, \frac{i}{2\xi^2d}\right)\theta_3\left(0, \frac{id}{2\xi^2}\right) + \theta_3\left(\frac{1}{2}, \frac{i}{2\xi^2d}\right)\theta_3\left(\frac{1}{2}, \frac{id}{2\xi^2}\right)}\nonumber\\
        & = \frac{\theta_3\left(\frac{r}{2d}, \frac{i}{2cd^{1 + \alpha}}\right)\theta_3\left(\frac{r}{2}, \frac{id^{1 - \alpha}}{2c}\right) + \theta_3\left(\frac{1}{2} - \frac{r}{2d}, \frac{i}{2cd^{1 + \alpha}}\right)\theta_3\left(\frac{1}{2} - \frac{r}{2}, \frac{id^{1 - \alpha}}{2c}\right)}{\theta_3\left(0, \frac{i}{2cd^{1 + \alpha}}\right)\theta_3\left(0, \frac{id^{1 - \alpha}}{2c}\right) + \theta_3\left(\frac{1}{2}, \frac{i}{2cd^{1 + \alpha}}\right)\theta_3\left(\frac{1}{2}, \frac{id^{1 - \alpha}}{2c}\right)}\,.
    \end{align}
    Let us now restrict ourselves to the case $r$ even. This means that one can replace $\frac{r}{2} \to 0$ in the first arguments of the $\theta_3$ functions (but not $\frac{r}{2d} \to 0$\,!). The above then reduces to:
    \begin{align}
        & \frac{\theta_3\left(\frac{r}{2d}, \frac{i}{2cd^{1 + \alpha}}\right)\theta_3\left(0, \frac{id^{1 - \alpha}}{2c}\right) + \theta_3\left(\frac{1}{2} - \frac{r}{2d}, \frac{i}{2cd^{1 + \alpha}}\right)\theta_3\left(\frac{1}{2}, \frac{id^{1 - \alpha}}{2c}\right)}{\theta_3\left(0, \frac{i}{2cd^{1 + \alpha}}\right)\theta_3\left(0, \frac{id^{1 - \alpha}}{2c}\right) + \theta_3\left(\frac{1}{2}, \frac{i}{2cd^{1 + \alpha}}\right)\theta_3\left(\frac{1}{2}, \frac{id^{1 - \alpha}}{2c}\right)}\nonumber\\
        & = \frac{\sqrt{2cd^{1 + \alpha}}e^{-\frac{\pi cr^2d^{\alpha - 1}}{2}}\theta_3\left(2icd^{1 + \alpha}\frac{r}{2d}, 2icd^{1 + \alpha}\right)\theta_3\left(0, \frac{id^{1 - \alpha}}{2c}\right) + \theta_3\left(\frac{1}{2} - \frac{r}{2d}, \frac{i}{2cd^{1 + \alpha}}\right)\theta_3\left(\frac{1}{2}, \frac{id^{1 - \alpha}}{2c}\right)}{\sqrt{2cd^{1 + \alpha}}\theta_3\left(0, 2icd^{1 + \alpha}\right)\theta_3\left(0, \frac{id^{1 - \alpha}}{2c}\right) + \theta_3\left(\frac{1}{2}, \frac{i}{2cd^{1 + \alpha}}\right)\theta_3\left(\frac{1}{2}, \frac{id^{1 - \alpha}}{2c}\right)}\,.
    \end{align}
    Combining the estimates in equation \eqref{eq:bound_tail_theta} and proposition \ref{prop:theta3_approx_1}, the above is found to behave as follows as $d \to \infty$:
    \begin{align}
        & \frac{\theta_3\left(\frac{r}{2d}, \frac{i}{2cd^{1 + \alpha}}\right)\theta_3\left(0, \frac{id^{1 - \alpha}}{2c}\right) + \theta_3\left(\frac{1}{2} - \frac{r}{2d}, \frac{i}{2cd^{1 + \alpha}}\right)\theta_3\left(\frac{1}{2}, \frac{id^{1 - \alpha}}{2c}\right)}{\theta_3\left(0, \frac{i}{2cd^{1 + \alpha}}\right)\theta_3\left(0, \frac{id^{1 - \alpha}}{2c}\right) + \theta_3\left(\frac{1}{2}, \frac{i}{2cd^{1 + \alpha}}\right)\theta_3\left(\frac{1}{2}, \frac{id^{1 - \alpha}}{2c}\right)}\nonumber\\
        & \hspace{0.05\textwidth} = e^{-\frac{\pi cr^2d^{\alpha - 1}}{2}}\left(1 + \mathcal{O}\left(\frac{e^{-\frac{\pi cd^{1 + \alpha}}{2}\left(1 - \frac{r}{d}\right)^2}}{1 - e^{-2\pi cd^{1 + \alpha}\left(1 - \frac{r}{d}\right)}}\right)\right)\\
        & = 1 - \frac{\pi c r^2}{2}\frac{1}{d^{1 - \alpha}} + \mathcal{O}\left(\frac{1}{d^{2(1 - \alpha)}}\right)\,.
    \end{align}
    The case $r$ odd is very similar and we omit it.
\end{proof}
\end{proposition}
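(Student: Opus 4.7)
The plan is to substitute $\xi^2 = cd^\alpha$ and then reduce everything to quantities whose asymptotic behavior is controlled by the bounds on Jacobi $\theta$-functions stated in \S\ref{sec:jacobi_theta_functions}. After substitution, the two distinct nome parameters that appear are $\frac{i}{2cd^{1+\alpha}}$ and $\frac{id^{1-\alpha}}{2c}$; since $-1<\alpha<1$, both $d^{1+\alpha}$ and $d^{1-\alpha}$ diverge, but the first yields theta functions whose series has large terms and requires the modular transformation to be useful, while the second immediately has exponentially small tails.

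I would therefore apply the modular transformation \eqref{eq:theta3_modular_transformation} to every $\theta_3\!\left(\,\cdot\,,\frac{i}{2cd^{1+\alpha}}\right)$ appearing in both numerator and denominator. For the even case $r$ even (the author's reduction), this produces
\begin{align*}
\theta_3\!\left(\tfrac{r}{2d},\tfrac{i}{2cd^{1+\alpha}}\right) &= \sqrt{2cd^{1+\alpha}}\,e^{-\frac{\pi cr^2 d^{\alpha-1}}{2}}\,\theta_3\!\left(icr d^{\alpha},\,2icd^{1+\alpha}\right),\\
\theta_3\!\left(\tfrac12-\tfrac{r}{2d},\tfrac{i}{2cd^{1+\alpha}}\right) &= \sqrt{2cd^{1+\alpha}}\,e^{-\frac{\pi cd^{1+\alpha}}{2}(1-r/d)^2}\,\theta_3\!\left(icd^{1+\alpha}(1-\tfrac{r}{d}),\,2icd^{1+\alpha}\right),
\end{align*}
and analogous identities in the denominator with $r=0$. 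The nome of the transformed functions is $2icd^{1+\alpha}$, so each of the transformed $\theta_3$ factors differs from $1$ by $\mathcal{O}(e^{-2\pi cd^{1+\alpha}(\ldots)}/(1-e^{-\ldots}))$ via the tail estimate \eqref{eq:bound_tail_theta}. Simultaneously, for the already-thin $\theta_3\!\left(\,\cdot\,,\tfrac{id^{1-\alpha}}{2c}\right)$ factors, proposition \ref{prop:theta3_approx_1} gives $\theta_3(0,\cdot)=1+\mathcal{O}(e^{-\pi d^{1-\alpha}/(2c)})$ and $\theta_3(\tfrac12,\cdot)=1+\mathcal{O}(e^{-\pi d^{1-\alpha}/(2c)})$.

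Combining these, the numerator becomes
\begin{equation*}
\sqrt{2cd^{1+\alpha}}\Bigl[e^{-\frac{\pi cr^2 d^{\alpha-1}}{2}}\bigl(1+\varepsilon_1\bigr)+e^{-\frac{\pi cd^{1+\alpha}}{2}(1-r/d)^2}\bigl(1+\varepsilon_2\bigr)\Bigr]
\end{equation*}
and the denominator $\sqrt{2cd^{1+\alpha}}[\,1+\mathcal{O}(e^{-\pi cd^{1+\alpha}/2})\,]$, with each $\varepsilon_i$ exponentially small. The ratio is then $e^{-\pi cr^2 d^{\alpha-1}/2}$ multiplied by a factor which equals $1$ up to the additive error $\mathcal{O}\!\bigl(e^{-\pi cd^{1+\alpha}(1-r/d)^2/2}/(1-e^{-2\pi cd^{1+\alpha}(1-r/d)})\bigr)$ (this dominates the other exponentially small errors since $r/d\in[0,1)$). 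This establishes the first equality. The second equality then follows by Taylor-expanding $e^{-\pi cr^2 d^{\alpha-1}/2} = 1 - \tfrac{\pi cr^2}{2d^{1-\alpha}} + \mathcal{O}(d^{-2(1-\alpha)})$, and noting that the exponentially small correction is absorbed into this polynomial error. Finally, the case $r$ odd is handled identically: the roles of $\theta_3(0,\cdot)$ and $\theta_3(\tfrac12,\cdot)$ in the first-argument simplification merely swap, but the modular transformation produces the same leading $e^{-\pi cr^2 d^{\alpha-1}/2}$ factor and the same error structure.

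The main obstacle is the bookkeeping of several simultaneous error terms — from the modular transformation remainders, from the approximation $\theta_3\!\left(\,\cdot\,,\tfrac{id^{1-\alpha}}{2c}\right)\approx 1$, and from the subleading exponential in the numerator — but none is deep: all follow from the standard tail estimates already collected in \S\ref{sec:jacobi_theta_functions}. The only genuinely subtle point is that one must verify the modular-transformation error matches the stated form $e^{-\pi cd^{1+\alpha}(1-r/d)^2/2}/(1-e^{-2\pi cd^{1+\alpha}(1-r/d)})$ uniformly in $r$ as $r/d$ approaches $1$, which is exactly what the tail-sum bound provides.
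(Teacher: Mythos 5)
Your proposal follows essentially the same route as the paper's own proof: substitute $\xi^2=cd^\alpha$, use periodicity in the first argument to reduce $\theta_3\!\left(\tfrac r2,\cdot\right)$ and $\theta_3\!\left(\tfrac12-\tfrac r2,\cdot\right)$ to $\theta_3(0,\cdot)$ and $\theta_3\!\left(\tfrac12,\cdot\right)$ for $r$ even, apply the modular transformation \eqref{eq:theta3_modular_transformation} to the wide-nome factors, control the remainders with \eqref{eq:bound_tail_theta} and proposition \ref{prop:theta3_approx_1}, and Taylor expand; the only cosmetic difference is that you transform every wide factor while the paper transforms only the leading ones. One bookkeeping point, however, needs repair: replacing the narrow factors $\theta_3\!\left(0,\tfrac{id^{1-\alpha}}{2c}\right)$ and $\theta_3\!\left(\tfrac12,\tfrac{id^{1-\alpha}}{2c}\right)$ by $1+\mathcal O\!\left(e^{-\pi d^{1-\alpha}/(2c)}\right)$ separately in numerator and denominator injects relative errors of size $e^{-\pi d^{1-\alpha}/(2c)}$, and your claim that $e^{-\frac{\pi c d^{1+\alpha}}{2}(1-r/d)^2}$ dominates all other exponentially small errors is then false for $\alpha\in(0,1)$, where $d^{1-\alpha}\ll d^{1+\alpha}$. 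The stated error form is nevertheless correct, because the factor $\theta_3\!\left(0,\tfrac{id^{1-\alpha}}{2c}\right)$ multiplies the leading term of both numerator and denominator and cancels in the ratio, while $\theta_3\!\left(\tfrac12,\tfrac{id^{1-\alpha}}{2c}\right)$ only multiplies terms already exponentially small in $d^{1+\alpha}$; and in any case, for the second displayed equality (the one used downstream) the cruder error $\mathcal O\!\left(e^{-\pi d^{1-\alpha}/(2c)}\right)$ is still $o\!\left(d^{-2(1-\alpha)}\right)$, so your final conclusion stands.
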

The following proposition provides a bound on $\left\langle\exp\left(\frac{2\pi in\widetilde{\xi}_J}{\sqrt{d}}\right)\exp\left(\frac{2\pi im\widetilde{\xi}_I}{\sqrt{d}}\right)\right\rangle$ based on the scaling of $\sigma_m$ with respect to $d$.
\begin{proposition}
\label{prop:variance_suppression_kraus_operators}
    Let $m, n$ be two integers, $c > 0$ and $\beta \in \mathbf{R}$. Assume $\sigma_m^2 := cd^{\beta}$. Then for $\beta > 1$,
    \begin{align}
        \frac{\theta_3\left(\frac{i\sigma_m^2}{d}m, \frac{2i\sigma_m^2}{d}\right)\theta_3\left(\frac{i\sigma_m^2}{d}n, \frac{2i\sigma_m^2}{d}\right)}{\theta_3\left(0, \frac{2i\sigma_m^2}{d}\right)^2}e^{-\frac{\pi\sigma_m^2}{d}(m^2 + n^2)} & = \mathcal{O}\left(e^{-\frac{\pi c(m^2 + n^2) d^{\beta - 1}}{2}}\right) \qquad \text{as }\,\, d \to \infty
    \end{align}
    while for $\beta < 1$,
    \begin{align}
        & \frac{\theta_3\left(\frac{i\sigma_m^2}{d}m, \frac{2i\sigma_m^2}{d}\right)\theta_3\left(\frac{i\sigma_m^2}{d}n, \frac{2i\sigma_m^2}{d}\right)}{\theta_3\left(0, \frac{2i\sigma_m^2}{d}\right)^2}e^{-\frac{\pi\sigma_m^2}{d}(m^2 + n^2)}\nonumber\\
        & = e^{-\frac{\pi cd^{\beta - 1}}{2}(m^2 + n^2)}\left(1 + \mathcal{O}\left(e^{-\frac{\pi d^{1 - \beta}}{2c}}\right)\right)\\
        & = 1 - \frac{\pi c(m^2 + n^2)}{2}\frac{1}{d^{1 - \beta}} + \mathcal{O}\left(\frac{1}{d^{2(1 - \beta)}}\right) \qquad\text{as }\,\, d \to \infty\,.
    \end{align}
\begin{proof}
    This results from straightforward analysis, after rewriting
    \begin{align}
        & \frac{\theta_3\left(\frac{i\sigma_m^2}{d}m, \frac{2i\sigma_m^2}{d}\right)\theta_3\left(\frac{i\sigma_m^2}{d}n, \frac{2i\sigma_m^2}{d}\right)}{\theta_3\left(0, \frac{2i\sigma_m^2}{d}\right)^2}e^{-\frac{\pi\sigma_m^2}{d}(m^2 + n^2)}\nonumber\\
        & = \frac{\theta_3\left(\frac{m}{2}, \frac{id}{2\sigma_m^2}\right)\theta_3\left(\frac{n}{2}, \frac{id}{2\sigma_m^2}\right)}{\theta_3\left(0, \frac{id}{2\sigma_m^2}\right)^2}e^{-\frac{\pi\sigma_m^2}{2d}(m^2 + n^2)}\,.
    \end{align}
\end{proof}
\end{proposition}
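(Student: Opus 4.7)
\emph{Proof plan.}
The plan is to apply the Jacobi modular transformation \eqref{eq:theta3_modular_transformation} to each of the three $\theta_3$ functions in the statement. A direct computation gives
\begin{equation*}
\theta_3\!\left(\tfrac{i\sigma_m^2 m}{d}, \tfrac{2i\sigma_m^2}{d}\right) = \sqrt{\tfrac{d}{2\sigma_m^2}}\, e^{\pi \sigma_m^2 m^2/(2d)}\, \theta_3\!\left(\tfrac{m}{2}, \tfrac{id}{2\sigma_m^2}\right),
\end{equation*}
and analogously for the factor with $n$ and for $\theta_3(0, 2i\sigma_m^2/d)$. In the ratio the $\sqrt{d/(2\sigma_m^2)}$ prefactors cancel, while the exponential prefactors combine with the existing $e^{-\pi\sigma_m^2(m^2+n^2)/d}$ to leave an overall $e^{-\pi\sigma_m^2(m^2+n^2)/(2d)} = e^{-\pi c(m^2+n^2) d^{\beta-1}/2}$. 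What remains is to estimate the ratio
\begin{equation*}
R(d) := \frac{\theta_3\!\left(\tfrac{m}{2}, \tfrac{id}{2\sigma_m^2}\right)\theta_3\!\left(\tfrac{n}{2}, \tfrac{id}{2\sigma_m^2}\right)}{\theta_3\!\left(0, \tfrac{id}{2\sigma_m^2}\right)^2},
\end{equation*}
and the analysis then splits depending on whether the new modular parameter $id/(2\sigma_m^2) = id^{1-\beta}/(2c)$ has large ($\beta<1$) or small ($\beta>1$) imaginary part.

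For $\beta>1$ the bound is immediate. From the defining series $\theta_3(z,\tau') = \sum_{k\in\mathbf{Z}} e^{i\pi k^2\tau'}e^{2\pi ikz}$ and the triangle inequality, $|\theta_3(z,\tau')|\leq \theta_3(0,\tau')$ for real $z$ and purely imaginary $\tau'$ with positive imaginary part. Applied with $z=m/2,n/2$, this yields $|R(d)|\leq 1$ and hence $|F(d)|\leq e^{-\pi c(m^2+n^2)d^{\beta-1}/2}$, which is precisely the stated $O$-bound (and is a genuine smallness statement since $d^{\beta-1}\to\infty$ in this regime).

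For $\beta<1$ the nome $q:=e^{-\pi d/(2\sigma_m^2)}=e^{-\pi d^{1-\beta}/(2c)}$ is exponentially small in $d$. Substituting into the series and using $\cos(\pi km)=(-1)^{km}$ gives $\theta_3(m/2,id/(2\sigma_m^2))=1+2(-1)^m q+O(q^4)$, with analogous expansions for $n$ and for $\theta_3(0,\cdot)$. Taking the ratio and expanding in $q$ yields $R(d)=1+O(q)=1+O(e^{-\pi d^{1-\beta}/(2c)})$, which combined with the prefactor gives the first asymptotic form; Taylor expanding in $d^{\beta-1}\to 0$ then delivers the polynomial form. The only non-trivial step is the initial bookkeeping: one must verify that the $e^{-\pi i z^2/\tau}$ phase factors from the three modular transformations combine with the existing $e^{-\pi\sigma_m^2(m^2+n^2)/d}$ to yield precisely $-\pi\sigma_m^2(m^2+n^2)/(2d)$ in the final exponent (with the correct sign); once this rewriting is done, the two regimes reduce to the elementary observations above.
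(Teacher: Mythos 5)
Your proposal is correct and follows essentially the same route as the paper: the modular transformation \eqref{eq:theta3_modular_transformation} applied to all three $\theta_3$ factors produces exactly the paper's rewritten form $\frac{\theta_3\left(\frac{m}{2}, \frac{id}{2\sigma_m^2}\right)\theta_3\left(\frac{n}{2}, \frac{id}{2\sigma_m^2}\right)}{\theta_3\left(0, \frac{id}{2\sigma_m^2}\right)^2}e^{-\frac{\pi\sigma_m^2}{2d}(m^2+n^2)}$ (your sign bookkeeping giving $e^{+\pi\sigma_m^2 m^2/(2d)}$ prefactors is right). The two-regime analysis you supply (the triangle-inequality bound $|\theta_3(z,i\sigma)|\leq\theta_3(0,i\sigma)$ for $\beta>1$, and the nome expansion $1+2(-1)^m q+\mathcal{O}(q^4)$ with $q=e^{-\pi d^{1-\beta}/(2c)}$ for $\beta<1$) is precisely the ``straightforward analysis'' the paper leaves implicit.
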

The last two propositions say that if one wants to make the prefactors of the probabilistic expectation as close to $1$ as possible, one has to take both the width of the Kraus operator and that of the initial state to decrease quickly with $d$. Therefore, if one were to ignore finite-dimensional effects, hence the contribution of the probabilistic expectation (assuming it should be $1$), one may think there is a way to tune the parameters $\xi^2, \sigma_m^2$ so as to make $\left|\left\langle\exp\left(\frac{2\pi in\widetilde{\xi}_J}{\sqrt{d}}\right)\exp\left(\frac{2\pi im\widetilde{\xi}_I}{\sqrt{d}}\right)\right\rangle\right|$ arbitrarily close to $1$. Unfortunately, we show in the next paragraph that this is not possible as for an exceedingly sharp measurement, the probabilistic expectation systematically exhibits a poor scaling that will ruin the improvement of the $\frac{\theta_3\left(\frac{i\sigma_m^2}{d}m, \frac{2i\sigma_m^2}{d}\right)\theta_3\left(\frac{i\sigma_m^2}{d}n, \frac{2i\sigma_m^2}{d}\right)}{\theta_3\left(0, \frac{2i\sigma_m^2}{d}\right)^2}e^{-\frac{\pi\sigma_m^2}{d}(m^2 + n^2)}$ prefactor brought on by decreasing $\sigma_m$.

\subsection{Measured quasi-ideal clock with pseudo-Gaussian Kraus operators and states: scalings}
\label{sec:measured_quasi_ideal_clock_scalings}
After deriving a general formula for the pseudo-correlation function $\left\langle\exp\left(\frac{2\pi in\widetilde{\xi}_J}{\sqrt{d}}\right)\exp\left(\frac{2\pi im\widetilde{\xi}_I}{\sqrt{d}}\right)\right\rangle$, we will restrict ourselves in this section to the case $m = -1, n = 0$. In other words, we will consider the pseudo-variance
\begin{align*}
    & \left\langle\exp\left(-\frac{2\pi i\widetilde{\xi}_I}{\sqrt{d}}\right)\right\rangle\,.
\end{align*}
Starting from the general formula \ref{eq:quasi_ideal_clock_correlation}, this can be written in the form
\begin{align}
    & \left\langle\exp\left(-\frac{2\pi i\widetilde{\xi}_I}{\sqrt{d}}\right)\right\rangle\nonumber\\
    & = \frac{\theta_3\left(\frac{i\sigma_m^2}{d}, \frac{2i\sigma_m^2}{d}\right)}{\theta_3\left(0, \frac{2i\sigma_m^2}{d}\right)}e^{-\frac{\pi\sigma_m^2}{d} - \frac{2\pi i}{d}\sum_{1 \leq j \leq I}\delta_j}\nonumber\\
    & \hspace{0.05\textwidth} \times \frac{\theta_3\left(\frac{1}{2d}, \frac{i}{2\xi^2d}\right)\theta_3\left(\frac{1}{2}, \frac{id}{2\xi^2}\right) + \theta_3\left(\frac{1}{2} - \frac{1}{2d}, \frac{i}{2\xi^2d}\right)\theta_3\left(0, \frac{id}{2\xi^2}\right)}{\theta_3\left(0, \frac{i}{2\xi^2d}\right)\theta_3\left(0, \frac{id}{2\xi^2}\right) + \theta_3\left(\frac{1}{2}, \frac{i}{2\xi^2d}\right)\theta_3\left(\frac{1}{2}, \frac{id}{2\xi^2}\right)}\nonumber\\
    & \hspace{0.05\textwidth} \times \sum_{-\frac{d - 1}{2} \leq p_1, \ldots, p_J \leq \frac{d - 1}{2}}\left(\prod_{\substack{1 \leq j < J\\j \neq I}}\frac{\theta_3\left(\frac{p_{j + 1} - p_j}{d}, \frac{i}{2\sigma_m^2d}\right)}{\sqrt{2\sigma_m^2d}\theta_3\left(0, \frac{2i\sigma_m^2}{d}\right)}\right)\frac{\theta_3\left(\frac{p_{I + 1} - p_I}{d} - \frac{1}{2d}, \frac{i}{2\sigma_m^2d}\right)}{\sqrt{2\sigma_m^2d}e^{-\frac{\pi\sigma_m^2}{2d}}\theta_3\left(-\frac{i\sigma_m^2}{d}, \frac{2i\sigma_m^2}{d}\right)}\nonumber\\
    & \hspace{0.1\textwidth} \times \prod_{1 \leq j \leq I}\left(1 - \left(1 - e^{2\pi i\delta_j\sign(-1))}\right)\mathbf{1}_{|p_j + 1| > \frac{d - 1}{2}}\right)\nonumber\\
    & \hspace{0.1\textwidth} \times \prod_{I + 1 \leq j \leq J}\left(1 - \left(1 - e^{2\pi i\delta_j\sign(0)}\right)\mathbf{1}_{|p_j| > \frac{d - 1}{2}}\right)\nonumber\\
    & \hspace{0.1\textwidth} \times \frac{\theta_3\left(\frac{p_1 - n_0 + 1}{d}, \frac{i}{\xi^2d}\right)\theta_3\left(\frac{p_1 - n _0}{d}, \frac{i}{\xi^2d}\right)}{\frac{d}{2}\left(\theta_3\left(\frac{1}{2d}, \frac{i}{2\xi^2d}\right)\theta_3\left(\frac{1}{2}, \frac{id}{2\xi^2}\right) + \theta_3\left(\frac{1}{2} - \frac{1}{2d}, \frac{i}{2\xi^2d}\right)\theta_3\left(0, \frac{id}{2\xi^2}\right)\right)}\nonumber\\
    & = \frac{\theta_3\left(\frac{i\sigma_m^2}{d}, \frac{2i\sigma_m^2}{d}\right)}{\theta_3\left(0, \frac{2i\sigma_m^2}{d}\right)}e^{-\frac{\pi\sigma_m^2}{d} +\frac{2\pi i}{d}\sum_{1 \leq j \leq I}\delta_j}\nonumber\\
    & \hspace{0.05\textwidth} \times \frac{\theta_3\left(\frac{1}{2d}, \frac{i}{2\xi^2d}\right)\theta_3\left(\frac{1}{2}, \frac{id}{2\xi^2}\right) + \theta_3\left(\frac{1}{2} - \frac{1}{2d}, \frac{i}{2\xi^2d}\right)\theta_3\left(0, \frac{id}{2\xi^2}\right)}{\theta_3\left(0, \frac{i}{2\xi^2d}\right)\theta_3\left(0, \frac{id}{2\xi^2}\right) + \theta_3\left(\frac{1}{2}, \frac{i}{2\xi^2d}\right)\theta_3\left(\frac{1}{2}, \frac{id}{2\xi^2}\right)}\nonumber\\
    & \hspace{0.05\textwidth} \times \sum_{-\frac{d - 1}{2} \leq p_1, \ldots, p_I \leq \frac{d - 1}{2}}\left(\prod_{1 \leq j < I}\frac{\theta_3\left(\frac{p_{j + 1} - p_j}{d}, \frac{i}{2\sigma_m^2d}\right)}{\sqrt{2\sigma_m^2d}\theta_3\left(0, \frac{2i\sigma_m^2}{d}\right)}\right)\prod_{1 \leq j \leq I}\left(1 - \left(1 - e^{-2\pi i\delta_j}\right)\mathbf{1}_{p_j = \frac{d - 1}{2}}\right)\nonumber\\
    & \hspace{0.1\textwidth} \times \frac{\theta_3\left(\frac{p_1 - n_0 + 1}{d}, \frac{i}{\xi^2d}\right)\theta_3\left(\frac{p_1 - n _0}{d}, \frac{i}{\xi^2d}\right)}{\frac{d}{2}\left(\theta_3\left(\frac{1}{2d}, \frac{i}{2\xi^2d}\right)\theta_3\left(\frac{1}{2}, \frac{id}{2\xi^2}\right) + \theta_3\left(\frac{1}{2} - \frac{1}{2d}, \frac{i}{2\xi^2d}\right)\theta_3\left(0, \frac{id}{2\xi^2}\right)\right)}\,.\label{eq:quasi_ideal_clock_pseudo_variance}
\end{align}
One remarks that in this particular case, all jumps of the random walk have identical distribution.

\subsubsection{Case $\delta = \frac{1}{2}$}
\label{sec:IId1}
One will now specialize the analysis further by assuming $\delta_j = \frac{1}{2}$ for all $j$. (This is in some sense the simplest non-trivial case because if all $\delta_j$ are integers, $1 - e^{-2\pi i\delta_j} = 0$ and the probabilistic expectation is $1$.) In the formula for the pseudo-variance above, one will therefore substitute $1 - e^{-2\pi i\delta_j} \to 2$. Actually, for reasons that will become clearer later, it is convenient to make the substitution $1 - e^{-2\pi i\delta_j} \to c_3$ instead, where $c_3$ may be any constant in $[0, 2]$ (though in the present case it will simply be $2$). The general idea will be to first estimate
\begin{align}
\label{eq:pseudo_variance_probabilistic_expectation}
    & \mathbf{E}^{p_1}\left[\prod_{1 \leq j \leq I}\left(1 - c_3\mathbf{1}_{p_j = \frac{d - 1}{2}}\right)\right]\nonumber\\
    & = \sum_{-\frac{d - 1}{2} \leq p_1, \ldots, p_I \leq \frac{d - 1}{2}}\left(\prod_{1 \leq j < I}\frac{\theta_3\left(\frac{p_{j + 1} - p_j}{d}, \frac{i}{2\sigma_m^2d}\right)}{\sqrt{2\sigma_m^2d}\theta_3\left(0, \frac{2i\sigma_m^2}{d}\right)}\right)\prod_{1 \leq j \leq I}\left(1 - c_3\mathbf{1}_{p_j = \frac{d - 1}{2}}\right)
\end{align}
for all $p_1$ (more or less precisely depending on the range of $p_1$ and the scaling of $\sigma_m^2$) and to deduce an approximation for $\left\langle\exp\left(\frac{2\pi im\widetilde{\xi}_I}{\sqrt{d}}\right)\right\rangle$ from these estimates.

The analysis breaks into two cases according to whether $\sigma_m^2$ is ``big'' or ``small'' with respect to $d$. The final result dealing with the first situation is corollary \ref{cor:quasi_ideal_clock_variance_good_scaling}, the final result dealing with the second case is corollary \ref{cor:quasi_ideal_clock_variance_bad_scaling}.

Let us then start by controlling $\mathbf{E}^{p_1}[\cdot]$. In the case where $\sigma_m^2$ is ``big'', this simply involves controlling the marginal distribution of each of the steps of the random walk and applying a union bound to show that the expectation is $1$ up to an exponentially small error in $d$. We start by the lemma which allows to bound the probability of reaching $\frac{d - 1}{2}$ after $j$ steps, provided the starting point $p_1$ of the walk is bounded away from $d$:
\begin{lemma}
\label{lemma:probability_antipode_small}
    Let $t > 0, \gamma > 0, c_2 > 0$ and $-\frac{1}{2} < \beta < 0$. Suppose $\sigma_m^2$ scales in $d$ according to $\sigma_m^2 := c_2d^{\beta}$. Assume $\gamma d > 1$ and $|p_1| \leq \frac{1}{2}(\gamma d - 1)$. Then for all $\eta \in (0, 1)$, $\zeta > 0$, for all integer $j \in \left[2, t\sqrt{d}\right]$:
    \begin{align}
        \mathbf{P}^{p_1}\left[p_j = \frac{d - 1}{2}\right] & \leq 2(1 + \eta)\left(\left(2 + \frac{\pi}{6c_2} + \frac{2c_2}{\pi}\right)td^{1/2 - \beta} + 2d^{-1}\right)e^{-\frac{\pi}{16c_2}d^{1 - \beta}}\nonumber\\
        & \hspace{0.05\textwidth} + \frac{7}{3}\sqrt{c_2}d^{\beta/2 - 1/2}e^{-\frac{\pi c_2}{2t}(1 - \gamma)^2d^{1/2 + \beta}}\,,
    \end{align}
    provided
    \begin{align}
        c_2 & \leq \frac{\pi}{2}\frac{1}{\log\left(\frac{1 + \eta}{\eta}\right)}\,,\\
        d & \geq d_0 \vee d_3\,.
    \end{align}
    A weaker form (i.e. bounding the probability in a power-decreasing function of $d$ instead of an exponentially decreasing one) is:
    \begin{align}
        \mathbf{P}^{p_1}\left[p_j = \frac{d - 1}{2}\right] & \leq \frac{4}{d^{\zeta}}\,,
    \end{align}
    which holds if one has furthermore
    \begin{align}
        d & \geq d_1 \vee d_2 \vee d_4\,,
    \end{align}
    where:
    \begin{align}
        d_0 & := \left(\frac{32c_2(1 + \eta)t}{\pi}\right)^{\frac{1}{1/2 - \beta}}\,,\\
        d_1 & := \max\left\{ 1, \left[\frac{\zeta + 3/2 - \beta}{1 - \beta}\frac{16c_2}{\pi}\left(1 + \frac{1}{a_0}\log\left(\frac{\zeta + 3/2 - \beta}{1 - \beta}\frac{16c_2}{\pi}\right)\right)\right.\right.\nonumber\\
        & \left.\left. \hspace{0.2\textwidth} + \frac{16c_2}{a_0\pi}\log\left(2(1 + \eta)\left(2 + \frac{\pi}{6c_2} + \frac{2c_2}{\pi}\right)t\right)\right] \right\}^{\frac{1}{1 - \beta}}\,,\\
        d_2 & := \max\left\{1, \left[\frac{\zeta}{1 - \beta}\frac{16c_2}{\pi}\left(1 + \frac{1}{a_0}\log\left(\frac{\zeta}{1 - \beta}\frac{16c_2}{\pi}\right)\right)\right.\right.\nonumber\\
        & \left.\left.\hspace{0.2\textwidth} + \frac{16c_2}{a_0\pi}\log\left(2(1 + \eta)\right)\right] \right\}^{\frac{1}{1 - \beta}}\,,\\
        d_3 & := \left(\frac{\log(2)}{2\pi c_2}\frac{t}{\gamma}\right)^{\frac{1}{1/2 + \beta}}\,,\\
        d_4 & := \max\left\{ 1, \left[\frac{\zeta + \beta/2 - 1/2}{1/2 + \beta}\frac{2t}{\pi c_2(1 - \gamma)^2}\left(1 + \frac{1}{a_0}\log\left(\frac{\zeta + \beta/2 - 1/2}{1/2 + \beta}\frac{2t}{\pi c_2(1 - \gamma)^2}\right)\right)\right.\right.\nonumber\\
        & \left.\left.\hspace{0.2\textwidth} + \frac{1}{a_0}\frac{2t}{\pi c_2(1 - \gamma)^2}\log\left(\frac{3}{7}\sqrt{c_2}\right)\right]\right\}^{\frac{1}{1/2 + \beta}}\,.
    \end{align}
\begin{proof}
From proposition \ref{prop:estimate_convolution_theta}, one may write:
\begin{align}
    \mathbf{P}^{p_1}\left[p_j = \frac{d - 1}{2}\right] & = \frac{d^{j - 1}}{(2\sigma_m^2j)^{d/2}\theta_3\left(0, \frac{2i\sigma_m^2}{d}\right)^j}\left(\theta_3\left(\frac{1}{d}\left(\frac{d - 1}{2} - p_1\right), \frac{ij}{2\sigma_m^2d}\right) + \varepsilon\right)\nonumber\\
    & = \frac{1}{d\theta_3\left(0, \frac{id}{2\sigma_m^2}\right)^j}\left(\theta_3\left(\frac{1}{d}\left(\frac{d - 1}{2} - p_1\right), \frac{ij}{2\sigma_m^2d}\right) + \varepsilon\right)\,,\\
    |\varepsilon| & \leq (j - 1)\frac{2e^{-\frac{\pi d}{8\sigma_m^2}}}{1 - e^{-\frac{\pi d}{2\sigma_m^2}}}\left(1 + \frac{2e^{-\frac{\pi(j - 1)}{2\sigma_m^2d}}}{1 - e^{-\frac{\pi(j - 1)}{\sigma_m^2d}}}\right)\left(1 + \frac{2e^{-\frac{\pi}{2\sigma_m^2}}}{1 - e^{-\frac{\pi d}{2\sigma_m^2}}}\right)^{j - 2}\nonumber\\
    & \hspace{0.05\textwidth} + \frac{2e^{-\frac{\pi d}{8\sigma_m^2}}}{1 - e^{-\frac{\pi}{2\sigma_m^2}}}\left(1 + \frac{2e^{-\frac{\pi d}{8\sigma_m^2}}}{1 - e^{-\frac{\pi d}{2\sigma_m^2}}}\right)^{j - 1}  + \frac{2e^{-\frac{\pi jd}{4\sigma_m^2}}}{1 - e^{-\frac{\pi j}{2\sigma_m^2}}}\,.
\end{align}
Let us substitute the scaling $\sigma_m^2 := c_2d^{\beta}$ as well as the inequality $2 \leq j \leq t\sqrt{d}$ into the bound for $\varepsilon$:
\begin{align}
    |\varepsilon| & \leq t\sqrt{d}\frac{2e^{-\frac{\pi}{8c_2}d^{1 - \beta}}}{1 - e^{-\frac{\pi}{2c_2}d^{1 - \beta}}}\left(1 + \frac{2e^{-\frac{\pi}{2c_2}d^{-1 - \beta}}}{1 - e^{-\frac{\pi}{c_2}d^{-1 - \beta}}}\right)\left(1 + \frac{2e^{-\frac{\pi}{2c_2}d^{-\beta}}}{1 - e^{-\frac{\pi}{2c_2}d^{1 - \beta}}}\right)^{t\sqrt{d}}\nonumber\\
    & \hspace{0.05\textwidth} + \frac{2e^{-\frac{\pi}{8c_2}d^{1 - \beta}}}{1 - e^{-\frac{\pi}{2c_2}d^{-\beta}}}\left(1 + \frac{2e^{-\frac{\pi}{8c_2}d^{1 - \beta}}}{1 - e^{-\frac{\pi}{2c_2}d^{1 - \beta}}}\right)^{t\sqrt{d}} + \frac{2e^{-\frac{\pi}{2c_2}d^{1 - \beta}}}{1 - e^{-\frac{\pi}{c_2}d^{-\beta}}}\,.
\end{align}
Now, assume
\begin{align}
    c_2 & \leq \frac{\pi}{2}\frac{1}{\log\left(\frac{1 + \eta}{\eta}\right)}
\end{align}
for some $\eta > 0$, which implies $\frac{1}{1 - e^{-\frac{\pi}{c_2}}} \leq \frac{1}{1 - e^{-\frac{\pi}{2c_2}}} \leq 1 + \eta$. Furthermore, one uses equation \eqref{eq:1_over_1_minus_exp_bounds} to bound:
\begin{align}
    1 + \frac{2e^{-\frac{\pi}{2c_2}d^{-1 - \beta}}}{1 - e^{-\frac{\pi}{c_2}d^{-1 - \beta}}} & \leq 1 + \frac{2}{1 - e^{-\frac{\pi}{c_2}d^{-1 - \beta}}}\\
    & \leq 1 + 2\left(\frac{c_2}{\pi}d^{1 + \beta} + \frac{1}{2} + \frac{\pi}{12c_2}d^{-1 - \beta}\right)\\
    & \leq 2 + \frac{\pi}{6c_2} + \frac{2c_2}{\pi}d^{1 + \beta}\\
    & \leq \left(2 + \frac{\pi}{6c_2} + \frac{2c_2}{\pi}\right)d^{1 + \beta}\,.
\end{align}
This yields:
\begin{align}
    |\varepsilon| & \leq 2(1 + \eta)t\sqrt{d}e^{-\frac{\pi}{8c_2}d^{1 - \beta}}\left(2 + \frac{\pi}{6c_2} + \frac{2c_2}{\pi}\right)d^{1 + \beta}\left(1 + 2(1 + \eta)e^{-\frac{\pi}{2c_2}d^{-\beta}}\right)^{t\sqrt{d}}\nonumber\\
    & \hspace{0.05\textwidth} + 2(1 + \eta)e^{-\frac{\pi}{8c_2}d^{1 - \beta}}\left(1 + 2(1 + \eta)e^{-\frac{\pi}{8c_2}d^{1 - \beta}}\right)^{t\sqrt{d}} + 2(1 + \eta)e^{-\frac{\pi}{2c_2}d^{1 - \beta}}\\
    & \leq 2(1 + \eta)\left(2 + \frac{\pi}{6c_2} + \frac{2c_2}{\pi}\right)td^{3/2 - \beta}e^{-\frac{\pi}{8c_2}d^{1 - \beta}}e^{2(1 + \eta)t\sqrt{d}}\nonumber\\
    & \hspace{0.05\textwidth} + 2(1 + \eta)e^{-\frac{\pi}{8c_2}d^{1 - \beta}}\exp\left(2(1 + \eta)t\sqrt{d}e^{-\frac{\pi}{8c_2}d^{1 - \beta}}\right) + 2(1 + \eta)e^{-\frac{\pi}{2c_2}d^{1 - \beta}}\,.
\end{align}
Now, provided one chooses
\begin{align}
    d & \geq d_0 := \left(\frac{32c_2(1 + \eta)t}{\pi}\right)^{\frac{1}{1/2 - \beta}}\,,
\end{align}
one has $2(1 + \eta)t\sqrt{d} \leq \frac{\pi}{16c_2}d^{1 - \beta}$ and the inequality becomes
\begin{align}
    |\varepsilon| & \leq 2(1 + \eta)\left(2 + \frac{\pi}{6c_2} + \frac{2c_2}{\pi}\right)td^{3/2 - \beta}e^{-\frac{\pi}{16c_2}d^{1 - \beta}} + 2(1 + \eta)e^{-\frac{\pi}{16c_2}d^{1 - \beta}} + 2(1 + \eta)e^{-\frac{\pi}{2c_2}d^{1 - \beta}}\,.
\end{align}
Then, according to equation \eqref{eq:power_vs_exp_inequality}, if one chooses
\begin{align}
    d & \geq d_1 := \max\left\{ 1, \left[\frac{\zeta + 3/2 - \beta}{1 - \beta}\frac{16c_2}{\pi}\left(1 + \frac{1}{a_0}\log\left(\frac{\zeta + 3/2 - \beta}{1 - \beta}\frac{16c_2}{\pi}\right)\right)\right.\right.\nonumber\\
    & \left.\left. \hspace{0.2\textwidth} + \frac{16c_2}{a_0\pi}\log\left(2(1 + \eta)\left(2 + \frac{\pi}{6c_2} + \frac{2c_2}{\pi}\right)t\right)\right]^{\frac{1}{1 - \beta}} \right\}\,,\\
    d & \geq d_2 := \max\left\{1, \left[\frac{\zeta}{1 - \beta}\frac{16c_2}{\pi}\left(1 + \frac{1}{a_0}\log\left(\frac{\zeta}{1 - \beta}\frac{16c_2}{\pi}\right)\right)\right.\right.\nonumber\\
    & \left.\left.\hspace{0.2\textwidth} + \frac{16c_2}{a_0\pi}\log\left(2(1 + \eta)\right)\right]^{\frac{1}{1 - \beta}}\right\}\,
\end{align}
where $\zeta > 0$, one obtains
\begin{align}
    |\varepsilon| & \leq \frac{3}{d^{\zeta}}\,.
\end{align}
It now remains to control the term $\frac{\theta_3\left(\frac{1}{d}\left(\frac{d - 1}{2} - p_1\right), \frac{ij}{2\sigma_m^2d}\right)}{d\theta_3\left(0, \frac{id}{2\sigma_m^2}\right)^j}$. We find
\begin{align}
\label{eq:theta3_ij}
    & \frac{\theta_3\left(\frac{1}{d}\left(\frac{d - 1}{2} - p_1\right), \frac{ij}{2\sigma_m^2d}\right)}{d\theta_3\left(0, \frac{id}{2\sigma_m^2}\right)^j}\nonumber\\
    & = \frac{\theta_3\left(\frac{1}{d}\left(\frac{d - 1}{2} - p_1\right), \frac{ij}{2c_2d^{1 + \beta}}\right)}{d\theta_3\left(0, \frac{id^{1 - \beta}}{2c_2}\right)^j}\nonumber\\
    & \leq \frac{1}{d}\theta_3\left(\frac{1}{2}(1 - \gamma), \frac{ij}{2c_2d^{1 + \beta}}\right)\nonumber\\
    & = \frac{1}{d}\sqrt{\frac{2c_2d^{1 + \beta}}{j}}\exp\left(-\frac{\pi c_2}{2j}(1 - \gamma)^2d^{1 + \beta}\right)\theta_3\left(ic_2(1 - \gamma), \frac{2ic_2d^{1 + \beta}}{j}\right)\nonumber\\
    & \leq \frac{1}{d}\sqrt{\frac{2c_2d^{1 + \beta}}{j}}\exp\left(-\frac{\pi c_2}{2j}(1 - \gamma)^2d^{1 + \beta}\right)\left(1 + \frac{2e^{-\frac{2\pi c_2d^{1 + \beta}\gamma}{j}}}{1 - e^{-\frac{4\pi c_2d^{1 + \beta}\gamma}{j}}}\right)\nonumber\\
    & \leq \sqrt{c_2}d^{\beta/2 - 1/2}\exp\left(-\frac{\pi c_2}{2t}(1 - \gamma)^2d^{1/2 + \beta}\right)\left(1 + \frac{2e^{-\frac{2\pi c_2d^{1/2 + \beta}\gamma}{t}}}{1 - e^{-\frac{4\pi c_2d^{1/2 + \beta}\gamma}{t}}}\right)\,,
\end{align}
where we used proposition \eqref{prop:theta3_approx_1} to obtain the penultimate line.
Now, provided
\begin{align}
    d & \geq d_3 := \left(\frac{\log(2)}{2\pi c_2}\frac{t}{\gamma}\right)^{\frac{1}{1/2 + \beta}}\,,
\end{align}
\eqref{eq:theta3_ij}
can be bounded by:
\begin{align}
    & \frac{7}{3}\sqrt{c_2}d^{\beta/2 - 1/2}\exp\left(-\frac{\pi c_2}{2t}(1 - \gamma)^2d^{1/2 + \beta}\right)\,.
\end{align}
Finally, provided
\begin{align}
\label{eq:bound_sqrt_c2_1_minus_gamma_sq}
    d & \geq d_4 := \max\left\{ 1, \left[\frac{\zeta + \beta/2 - 1/2}{1/2 + \beta}\frac{2t}{\pi c_2(1 - \gamma)^2}\left(1 + \frac{1}{a_0}\log\left(\frac{\zeta + \beta/2 - 1/2}{1/2 + \beta}\frac{2t}{\pi c_2(1 - \gamma)^2}\right)\right)\right.\right.\nonumber\\
    & \left.\left.\hspace{0.2\textwidth} + \frac{1}{a_0}\frac{2t}{\pi c_2(1 - \gamma)^2}\log\left(\frac{7}{3}\sqrt{c_2}\right)\right]^{\frac{1}{1/2 + \beta}}\right\}
\end{align}
\eqref{eq:bound_sqrt_c2_1_minus_gamma_sq} 
can be bounded by:
\begin{align*}
    & \frac{1}{d^{\zeta}}\,.
\end{align*}
\end{proof}
\end{lemma}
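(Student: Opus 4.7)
The plan is to start from an exact identity expressing $\mathbf{P}^{p_1}\left[p_j = \frac{d-1}{2}\right]$ as a $(j-1)$-fold convolution of the jump distribution $q \mapsto \theta_3\!\left(\frac{q}{d}, \frac{i}{2\sigma_m^2 d}\right)/(\sqrt{2\sigma_m^2 d}\,\theta_3\!\left(0, \frac{2i\sigma_m^2}{d}\right))$ against itself, evaluated at $\frac{d-1}{2}-p_1$. Invoking the earlier proposition on convolutions of theta functions (referenced in the excerpt as \texttt{prop:estimate\_convolution\_theta}), this convolution equals a single theta function $\frac{1}{d\,\theta_3(0, id/(2\sigma_m^2))^j}\,\theta_3\!\left(\frac{1}{d}\!\left(\frac{d-1}{2}-p_1\right), \frac{ij}{2\sigma_m^2 d}\right)$ plus an error $\varepsilon$ that comes from truncating each intermediate sum to $\mathbb{Z}_d$ rather than $\mathbb{Z}$. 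This splits the proof into two essentially independent tasks: controlling the leading theta term, and controlling $\varepsilon$.

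For the leading term I would apply the modular transformation for $\theta_3$ to rewrite it in a form that exhibits Gaussian decay in the first argument: schematically $\theta_3(x,i\tau) = \tau^{-1/2}e^{-\pi x^2/\tau}\theta_3(x/(i\tau), i/\tau)$, which converts the small-imaginary-part representation into one with exponent $-\frac{\pi c_2}{2j}(1-\gamma)^2 d^{1+\beta}$ once one uses the hypothesis $|p_1|\le \frac12(\gamma d-1)$, so that $\frac{1}{d}\bigl(\frac{d-1}{2}-p_1\bigr)\ge \frac12(1-\gamma)$. The residual theta factor is bounded by $1+O(e^{-\cdots})$ via proposition \ref{prop:theta3_approx_1}, and the prefactor $\sqrt{2\sigma_m^2 d/j}/d$ simplifies under $\sigma_m^2=c_2d^\beta$ and $j\le t\sqrt{d}$ to the claimed $\sqrt{c_2}\,d^{\beta/2-1/2}$. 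Imposing $d\ge d_3$ absorbs the parasitic factor $1+2e^{-\cdots}/(1-e^{-\cdots})$ into the constant $7/3$.

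For the error term $\varepsilon$, I would plug in the explicit bound from \texttt{prop:estimate\_convolution\_theta} (three summands, each a product of a small exponential and a product-telescope in $j$), substitute $\sigma_m^2=c_2d^\beta$ and $j\le t\sqrt{d}$, and use the elementary estimate $(1+u)^{j}\le e^{ju}$ on the telescoping products. The assumption $c_2\le \frac{\pi}{2}/\log\!\left(\frac{1+\eta}{\eta}\right)$ is precisely what is needed to bound $1/(1-e^{-\pi/(2c_2)})\le 1+\eta$, while the inequality from equation \eqref{eq:1_over_1_minus_exp_bounds} handles $1/(1-e^{-\pi d^{-1-\beta}/c_2})$ at the $\beta<0$ scale. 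The threshold $d\ge d_0$ is chosen so that $2(1+\eta)t\sqrt{d}$ is dominated by $\frac{\pi}{16c_2}d^{1-\beta}$, which allows one to absorb the $e^{2(1+\eta)t\sqrt{d}}$ factor into the dominant exponential $e^{-\pi d^{1-\beta}/(8c_2)}$, replacing $8$ by $16$ in the exponent. Adding the three resulting contributions and the leading term yields the first displayed inequality.

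The weaker power-law bound $4/d^\zeta$ then follows by applying the elementary comparison \eqref{eq:power_vs_exp_inequality} ($d^a\le e^{\pi d/(16c_2) \cdot \text{const}}$ for $d$ large) to each exponential, with the thresholds $d_1, d_2, d_4$ computed explicitly so that the polynomial prefactors $td^{1/2-\beta}, 1, d^{\beta/2-1/2}$ are each swallowed by a corresponding fraction of the exponential decay; the bookkeeping of these four thresholds is the main technical nuisance but is otherwise routine. I expect the principal obstacle to be not conceptual but rather the careful tracking of constants to ensure that every $d_i$ is chosen large enough that the claimed bound holds uniformly in $j\in[2, t\sqrt d]$, in particular that the $e^{2(1+\eta)t\sqrt{d}}$-type blow-ups from the telescoping products of error factors are strictly dominated.
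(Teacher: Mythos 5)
Your proposal follows essentially the same route as the paper's proof: the same decomposition via the convolution-of-theta-functions proposition into a leading $\theta_3$ term plus truncation error $\varepsilon$, the same modular-transformation argument (with $|p_1|\le\frac12(\gamma d-1)$ giving the exponent $-\frac{\pi c_2}{2t}(1-\gamma)^2 d^{1/2+\beta}$ and $d\ge d_3$ absorbing the residual factor into $7/3$), and the same treatment of $\varepsilon$ using $c_2\le\frac{\pi}{2}/\log\frac{1+\eta}{\eta}$, equation \eqref{eq:1_over_1_minus_exp_bounds}, $d\ge d_0$ to trade $e^{2(1+\eta)t\sqrt d}$ for the $8\to16$ weakening of the exponent, and \eqref{eq:power_vs_exp_inequality} with $d_1,d_2,d_4$ for the power-law form. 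The sketch is correct and matches the paper's argument in all essential steps.
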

Next, one will need an extra technical lemma to bound the tail of the initial distribution
\begin{align*}
    p_1 & \longmapsto \frac{\theta_3\left(\frac{p_1 - n_0 + 1}{d}, \frac{i}{\xi^2d}\right)\theta_3\left(\frac{p_1 - n _0}{d}, \frac{i}{\xi^2d}\right)}{\frac{d}{2}\left(\theta_3\left(\frac{1}{2d}, \frac{i}{2\xi^2d}\right)\theta_3\left(\frac{1}{2}, \frac{id}{2\xi^2}\right) + \theta_3\left(\frac{1}{2} - \frac{1}{2d}, \frac{i}{2\xi^2d}\right)\theta_3\left(0, \frac{id}{2\xi^2}\right)\right)}
\end{align*}

\begin{lemma}
\label{lemma:bound_tail_initial_distribution}
    Let $\gamma \in (0, 1)$ such that $\frac{2}{\gamma d} \leq \frac{1}{2}$ (also suppose $d \geq 2$). Let $\xi^2 := c_1d^{\alpha}$ as usual. Then the following estimate holds:
    \begin{align}
        \sum_{\substack{-\frac{d - 1}{2} \leq p \leq \frac{d - 1}{2}\\|p| > \frac{1}{2}(\gamma d - 1)}}\theta_3\left(\frac{p + 1}{d}, \frac{i}{\xi^2d}\right)\theta_3\left(\frac{p}{d}, \frac{i}{\xi^2d}\right) & \leq \sqrt{c_1}d^{1/2 + \alpha/2}\left(52e^{-\frac{\pi c_1\gamma^2}{16}d^{1 + \alpha}} +  18de^{-\frac{\pi c_1}{8}d^{1 + \alpha}}\right)
    \end{align}
    provided
    \begin{align}
        d & \geq d_0 := \left(\frac{\log(2)}{2\pi c_1}\right)^{\frac{1}{1 + \alpha}}\,,\\
        d & \geq d_1 := \left(\frac{c_1}{\pi}\log(2)\right)^{\frac{1}{1 + \alpha}}\,.
    \end{align}
    Furthermore, if $\alpha < 0$ and
    \begin{align}
        d & \geq d_2 := \left(\frac{\log(2)}{\pi c_1\gamma}\right)^{\frac{1}{\alpha}}\,,\\
        d & \geq d_3 := \max\left\{1, \left[\frac{\zeta + 1/2 + \alpha/2}{1 + \alpha}\frac{16}{\pi c_1\gamma^2}\left(1 + \frac{1}{a_0}\log\left(\frac{\zeta + 1/2 + \alpha/2}{1 + \alpha}\frac{16}{\pi c_1\gamma^2}\right)\right)\right.\right.\nonumber\\
        & \left.\left. \hspace{0.3\textwidth} + \frac{1}{a_0}\frac{16}{\pi c_1\gamma^2}\log\left(52\sqrt{c_1}\right)\right]\right\}^{\frac{1}{1 + \alpha}}\,,\\
        d & \geq d_4 := \max\left\{ 1, \left[\frac{\zeta + 3/2 + \alpha/2}{1 + \alpha}\frac{8}{\pi c_1}\left(1 + \frac{1}{a_0}\log\left(\frac{\zeta + 3/2 + \alpha/2}{1 + \alpha}\frac{8}{\pi c_1}\right)\right)\right.\right.\nonumber\\
        & \left.\left.\hspace{0.3\textwidth} + \frac{1}{a_0}\frac{8}{\pi c_1}\log\left(18\sqrt{c_1}\right)\right]\right\}^{\frac{1}{1 + \alpha}}\,,
    \end{align}
    for some $\zeta > 0$, then
    \begin{align}
        \sum_{\substack{-\frac{d - 1}{2} \leq p \leq \frac{d - 1}{2}\\|p| > \frac{1}{2}(\gamma d - 1)}}\theta_3\left(\frac{p + 1}{d}, \frac{i}{\xi^2d}\right)\theta_3\left(\frac{p}{d}, \frac{i}{\xi^2d}\right) & \leq \frac{2}{d^{\zeta}}\,.
    \end{align}
\begin{proof}
    One starts by using the multiplication formula from proposition \ref{prop:theta3_multiplication} to rewrite the product of $\theta$ functions as:
    \begin{align}
        & \theta_3\left(\frac{p + 1}{d}, \frac{i}{\xi^2d}\right)\theta_3\left(\frac{p}{d}, \frac{i}{\xi^2d}\right)\nonumber\\
        & = \frac{1}{2}\left[\theta_3\left(\frac{1}{2d}, \frac{i}{2\xi^2d}\right)\theta_3\left(\frac{p}{d} + \frac{1}{2d}, \frac{i}{2\xi^2d}\right) + \theta_3\left(\frac{1}{2d} + \frac{1}{2}, \frac{i}{2\xi^2d}\right)\theta_3\left(\frac{p}{d} + \frac{1}{2d} + \frac{1}{2}, \frac{i}{2\xi^2d}\right)\right]\,.
    \end{align}
    One will use the estimate \ref{eq:bound_tail_theta} when $z \in \left(-\frac{1}{2}, \frac{1}{2}\right)$, which implies in this case:
    \begin{align}
        |\theta_3\left(z, i\sigma\right)| & \leq \frac{1}{\sigma^{1/2}}e^{-\frac{\pi}{\sigma}z^2}\left(1 + \frac{2}{1 - e^{-\frac{\pi}{\sigma}}}\right)\,.
    \end{align}
    One now applies it to achieve the bound:
    \begin{align}
        & \sum_{\substack{-\frac{d - 1}{2} \leq p \leq \frac{d - 1}{2}\\|p| > \frac{1}{2}(\gamma d - 1)}}\theta_3\left(\frac{p}{d} + \frac{1}{2d}, \frac{i}{2\xi^2d}\right)\nonumber\\
        & \leq \sqrt{2\xi^2d}\sum_{\substack{-\frac{d - 1}{2} \leq p \leq \frac{d - 1}{2}\\|p| > \frac{1}{2}(\gamma d - 1)}}e^{-2\pi\xi^2d\left(\frac{p}{d} + \frac{1}{2d}\right)^2}\left(1 + \frac{2}{1 - e^{-2\pi\xi^2d}}\right)\\
        & \leq \sqrt{2\xi^2d}\left(\frac{e^{-\frac{\pi\xi^2\gamma^2 d}{4}}}{1 - e^{-2\pi\xi^2\gamma}} + \frac{e^{-\frac{\pi\xi^2\gamma^2 d}{4}\left(1 - \frac{2}{\gamma d}\right)^2}}{1 - e^{-2\pi\xi^2\gamma\left(1 - \frac{2}{\gamma d}\right)}}\right)\left(1 + \frac{2}{1 - e^{-2\pi\xi^2d}}\right)\\
        & \leq \sqrt{2\xi^2d}\frac{2e^{-\frac{\pi\xi^2\gamma^2 d}{4}\left(1 - \frac{2}{\gamma d}\right)^2}}{1 - e^{-2\pi\xi^2\gamma\left(1 - \frac{2}{\gamma d}\right)}}\left(1 + \frac{2}{1 - e^{-2\pi\xi^2d}}\right)\,.
    \end{align}
    As for $\theta_3\left(\frac{p}{d} + \frac{1}{2d} + \frac{1}{2}, \frac{i}{2\xi^2d}\right)$, one may bound it in a cruder way:
    \begin{align}
        & \sum_{\substack{-\frac{d - 1}{2} \leq p \leq \frac{d - 1}{2}\\|p| > \frac{1}{2}(\gamma d - 1)}}\theta_3\left(\frac{p}{d} + \frac{1}{2d} + \frac{1}{2}, \frac{i}{2\xi^2d}\right)\nonumber\\
        & \leq \sum_{-\frac{d - 1}{2} \leq p \leq \frac{d - 1}{2}}\theta_3\left(\frac{p}{d} + \frac{1}{2d} + \frac{1}{2}, \frac{i}{2\xi^2d}\right)\\
        & = \sum_{-\frac{d - 1}{2} \leq p \leq \frac{d - 1}{2}}\theta_3\left(\frac{1}{d}\left(p + \frac{d + 1}{2}\right), \frac{i}{2\xi^2d}\right)\\
        & = \sum_{-\frac{d - 1}{2} \leq p \leq \frac{d - 1}{2}}\theta_3\left(\frac{p}{d}, \frac{i}{2\xi^2d}\right)\\
        & = \sqrt{2\xi^2d}\theta_3\left(0, \frac{2i\xi^2}{d}\right)\\
        & = d\theta_3\left(0, \frac{id}{2\xi^2}\right)\\
        & \leq d\left(1 + \frac{2e^{-\frac{\pi d}{2\xi^2}}}{1 - e^{-\frac{\pi d}{\xi^2}}}\right)\,.
    \end{align}
    One now uses the modular transformation properties of $\theta$ as well as equation \eqref{eq:approx_theta_im_arg} to treat the factors $\theta_3\left(\frac{1}{2d}, \frac{i}{2\xi^2d}\right)$ and $\theta_3\left(\frac{1}{2d} + \frac{1}{2}, \frac{i}{2\xi^2d}\right)$:
    \begin{align}
        \theta_3\left(\frac{1}{2d}, \frac{i}{2\xi^2d}\right) & = \sqrt{2\xi^2d}\exp\left(-\frac{\pi\xi^2}{2d}\right)\theta_3\left(2i\xi^2d\frac{1}{2d}, 2i\xi^2d\right)\\
        & \leq \sqrt{2\xi^2d}\left(1 + \frac{2e^{-2\pi\xi^2d\left(1 - \frac{1}{d}\right)^2}}{1 - e^{-4\pi\xi^2d}}\right)\,,\\
        \theta_3\left(\frac{1}{2d} + \frac{1}{2}, \frac{i}{2\xi^2d}\right) & = \theta_3\left(\frac{1}{2} - \frac{1}{2d}, \frac{i}{2\xi^2d}\right)\\
        & = \sqrt{2\xi^2d}\exp\left(-\frac{\pi\xi^2d}{2}\left(1 - \frac{1}{d}\right)^2\right)\theta_3\left(2i\xi^2d\frac{1}{2}\left(1 - \frac{1}{d}\right), 2i\xi^2d\right)\\
        & \leq \sqrt{2\xi^2d}e^{-\frac{\pi\xi^2d}{2}\left(1 - \frac{1}{d}\right)^2}\left(1 + \frac{2e^{-2\pi\xi^2}}{1 - e^{-4\pi\xi^2d}}\right)\,.
    \end{align}
    One may now use $\frac{2}{\gamma d} \leq \frac{1}{2}$ and $d \geq 2$ to obtain a slightly simplified bound:
    \begin{align}
        & \sum_{\substack{-\frac{d - 1}{2} \leq p \leq \frac{d - 1}{2}\\|p| > \frac{1}{2}(\gamma d - 1)}}\theta_3\left(\frac{p + 1}{d}, \frac{i}{\xi^2d}\right)\theta_3\left(\frac{p}{d}, \frac{i}{\xi^2d}\right)\nonumber\\
        & \leq \sqrt{\frac{\xi^2d}{2}}\left(1 + \frac{2}{1 - e^{-2\pi\xi^2d}}\right)\left[\left(1 + \frac{2e^{-\frac{\pi\xi^2d}{2}}}{1 - e^{-4\pi\xi^2d}}\right)\frac{2e^{-\frac{\pi\xi^2\gamma^2d}{16}}}{1 - e^{-\pi\xi^2\gamma}} + \left(1 + \frac{2e^{-\frac{\pi d}{2\xi^2}}}{1 - e^{-\frac{\pi d}{\xi^2}}}\right)de^{-\frac{\pi\xi^2d}{8}}\right]\\
        & = \sqrt{\frac{c_1}{2}}d^{1/2 + \alpha/2}\left(1 + \frac{2}{1 - e^{-2\pi c_1d^{1 + \alpha}}}\right)\left[\left(1 + \frac{2e^{-\frac{\pi c_1}{2}d^{1 + \alpha}}}{1 - e^{-4\pi c_1d^{1 + \alpha}}}\right)\frac{2e^{-\frac{\pi c_1\gamma^2}{16}d^{1 + \alpha}}}{1 - e^{-\pi c_1\gamma d^{\alpha}}}\right.\nonumber\\
        & \left. \hspace{0.3\textwidth} + \left(1 + \frac{2e^{-\frac{\pi}{2c_1}d^{1 - \alpha}}}{1 - e^{-\frac{\pi}{c_1}d^{1 - \alpha}}}\right)de^{-\frac{\pi c_1}{8}d^{1 + \alpha}}\right]\,.
    \end{align}
    Now, suppose:
    \begin{align}
        d & \geq d_0 := \left(\frac{\log(2)}{2\pi c_1}\right)^{\frac{1}{1 + \alpha}}\,,\\
        d & \geq d_1 := \left(\frac{c_1}{\pi}\log(2)\right)^{\frac{1}{1 + \alpha}}\,.
    \end{align}
    One may then further simplify the bound:
    \begin{align}
        & \sum_{\substack{-\frac{d - 1}{2} \leq p \leq \frac{d - 1}{2}\\|p| > \frac{1}{2}(\gamma d - 1)}}\theta_3\left(\frac{p + 1}{d}, \frac{i}{\xi^2d}\right)\theta_3\left(\frac{p}{d}, \frac{i}{\xi^2d}\right)\nonumber\\
        & \leq \sqrt{c_1}d^{1/2 + \alpha/2}\left(\frac{26}{1 - e^{-\pi c_1\gamma d^{\alpha}}}e^{-\frac{\pi c_1\gamma^2}{16}d^{1 + \alpha}} +  18de^{-\frac{\pi c_1}{8}d^{1 + \alpha}}\right)\,.
    \end{align}
    In the case $\alpha < 0$, choosing
    \begin{align}
        d & \geq d_2 := \left(\frac{\log(2)}{\pi c_1\gamma}\right)^{\frac{1}{\alpha}} 
    \end{align}
    implies
     \begin{align}
        & \sum_{\substack{-\frac{d - 1}{2} \leq p \leq \frac{d - 1}{2}\\|p| > \frac{1}{2}(\gamma d - 1)}}\theta_3\left(\frac{p + 1}{d}, \frac{i}{\xi^2d}\right)\theta_3\left(\frac{p}{d}, \frac{i}{\xi^2d}\right)\nonumber\\
        & \leq \sqrt{c_1}d^{1/2 + \alpha/2}\left(52e^{-\frac{\pi c_1\gamma^2}{16}d^{1 + \alpha}} +  18de^{-\frac{\pi c_1}{8}d^{1 + \alpha}}\right)\,;
    \end{align}
    and provided
    \begin{align}
        d & \geq d_3 := \max\left\{1, \left[\frac{\zeta + 1/2 + \alpha/2}{1 + \alpha}\frac{16}{\pi c_1\gamma^2}\left(1 + \frac{1}{a_0}\log\left(\frac{\zeta + 1/2 + \alpha/2}{1 + \alpha}\frac{16}{\pi c_1\gamma^2}\right)\right)\right.\right.\nonumber\\
        & \left.\left. \hspace{0.3\textwidth} + \frac{1}{a_0}\frac{16}{\pi c_1\gamma^2}\log\left(52\sqrt{c_1}\right)\right]\right\}^{\frac{1}{1 + \alpha}}\,,\\
        d & \geq d_4 := \max\left\{ 1, \left[\frac{\zeta + 3/2 + \alpha/2}{1 + \alpha}\frac{8}{\pi c_1}\left(1 + \frac{1}{a_0}\log\left(\frac{\zeta + 3/2 + \alpha/2}{1 + \alpha}\frac{8}{\pi c_1}\right)\right)\right.\right.\nonumber\\
        & \left.\left.\hspace{0.3\textwidth} + \frac{1}{a_0}\frac{8}{\pi c_1}\log\left(18\sqrt{c_1}\right)\right]\right\}^{\frac{1}{1 + \alpha}}\,,
    \end{align}
    One has indeed
    \begin{align}
        \sum_{\substack{-\frac{d - 1}{2} \leq p \leq \frac{d - 1}{2}\\|p| > \frac{1}{2}(\gamma d - 1)}}\theta_3\left(\frac{p + 1}{d}, \frac{i}{\xi^2d}\right)\theta_3\left(\frac{p}{d}, \frac{i}{\xi^2d}\right) & \leq \frac{2}{d^{\zeta}}\,.
    \end{align}
\end{proof}
\end{lemma}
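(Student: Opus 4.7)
The plan is to reduce the product of theta functions to a sum of separable (in $p$) theta functions via the duplication/multiplication identities already invoked earlier in the paper (\texttt{prop:theta3\_multiplication}), and then control the tail sums over $p$ using the standard Gaussian decomposition of $\theta_3$ together with the modular transformation formula \eqref{eq:theta3_modular_transformation}. Concretely, I would first write
\begin{align*}
  \theta_3\!\left(\tfrac{p+1}{d}, \tfrac{i}{\xi^2 d}\right)\theta_3\!\left(\tfrac{p}{d}, \tfrac{i}{\xi^2 d}\right)
  & = \tfrac{1}{2}\theta_3\!\left(\tfrac{1}{2d}, \tfrac{i}{2\xi^2 d}\right)\theta_3\!\left(\tfrac{p}{d}+\tfrac{1}{2d}, \tfrac{i}{2\xi^2 d}\right) \\
  & \quad + \tfrac{1}{2}\theta_3\!\left(\tfrac{1}{2}+\tfrac{1}{2d}, \tfrac{i}{2\xi^2 d}\right)\theta_3\!\left(\tfrac{p}{d}+\tfrac{1}{2}+\tfrac{1}{2d}, \tfrac{i}{2\xi^2 d}\right),
\end{align*}
isolating all $p$-dependence in the second factor of each term.

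Next, I would estimate the two $p$-independent prefactors using \eqref{eq:theta3_modular_transformation}: after modular transformation, $\theta_3(\tfrac{1}{2d},\tfrac{i}{2\xi^2 d}) \sim \sqrt{2\xi^2 d}$ since the shift $1/(2d)$ is tiny compared with the effective Gaussian width, while $\theta_3(\tfrac{1}{2}-\tfrac{1}{2d},\tfrac{i}{2\xi^2 d}) \sim \sqrt{2\xi^2 d}\,e^{-\pi\xi^2 d(1-1/d)^2/2}$ is exponentially suppressed. For the tail sums in $p$, I would apply the tail estimate \eqref{eq:bound_tail_theta} to $\theta_3(\tfrac{p}{d}+\tfrac{1}{2d},\tfrac{i}{2\xi^2 d})$, which has a Gaussian profile of width $\sim 1/\sqrt{\xi^2 d}$ centered at $p=-1/2$; since we restrict to $|p|>(\gamma d-1)/2$, only the tails of this Gaussian contribute, giving a bound of the form
\begin{equation*}
  \sqrt{2\xi^2 d}\,\frac{2\,e^{-\pi \xi^2 \gamma^2 d(1-2/(\gamma d))^2/4}}{1 - e^{-2\pi\xi^2\gamma(1-2/(\gamma d))}}.
\end{equation*}
For the other sum, since the second factor already carries an $e^{-\pi\xi^2 d/2}$ suppression uniformly in $p$, a crude bound by the full sum $\sum_{-\frac{d-1}{2}\le p\le \frac{d-1}{2}}\theta_3(\tfrac{p}{d}+\tfrac{1}{2}+\tfrac{1}{2d},\tfrac{i}{2\xi^2 d})$ suffices; the latter evaluates (after index shift and modular transform) to $d\,\theta_3(0,id/(2\xi^2))\le d(1+2e^{-\pi d/(2\xi^2)}/(1-e^{-\pi d/\xi^2}))$.

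Assembling the pieces, substituting $\xi^2=c_1 d^\alpha$, and simplifying the various $(1-e^{-\cdots})^{-1}$ factors to constants under the hypotheses $d\ge d_0,d_1$ yields the first, exponentially small, bound. The weaker power-law bound follows by choosing $d$ large enough (thresholds $d_2,d_3,d_4$) to compare the exponentials $e^{-c d^{1+\alpha}}$ against $d^{-\zeta}$ using \eqref{eq:power_vs_exp_inequality}. The main technical nuisance is bookkeeping: the specific numerical constants ($52$, $18$) and the explicit thresholds $d_i$ require carefully tracking each geometric-series ratio and each application of the exp-vs-power inequality, and doing so in the regime $-1<\alpha$ without accidentally inverting an inequality when $1+\alpha$ is small — the estimates are elementary but all the error terms must simultaneously land below the dominant Gaussian tails.
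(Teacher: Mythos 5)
Your proposal mirrors the paper's proof step for step: the same $a=b=1$ multiplication-formula decomposition isolating the $p$-dependence, the same modular-transformation treatment of the $p$-independent prefactors, the same Gaussian tail estimate \eqref{eq:bound_tail_theta} for the sharp sum and the same crude full-sum bound $d\,\theta_3(0,id/(2\xi^2))$ for the suppressed one, followed by substituting $\xi^2=c_1 d^\alpha$ and invoking \eqref{eq:power_vs_exp_inequality} for the power-law version. This is correct and essentially identical to the paper's argument, with the remaining work being exactly the constant-tracking you identify.
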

The following lemma allows one to precisely lower-bound the denominator appearing in the expression for the initial distribution.
\begin{lemma}
\label{lemma:lower_bound_denominator_initial_distribution}
    Let $c_1 > 0$, $-1 < \alpha < 1$ and $\xi^2 := c_1d^{\alpha}$. Then the following lower bound holds:
    \begin{align}
        & \frac{d}{2}\left(\theta_3\left(\frac{1}{2d}, \frac{i}{2\xi^2d}\right)\theta_3\left(\frac{1}{2}, \frac{id}{2\xi^2}\right) + \theta_3\left(\frac{1}{2} - \frac{1}{2d}, \frac{i}{2\xi^2d}\right)\theta_3\left(0, \frac{id}{2\xi^2}\right)\right)\nonumber\\
        & \geq \sqrt{\frac{c_1}{2}}d^{3/2 + \alpha/2}\left(1 - \frac{\pi}{c_1}d^{\alpha - 1}\right)\left(1 - \frac{2e^{-\pi c_1d^{1 + \alpha}}}{1 - e^{-4\pi c_1d^{1 + \alpha}}}\right)\left(1 - \frac{2e^{-\frac{\pi}{2c_1}d^{1 - \alpha}}}{1 - e^{-\frac{\pi}{c_1}d^{1 - \alpha}}}\right)\,.
    \end{align}
\begin{proof}
    Since all the $\theta$ functions are positive, it suffices to lower bound each of them.
    First,
    \begin{align}
        \theta_3\left(\frac{1}{2d}, \frac{i}{2\xi^2d}\right) & = \sqrt{2\xi^2d}\exp\left(-\frac{\pi\xi^2}{2d}\right)\theta_3\left(2i\xi^2d\frac{1}{2d}, 2i\xi^2d\right)\\
        & \geq \sqrt{2\xi^2d}\left(1 - \frac{\pi\xi^2}{2d}\right)\left(1 - \frac{2e^{-2\pi\xi^2d\left(1 - \frac{1}{d}\right)}}{1 - e^{-4\pi\xi^2d}}\right)\\
        & = \sqrt{2c_1}d^{1/2 + \alpha/2}\left(1 - \frac{\pi c_1}{2}d^{\alpha - 1}\right)\left(1 - \frac{2e^{-\pi c_1d^{1 + \alpha}}}{1 - e^{-4\pi c_1d^{1 + \alpha}}}\right)\,.
    \end{align}
    Secondly,
    \begin{align}
        \theta_3\left(\frac{1}{2}, \frac{id}{2\xi^2}\right) & \geq 1 - \frac{2e^{-\frac{\pi d}{2\xi^2}}}{1 - e^{-\frac{\pi d}{\xi^2}}}\\
        & = 1 - \frac{2e^{-\frac{\pi}{2c_1}d^{1 - \alpha}}}{1 - e^{-\frac{\pi}{c_1}d^{1 - \alpha}}}\,.
    \end{align}
    Thirdly,
    \begin{align}
        \theta_3\left(\frac{1}{2} - \frac{1}{2d}, \frac{i}{2\xi^2d}\right) & \geq \sqrt{2\xi^2d}\exp\left(-\frac{\pi\xi^2d}{2}\left(1 - \frac{1}{d}\right)^2\right)\theta_3\left(2i\xi^2d\left(\frac{1}{2} - \frac{1}{2d}\right), \frac{i}{2\xi^2d}\right)\\
        & \geq \sqrt{2\xi^2d}\exp\left(-\frac{\pi\xi^2d}{2}\left(1 - \frac{1}{d}\right)^2\right)\left(1 - \frac{2e^{-2\pi\xi^2}}{1 - e^{-4\pi\xi^2}}\right)\\
        & = \sqrt{2c_1}d^{1/2 + \alpha/2}\exp\left(-\frac{\pi c_1}{2}d^{1 + \alpha}\left(1 - \frac{1}{d}\right)^2\right)\left(1 - \frac{2e^{-2\pi c_1d^{\alpha}}}{1 - e^{-4\pi c_1d^{\alpha}}}\right)\,.
    \end{align}
    Fourthly,
    \begin{align}
        \theta_3\left(0, \frac{id}{2\xi^2}\right) & \geq1 - \frac{2e^{-\frac{\pi d}{2\xi^2}}}{1 - e^{-\frac{\pi d}{\xi^2}}}\\
        & \geq 1 - \frac{2e^{-\frac{\pi}{2c_1}d^{1 - \alpha}}}{1 - e^{-\frac{\pi}{c_1}d^{1 - \alpha}}}\,.
    \end{align}
    Recalling $-1 < \alpha < 1$, only the first two terms are relevant and one can then write:
    \begin{align}
        & \frac{d}{2}\left(\theta_3\left(\frac{1}{2d}, \frac{i}{2\xi^2d}\right)\theta_3\left(\frac{1}{2}, \frac{id}{2\xi^2}\right) + \theta_3\left(\frac{1}{2} - \frac{1}{2d}, \frac{i}{2\xi^2d}\right)\theta_3\left(0, \frac{id}{2\xi^2}\right)\right)\nonumber\\
        & \geq \sqrt{\frac{c_1}{2}}d^{3/2 + \alpha/2}\left(1 - \frac{\pi}{c_1}d^{\alpha - 1}\right)\left(1 - \frac{2e^{-\pi c_1d^{1 + \alpha}}}{1 - e^{-4\pi c_1d^{1 + \alpha}}}\right)\left(1 - \frac{2e^{-\frac{\pi}{2c_1}d^{1 - \alpha}}}{1 - e^{-\frac{\pi}{c_1}d^{1 - \alpha}}}\right)\,.
    \end{align}
\end{proof}
\end{lemma}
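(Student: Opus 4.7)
The plan is to exploit the fact that every $\theta_3$ value appearing on the left-hand side is manifestly nonnegative — each being a sum of real exponentials with positive coefficients after applying the Jacobi modular transformation \eqref{eq:theta3_modular_transformation} if needed — so that the second summand $\theta_3\!\left(\tfrac{1}{2} - \tfrac{1}{2d},\,\tfrac{i}{2\xi^2 d}\right)\theta_3\!\left(0,\,\tfrac{id}{2\xi^2}\right)$ can simply be dropped. It therefore suffices to lower-bound $\tfrac{d}{2}\,\theta_3(\tfrac{1}{2d},\,\tfrac{i}{2\xi^2 d})\,\theta_3(\tfrac{1}{2},\,\tfrac{id}{2\xi^2})$ factor by factor, and the three right-hand side parentheses will emerge respectively as the Gaussian prefactor produced by a modular transformation, the tail correction to the resulting fast-convergent $\theta$-series, and the tail correction to the second $\theta_3$.

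For the first factor I would apply \eqref{eq:theta3_modular_transformation} to write
$$\theta_3\!\left(\tfrac{1}{2d},\,\tfrac{i}{2\xi^2 d}\right) = \sqrt{2\xi^2 d}\;e^{-\pi\xi^2/(2d)}\;\theta_3\!\left(i\xi^2,\,2i\xi^2 d\right),$$
so that the imaginary part $2c_1 d^{1+\alpha}$ of the new modular parameter diverges and the $\theta$-series is dominated by its zeroth term. Substituting $\xi^2 = c_1 d^{\alpha}$ and using $e^{-x}\geq 1-x$ extracts the Gaussian factor $\sqrt{2c_1}\,d^{(1+\alpha)/2}\bigl(1 - \pi c_1 d^{\alpha-1}/2\bigr)$, while the tail bound \eqref{eq:bound_tail_theta} (applied after shifting the first argument back into the fundamental domain) supplies the geometric correction $1 - 2e^{-\pi c_1 d^{1+\alpha}}/(1 - e^{-4\pi c_1 d^{1+\alpha}})$. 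For the second factor the modular parameter $d/(2\xi^2) = d^{1-\alpha}/(2c_1)$ is already large, so no transformation is needed: pairing consecutive terms in the alternating series $\sum_n (-1)^n e^{-\pi n^2 d^{1-\alpha}/(2c_1)}$, or invoking \eqref{eq:bound_tail_theta} directly at argument $1/2$, yields the third factor $1 - 2e^{-\pi d^{1-\alpha}/(2c_1)}/(1 - e^{-\pi d^{1-\alpha}/c_1})$.

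Multiplying these three lower bounds together with the outer $d/2$ produces the advertised right-hand side. The only real work is bookkeeping: because $-1 < \alpha < 1$ guarantees that both $d^{1+\alpha}$ and $d^{1-\alpha}$ diverge as $d\to\infty$, every exponential remainder is genuinely small and the three correction factors multiply cleanly without needing to be consolidated. No substantive obstacle arises here — the technique is the same used in Lemma \ref{lemma:bound_tail_initial_distribution} and in the preceding normalization computations, specialized to the point $p = 0$ — and the key conceptual step is simply discarding the second summand by positivity instead of trying to estimate it separately, since its modular transform shows it is exponentially smaller than the term retained.
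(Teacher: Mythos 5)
Your proposal is correct and follows essentially the same route as the paper: drop the (positive) second summand and lower-bound the retained product $\theta_3\!\left(\tfrac{1}{2d},\tfrac{i}{2\xi^2 d}\right)\theta_3\!\left(\tfrac{1}{2},\tfrac{id}{2\xi^2}\right)$ factor by factor, using the modular transformation with $e^{-x}\geq 1-x$ for the Gaussian prefactor and tail estimates for the residual theta series, exactly as the paper does (the paper merely records bounds for all four factors before observing that only the first two matter). The only cosmetic difference is that the tail of $\theta_3\!\left(i\xi^2, 2i\xi^2 d\right)$, having a purely imaginary first argument, is most directly controlled by Proposition \ref{prop:theta3_approx_1} rather than \eqref{eq:bound_tail_theta}, but the resulting correction factor is the same.
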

Putting all the last lemmas together, one can now precisely bound the probabilistic expectation in case $\sigma_m$ is ``big enough'' with respect to $d$. Doing so results in the following proposition.
\begin{proposition}
\label{prop:probabilistic_expectation_exponentially_suppressed}
    Let $c_1, c_2 > 0$, $-1 < \alpha < 1, -\frac{1}{2} < \beta < 0$. Suppose as usual that $\xi^2$ and $\sigma_m^2$ scale with $d$ according to $\xi^2 := c_1d^{\alpha}, \sigma_m^2 := c_2d^{\beta}$. Let $\gamma \in \left(\frac{1}{2}, 1\right)$ such that $\gamma > \frac{1}{2} \vee \frac{2|n_0| + 1}{d}$. Then as $d \to \infty$,
    \begin{align}
        & \sum_{-\frac{d - 1}{2} \leq p_1 \leq \frac{d - 1}{2}}\mathbf{E}^{p_1}\left[\prod_{1 \leq j \leq I}\left(1 - c_3\mathbf{1}_{p_j = \frac{d - 1}{2}}\right)\right]\nonumber\\
        & \hspace{0.1\textwidth} \times \frac{\theta_3\left(\frac{p_1 - n_0 + 1}{d}, \frac{i}{\xi^2d}\right)\theta_3\left(\frac{p_1 - n _0}{d}, \frac{i}{\xi^2d}\right)}{\frac{d}{2}\left(\theta_3\left(\frac{1}{2d}, \frac{i}{2\xi^2d}\right)\theta_3\left(\frac{1}{2}, \frac{id}{2\xi^2}\right) + \theta_3\left(\frac{1}{2} - \frac{1}{2d}, \frac{i}{2\xi^2d}\right)\theta_3\left(0, \frac{id}{2\xi^2}\right)\right)}\nonumber\\
        & = 1 - \mathcal{O}\left(d^{\beta/2 + 1/2}e^{-\frac{\pi c_2(1 - \gamma)^2}{2t}d^{1/2 + \beta}} + d^{-1}e^{-\frac{\pi c_1\gamma^2}{16}d^{1 + \alpha}}\right)\,.
    \end{align}
\begin{proof}
    Assume for simplicity $n_0 > 0$. Choose $\gamma \in \left(\frac{1}{2}, 1\right)$ such that $n_0 < \frac{1}{2}(\gamma d - 1)$. One will estimate
    \begin{align*}
        & \sum_{-\frac{d - 1}{2} \leq p_1 \leq \frac{d - 1}{2}}\mathbf{E}^{p_1}\left[\prod_{1 \leq j \leq I}\left(1 - c_3\mathbf{1}_{p_j = \frac{d - 1}{2}}\right)\right]\nonumber\\
        & \hspace{0.1\textwidth} \times \frac{\theta_3\left(\frac{p_1 - n_0 + 1}{d}, \frac{i}{\xi^2d}\right)\theta_3\left(\frac{p_1 - n _0}{d}, \frac{i}{\xi^2d}\right)}{\frac{d}{2}\left(\theta_3\left(\frac{1}{2d}, \frac{i}{2\xi^2d}\right)\theta_3\left(\frac{1}{2}, \frac{id}{2\xi^2}\right) + \theta_3\left(\frac{1}{2} - \frac{1}{2d}, \frac{i}{2\xi^2d}\right)\theta_3\left(0, \frac{id}{2\xi^2}\right)\right)}
    \end{align*}
    by distinguishing the $p_1$ according to whether $|p_1| \leq \frac{1}{2}(\gamma d - 1)$ or $|p_1| > \frac{1}{2}(\gamma d - 1)$. In the former case, lemma \ref{lemma:probability_antipode_small} tells us that $\mathbf{P}^{p_1}\left[p_j = \frac{1}{2}\right]$ is small for every $j$ so that $\mathbf{E}^{p_1}\left[\ldots\right]$ is also small by a union bound. In the latter case, we will only use that the expectation is bounded by $1$ but according to lemma \ref{lemma:bound_tail_initial_distribution}, one will now be considering the tail of the initial distribution and therefore get that this contribution is also small.
    
    More precisely, for $|p_1| \leq \frac{1}{2}(\gamma d - 1)$, lemma \ref{lemma:probability_antipode_small} allows to conclude:
    \begin{align}
        & \sum_{\substack{-\frac{d - 1}{2} \leq p_1 \leq \frac{d - 1}{2}\\|p_1| \leq \frac{1}{2}(\gamma d - 1)}}\mathbf{E}^{p_1}\left[\prod_{1 \leq j \leq I}\left(1 - c_3\mathbf{1}_{p_j = \frac{d - 1}{2}}\right)\right]\nonumber\\
        & \hspace{0.1\textwidth} \times \frac{\theta_3\left(\frac{p_1 - n_0 + 1}{d}, \frac{i}{\xi^2d}\right)\theta_3\left(\frac{p_1 - n _0}{d}, \frac{i}{\xi^2d}\right)}{\frac{d}{2}\left(\theta_3\left(\frac{1}{2d}, \frac{i}{2\xi^2d}\right)\theta_3\left(\frac{1}{2}, \frac{id}{2\xi^2}\right) + \theta_3\left(\frac{1}{2} - \frac{1}{2d}, \frac{i}{2\xi^2d}\right)\theta_3\left(0, \frac{id}{2\xi^2}\right)\right)}\nonumber\\
        & \geq \sum_{\substack{-\frac{d - 1}{2} \leq p_1 \leq \frac{d - 1}{2}\\|p_1| \leq \frac{1}{2}(\gamma d - 1)}}\left(1 - c_3\sum_{1 \leq j \leq I}\mathbf{P}^{p_1}\left[p_j = \frac{d - 1}{2}\right]\right)\nonumber\\
        & \hspace{0.1\textwidth} \times \frac{\theta_3\left(\frac{p_1 - n_0 + 1}{d}, \frac{i}{\xi^2d}\right)\theta_3\left(\frac{p_1 - n _0}{d}, \frac{i}{\xi^2d}\right)}{\frac{d}{2}\left(\theta_3\left(\frac{1}{2d}, \frac{i}{2\xi^2d}\right)\theta_3\left(\frac{1}{2}, \frac{id}{2\xi^2}\right) + \theta_3\left(\frac{1}{2} - \frac{1}{2d}, \frac{i}{2\xi^2d}\right)\theta_3\left(0, \frac{id}{2\xi^2}\right)\right)}\\
        & \geq \sum_{\substack{-\frac{d - 1}{2} \leq p_1 \leq \frac{d - 1}{2}\\|p_1| \leq \frac{1}{2}(\gamma d - 1)}}\left(1 - d\mathcal{O}\left((td^{1/2 - \beta} + d^{-1})e^{-\frac{\pi}{16c_2}d^{1 - \beta}} + d^{\beta/2 - 1/2}e^{-\frac{\pi c_2(1 - \gamma)^2}{2t}d^{1/2 + \beta}}\right)\right)\nonumber\\
        & \hspace{0.05\textwidth} \times \frac{\theta_3\left(\frac{p_1 - n_0 + 1}{d}, \frac{i}{\xi^2d}\right)\theta_3\left(\frac{p_1 - n _0}{d}, \frac{i}{\xi^2d}\right)}{\frac{d}{2}\left(\theta_3\left(\frac{1}{2d}, \frac{i}{2\xi^2d}\right)\theta_3\left(\frac{1}{2}, \frac{id}{2\xi^2}\right) + \theta_3\left(\frac{1}{2} - \frac{1}{2d}, \frac{i}{2\xi^2d}\right)\theta_3\left(0, \frac{id}{2\xi^2}\right)\right)}\\
        & \geq \left(1 - d\mathcal{O}\left(d^{\beta/2 - 1/2}e^{-\frac{\pi c_2(1 - \gamma)^2}{2t}d^{1/2 + \beta}}\right)\right)\nonumber\\
        & \hspace{0.05\textwidth} \times \sum_{\substack{-\frac{d - 1}{2} \leq p_1 \leq \frac{d - 1}{2}\\|p_1| \leq \frac{1}{2}(\gamma d - 1)}}\frac{\theta_3\left(\frac{p_1 - n_0 + 1}{d}, \frac{i}{\xi^2d}\right)\theta_3\left(\frac{p_1 - n _0}{d}, \frac{i}{\xi^2d}\right)}{\frac{d}{2}\left(\theta_3\left(\frac{1}{2d}, \frac{i}{2\xi^2d}\right)\theta_3\left(\frac{1}{2}, \frac{id}{2\xi^2}\right) + \theta_3\left(\frac{1}{2} - \frac{1}{2d}, \frac{i}{2\xi^2d}\right)\theta_3\left(0, \frac{id}{2\xi^2}\right)\right)}\,.
    \end{align}
    Now, lemmas \ref{lemma:bound_tail_initial_distribution} and \ref{lemma:lower_bound_denominator_initial_distribution} allow one to conclude
    \begin{align}
        & \sum_{\substack{-\frac{d - 1}{2} \leq p_1 \leq \frac{d - 1}{2}\\|p_1| \leq \frac{1}{2}(\gamma d - 1)}}\frac{\theta_3\left(\frac{p_1 - n_0 + 1}{d}, \frac{i}{\xi^2d}\right)\theta_3\left(\frac{p_1 - n _0}{d}, \frac{i}{\xi^2d}\right)}{\frac{d}{2}\left(\theta_3\left(\frac{1}{2d}, \frac{i}{2\xi^2d}\right)\theta_3\left(\frac{1}{2}, \frac{id}{2\xi^2}\right) + \theta_3\left(\frac{1}{2} - \frac{1}{2d}, \frac{i}{2\xi^2d}\right)\theta_3\left(0, \frac{id}{2\xi^2}\right)\right)}\nonumber\\
        & \geq \sum_{\substack{-\frac{d - 1}{2} \leq p_1 \leq \frac{d - 1}{2}}}\frac{\theta_3\left(\frac{p_1 - n_0 + 1}{d}, \frac{i}{\xi^2d}\right)\theta_3\left(\frac{p_1 - n _0}{d}, \frac{i}{\xi^2d}\right)}{\frac{d}{2}\left(\theta_3\left(\frac{1}{2d}, \frac{i}{2\xi^2d}\right)\theta_3\left(\frac{1}{2}, \frac{id}{2\xi^2}\right) + \theta_3\left(\frac{1}{2} - \frac{1}{2d}, \frac{i}{2\xi^2d}\right)\theta_3\left(0, \frac{id}{2\xi^2}\right)\right)}\nonumber\\
        & \hspace{0.05\textwidth} - \left|\mathcal{O}\left(d^{-1}e^{-\frac{\pi c_1\gamma^2}{16}d^{1 + \alpha}} + e^{-\frac{\pi c_1}{8}d^{1 + \alpha}}\right)\right|\\
        & = 1 - \left|\mathcal{O}\left(d^{-1}e^{-\frac{\pi c_1\gamma^2}{16}d^{1 + \alpha}} + e^{-\frac{\pi c_1}{8}d^{1 + \alpha}}\right)\right|\,.
    \end{align}
    This leads to
    \begin{align}
        & \sum_{\substack{-\frac{d - 1}{2} \leq p_1 \leq \frac{d - 1}{2}\\|p_1| \leq \frac{1}{2}(\gamma d - 1)}}\mathbf{E}^{p_1}\left[\prod_{1 \leq j \leq I}\left(1 - c_3\mathbf{1}_{p_j = \frac{d - 1}{2}}\right)\right]\nonumber\\
        & \hspace{0.1\textwidth} \times \frac{\theta_3\left(\frac{p_1 - n_0 + 1}{d}, \frac{i}{\xi^2d}\right)\theta_3\left(\frac{p_1 - n _0}{d}, \frac{i}{\xi^2d}\right)}{\frac{d}{2}\left(\theta_3\left(\frac{1}{2d}, \frac{i}{2\xi^2d}\right)\theta_3\left(\frac{1}{2}, \frac{id}{2\xi^2}\right) + \theta_3\left(\frac{1}{2} - \frac{1}{2d}, \frac{i}{2\xi^2d}\right)\theta_3\left(0, \frac{id}{2\xi^2}\right)\right)}\\
        & \geq 1 - \mathcal{O}\left(d^{\beta/2 + 1/2}e^{-\frac{\pi c_2(1 - \gamma)^2}{2t}d^{1/2 + \beta} + d^{-1}e^{-\frac{\pi c_1\gamma^2}{16}d^{1 + \alpha}}} + e^{-\frac{\pi c_1}{8}d^{1 + \alpha}}\right)\,.
    \end{align}
    Now, for the large $|p_1|$ contribution,
    \begin{align}
        & \sum_{\substack{-\frac{d - 1}{2} \leq p_1 \leq \frac{d - 1}{2}\\|p_1| > \frac{1}{2}(\gamma d - 1)}}\mathbf{E}^{p_1}\left[\prod_{1 \leq j \leq I}\left(1 - c_3\mathbf{1}_{p_j = \frac{d - 1}{2}}\right)\right]\nonumber\\
        & \hspace{0.1\textwidth} \times \frac{\theta_3\left(\frac{p_1 - n_0 + 1}{d}, \frac{i}{\xi^2d}\right)\theta_3\left(\frac{p_1 - n _0}{d}, \frac{i}{\xi^2d}\right)}{\frac{d}{2}\left(\theta_3\left(\frac{1}{2d}, \frac{i}{2\xi^2d}\right)\theta_3\left(\frac{1}{2}, \frac{id}{2\xi^2}\right) + \theta_3\left(\frac{1}{2} - \frac{1}{2d}, \frac{i}{2\xi^2d}\right)\theta_3\left(0, \frac{id}{2\xi^2}\right)\right)}\\
        & = \mathcal{O}\left(d^{-1}e^{-\frac{\pi c_1\gamma^2}{16}d^{1 + \alpha}} + e^{-\frac{\pi c_1}{8}d^{1 + \alpha}}\right)\,.
    \end{align}
    All in all,
    \begin{align}
        & \sum_{-\frac{d - 1}{2} \leq p_1 \leq \frac{d - 1}{2}}\mathbf{E}^{p_1}\left[\prod_{1 \leq j \leq I}\left(1 - c_3\mathbf{1}_{p_j = \frac{d - 1}{2}}\right)\right]\nonumber\\
        & \hspace{0.1\textwidth} \times \frac{\theta_3\left(\frac{p_1 - n_0 + 1}{d}, \frac{i}{\xi^2d}\right)\theta_3\left(\frac{p_1 - n _0}{d}, \frac{i}{\xi^2d}\right)}{\frac{d}{2}\left(\theta_3\left(\frac{1}{2d}, \frac{i}{2\xi^2d}\right)\theta_3\left(\frac{1}{2}, \frac{id}{2\xi^2}\right) + \theta_3\left(\frac{1}{2} - \frac{1}{2d}, \frac{i}{2\xi^2d}\right)\theta_3\left(0, \frac{id}{2\xi^2}\right)\right)}\\
        & \geq 1 - \mathcal{O}\left(d^{\beta/2 + 1/2}e^{-\frac{\pi c_2(1 - \gamma)^2}{2t}d^{1/2 + \beta}} + d^{-1}e^{-\frac{\pi c_1\gamma^2}{16}d^{1 + \alpha}}\right)\,.
    \end{align}
\end{proof}
\end{proposition}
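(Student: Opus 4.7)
My plan is to decouple two contributions: the probability that the random walk $(p_j)_{1\leq j\leq I}$ ever lands on the absorbing site $\tfrac{d-1}{2}$, and the spread of the initial distribution. To this end I would split the outer sum over $p_1$ into a bulk region $|p_1| \leq \tfrac12(\gamma d-1)$, in which by hypothesis $n_0$ sits comfortably, and a tail region $|p_1| > \tfrac12(\gamma d-1)$, where the initial density is already exponentially small in $d^{1+\alpha}$.

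On the bulk, the walk starts at distance at least $\tfrac12(1-\gamma)d$ from the absorbing site. I would first apply the pointwise inequality
\begin{align*}
\prod_{1\leq j\leq I}\bigl(1 - c_3\mathbf{1}_{p_j=(d-1)/2}\bigr) \;\geq\; 1 - c_3\sum_{1\leq j\leq I}\mathbf{1}_{p_j=(d-1)/2}
\end{align*}
and take $\mathbf{E}^{p_1}[\,\cdot\,]$, reducing the bulk bound to a union bound over the $I \leq t\sqrt{d}$ rounds. Lemma~\ref{lemma:probability_antipode_small}, applied with the chosen $\gamma$, produces a pointwise bound $\mathbf{P}^{p_1}[p_j=\tfrac{d-1}{2}] \lesssim d^{\beta/2-1/2} e^{-\pi c_2(1-\gamma)^2 d^{1/2+\beta}/(2t)}$, together with a subleading contribution that decays as $e^{-\pi d^{1-\beta}/(16 c_2)}$. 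Multiplying by $I$ and by the (normalized) initial weight on the bulk yields a total bulk error $O(d^{\beta/2+1/2} e^{-\pi c_2(1-\gamma)^2 d^{1/2+\beta}/(2t)})$ (the loose factor $d$ over $\sqrt{d}$ is absorbed harmlessly). Finally, Lemma~\ref{lemma:bound_tail_initial_distribution} for the numerator and Lemma~\ref{lemma:lower_bound_denominator_initial_distribution} for the denominator let me replace the truncated initial-distribution integral by the full one at cost $O(d^{-1} e^{-\pi c_1 \gamma^2 d^{1+\alpha}/16})$.

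For the tail region I would simply use the trivial bound $|\mathbf{E}^{p_1}[\,\cdot\,]| \leq 1$ and invoke the same two lemmas to bound the integrated initial weight there by $O(d^{-1} e^{-\pi c_1 \gamma^2 d^{1+\alpha}/16})$. Summing the bulk and tail estimates gives the claimed
\begin{align*}
1 - \mathcal{O}\!\left(d^{\beta/2+1/2}\, e^{-\pi c_2(1-\gamma)^2 d^{1/2+\beta}/(2t)} + d^{-1}\, e^{-\pi c_1 \gamma^2 d^{1+\alpha}/16}\right).
\end{align*}

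The main obstacle I expect is the bookkeeping around Lemma~\ref{lemma:probability_antipode_small}: its hypotheses (on $c_2$ and on the admissible range of $d$) must hold uniformly in $j \in [2, t\sqrt{d}]$, and it generates several competing subdominant pieces that must all be shown to be dominated, for large $d$ and under the stated ranges $-1<\alpha<1$ and $-\tfrac12<\beta<0$, by the two asserted terms. The precise role of the assumption $\gamma > \tfrac12 \vee \tfrac{2|n_0|+1}{d}$ is exactly to enforce both that the bulk starting points lie in the hypothesis of Lemma~\ref{lemma:probability_antipode_small} with constant $\gamma$, and that $n_0$ lies in the interior of the bulk so that the initial distribution is essentially entirely captured there.
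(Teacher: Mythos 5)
Your proposal follows essentially the same route as the paper's own proof: the same split of the $p_1$-sum at $|p_1| = \tfrac12(\gamma d - 1)$, the same union bound via Lemma \ref{lemma:probability_antipode_small} on the bulk, the same use of Lemmas \ref{lemma:bound_tail_initial_distribution} and \ref{lemma:lower_bound_denominator_initial_distribution} to control the initial distribution, and the trivial bound $|\mathbf{E}^{p_1}[\cdot]|\leq 1$ on the tail. The argument and the resulting error terms match the paper's, including the harmless replacement of $I\leq t\sqrt d$ by a factor $d$ in the bulk estimate.
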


To summarize and come back to the initial problem of estimating the pseudo-variance, we have proved:
\begin{corollary}
\label{cor:quasi_ideal_clock_variance_good_scaling}
    Let $c_1, c_2 > 0$, $-1 < \alpha < 1$ and $-\frac{1}{2} < \beta < 0$. Assume the scalings $\xi^2 := c_1d^{\alpha}$ and $\sigma_m^2 := c_2d^{\beta}$ and choose $\gamma \in \left(\frac{1}{2}, 1\right)$ such that $\gamma\frac{d}{2} - \frac{1}{2} > |n_0|$.\footnote{$n_0$ being allowed to scale with $d$. However, the constraint on $\gamma$ essentially means that $\frac{|n_0|}{d}$ should remain bounded by some constant $< 1$ in the process.} Finally, assume the integer $I$ satisfies $2 \leq I \leq t\sqrt{d}$ for some $t > 0$. One then has as $d \to \infty$,
    \begin{align}\label{eq:seudo var coll}
        & \left\langle\exp\left(-\frac{2\pi i\widetilde{\xi}_I}{\sqrt{d}}\right)\right\rangle\nonumber\\
        & = e^{-\frac{\pi c_1}{2}d^{\alpha - 1} - \frac{\pi c_2}{2}d^{\beta - 1}}\left(1 + \mathcal{O}\left(d^{\beta/2 + 1/2}e^{-\frac{\pi c_2(1 - \gamma)^2}{2t}d^{1/2 + \beta}} + d^{-1}e^{-\frac{\pi c_1\gamma^2}{16}d^{1 + \alpha}}\right)\right)\,.
    \end{align}
\begin{proof}
    This follows straightaway from propositions \ref{prop:variance_suppression_wavefunction}, \ref{prop:variance_suppression_kraus_operators} and \ref{prop:probabilistic_expectation_exponentially_suppressed}.
\end{proof}
\end{corollary}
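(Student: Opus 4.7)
The strategy is to combine the three factorized estimates already established in the preceding propositions with the explicit pseudo-variance formula \eqref{eq:quasi_ideal_clock_pseudo_variance}. Setting $\delta_j = 1/2$ in that formula (so that $1 - e^{-2\pi i\delta_j}$ becomes a real constant $c_3 \in [0,2]$), the right-hand side factors naturally into three pieces: a prefactor built from $\theta$-functions of $\sigma_m^2$ alone, a prefactor built from $\theta$-functions of $\xi^2$ alone, and a sum over $p_1,\ldots,p_I \in \mathbb{Z}_d$ that is interpreted as the expectation of $\prod_{j=1}^I (1 - c_3 \mathbf{1}_{p_j = (d-1)/2})$ against a random walk on the ring with roughly Gaussian step distribution of variance $\propto d/\sigma_m^2$ and an initial distribution determined by the quasi-ideal state.

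First, the Kraus-only prefactor is handled by Proposition \ref{prop:variance_suppression_kraus_operators} with $(m,n) = (-1,0)$: since $\beta < 0 < 1$, that proposition gives $e^{-\pi c_2 d^{\beta-1}/2}$ up to corrections that are exponentially small in $d^{1-\beta}$. Second, the wavefunction-only prefactor in \eqref{eq:quasi_ideal_clock_pseudo_variance} is exactly the quantity estimated by Proposition \ref{prop:variance_suppression_wavefunction} with $r = 1$ (after matching $m+n = -1 \equiv 1$ in the relevant normalization via parity and the half-integer shift), yielding $e^{-\pi c_1 d^{\alpha-1}/2}$ up to corrections exponentially small in $d^{1+\alpha}$. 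Third, Proposition \ref{prop:probabilistic_expectation_exponentially_suppressed} (applied with a choice of $\gamma \in (1/2,1)$ satisfying $\gamma d/2 - 1/2 > |n_0|$, which is possible because of the assumed scaling of the initial mean energy) bounds the probabilistic expectation by $1 - \mathcal{O}\!\bigl(d^{\beta/2+1/2}e^{-\pi c_2(1-\gamma)^2 d^{1/2+\beta}/(2t)} + d^{-1}e^{-\pi c_1\gamma^2 d^{1+\alpha}/16}\bigr)$. Multiplying the three contributions and absorbing the irrelevant unit-modulus phase $e^{2\pi i(I/2)/d} = 1 + O(I/d)$ (which is dominated by the stated error in the regime $I \leq t\sqrt{d}$) gives precisely \eqref{eq:seudo var coll}.

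The only nontrivial step in this chain is the random-walk bound in Proposition \ref{prop:probabilistic_expectation_exponentially_suppressed}, which is the main obstacle: it requires that, over $I \leq t\sqrt{d}$ steps of a walk with single-step variance of order $d^{1-\beta}/c_2$, the probability of landing on the antipode $\tfrac{d-1}{2}$ of the ring remains exponentially small. The accumulated spread after $I$ steps is of order $d^{(3/2 - \beta)/2}$, which is much smaller than $d$ precisely when $\beta > -1/2$; this is exactly the hypothesis of the corollary, and it is also what forces the threshold between the two scaling regimes discussed in the body of the Letter. Once the random walk is controlled uniformly in $p_1$ over the bulk $\lvert p_1\rvert \leq (\gamma d - 1)/2$, and the tail of the initial distribution is bounded by Lemma \ref{lemma:bound_tail_initial_distribution} together with the lower bound in Lemma \ref{lemma:lower_bound_denominator_initial_distribution}, the three exponential contributions combine and the corollary follows immediately.
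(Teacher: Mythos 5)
Your proposal is correct and follows essentially the same route as the paper: the corollary is obtained by multiplying the three factorized contributions, i.e.\ Proposition \ref{prop:variance_suppression_kraus_operators} for the Kraus prefactor, Proposition \ref{prop:variance_suppression_wavefunction} for the wavefunction prefactor, and Proposition \ref{prop:probabilistic_expectation_exponentially_suppressed} (built on Lemmas \ref{lemma:probability_antipode_small}, \ref{lemma:bound_tail_initial_distribution} and \ref{lemma:lower_bound_denominator_initial_distribution}) for the random-walk expectation, exactly as the paper does. One small caveat: the phase $e^{\pi i I/d}$ is $1+\mathcal{O}(t/\sqrt{d})$, which is \emph{not} dominated by the exponentially small error you quote, but since the corollary as stated in the paper likewise drops this phase (it is tracked separately in the main text), this does not affect the substance of your argument.
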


We have therefore just proved that if $\sigma_m^2$ scales with $d$ such that it vanishes as $d \to \infty$, but not faster than $\frac{1}{\sqrt{d}}$, then the pseudo-variance in equation \eqref{eq:seudo var coll} behaves essentially as
\begin{align}
    & 1 - \frac{\pi c_1}{2}\frac{1}{d^{1 - \alpha}} - \frac{\pi c_2}{d}\frac{1}{d^{1 - \beta}}\,.  
\end{align}
Therefore, it deviates from its ``infinite-dimensional value'' 1 (cf. discussion of quantum measurements in infinite dimension in the main text)
by an error which scales as a power of $d$. This error contains a contribution both from the dispersion of the initial state (exponent $\alpha$) and from the imprecision of the measurement (exponent $\beta$). To connect it to the general theory of measurement exposed in the main text (and in greater detail in \cite[Chapter 5]{braginsky_quantum_1992}), one may say that \textit{the backaction contribution is exponentially suppressed}. By the conditions of application of the last proposition, one is allowed to make the error contributed by $\alpha$ as small as $\frac{1}{d^2}$ but one cannot make the term depending on $\beta$ scale better than $\frac{1}{d^{3/2}}$ since one assumed $\beta > -\frac{1}{2}$. One may then wonder whether one could not obtain a better scaling by choosing $\beta < -\frac{1}{2}$. The purpose of the following is to show that doing this will yield an error essentially as bad (i.e. of order $\frac{1}{\sqrt{d}}$) as the one obtained from measuring the clock in the time basis (a limiting case which was studied in section \ref{sec:quasi_ideal_clock_time_basis} and formally corresponds to letting $\sigma_m^2 \downarrow 0$ in the more general framework considered here).

Roughly speaking, the key idea is that the variance for a jump of the random walk under study is approximately $d^2\frac{1}{2\sigma_m^2d} = \frac{1}{2c_2}d^{1 - \beta}$. Therefore, one expects the variance of the marginal distribution of the $j^\textrm{th}$ step to scale as $jd^{1 - \beta}$. If now $j$ scales as $\sqrt{d}$, this becomes $d^{3/2 - \beta}$. One therefore sees that whenever $\beta < -\frac{1}{2}$, this grows faster than $d^2$ and one therefore expects the marginal distribution of the step to be close to uniform. This is the first ingredient of the proof, treated in the first lemma. However, this is not sufficient since although the steps may all be close to uniformly distributed beyond a certain number of iterations, they are not independent, precluding a priori an evaluation of the expectation. This point will be addressed in the second lemma.

For the following two proofs, it will be particularly helpful to write the distribution for a jump of the random walk under consideration,
\begin{align}
    q & \longmapsto \frac{\theta_3\left(\frac{q}{d}, \frac{i}{2\sigma_m^2d}\right)}{\sqrt{2\sigma_m^2d}\theta_3\left(0, \frac{2i\sigma_m^2}{d}\right)}\,,
\end{align}
as a discrete Fourier transform:
\begin{align}
    \frac{\theta_3\left(\frac{q}{d}, \frac{i}{2\sigma_m^2d}\right)}{\sqrt{2\sigma_m^2d}\theta_3\left(0, \frac{2i\sigma_m^2}{d}\right)} & = \frac{1}{d}\sum_{-\frac{d - 1}{2} \leq s \leq \frac{d - 1}{2}}\frac{\theta_3\left(\frac{s}{d}, \frac{2i\sigma_m^2}{d}\right)}{\theta_3\left(0, \frac{2i\sigma_m^2}{d}\right)}\exp\left(-\frac{2\pi isq}{d}\right)\\
    \label{eq:random_walk_dft_tilde}
    & = \frac{1}{d}\sum_{-\frac{d - 1}{2} \leq s \leq \frac{d - 1}{2}}\widetilde{\theta}_3\left(\frac{s}{d}, \frac{2i\sigma_m^2}{d}\right)\exp\left(-\frac{2\pi isq}{d}\right)\,,
\end{align}
where we defined $\widetilde{\theta}_3(z, i\tau) := \frac{\theta_3(z, i\tau)}{\theta_3(0, i\tau)}$ (therefore, $\widetilde{\theta}_3(0, i\tau) = 1$ and $\widetilde{\theta}_3(z, i\tau) < 1$ for $z \neq 0\,[1]$.
We therefore start to show that above a certain number of iterations scaling as $\sqrt{d}$, the random walk becomes (exponentially) close to completely mixed.
\begin{lemma}
\label{lemma:random_walk_small_sigma_m_marginal_distributions}
    Let $\beta < -\frac{1}{2}$ ($\sigma_m^2 := c_2d^{\beta}$) and $j \geq 1$. Then the variation distance between the marginal distribution for the $j^{\textrm{th}}$ step of the random walk under consideration is upper bounded by:
    \begin{align}
        & \frac{\sqrt{d}}{2}e^{-\frac{\pi}{2c_2}d^{-1 - \beta}j + \frac{\pi^3c_2}{12}d^{\beta - 3}j}\,.
    \end{align}
    In particular, if $j \geq t\sqrt{d}$ with $t > 0$, this yields:
    \begin{align}
        \textrm{variation distance} & \leq \frac{\sqrt{d}}{2}e^{-\frac{\pi}{2c_2}d^{-\frac{1}{2} - \beta}t + \frac{\pi^3c_2}{12}d^{\beta - \frac{5}{2}}t}\,.
    \end{align}
\begin{proof}
    Let $j \geq 1$. To show that the distribution of $p_j$ is close to uniform, we will bound its variation distance (denoted here by $\lVert .\rVert$) to the uniform distribution. To achieve this, we use the following bound for the variation distance between two probability distributions $P, Q$ on $\mathbf{Z}_d$ (for a very general exposition, including more general finite groups than $\mathbf{Z}_d$, see \cite{diaconis_group_1988}):
    \begin{align}
        \lVert P - Q \rVert^2 & \leq \frac{1}{4}d\sum_{-\frac{d - 1}{2} \leq s \leq \frac{d - 1}{2}}|\Hat{P}(s) - \Hat{Q}(s)|^2\,,
    \end{align}
    where $\hat{P}, \hat{Q}$ denote the discrete Fourier transforms of $P, Q$. This yields:
    \begin{align}
        \textrm{variation distance}^2 & \leq         \frac{1}{4}\sum_{\substack{-\frac{d - 1}{2} \leq s \leq \frac{d - 1}{2}\\s \neq 0}}\widetilde{\theta}_3\left(\frac{s}{d}, \frac{2i\sigma_m^2}{d}\right)^{2j}\\
        & = \frac{1}{2}\sum_{1 \leq s \leq \frac{d - 1}{2}}\widetilde{\theta}_3\left(\frac{s}{d}, \frac{2i\sigma_m^2}{d}\right)^{2j}\\
        & = \frac{1}{2}\sum_{1 \leq s \leq \frac{d - 1}{2}}\widetilde{\theta}_3\left(\frac{s}{d}, 2ic_2d^{\beta - 1}\right)^{2j}\,.
    \end{align}
    A convenient way to bound the summand is by way of the Jacobi triple product formula \ref{eq:jacobi_triple_product}. The latter implies:
    \begin{align}
        & \log\left(\widetilde{\theta}_3\left(\frac{s}{d}, 2ic_2d^{\beta - 1}\right)^{2j}\right)\nonumber\\
        & = 2j\log\widetilde{\theta}_3\left(\frac{s}{d}, 2ic_2d^{\beta - 1}\right)\\
        & = 2j\sum_{p \geq 1}\log\left(\frac{1 + 2\cos\left(2\pi\frac{s}{d}\right)e^{-(2p - 1)2\pi c_2d^{\beta - 1}} + e^{-(4p - 2)2\pi c_2d^{\beta - 1}}}{1 + 2e^{-(2p - 1)2\pi c_2d^{\beta - 1}} + e^{-(4p - 2)2\pi c_2d^{\beta - 1}}}\right)\\
        & = 2j\sum_{p \geq 1}\log\left(1 - \frac{2\left(1 - \cos\left(2\pi\frac{s}{d}\right)\right)e^{-(2p - 1)2\pi c_2d^{\beta - 1}}}{\left(1 + e^{-(2p - 1)2\pi c_2d^{\beta - 1}}\right)^2}\right)\\
        & \leq -4j\sum_{p \geq 1}\frac{\left(1 - \cos\left(2\pi\frac{s}{d}\right)\right)e^{-(2p - 1)2\pi c_2d^{\beta - 1}}}{\left(1 + e^{-(2p - 1)2\pi c_2d^{\beta - 1}}\right)^2}\,.
    \end{align}
    One may now use the estimate $\cos(2\pi x) \leq \exp\left(-\frac{(2\pi x)^2}{2}\right)$  valid for all $x \in \left[-\frac{1}{2}, \frac{1}{2}\right]$ to bound the above as:
    \begin{align}
        & \log\left(\widetilde{\theta}_3\left(\frac{s}{d}, 2ic_2d^{\beta - 1}\right)^{2j}\right)\nonumber\\
        & \leq -4j\left(1 - e^{-\frac{2\pi^2}{d^2}s^2}\right)\sum_{p \geq 1}\frac{e^{-(2p - 1)2\pi c_2d^{\beta - 1}}}{\left(1 + e^{-(2p - 1)2\pi c_2d^{\beta - 1}}\right)^2}\\
        & \leq -4j\left(1 - e^{-\frac{2\pi^2}{d^2}s^2}\right)\frac{e^{-2\pi c_2d^{\beta - 1}}}{1 - e^{-4\pi c_2d^{\beta - 1}}}\,.
    \end{align}
    One can now use the inequality \ref{eq:exp_over_1_minus_expr_bound} to write
    \begin{align}
        \frac{e^{-2\pi c_2d^{\beta - 1}}}{1 - e^{-4\pi c_2d^{\beta - 1}}} & \geq \frac{1}{4\pi c_2}d^{1 - \beta} - \frac{\pi c_2}{6}d^{\beta - 1}\,.
    \end{align}
    As for the prefactor containing $s$, one may use the crude bound
    \begin{align}
        1 - e^{-\frac{2\pi^2}{d^2}s^2} & \geq 1 - e^{-\frac{2\pi^2}{d^2}}\\
        & \geq \frac{\pi^2}{d^2}\,,
    \end{align}
    since $1 - e^{-x} \geq \frac{x}{2}$ for $x \in \left(0, \frac{1}{2}\right)$ for instance (which implies that the above holds for $d \geq 7$). Therefore,
    \begin{align}
        & \log\left(\widetilde{\theta}_3\left(\frac{s}{d}, 2ic_2d^{\beta - 1}\right)^{2j}\right)\nonumber\\
        & \leq - 4j\left(\frac{\pi}{4c_2}d^{-1 - \beta} - \frac{\pi^3 c_2}{6}d^{\beta - 3}\right)\,.
    \end{align}
    This entails:
    \begin{align}
        \textrm{variation distance}^2 & \leq \frac{d}{4}e^{-\frac{\pi}{c_2}d^{-1 - \beta}j + \frac{\pi^3c_2}{6}d^{\beta - 3}j}\,.
    \end{align}
    In particular, if $j = t\sqrt{d}$ where $t > 0$, the upper bound becomes:
    \begin{align}
        \textrm{variation distance}^2 & \leq \frac{d}{4}e^{-\frac{\pi}{c_2}d^{-\frac{1}{2} - \beta}t + \frac{\pi^3c_2}{6}d^{\beta - \frac{5}{2}}t}\,,
    \end{align}
    which indeed vanishes exponentially as $d \to \infty$ since $\beta < -\frac{1}{2}$.
\end{proof}
\end{lemma}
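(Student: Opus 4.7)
The plan is to invoke discrete Fourier analysis on $\mathbb{Z}_d$. Writing $P_j$ for the law of the $j$-th step of the random walk and $U$ for the uniform distribution, the standard upper bound lemma gives $\|P_j - U\|_{\mathrm{TV}}^2 \leq \tfrac{1}{4}\sum_{s \neq 0}|\widehat{P}_j(s)|^2$ (up to a normalization tied to the DFT convention). Since $p_j$ is a sum of $j$ i.i.d.\ jumps whose characteristic function is $\widetilde{\theta}_3(s/d, 2ic_2 d^{\beta-1})$ (as read off from \eqref{eq:random_walk_dft_tilde}), one has $\widehat{P}_j(s) = \widetilde{\theta}_3(s/d, 2ic_2 d^{\beta-1})^{j}$, and the whole problem reduces to a uniform (in $s \neq 0$) upper bound on this quantity.

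The key estimate is therefore a quantitative statement of how close $\widetilde{\theta}_3(s/d, 2ic_2 d^{\beta-1})$ is to $1$ when $d$ is large. I would exploit the Jacobi triple product formula \eqref{eq:jacobi_triple_product}, which yields
\begin{align*}
\log \widetilde{\theta}_3(s/d, 2ic_2 d^{\beta-1}) = \sum_{p \geq 1}\log\left(1 - \frac{2(1-\cos(2\pi s/d))\,e^{-(2p-1)2\pi c_2 d^{\beta-1}}}{(1+e^{-(2p-1)2\pi c_2 d^{\beta-1}})^2}\right).
\end{align*}
Applying $\log(1-x) \leq -x$ factorizes the estimate into an $s$-dependent factor $1 - \cos(2\pi s/d)$ and an $s$-independent geometric sum $\sum_{p \geq 1}e^{-(2p-1)a}/(1+e^{-(2p-1)a})^2$ with $a = 2\pi c_2 d^{\beta-1}$. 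The latter equals $e^{-a}/(1-e^{-2a})$, and since $a \downarrow 0$ as $d \to \infty$, a Taylor-type expansion via \eqref{eq:exp_over_1_minus_expr_bound} lower bounds it by $\tfrac{1}{4\pi c_2}d^{1-\beta} - \tfrac{\pi c_2}{6}d^{\beta-1}$.

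For the $s$-dependent factor, I would use a uniform lower bound that discards the $s$-dependence entirely: the inequality $\cos(2\pi x) \leq e^{-2\pi^2 x^2}$ on $[-1/2, 1/2]$ combined with $1 - e^{-x} \geq x/2$ for small $x$ gives $1 - \cos(2\pi s/d) \geq \pi^2/d^2$ for every $s \neq 0$ once $d$ is modestly large (explicit check $d \geq 7$ suffices). Multiplying the two estimates and raising to the $2j$-th power yields the uniform bound $\widetilde{\theta}_3(s/d, 2ic_2 d^{\beta-1})^{2j} \leq \exp\!\bigl(-\tfrac{\pi}{c_2}d^{-1-\beta}j + \tfrac{\pi^3 c_2}{6}d^{\beta-3}j\bigr)$. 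Summing over the $\mathcal{O}(d)$ nonzero Fourier modes contributes a factor of at most $d/4$ inside the square (by the $s \leftrightarrow -s$ symmetry of $\widetilde{\theta}_3$ one could fold this to $(d-1)/4$), and taking square roots gives the claimed $\tfrac{\sqrt{d}}{2}\exp(\cdots)$ bound; specialization to $j \geq t\sqrt{d}$ is immediate. The main subtlety is that sacrificing the full $s$-dependence of $1-\cos(2\pi s/d)$ in favor of its minimum costs exactly the $\sqrt{d}$ prefactor --- a refined estimate retaining some $s$-decay would improve this but is unnecessary, since for $\beta < -1/2$ the exponential comfortably dominates any polynomial loss once $j \gtrsim \sqrt{d}$.
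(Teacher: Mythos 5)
Your proposal is essentially the paper's own proof, step for step: the Fourier upper-bound lemma for total-variation distance on $\mathbf{Z}_d$, the identification of the $j$-step characteristic function with $\widetilde{\theta}_3\left(\frac{s}{d}, \frac{2i\sigma_m^2}{d}\right)^{j}$ via \eqref{eq:random_walk_dft_tilde}, the Jacobi triple product combined with $\log(1-x)\leq -x$, the lower bound of the geometric-type factor via \eqref{eq:exp_over_1_minus_expr_bound}, the uniform bound $1-\cos\left(2\pi s/d\right)\geq \pi^2/d^2$ from $\cos(2\pi x)\leq e^{-2\pi^2x^2}$, and the $\sqrt{d}$ prefactor from summing over the nonzero modes. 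One small correction: $\sum_{p\geq 1}e^{-(2p-1)a}/\left(1+e^{-(2p-1)a}\right)^2$ does not \emph{equal} $e^{-a}/(1-e^{-2a})$ --- the denominators make it strictly smaller (asymptotically half of it as $a\downarrow 0$) --- so to keep the inequality chain valid you should use, e.g., $\left(1+e^{-(2p-1)a}\right)^2\leq 4$ to get the lower bound $\tfrac14\,e^{-a}/(1-e^{-2a})$, which only rescales the constant in the exponent and leaves the conclusion unchanged (the paper's own proof glosses over the same point).
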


Having established this result on the marginal distributions of the steps of the random walk, one will now prove an important lemma that will allow us to control a family of expectations involving many $p_j$. 

\begin{lemma}
\label{lemma:bound_recurrence_random_walk}
    Let $I \geq 1$ denote an integer. Let $c > 0$. Then the following bounds hold:
    \begin{align}
        \mathbf{E}^{p_1 = 0}\left[\prod_{2 \leq k \leq I}\left(1 - c\mathbf{1}_{p_k = 0}\right)\right] & \geq \left(1 - \frac{cc_2}{d}\frac{1 + \frac{2e^{-\frac{\pi}{2c_2}d^{-\beta - 1}}}{1 - e^{-\frac{\pi}{c_2}d^{-\beta - 1}}}}{1 - \frac{2e^{-\frac{\pi}{2c_2}d^{1 - \beta}}}{1 - e^{-\frac{\pi}{c_2}d^{1 - \beta}}}}\right)^I\,, \quad (\beta < -1)\,,\\
        \mathbf{E}^{p_1 = 0}\left[\prod_{2 \leq k \leq I}\left(1 - c\mathbf{1}_{p_k = 0}\right)\right] & \geq \left(1 - \sqrt{2}cc_2^{3/2}d^{-1/2 + \beta/2}\frac{1 + \frac{2e^{-2\pi c_2d^{\beta + 1}}}{1 - e^{-4\pi c_2d^{\beta + 1}}}}{1 - \frac{2e^{-\frac{\pi}{2c_2}d^{-\beta + 1}}}{1 - e^{-\frac{\pi}{c_2}d^{-\beta + 1}}}}\right)^I\,,  \quad \left(-1 < \beta < -\frac{1}{2}\right)\,.
    \end{align}
\begin{proof}
    One first rewrites the expectation under consideration by expanding the product:
    \begin{align}
         \mathbf{E}^{p_1 = 0}\left[\prod_{2 \leq k \leq I}\left(1 - c\mathbf{1}_{p_k = 0}\right)\right] & = \mathbf{E}^{p_1 = 0}\left[\sum_{j \geq 0}\sum_{\substack{i_1, \ldots, i_j\\2 \leq i_1 < \ldots < i_j \leq I}}\prod_{1 \leq k \leq j}\left(-c\mathbf{1}_{p_{i_k} = 0}\right)\right]\,,
    \end{align}
    where we use the convention that for $j = 0$, the $\sum\prod$ to equal $1$.
    
    Then, note that for every fixed $2 \leq i_1 < \ldots < i_j$, one may use the Fourier expansion \ref{eq:random_walk_dft_tilde} of the probability distribution for one step of the random walk to obtain:
    \begin{align}
        & \mathbf{E}^{p_1 = 0}\left[\prod_{1 \leq k \leq j}\mathbf{1}_{p_{i_k} = 0}\right]\nonumber\\
        & = \prod_{1 \leq k \leq j}\left(\frac{1}{d}\sum_{-\frac{d - 1}{2} \leq s_k \leq \frac{d - 1}{2}}\widetilde{\theta}_3\left(\frac{s_k}{d}, \frac{2i\sigma_m^2}{d}\right)^{i_k - i_{k - 1}}\right)\\
        & = \frac{1}{d^j}\sum_{-\frac{d - 1}{2} \leq s_1, \ldots, s_j \leq \frac{d - 1}{2}}\prod_{1 \leq k \leq j}\widetilde{\theta}_3\left(\frac{s_k}{d}, \frac{2i\sigma_m^2}{d}\right)^{i_k - i_{k - 1}}\,,
    \end{align}
    where we have set $i_0 := 0$ for convenience. Therefore:
    \begin{align}
        & \mathbf{E}^{p_1 = 0}\left[\sum_{\substack{i_1, \ldots, i_j\\2 \leq i_1 < \ldots < i_j \leq I}}\prod_{1 \leq k \leq j}\left(-c\mathbf{1}_{p_k = 0}\right)\right]\nonumber\\
        & = \left(-\frac{c}{d}\right)^j\sum_{\substack{-\frac{d - 1}{2} \leq s_1, \ldots, s_j \leq \frac{d - 1}{2}\\2 \leq i_1 < \ldots < i_j \leq I}}\prod_{1 \leq k \leq j}\widetilde{\theta}_3\left(\frac{s_k}{d}, \frac{2i\sigma_m^2}{d}\right)^{i_k - i_{k - 1}}\\
        & = \left(-\frac{c}{d}\right)^j\sum_{\substack{-\frac{d - 1}{2} \leq s_1, \ldots, s_j \leq \frac{d - 1}{2}\\n_1, \ldots, n_j \geq 1\\n_1 + \ldots + n_j \leq I}}\prod_{1 \leq k \leq j}\widetilde{\theta}_3\left(\frac{s_k}{d}, \frac{2i\sigma_m^2}{d}\right)^{n_k}\,,
    \end{align}
    but for fixed $s_1, \ldots, s_j$, the sum over $n_1, \ldots, n_j$ shown above is generated by the sum of the coefficients up to degree $I$ of the generating series (in $z$):
    \begin{align}
        \sum_{n_1, \ldots, n_j \geq 1}z^{n_1 + \ldots + n_j}\prod_{1 \leq k \leq j}\widetilde{\theta}_3\left(\frac{s_k}{d}, \frac{2i\sigma_m^2}{d}\right)^{n_k} & = \prod_{1 \leq k \leq j}\frac{z\widetilde{\theta}_3\left(\frac{s_k}{d}, \frac{2i\sigma_m^2}{d}\right)}{1 - z\widetilde{\theta}_3\left(\frac{s_k}{d}, \frac{2i\sigma_m^2}{d}\right)}\,.
    \end{align}
    Summing over all $s_1, \ldots, s_j$, $\mathbf{E}^{p_1 = 0}\left[\sum_{\substack{i_1, \ldots, i_j\\2 \leq i_1 < \ldots < i_j \leq I}}\prod_{1 \leq k \leq j}\left(-c\mathbf{1}_{p_k = 0}\right)\right]$ is therefore generated by the sum of the coefficients up to $I$ in the generating series:
    \begin{align}
        & \left(-\frac{c}{d}\sum_{-\frac{d - 1}{2} \leq s \leq \frac{d - 1}{2}}\frac{z\widetilde{\theta}_3\left(\frac{s}{d}, \frac{2i\sigma_m^2}{d}\right)}{1 - z\widetilde{\theta}_3\left(\frac{s}{d}, \frac{2i\sigma_m^2}{d}\right)}\right)^j\,.
    \end{align}
    Note that this is consistent with our convention that for $j = 0$, the sum over $i_1, \ldots, i_j$ of the products is taken equal to $1$. Finally, summing over all $j \geq 0$, one finds that $\mathbf{E}^{p_1 = 0}\left[\sum_{j \geq 0}\sum_{\substack{i_1, \ldots, i_j\\2 \leq i_1 < \ldots < i_j \leq I}}\prod_{1 \leq k \leq j}\left(-c\mathbf{1}_{p_{i_k} = 0}\right)\right]$ is generated by the sum of the coefficients up to degree $I$ of the series:
    \begin{align}
        \frac{1}{1 + \frac{c}{d}\sum_{-\frac{d - 1}{2} \leq s \leq \frac{d - 1}{2}}\frac{z\widetilde{\theta}_3\left(\frac{s}{d}, \frac{2i\sigma_m^2}{d}\right)}{1 - z\widetilde{\theta}_3\left(\frac{s}{d}, \frac{2i\sigma_m^2}{d}\right)}}  & = \frac{1 - z}{1 - z\left(1 - \frac{c}{d}\sum_{-\frac{d - 1}{2} \leq s \leq \frac{d - 1}{2}}\frac{1 - z}{1 - z\widetilde{\theta}_3\left(\frac{s}{d}, \frac{2i\sigma_m^2}{d}\right)}\widetilde{\theta}_3\left(\frac{s}{d}, \frac{2i\sigma_m^2}{d}\right)\right)}\,,
    \end{align}
    or in other words by the coefficient of degree $I$ of the series
    \begin{align}
        & \frac{1}{1 - z\left(1 - \frac{c}{d}\sum_{-\frac{d - 1}{2} \leq s \leq \frac{d - 1}{2}}\frac{1 - z}{1 - z\widetilde{\theta}_3\left(\frac{s}{d}, \frac{2i\sigma_m^2}{d}\right)}\widetilde{\theta}_3\left(\frac{s}{d}, \frac{2i\sigma_m^2}{d}\right)\right)}\,.
    \end{align}
    Note that the series would simplify nicely and the said coefficient would be trivial to determine if one had either $\widetilde{\theta}_3\left(\frac{s}{d}, \frac{2i\sigma_m^2}{d}\right) = \delta_{s0}$ for all $s$ (corresponding formally to $\sigma_m^2 \downarrow 0$) or $\widetilde{\theta}_3\left(\frac{s}{d}, \frac{2i\sigma_m^2}{d}\right) = 1$ for all $s$ (corresponding formally to $\sigma_m^2 \uparrow +\infty$). In the general case, one will not be able to give a simple formula for the coefficient; yet, the hypotheses of the lemma will suffice to produce a lower bound. In fact,
    \begin{align}
        & \frac{1}{1 - z\left(1 - \frac{c}{d}\sum_{-\frac{d - 1}{2} \leq s \leq \frac{d - 1}{2}}\frac{1 - z}{1 - z\widetilde{\theta}_3\left(\frac{s}{d}, \frac{2i\sigma_m^2}{d}\right)}\widetilde{\theta}_3\left(\frac{s}{d}, \frac{2i\sigma_m^2}{d}\right)\right)}\nonumber\\
        & = \sum_{l \geq 0}z^l\left(1 - \frac{c}{d}\sum_{-\frac{d - 1}{2} \leq s \leq \frac{d - 1}{2}}\frac{1 - z}{1 - z\widetilde{\theta}_3\left(\frac{s}{d}, \frac{2i\sigma_m^2}{d}\right)}\widetilde{\theta}_3\left(\frac{s}{d}, \frac{2i\sigma_m^2}{d}\right)\right)^l\\
        & = \sum_{l \geq 0}z^l\left\{1 - \frac{c}{d}\sum_{-\frac{d - 1}{2} \leq s \leq \frac{d - 1}{2}}\left[\widetilde{\theta}_3\left(\frac{s}{d}, \frac{2i\sigma_m^2}{d}\right) + \sum_{r \geq 1}z^r\widetilde{\theta}_3\left(\frac{s}{d}, \frac{2i\sigma_m^2}{d}\right)^{r - 1}\left(1 - \widetilde{\theta}_3\left(\frac{s}{d}, \frac{2i\sigma_m^2}{d}\right)\right)\right]\right\}^l\\
        & = \sum_{l \geq 0}z^l\left(1 - \frac{c}{d}\sum_{-\frac{d - 1}{2} \leq s \leq \frac{d - 1}{2}}\widetilde{\theta}_3\left(\frac{s}{d}, \frac{2i\sigma_m^2}{d}\right)\right)^l\nonumber\\
        & \hspace{0.15\textwidth} \times \left\{1 + \frac{\sum_{r \geq 1}z^r\sum_{-\frac{d - 1}{2} \leq s \leq \frac{d - 1}{2}}\widetilde{\theta}_3\left(\frac{s}{d}, \frac{2i\sigma_m^2}{d}\right)^{r - 1}\left(1 - \widetilde{\theta}_3\left(\frac{s}{d}, \frac{2i\sigma_m^2}{d}\right)\right)}{1 - \frac{c}{d}\sum_{-\frac{d - 1}{2} \leq s \leq \frac{d - 1}{2}}\widetilde{\theta}_3\left(\frac{s}{d}, \frac{2i\sigma_m^2}{d}\right)}\right\}^l\,.
    \end{align}
    Therefore, if one could show $1 - \frac{c}{d}\sum_{-\frac{d - 1}{2} \leq s \leq \frac{d - 1}{2}}\widetilde{\theta}_3\left(\frac{s}{d}, \frac{2i\sigma_m^2}{d}\right) > 0$, an immediate lower bound for the coefficient of $z^I$ would be:
    \begin{align}
        & \left(1 - \frac{c}{d}\sum_{-\frac{d - 1}{2} \leq s \leq \frac{d - 1}{2}}\widetilde{\theta}_3\left(\frac{s}{d}, \frac{2i\sigma_m^2}{d}\right)\right)^I\left\{1 + \frac{\sum_{r \geq 1}z^r\sum_{-\frac{d - 1}{2} \leq s \leq \frac{d - 1}{2}}\widetilde{\theta}_3\left(\frac{s}{d}, \frac{2i\sigma_m^2}{d}\right)^{r - 1}\left(1 - \widetilde{\theta}_3\left(\frac{s}{d}, \frac{2i\sigma_m^2}{d}\right)\right)}{1 - \frac{c}{d}\sum_{-\frac{d - 1}{2} \leq s \leq \frac{d - 1}{2}}\widetilde{\theta}_3\left(\frac{s}{d}, \frac{2i\sigma_m^2}{d}\right)}\right\}^I\nonumber\\
        & \hspace{0.05\textwidth} \geq \left(1 - \frac{c}{d}\sum_{-\frac{d - 1}{2} \leq s \leq \frac{d - 1}{2}}\widetilde{\theta}_3\left(\frac{s}{d}, \frac{2i\sigma_m^2}{d}\right)\right)^I\,.
    \end{align}
    But
    \begin{align}
        & 1 - \frac{c}{d}\sum_{-\frac{d - 1}{2} \leq s \leq \frac{d - 1}{2}}\widetilde{\theta}_3\left(\frac{s}{d}, \frac{2i\sigma_m^2}{d}\right)\nonumber\\
        & = 1 - \frac{c}{d}\sqrt{\frac{d}{2\sigma_m^2}}\frac{\theta_3\left(0, \frac{i}{2\sigma_m^2d}\right)}{\theta_3\left(0, \frac{2i\sigma_m^2}{d}\right)}\\
        & = 1 - c\sqrt{\frac{c_2}{2}}d^{-1/2 - \beta/2}\frac{\theta_3\left(0, \frac{id^{-\beta - 1}}{2c_2}\right)}{\theta_3\left(0, 2ic_2d^{\beta - 1}\right)}\,.
    \end{align}
    For $\beta < -1$, as $d^{-\beta - 1} \to \infty$, one writes (using equation \ref{eq:approx_theta_im_arg}):
    \begin{align}
        & 1 - c\sqrt{\frac{c_2}{2}}d^{-1/2 - \beta/2}\frac{\theta_3\left(0, \frac{id^{-\beta - 1}}{2c_2}\right)}{\theta_3\left(0, 2ic_2d^{\beta - 1}\right)}\nonumber\\
        & = 1 - cc_2d^{-1}\frac{\theta_3\left(0, \frac{id^{-\beta - 1}}{2c_2}\right)}{\theta_3\left(0, \frac{id^{1 - \beta}}{2c_2}\right)}\\
        & \geq 1 - \frac{cc_2}{d}\frac{1 + \frac{2e^{-\frac{\pi}{2c_2}d^{-\beta - 1}}}{1 - e^{-\frac{\pi}{c_2}d^{-\beta - 1}}}}{1 - \frac{2e^{-\frac{\pi}{2c_2}d^{1 - \beta}}}{1 - e^{-\frac{\pi}{c_2}d^{1 - \beta}}}}\,.
    \end{align}
    For $-1 < \beta < -\frac{1}{2}$, one may estimate it as:
    \begin{align}
        & 1 - c\sqrt{\frac{c_2}{2}}d^{-1/2 - \beta/2}\frac{\theta_3\left(0, \frac{id^{-\beta - 1}}{2c_2}\right)}{\theta_3\left(0, 2ic_2d^{\beta - 1}\right)}\nonumber\\
        & = 1 - \sqrt{2}cc_2^{3/2}d^{-1/2 + \beta/2}\frac{\theta_3\left(0, 2ic_2d^{\beta + 1}\right)}{\theta_3\left(0, \frac{id^{-\beta + 1}}{2c_2}\right)}\\
        & \geq 1 - \sqrt{2}cc_2^{3/2}d^{-1/2 + \beta/2}\frac{1 + \frac{2e^{-2\pi c_2d^{\beta + 1}}}{1 - e^{-4\pi c_2d^{\beta + 1}}}}{1 - \frac{2e^{-\frac{\pi}{2c_2}d^{-\beta + 1}}}{1 - e^{-\frac{\pi}{c_2}d^{-\beta + 1}}}} \,.
    \end{align}
    Therefore, the positivity condition $1 - \frac{c}{d}\sum_{-\frac{d - 1}{2} \leq s \leq \frac{d - 1}{2}}\widetilde{\theta}_3\left(\frac{s}{d}, \frac{2i\sigma_m^2}{d}\right) > 0$ is established in any case and the lemma is proved.
\end{proof}
\end{lemma}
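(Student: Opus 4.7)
My plan is to view the expectation as a weighted sum over visits of the walker to the origin, encode it via a generating function in a formal variable $z$, and extract a lower bound on the relevant coefficient from a coefficient-wise comparison.

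First I would expand the product as
\begin{align*}
\prod_{k = 2}^I (1 - c\mathbf{1}_{p_k = 0}) & = \sum_{j \geq 0} (-c)^j\!\!\!\sum_{2 \leq i_1 < \cdots < i_j \leq I}\,\prod_{k = 1}^j \mathbf{1}_{p_{i_k} = 0}\,.
\end{align*}
By the Markov property and translation invariance, the expectation of each indicator product factorises as $\prod_{k=1}^j A_{i_k - i_{k-1}}$ (with $i_0 := 1$), where $A_n$ is the $n$-step return probability to $0$; by the Fourier diagonalisation \eqref{eq:random_walk_dft_tilde} of the transition kernel, $A_n = \tfrac{1}{d}\sum_s \widetilde{\theta}_3(s/d, 2i\sigma_m^2/d)^n$. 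Setting $H(z) := \sum_{n \geq 1} z^n A_n = \tfrac{1}{d}\sum_s \tfrac{z\,\widetilde{\theta}_3(s/d,\cdot)}{1 - z\,\widetilde{\theta}_3(s/d,\cdot)}$, a short combinatorial bookkeeping identifies $a_I := \mathbf{E}^{p_1 = 0}\!\left[\prod_{k=2}^I(1 - c\mathbf{1}_{p_k = 0})\right]$ as the coefficient of $z^I$ in $\tfrac{z}{(1 - z)(1 + cH(z))}$.

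The key algebraic manoeuvre is to clear fractions so that
\begin{align*}
\frac{z}{(1 - z)(1 + cH(z))} & = \frac{z}{1 - z\bigl(\mu + R(z)\bigr)}\,,
\end{align*}
with
\begin{align*}
\mu & := 1 - \frac{c}{d}\sum_s \widetilde{\theta}_3\!\left(\tfrac{s}{d}, \tfrac{2i\sigma_m^2}{d}\right), \quad R(z) := \frac{c}{d}\sum_s\sum_{r \geq 1} z^r\,\widetilde{\theta}_3(s/d,\cdot)^r\bigl(1 - \widetilde{\theta}_3(s/d,\cdot)\bigr)\,.
\end{align*}
Since $\widetilde{\theta}_3(s/d, 2i\sigma_m^2/d) \in [0, 1]$ (these are eigenvalues of a real symmetric doubly stochastic transition), $R(z)$ has nonnegative Taylor coefficients. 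Assuming $\mu > 0$, the geometric expansion $\tfrac{1}{1 - z(\mu + R(z))} = \sum_m z^m(\mu + R(z))^m$ has entirely nonnegative coefficients in $z$, and comparing it coefficient-wise to $\tfrac{1}{1 - z\mu}$ yields $a_I = [z^{I - 1}]\tfrac{1}{1 - z(\mu + R(z))} \geq \mu^{I - 1} \geq \mu^I$ (the final step using $0 \leq \mu \leq 1$).

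It remains to package $\mu$. Poisson summation gives $\sum_s \theta_3(s/d, 2i\sigma_m^2/d) = d\,\theta_3(0, 2i\sigma_m^2 d)$, hence $\mu = 1 - \tfrac{c\,\theta_3(0, 2i\sigma_m^2 d)}{\theta_3(0, 2i\sigma_m^2/d)}$. Substituting $\sigma_m^2 = c_2 d^{\beta}$ and selectively applying the modular identity $\theta_3(0, i\tau) = \tau^{-1/2}\theta_3(0, i/\tau)$ to whichever argument is small (the denominator for $\beta < -1$, the numerator for $-1 < \beta < -\tfrac{1}{2}$) reduces both theta factors to $1 + O\bigl(e^{-\Theta(d^{|\beta+1|})}\bigr)$ and $1 + O\bigl(e^{-\Theta(d^{1-\beta})}\bigr)$; bounding these tails by the elementary geometric estimate $\theta_3(0, i\tau) \leq 1 + \tfrac{2e^{-\pi\tau}}{1 - e^{-2\pi\tau}}$ (and the matching lower bound $\theta_3(0, i\tau) \geq 1 - \tfrac{2e^{-\pi\tau/2}}{1 - e^{-\pi\tau}}$ on the denominator) produces the two explicit forms in the statement. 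The main obstacle is the generating-function rearrangement delivering $R(z) \geq 0$ coefficient-wise; once this positivity is in hand, everything else is standard bookkeeping and theta-function asymptotics.
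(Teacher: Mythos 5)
Your proposal is correct and takes essentially the same route as the paper's proof: expand the product over visits to the origin, diagonalize the walk via the Fourier expansion, recast the expectation as a coefficient of the series $\frac{1}{1 - z(\mu + R(z))}$ with $R$ coefficient-wise nonnegative so that the coefficient is bounded below by a power of $\mu$, and then evaluate $\mu$ by Poisson summation/modular transformation plus theta tail bounds in the two regimes of $\beta$. The only differences are cosmetic (your index convention gives $\mu^{I-1} \geq \mu^{I}$, and your tail constants differ slightly from the stated ones); note only that the needed nonnegativity of the eigenvalues $\widetilde{\theta}_3\left(\tfrac{s}{d}, \tfrac{2i\sigma_m^2}{d}\right)$ follows from the strict positivity of $\theta_3$ on the real line (it is a periodized Gaussian), since symmetry and double stochasticity alone would only give eigenvalues in $[-1, 1]$.
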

This lemma being established, one is now ready to prove the main proposition concerning the behavior the probabilistic expectation when $\beta < -\frac{1}{2}$.
\begin{proposition}
    Let $t > 0$ and $I := \lceil t\sqrt{d} \rceil$. Assuming the usual scalings $\xi^2 = c_1d^{\alpha}$, $\sigma_m^2 = c_2d^{\beta}$ ($c_1, c_2 > 0$, $-1 < \alpha < 1$) with $\beta < -\frac{1}{2}$ for $\xi^2$ and $\sigma_m^2$, there exists some constant $c_5$ (depending on $c_2$ and $c_3$) such that for all $\varepsilon > 0$, the following holds as $d \to \infty$:
    \begin{align}
        & \mathbf{E}^{p_1}\left[\prod_{1 \leq k \leq I}\left(1 - c_3\mathbf{1}_{p_k = \frac{d - 1}{2}}\right)\right]\nonumber\\
        & \leq 1 - (1 - \varepsilon)\frac{t}{\sqrt{d}}\left(1 + \mathcal{O}\left(d^{3/2}e^{-\frac{\pi\varepsilon t}{2c_2}d^{-\frac{1}{2} - \beta}}\right) + \mathcal{O}\left(e^{-c_5d^{||\beta| - 1|}}\right)\right)\,.
    \end{align}
\begin{proof}
    Fix $I := \left\lceil t\sqrt{d} \right\rceil$ and any integer $p_1 \in \left[-\frac{d - 1}{2}, \frac{d - 1}{2}\right]$. One wants to consider:
    \begin{align}
        \mathbf{E}^{p_1}\left[\prod_{1 \leq k \leq I}\left(1 - c_3\mathbf{1}_{p_k = \frac{d - 1}{2}}\right)\right]\,.
    \end{align}
    Expand the expectand as:
    \begin{align}
        \prod_{1 \leq k \leq I}\left(1 - c_3\mathbf{1}_{p_k = \frac{d - 1}{2}}\right) & = 1 - c_3\sum_{1 \leq k \leq I}\mathbf{1}_{p_k = \frac{d - 1}{2}}\prod_{k < j \leq I}\left(1 - c_3\mathbf{1}_{p_j = \frac{d - 1}{2}}\right)\,.
    \end{align}
    Now, consider the expectation of one term of the sum:
    \begin{align}
        & \mathbf{E}^{p_1}\left[\mathbf{1}_{p_k = \frac{d - 1}{2}}\prod_{k < j \leq I}\left(1 - c_3\mathbf{1}_{p_j = \frac{d - 1}{2}}\right)\right]\nonumber\\
        & = \mathbf{P}^{p_1}\left[p_k = \frac{d - 1}{2}\right]\mathbf{E}^{\frac{d - 1}{2}}\left[\prod_{2 \leq j \leq I - k + 1}\left(1 - c_3\mathbf{1}_{p_j = \frac{d - 1}{2}}\right)\right]\\
        & = \mathbf{P}^{p_1}\left[p_k = \frac{d - 1}{2}\right]\mathbf{E}^{0}\left[\prod_{2 \leq j \leq I - k + 1}\left(1 - c_3\mathbf{1}_{p_j = 0}\right)\right]\,.
    \end{align}
    where we used the strong Markov property and the spatial homogeneity of the process.
    
    Now, fix $\varepsilon \in (0, 1)$ and consider $k$ such that $k \geq \varepsilon t\sqrt{d}$ (such a $k$ exists for large enough $d$, e.g. $d \geq \frac{1}{(1 - \varepsilon)^2t^2}$). Then for such a $k$, lemma \ref{lemma:lower_bound_denominator_initial_distribution} implies that:
    \begin{align}
        \mathbf{P}^{p_1}\left[p_k = \frac{d - 1}{2}\right] & \geq \frac{1}{d} - \frac{\sqrt{d}}{2}e^{-\frac{\pi}{2c_2}d^{-\frac{1}{2} - \beta}\varepsilon t + \frac{\pi^3c_2}{12}d^{\beta - \frac{5}{2}}\varepsilon t}\,.
    \end{align}
    Taking into account $t \leq \sqrt{d}$ and $d \geq 2$, this can be weakened to:
    \begin{align}
        \mathbf{P}^{p_1}\left[p_k = \frac{d - 1}{2}\right] & \geq \frac{1}{d} - \frac{\sqrt{d}}{2}e^{-\frac{\pi}{2c_2}d^{-\frac{1}{2} - \beta}\varepsilon t + \frac{\pi^3c_2}{12}d^{-\frac{1}{2} - \frac{5}{2}}\varepsilon\sqrt{d}}\\
        & \geq \frac{1}{d}\left(1 - \frac{d^{3/2}}{2}e^{-\frac{\pi\varepsilon t   }{2c_2}d^{-\frac{1}{2} - \beta} + \frac{c_2}{2}\varepsilon}\right)\,.
    \end{align}
    Next, lemma \ref{lemma:bound_recurrence_random_walk} asserts the existence of constants $c_4, c_5 > 0$ and $\gamma \in \left[-1, -\frac{3}{4}\right)$ such that:
    \begin{align}
        c_4 & = \sqrt{2}c_3c_2^{3/2}\,,\\
        \mathbf{E}^0\left[\prod_{2 \leq j \leq I - k + 1}\left(1 - c_3\mathbf{1}_{p_j = 0}\right)\right] & \geq \left(1 - c_4d^{\gamma}\right)^{I - k}\left(1 + \mathcal{O}\left(e^{-c_5d^{||\beta| - 1|}}\right)\right)\,,
    \end{align}
    which this time holds for any $k$. In particular, this implies that the expectations that we find in our original sum are all positive. More precisely, for $-1 < \beta < -\frac{1}{2}$:
    \begin{align}
        c_4 & = \sqrt{2}c_3c_2^{3/2}\,,\\
        c_5 & = 2\pi c_2\,,\\
        \gamma & = -\frac{1}{2} + \frac{\beta}{2}\,,
    \end{align}
    while for $\beta < -1$:
    \begin{align}
        c_4 & = c_3c_2\,,\\
        c_5 & = \frac{\pi}{2c_2}\,,\\
        \gamma & = -1\,.
    \end{align}
    This allows one to conclude:
    \begin{align}
        & \mathbf{E}^{p_1}\left[\prod_{1 \leq k \leq I}\left(1 - c_3\mathbf{1}_{p_k = \frac{d - 1}{2}}\right)\right]\nonumber\\
        & = \mathbf{E}^{p_1}\left[1 - c_3\sum_{1 \leq k \leq I}\mathbf{1}_{p_k = \frac{d - 1}{2}}\prod_{k < j \leq I}\left(1 - c_3\mathbf{1}_{p_j = \frac{d - 1}{2}}\right)\right]\\
        & = 1 - c_3\sum_{1 \leq k \leq I}\mathbf{P}^{p_1}\left[p_k = \frac{d - 1}{2}\right]\mathbf{E}^0\left[\prod_{2 \leq j \leq I - k + 1}\left(1 - c_3\mathbf{1}_{p_j = 0}\right)\right]\\
        & \leq 1 - c_3\sum_{1 \leq k \leq I}\mathbf{P}^{p_1}\left[p_k = \frac{d - 1}{2}\right]\mathbf{E}^0\left[\prod_{\varepsilon t\sqrt{d} \leq j \leq I - k + 1}\left(1 - c_3\mathbf{1}_{p_j = 0}\right)\right]\\
        & \leq 1 - \frac{c_3}{d}\left(1 + \mathcal{O}\left(d^{3/2}e^{-\frac{\pi\varepsilon t}{2c_2}d^{-\frac{1}{2} - \beta}}\right) + \mathcal{O}\left(e^{-c_5d^{||\beta| - 1|}}\right)\right)\sum_{\lceil\varepsilon t\sqrt{d}\rceil \leq k \leq I}\left(1 - c_4d^{\gamma}\right)^{I - k}\,.
    \end{align}
    It remains to lower-bound the sum, which can be done as follows:
    \begin{align}
        \sum_{\varepsilon t\sqrt{d} \leq k \leq I}\left(1 - c_4d^{\gamma}\right)^{I - k} & = \left(1 - c_4d^{\gamma}\right)^{I - \lceil\varepsilon t\sqrt{d}\rceil}\frac{1 - (1 - c_4d^{\gamma})^{I - \lceil\varepsilon t\sqrt{d}\rceil + 1}}{c_4d^{\gamma}}\,.
    \end{align}
    Now, from equation \ref{eq:log_1_minus_x_bounds} (for large enough $d$),
    \begin{align}
        \left(1 - c_4d^{\gamma}\right)^{I - \lceil\varepsilon t\sqrt{d}\rceil + 1} & = \exp\left((I - \lceil\varepsilon t\sqrt{d}\rceil + 1)\log\left(1 - c_4d^{\gamma}\right)\right)\\
        & \geq \exp\left((I - \lceil\varepsilon t\sqrt{d}\rceil + 1)\left(-c_4d^{\gamma} - c_4^2d^{2\gamma}\right)\right)\\
        & \geq 1 + (I - \lceil\varepsilon t\sqrt{d}\rceil + 1)\left(-c_4d^{\gamma} - c_4^2d^{2\gamma}\right)\\
        & \geq 1 - (1 - \varepsilon)t\sqrt{d}c_4d^{\gamma}\left(1 + c_4d^{\gamma}\right)\,.
    \end{align}
    Hence:
    \begin{align}
        & \mathbf{E}^{p_1}\left[\prod_{1 \leq k \leq I}\left(1 - c_3\mathbf{1}_{p_k = \frac{d - 1}{2}}\right)\right]\nonumber\\
        & \leq 1 - (1 - \varepsilon)c_3\frac{t}{\sqrt{d}}\left(1 + \mathcal{O}\left(d^{3/2}e^{-\frac{\pi\varepsilon t}{2c_2}d^{-\frac{1}{2} - \beta}}\right) + \mathcal{O}\left(e^{-c_5d^{||\beta| - 1|}}\right)\right)\,.
    \end{align}
\end{proof}
\end{proposition}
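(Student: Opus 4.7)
The plan is to expand the product telescopically as
\begin{align*}
\prod_{k=1}^I\bigl(1-c_3\mathbf{1}_{p_k=\frac{d-1}{2}}\bigr) \;=\; 1 - c_3\sum_{k=1}^I \mathbf{1}_{p_k=\frac{d-1}{2}}\prod_{j=k+1}^I\bigl(1-c_3\mathbf{1}_{p_j=\frac{d-1}{2}}\bigr),
\end{align*}
take $\mathbf{E}^{p_1}$, and invoke the strong Markov property together with the spatial homogeneity of the walk on $\mathbb{Z}_d$ to factor each term of the resulting sum as $\mathbf{P}^{p_1}[p_k=\tfrac{d-1}{2}]\cdot \mathbf{E}^{0}\!\left[\prod_{j=2}^{I-k+1}(1-c_3\mathbf{1}_{p_j=0})\right]$. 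Since we seek an \emph{upper} bound on the whole expectation, we will need \emph{lower} bounds on both factors; the preceding two lemmas supply exactly these.

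For the marginal probability, I would apply Lemma \ref{lemma:random_walk_small_sigma_m_marginal_distributions} (with the time parameter replaced by $\varepsilon t$) to control the total-variation distance to the uniform distribution on $\mathbb{Z}_d$, yielding for every $k\ge\lceil \varepsilon t\sqrt{d}\rceil$
\begin{align*}
\mathbf{P}^{p_1}\!\left[p_k=\tfrac{d-1}{2}\right]\;\ge\;\tfrac{1}{d}\left(1-\mathcal{O}\!\left(d^{3/2}\,e^{-\frac{\pi\varepsilon t}{2c_2}\,d^{-\frac{1}{2}-\beta}}\right)\right),
\end{align*}
after using $I\le t\sqrt{d}+1$ to absorb the subdominant second exponent in the mixing estimate. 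For the restarted tail expectation, Lemma \ref{lemma:bound_recurrence_random_walk} (in whichever of its two regimes applies, depending on whether $\beta<-1$ or $-1<\beta<-\tfrac{1}{2}$) delivers a lower bound of the form $(1-c_4 d^\gamma)^{I-k}\bigl(1+\mathcal{O}(e^{-c_5 d^{||\beta|-1|}})\bigr)$, where $c_4,c_5,\gamma$ depend only on $c_2,c_3,\beta$ and $\gamma<0$. A decisive consequence is \emph{positivity} of each summand, so that discarding the terms with $k<\lceil \varepsilon t\sqrt{d}\rceil$ only weakens the lower bound on the sum and hence preserves the intended upper bound on $\mathbf{E}^{p_1}[\cdots]$.

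Combining the two estimates, the remaining sum is essentially geometric:
\begin{align*}
\sum_{k=\lceil \varepsilon t\sqrt{d}\rceil}^{I}(1-c_4 d^\gamma)^{I-k}\;=\;\frac{1-(1-c_4 d^\gamma)^{N+1}}{c_4 d^\gamma},\qquad N:=I-\lceil\varepsilon t\sqrt{d}\rceil\sim (1-\varepsilon)t\sqrt{d}.
\end{align*}
Using $\log(1-x)\ge -x-x^2$ for small $x$ to expand $(1-c_4 d^\gamma)^{N+1}\ge 1-(N+1)c_4 d^\gamma(1+c_4 d^\gamma)$, the sum is bounded below by a quantity of order $(1-\varepsilon)t\sqrt{d}$, and multiplying by the $1/d$ marginal prefactor and the overall $c_3$ recovers the announced leading term $(1-\varepsilon)\tfrac{t}{\sqrt{d}}$. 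The two stated error families then correspond respectively to the mixing lemma (the $d^{3/2}e^{-\frac{\pi\varepsilon t}{2c_2}d^{-1/2-\beta}}$ factor) and to the recurrence lemma (the $e^{-c_5 d^{||\beta|-1|}}$ factor).

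The hard part will be less the algebra than checking that the truncation at $k\ge\lceil \varepsilon t\sqrt{d}\rceil$ is genuinely sign-preserving, i.e.\ that the restarted tail expectations remain nonnegative uniformly in $k$ and $d$; this rests on Lemma \ref{lemma:bound_recurrence_random_walk} producing a lower bound of the form $(1-c_4 d^\gamma)^{I-k}$ with $c_4 d^\gamma\to 0$, which is the nontrivial content of that lemma. A second delicate point is to verify that the subdominant exponent $\frac{\pi^3 c_2}{12}d^{\beta-5/2}\varepsilon t$ appearing in the mixing bound is genuinely negligible against $\frac{\pi}{2c_2}d^{-1/2-\beta}\varepsilon t$ throughout the regime $\beta<-\tfrac{1}{2}$ (it is, after bounding $I\le t\sqrt{d}+1$), and to track how the two different regimes of Lemma \ref{lemma:bound_recurrence_random_walk} merge into the single error $e^{-c_5 d^{||\beta|-1|}}$ stated in the proposition.
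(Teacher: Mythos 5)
Your proposal follows essentially the same route as the paper's proof: the telescopic expansion, the strong-Markov/homogeneity factorization into $\mathbf{P}^{p_1}[p_k=\tfrac{d-1}{2}]$ times a restarted expectation, the mixing estimate for $k\geq\lceil\varepsilon t\sqrt{d}\rceil$, the positivity-based truncation via the recurrence lemma, and the geometric-sum evaluation with $\log(1-x)\geq -x-x^2$. The argument is correct, and your citation of the marginal-distribution mixing lemma is in fact the right source for the probability bound.
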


This yields the following corollary:
\begin{corollary}
\label{cor:quasi_ideal_clock_variance_bad_scaling}
    Let $t > 0$ and $I := \lceil t\sqrt{d} \rceil$. Assume the usual scalings $\xi^2 = c_1d^{\alpha}$, $\sigma_m^2 = c_2d^{\beta}$ ($c_1, c_2 > 0$, $-1 < \alpha < 1$) with $\beta < -\frac{1}{2}$ for $\xi^2$ and $\sigma_m^2$. Then as $d \to \infty$,
    \begin{align}
        \left|\left\langle\exp\left(\frac{2\pi i\widetilde{\xi}_I}{\sqrt{d}}\right)\right\rangle\right| & \leq 1 - \frac{t}{\sqrt{d}}(1 + o(1))\,.
    \end{align}
\end{corollary}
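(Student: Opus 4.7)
The plan is to combine the factorization of the pseudo-variance with the estimates already assembled. Starting from \eqref{eq:quasi_ideal_clock_pseudo_variance} in the case $\delta_j = 1/2$, where $c_3 = 1 - e^{-i\pi} = 2$, the pseudo-variance splits as
\begin{align*}
    \left\langle e^{-\frac{2\pi i\widetilde\xi_I}{\sqrt d}}\right\rangle & = e^{i\phi}\,C_1\,C_2\,C_3,
\end{align*}
where $\phi\in\mathbf{R}$ is a deterministic phase, $C_1>0$ is the $\xi$-dependent ratio of $\theta_3$ values from the initial-state normalization, $C_2>0$ is the $\sigma_m$-dependent ratio from the Kraus factor, and $C_3$ is the probabilistic expectation over the random walk. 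Since $C_1,C_2$ are positive reals, $|\langle\cdot\rangle|=C_1C_2|C_3|$, so it suffices to bound each factor in turn.

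First I would apply Proposition \ref{prop:variance_suppression_wavefunction} (with $r=-1$) and Proposition \ref{prop:variance_suppression_kraus_operators} (with $m=-1$, $n=0$) to conclude that $C_1 = 1 - O(d^{\alpha-1})$ and $C_2 = 1 - O(d^{\beta-1})$, both $\leq 1$ asymptotically. Next, the proposition established immediately above the corollary gives the upper bound
\begin{align*}
    C_3 & \leq 1 - 2(1-\varepsilon)\frac{t}{\sqrt d}\bigl(1+o(1)\bigr)
\end{align*}
for any $\varepsilon\in(0,1)$. To pass to $|C_3|$ I would then produce a matching lower bound. Writing $N := \#\{1\leq j\leq I : p_j = (d-1)/2\}$, the identity $(-1)^N \geq 1 - 2N$ gives $C_3 \geq 1 - 2\,\mathbf E[N] = 1 - 2\sum_{j=1}^I \mathbf P(p_j = (d-1)/2)$. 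I would split the sum at $j_0 := \lceil \varepsilon t\sqrt d\rceil$: for $j\geq j_0$, Lemma \ref{lemma:random_walk_small_sigma_m_marginal_distributions} (which requires precisely $\beta<-\tfrac12$) gives that the marginal of $p_j$ is uniform on $\mathbf Z_d$ up to an exponentially small total-variation error, so $\mathbf P(p_j=(d-1)/2) = d^{-1}(1+o(1))$; for $j<j_0$, a crude application of the same theta-function tail bound used in Lemma \ref{lemma:probability_antipode_small}, together with the concentration of the initial distribution near $n_0$ (far from the boundary), suffices to control each probability by $O(d^{-1})$. Summing yields $\mathbf E[N]\leq (1+\varepsilon)\,t/\sqrt d\,(1+o(1))$, hence $C_3\geq 1 - 2(1+\varepsilon)t/\sqrt d\,(1+o(1)) > 0$ for all sufficiently large $d$. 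Consequently $|C_3| = C_3$, and combining with $C_1C_2\leq 1+o(t/\sqrt d)$ gives
\begin{align*}
    \left|\left\langle e^{\frac{2\pi i\widetilde\xi_I}{\sqrt d}}\right\rangle\right| & \leq 1 - 2(1-\varepsilon)\frac{t}{\sqrt d}(1+o(1)),
\end{align*}
which, taking $\varepsilon$ small, implies the stated bound.

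The main obstacle will be the last step of the lower-bound argument on $C_3$: Lemma \ref{lemma:probability_antipode_small} is stated only for $-\tfrac12<\beta<0$, whereas the corollary demands $\beta<-\tfrac12$, so one has to extract from its proof the relevant theta-function tail estimate for $\mathbf P^{p_1}(p_j=(d-1)/2)$ in the new regime, verify it is uniformly $O(d^{-1})$ over the range $j<\varepsilon t\sqrt d$, and then average over the initial distribution using the same tail bound as in Lemma \ref{lemma:bound_tail_initial_distribution}. Everything else is a bookkeeping exercise in combining the prior propositions.
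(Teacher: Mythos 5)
Your route is essentially the paper's: the corollary is stated there without a separate proof, as an immediate consequence of the proposition preceding it (the upper bound on the probabilistic expectation, uniform in the starting point $p_1$, hence on its average $C_3$ over the initial distribution) combined with Propositions \ref{prop:variance_suppression_wavefunction} and \ref{prop:variance_suppression_kraus_operators}, which make $C_1C_2\leq 1$ up to exponentially small corrections. Your extra step --- lower-bounding $C_3$ so as to convert the bound on $C_3$ into a bound on $|C_3|$ --- addresses a real point the paper glosses over (for $\delta_j=\tfrac12$ the expectation is real but not manifestly nonnegative), and the mechanism $(-1)^N\geq 1-2N$, hence $C_3\geq 1-2\,\mathbf{E}[N]$ with $\mathbf{E}[N]=\sum_j\mathbf{P}\bigl(p_j=\tfrac{d-1}{2}\bigr)$, is sound; likewise your use of Lemma \ref{lemma:random_walk_small_sigma_m_marginal_distributions} for $j\geq\varepsilon t\sqrt d$.

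The one overstatement is the claim that each hitting probability is $O(d^{-1})$ for $j<\varepsilon t\sqrt d$. In the regime $-1<\beta<-\tfrac12$ the best pointwise bound right after a step is the sup of the jump density, of order $d^{(\beta-1)/2}\gg d^{-1}$, and Lemma \ref{lemma:probability_antipode_small} is proved only for $-\tfrac12<\beta<0$, so no per-step $O(d^{-1})$ estimate is available for small $j$ (nor for $j=1$ if $n_0$ sits near the boundary, since the corollary imposes no condition on $n_0$). What saves the argument is that you do not actually need $\mathbf{E}[N]\leq(1+\varepsilon)t/\sqrt d$: for the sign of $C_3$ it suffices that $\mathbf{E}[N]=o(1)$, and this follows from the Fourier bound underlying Lemma \ref{lemma:random_walk_small_sigma_m_marginal_distributions}, namely $\mathbf{P}\bigl(p_j=\tfrac{d-1}{2}\bigr)\leq\frac1d\bigl(1+\sum_{s\neq0}\widetilde{\theta}_3(\tfrac sd,\tfrac{2i\sigma_m^2}{d})^{\,j-1}\bigr)$, whose sum over $2\leq j\leq I$ is $O(t/\sqrt d)+O(d^{\beta})$, plus the $j=1$ term, which is $o(1)$ because the initial distribution has width $\ll d$ (as $\alpha<1$). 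With that weakening, $C_3\geq 0$ for large $d$, so $|C_3|=C_3$ is controlled by the proposition's upper bound, and together with $C_1C_2\leq 1$ this yields the stated estimate (indeed with constant $2(1-\varepsilon)$ in place of $1$).
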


\subsubsection{Arbitrary correlations for arbitrary $\delta$ in the optimal case}
In the optimal case identified in corollary \ref{cor:quasi_ideal_clock_variance_good_scaling}, for which $\sigma_m^2 := c_2d^{\beta}, \beta \downarrow -\frac{1}{2}$, the probabilistic expectation contribution to the pseudo-variance \ref{eq:quasi_ideal_clock_pseudo_variance} is close to $1$ up to exponentially decaying terms. It is not difficult to see that the same would apply not only to the pseudo-variance but also to an arbitrary correlation function \ref{eq:quasi_ideal_clock_correlation} for arbitrary $\delta$ (provided the scaling $t\sqrt{d}$ of the measurement index $I$ is fixed). In other words, referring again to the formulae for the moments of order $1$ and $2$ of linear measurement, one may say that for $\beta > -\frac{1}{2}$, the correlations retain essentially the contributions of the spread of the wavefunction and that of the imprecision of the measurement while the backaction part is suppressed as the exponential of a power of $d$.

\subsubsection{Random measurement times of integer mean}
\label{sec:IId3}
One could now, in the original equation \ref{eq:quasi_ideal_clock_pseudo_variance} for the pseudo-variance, consider the case where $\delta_j = 1 + X_j$ where $X_j$ is a random variable symmetrically distributed about $0$. Suppose for definiteness that the $X_j$ are i.i.d with:
\begin{align}
    X_j & \sim \mathcal{N}(0, \sigma^2)\,.
\end{align}
Then taking the expectation of equation \ref{eq:quasi_ideal_clock_pseudo_variance} over the distribution of the $X_j$, the factor $\left(1 - e^{-2\pi i\delta_j}\right)$ is converted to $1 - e^{-2\pi\sigma^2}$. Therefore, one is led back to the case $\delta = \frac{1}{2}$, except that $c_3 = 1 - e^{-2\pi\sigma^2}$ instead of $2$. Fundamentally, such a setting would arise if the clock was coupled to a classical (random) waveform $x(\cdot)$ through the following Hamiltonian:
\begin{align}
    \hat{H}_d'(t) & = \left(1 + x(t)\right)\hat{H}_d\\
    & = \left(1 + x(t)\right)\frac{2\pi}{\sqrt{d}}\sum_{-\frac{d - 1}{2} \leq n \leq \frac{d - 1}{2}}n\ket{n}\bra{n}
\end{align}
and one would measure the clock at interval $\frac{1}{\sqrt{d}}$. The $X_j$ would correspond to:
\begin{align}
    X_j & = \sqrt{d}\int_{\frac{j - 1}{\sqrt{d}}}^{\frac{j}{\sqrt{d}}}\!\mathrm{d}t\,x(t)\,.
\end{align}
As a simple realization (to ensure that the $X_j$ are i.i.d), one can think of $x(\cdot)$ as white noise with variance $\sigma^2$.

\subsubsection{Application to waveform estimation}


Keeping the setting introduced in the last paragraph, one may now wonder what can be learnt about $x(\cdot)$ from monitoring the time $\widetilde{\xi}$ of the clock. An estimate for $X_j$ from the measurements of $\widetilde{\xi}_{j - 1}, \widetilde{\xi}_j$ is $\sqrt{d}\left(\widetilde{\xi}_j - \widetilde{\xi}_{j - 1}\right)$.

From what we identified as the optimal scaling in $d$ from corollary \ref{cor:quasi_ideal_clock_variance_good_scaling}, the variance on each measurement of $\frac{\widetilde{\xi}}{\sqrt{d}}$ cannot decrease to $0$ (with $d$) faster than $d^{-3/2}$. This implies an error increasing at least as $\left(d^2 \times d^{-3/2}\right)^{1/2} = d^{1/4}$ on the estimation of each $X_j$. Now, the typical magnitude of $X_j$ is given by $\sqrt{d}\left(\sigma^2\frac{1}{\sqrt{d}}\right)^{1/2} = \sigma d^{1/4}$. Therefore, $\sigma$ should grow faster than $1$ (that is, diverge) for the standard deviation on the estimation of $X_j$ to be negligible with respect to $X_j$.

Now, assume one no longer measures the clock at every $\frac{1}{\sqrt{d}}$ time steps, but approximately every $d^{-\varepsilon}$ time steps instead ($0 \leq \varepsilon < \frac{1}{2}$). The integral corresponding to white noise, rescaled by $\sqrt{d}$, along such an interval has typical magnitude $\sqrt{d}\left(\sigma^2d^{-\varepsilon}\right)^{1/2} = \sigma d^{(1 -\varepsilon)/2}$. For the standard deviation of the estimator of $X_j$ to be negligible with respect to $X_j$, one now needs $d^{1/4} \ll \sigma d^{(1 - \varepsilon)/2}$, that is $\sigma \gg d^{\varepsilon/2 - 1/4}$. If one considers the limit $\varepsilon \to 0$, this means that $\sigma$ needs to grow faster than $d^{-1/4}$.

\subsection{Existence of continuum limit}
In sections \ref{sec:measured_quasi_ideal_clock_derivation_formulae} and \ref{sec:measured_quasi_ideal_clock_scalings}, given a quasi-ideal clock of dimension $d$ measured with some precision $\sigma_m$, we constructed and characterized a discrete-time random process $\widetilde{\xi}_1^{(d)}, \widetilde{\xi}_2^{(d)}, \ldots$ Here, we added the superscript $d$ here to account for the dependence of this random process on the dimension; in the construction, the parameters $c_2, c_3, \alpha, \beta$ which specify $\xi^2$ and $\sigma_m^2$ for all $d$ are implicitly fixed; we also always take $\delta_j := \frac{1}{2}$ for all $j$. One may then wonder, loosely speaking, whether this discrete-time process admits a sensible ``continuum limit'' as $d \to \infty$. We will address this question in two interesting limiting cases: on the one hand, $\sigma_m^2 \propto d^{\beta}$ with $-\frac{1}{2} < \beta < 0$; on the other hand, measurement in the time eigenbasis as described section \ref{sec:quasi_ideal_clock_time_basis} ---formally $\sigma_m^2 = 0$.

Generally speaking, showing the convergence in law of a sequence of stochastic processes $\left\{(X^n_t)_{t \geq 0}\right\}_{n \geq 0}$ to some stochastic process $\left(X_t\right)$ implies showing the weak convergence of the finite-dimensional distribution as well as verifying a tightness condition; see \cite[theorems 7.5-13.5]{billingsley_convergence_1999} for specific criteria and \cite[theorems 8.1-8.2]{billingsley_convergence_1999} for a simple application (weak convergence of  a random walk of finite variance to Brownian motion).

We now give a sketch of how these results can be applied to the problem under consideration. First, the weak convergence of finite-dimensional distributions may be conveniently derived from the pointwise convergence of the characteristic functions. Indeed, the latter can be recovered --- say in the 2-dimensional case for definiteness --- from our expression for $\left\langle\exp\left(\frac{2\pi n\widetilde{\xi}^{(d)}_J}{\sqrt{d}}\right)\exp\left(\frac{2\pi m\widetilde{\xi}^{(d)}_I}{\sqrt{d}}\right)\right\rangle$ ($m, n \in \mathbf{Z}$) by taking $m := \lfloor \chi_0\sqrt{d}\rfloor, n := \lfloor\chi_1\sqrt{d}\rfloor$ where $\chi_0, \chi_1 \in \mathbf{R}$ are fixed. This differs from the case in which $m, n$ are constants treated above; fortunately enough, at least in the cases $\sigma_m^2 \gg d^{-1/2}$ and $\sigma_m^2 = 0$ (measurement in the time eigenbasis), one can show that our estimates are still robust against this scaling.

Let us now address the case $\sigma_m^2 \propto d^{\beta}, -\frac{1}{2} < \beta < 0$. To begin with, we state the following proposition which is a straightforward generalization of equation
\ref{eq:quasi_ideal_clock_correlation}:

\begin{proposition}
\label{prop:pseudo_correlation_functions}
    Let $I_1, \ldots, I_n$ denote positive integers such that $2 \leq I_1 < I_2 < \ldots < I_n$ and $m_1, \ldots, m_n$ denote integers satisfying $|m_k + \ldots + m_n| < d$ for all $1 \leq k \leq n$. Then the following holds (with the convention that $I_0 := 0$):
    \begin{align}
        & \left\langle\prod_{1 \leq k \leq n}\exp\left(\frac{2\pi im_k\widetilde{\xi}_{I_k}}{\sqrt{d}}\right)\right\rangle\nonumber\\
        & \hspace{0.05\textwidth} = \frac{\prod_{1 \leq k \leq n}\theta_3\left(\frac{i\sigma_m^2}{d}m_k, \frac{2i\sigma_m^2}{d}\right)}{\theta_3\left(0, \frac{2i\sigma_m^2}{d}\right)^n}e^{-\frac{\pi\sigma_m^2}{d}\sum_{1 \leq k \leq n}m_k^2 + \frac{2\pi i}{d}\sum_{1 \leq k \leq n}m_k\sum_{1 \leq j \leq I_k}\delta_j}\nonumber\\
        & \times \frac{\theta_3\left(\frac{\sum_{1 \leq k \leq n}m_k}{2d}, \frac{i}{2\xi^2d}\right)\theta_3\left(\frac{\sum_{1 \leq k \leq n}m_k}{2}, \frac{id}{2\xi^2}\right) + \theta_3\left(\frac{1}{2} - \frac{\sum_{1 \leq k \leq n}m_k}{2d}, \frac{i}{2\xi^2d}\right)\theta_3\left(\frac{1}{2} + \frac{\sum_{1 \leq k \leq n}m_k}{2}, \frac{id}{2\xi^2}\right)}{\theta_3\left(0, \frac{i}{2\xi^2d}\right)\theta_3\left(0, \frac{id}{2\xi^2}\right) + \theta_3\left(\frac{1}{2}, \frac{i}{2\xi^2d}\right)\theta_3\left(\frac{1}{2}, \frac{id}{2\xi^2}\right)}\nonumber\\
        & \times \sum_{-\frac{d - 1}{2} \leq p_1, \ldots, p_{I_n} \leq \frac{d - 1}{2}}\prod_{\substack{1 \leq j \leq I_n\\j \neq I_1, \ldots, I_n}}\frac{\theta_3\left(\frac{p_{j + 1} - p_j}{d}, \frac{i}{2\sigma_m^2d}\right)}{\sqrt{2\sigma_m^2d}\theta_3\left(0, \frac{2i\sigma_m^2}{d}\right)}\prod_{1 \leq k \leq n}\frac{\theta_3\left(\frac{p_{I_k + 1} - p_{I_k}}{d} + \frac{m_k}{2d}, \frac{i}{2\sigma_m^2d}\right)}{\sqrt{2\sigma_m^2d}e^{-\frac{\pi\sigma_m^2}{2d}m_k^2}\theta_3\left(\frac{i\sigma_m^2}{d}m_k, \frac{2i\sigma_m^2}{d}\right)}\nonumber\\
        & \hspace{0.02\textwidth}\times \prod_{1 \leq k \leq n}\prod_{I_{j - 1}< j \leq I_j}\left(1 - \left(1 - e^{-2\pi i\delta_j\sign\left(\sum_{k \leq l \leq n}m_l\right)}\right)\mathbf{1}_{|p_j - \sum_{k \leq l \leq n}m_l| > \frac{d - 1}{2}}\right)\nonumber\\
        &  \hspace{0.02\textwidth} \times \frac{\theta_3\left(\frac{p_1 - n_0 - \sum_{1 \leq k \leq n}m_k}{d}, \frac{i}{\xi^2d}\right)\theta_3\left(\frac{p_1 - n_0}{d}, \frac{i}{\xi^2d}\right)}{\frac{d}{2}\left(\theta_3\left(\frac{\sum_{1 \leq k \leq n}m_k}{2d}, \frac{i}{2\xi^2d}\right)\theta_3\left(\frac{\sum_{1 \leq k \leq n}m_k}{2}, \frac{id}{2\xi^2}\right) + \theta_3\left(\frac{1}{2} - \frac{\sum_{1 \leq k \leq n}m_k}{2d}, \frac{i}{2\xi^2d}\right)\theta_3\left(\frac{1}{2} + \frac{\sum_{1 \leq k \leq n}m_k}{2}, \frac{id}{2\xi^2}\right)\right)}
    \end{align}
\end{proposition}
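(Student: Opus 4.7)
The plan is to repeat step-by-step the derivation carried out in Section \ref{sec:measured_quasi_ideal_clock_derivation_formulae} which produced equation \eqref{eq:quasi_ideal_clock_correlation}, treating $n$ insertion points $I_1<\cdots<I_n$ in place of just $I$ and $J$. Write
$\bigl(\prod_{j=1}^{I_n}\hat{\Omega}(\widetilde{\xi}_j)e^{-2\pi i\hat{H}_d\delta_j/\sqrt{d}}\bigr)\ket{\Psi_0}$
in the time eigenbasis via the resolution of identity $\braket{\theta_{k_j+\delta_j}|\theta_{k_{j-1}}}$, form the norm-squared to obtain a double sum over primed and unprimed indices coupled by $\Omega_{k_j}(\widetilde{\xi}_j)\Omega_{k_j'}(\widetilde{\xi}_j)$, and then integrate each $\widetilde{\xi}_j$ against the prescribed exponential weight. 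The integrals factorize across $j$ and are all handled by Lemma \ref{lemma:quasi_ideal_clock_kraus_moment}: with integer parameter $0$ at ordinary steps $j\notin\{I_1,\ldots,I_n\}$, and with parameter $m_k$ at $j=I_k$. The output is a product of $\theta_3$ factors whose first arguments carry an imaginary shift $-\tfrac{i\sigma_m^2}{d}m_k$ precisely at the insertion points.

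Next perform the $k_0',\ldots,k_{I_n-1}'$ summations using the discrete Fourier transform identities \eqref{eq:theta3_dft}--\eqref{eq:theta3_inv_dft} to trade the $\theta_3\bigl(\cdot,\frac{i\sigma_m^2}{d}\bigr)$ for their Fourier duals $\theta_3\bigl(\cdot,\frac{i}{2\sigma_m^2d}\bigr)$, exactly as in the $n=2$ case; the resulting sums over $k_1,\ldots,k_{I_n}$ then produce pseudo-Kronecker $\delta^{[d]}$ constraints which, after telescoping, give $p_j\equiv p_j'-\sum_{l\geq k}m_l\pmod d$ on each segment $I_{k-1}<j\leq I_k$. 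The delicate point is the periodicity: the phase $e^{2\pi i\delta_j(p_j-p_j')/d}$ is not invariant under $p_j\to p_j'-\sum_{l\geq k}m_l$ modulo $d$, so a shift picks up a factor $e^{\mp 2\pi i\delta_j}$ whenever it wraps. The assumption $|m_k+\cdots+m_n|<d$ (for each $k$) ensures that on each segment the shift wraps at most once, so the correction takes exactly the single-indicator form $1-\bigl(1-e^{-2\pi i\delta_j\sign(\sum_{l\geq k}m_l)}\bigr)\mathbf{1}_{|p_j'-\sum_{l\geq k}m_l|>\frac{d-1}{2}}$, matching the structure in the statement.

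Finally normalize. At each ordinary step, divide and multiply by $\sqrt{2\sigma_m^2 d}\,\theta_3(0,\tfrac{2i\sigma_m^2}{d})$ as was done in \eqref{eq:theta_3_sum_over_p}; at each insertion step use the $m_k$-shifted analogue
$\sum_{p}\theta_3\bigl(\tfrac{p}{d}+\tfrac{m_k}{2d},\tfrac{i}{2\sigma_m^2d}\bigr)=\sqrt{2\sigma_m^2d}\,e^{-\pi\sigma_m^2m_k^2/(2d)}\theta_3\bigl(\tfrac{i\sigma_m^2}{d}m_k,\tfrac{2i\sigma_m^2}{d}\bigr)$,
which peels off the prefactor $\prod_k\theta_3(\tfrac{i\sigma_m^2}{d}m_k,\tfrac{2i\sigma_m^2}{d})/\theta_3(0,\tfrac{2i\sigma_m^2}{d})^n$ together with $e^{-\pi\sigma_m^2\sum m_k^2/d}$. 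The initial wavefunction contribution is the same as in the $n=2$ derivation with $m+n$ replaced everywhere by $M:=\sum_{k=1}^nm_k$; summing $\widetilde{\psi}(p_1-M)\widetilde{\psi}(p_1)^*$ over $p_1$ using the product-to-sum identity from Proposition~\ref{prop:theta3_multiplication} reproduces the denominator featuring $\theta_3(\tfrac{M}{2d},\tfrac{i}{2\xi^2d})\theta_3(\tfrac{M}{2},\tfrac{id}{2\xi^2})+\theta_3(\tfrac{1}{2}-\tfrac{M}{2d},\tfrac{i}{2\xi^2d})\theta_3(\tfrac{1}{2}+\tfrac{M}{2},\tfrac{id}{2\xi^2})$.

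The only real obstacle is the combinatorial bookkeeping in the periodicity step: one must verify that the wrap-around indicators, assembled segment by segment, collectively reproduce the advertised product $\prod_{k=1}^n\prod_{I_{k-1}<j\leq I_k}\bigl(1-(1-e^{-2\pi i\delta_j\sign(\sum_{l\geq k}m_l)})\mathbf{1}_{|p_j-\sum_{l\geq k}m_l|>\frac{d-1}{2}}\bigr)$, with the correct sign on each segment. The hypothesis $|m_k+\cdots+m_n|<d$ for every $k$ is precisely what makes this single-wrap argument valid uniformly in $k$; with that in hand, the rest of the derivation is a mechanical repetition of the $n=2$ computation already carried out in full in Section \ref{sec:measured_quasi_ideal_clock_derivation_formulae}.
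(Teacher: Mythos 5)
Your proposal is correct and follows essentially the same route as the paper, which presents this proposition as a straightforward generalization of the two-insertion derivation in \S\ref{sec:measured_quasi_ideal_clock_derivation_formulae}: the same use of Lemma \ref{lemma:quasi_ideal_clock_kraus_moment} at insertion points, the DFT/pseudo-Kronecker telescoping yielding cumulative shifts $\sum_{l\geq k}m_l$ per segment, the single-wrap argument enabled by $|m_k+\cdots+m_n|<d$, and the normalizations via \eqref{eq:theta_3_sum_over_p} and Proposition \ref{prop:theta3_multiplication}. Nothing essential is missing.
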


We now need two lemmas; the first one allows to verify the \textit{tightness condition} in \cite[theorem 7.5]{billingsley_convergence_1999}, the second one shows the convergence of the finite-dimensional distributions.

\begin{lemma}
\label{lemma:tightness_uniform_motion}
    Let $\gamma$ satisfy $\alpha \vee \beta < \gamma < 0$. For all $d \geq 1$, define the piecewise constant continuous-time random process $\left(\widetilde{\Xi}^{(d)}_t\right)_{t \geq 0}$ by $\widetilde{\Xi}^{(d)}_0 := 0$, $\widetilde{\Xi}^{(d)}_{kd^{\gamma}} := \widetilde{\xi}^{(d)}_{\lfloor 2kd^{1/2 + \gamma} \rfloor}$ for all $k \geq 1$ and extend it piecewise linearly to $t \in \mathbf{R}_+ - \{ kd^{\gamma}\,;\, k \in \mathbf{Z}_+ \}$. Then for all $T > 0$, for all $\varepsilon > 0$, there exists $\delta > 0$ such that $\limsup_{d \to \infty}\mathbf{P}\left[\sup_{\substack{0 \leq s < t \leq T\\|s - t| \leq \delta}}\left|\exp\left(2\pi i\eta\widetilde{\Xi}^{(d)}_t\right) - \exp\left(2\pi i\eta\widetilde{\Xi}^{(d)}_s\right)\right| > \varepsilon\right] = 0$.
\begin{proof}
    To ease the notation, let us denote by ``\textrm{exp terms}'' an error term consisting of a polynomial in $d$ times a decreasing exponential of some power of $d$. 
    
    Fix $T > 0$ throughout the proof. Let $k, l$ denote integers satisfying $1 \leq k < l \leq Td^{-\gamma}$ and $\eta > 0$. By a straightforward generalization of the estimates in section \ref{sec:measured_quasi_ideal_clock_scalings} and at the end of section \ref{sec:measured_quasi_ideal_clock_derivation_formulae},
    \begin{align}
        & \left\langle\exp\left(\frac{2\pi i\lfloor\eta\sqrt{d}\rfloor}{\sqrt{d}}\left(\widetilde{\Xi}^{(d)}_{kd^{\gamma}} - \widetilde{\Xi}^{(d)}_{ld^{\gamma}}\right)\right)\right\rangle\nonumber\\
        & = e^{-\frac{\pi\sigma_m^2\lfloor\eta\sqrt{d}\rfloor^2}{2d} + \frac{2\pi i\lfloor\eta\sqrt{d}\rfloor}{d}\sum_{\lfloor 2kd^{\gamma + 1/2}\rfloor < r \leq \lfloor 2ld^{\gamma + 1/2} \rfloor}\frac{1}{2}}\left(1 + \textrm{exp terms}\right)\\
        & = e^{-\frac{\pi\sigma_m^2\lfloor\eta\sqrt{d}\rfloor^2}{2d} + \frac{\pi i\lfloor\eta\sqrt{d}\rfloor\left(\lfloor 2ld^{\gamma + 1/2}\rfloor - \lfloor 2kd^{\gamma + 1/2}\rfloor\right)}{d}}\left(1 + \textrm{exp terms}\right)
    \end{align}
    This implies:
    \begin{align}
        & \left\langle\left|\exp\left(\frac{2\pi i\lfloor\eta\sqrt{d}\rfloor}{\sqrt{d}}\widetilde{\Xi}^{(d)}_{kd^{\gamma}}\right) - \exp\left(\frac{2\pi i\lfloor\eta\sqrt{d}\rfloor}{\sqrt{d}}\widetilde{\Xi}^{(d)}_{ld^{\gamma}}\right)\right|^2\right\rangle\nonumber\\
        & \leq 2\left(1 - e^{-\frac{\pi\sigma_m^2\lfloor\eta\sqrt{d}\rfloor^2}{2d}}\right) + 2\left(1 - \cos\left(\frac{\pi\lfloor\eta\sqrt{d}\rfloor\left(\lfloor 2ld^{\gamma + 1/2}\rfloor - \lfloor 2kd^{\gamma + 1/2}\rfloor\right)}{d}\right)\right) + \textrm{exp terms}\\
        & \leq \frac{\pi\sigma_m^2\lfloor\eta\sqrt{d}\rfloor^2}{d} + \left(\frac{\pi\lfloor\eta\sqrt{d}\rfloor\left(\lfloor 2ld^{\gamma + 1/2}\rfloor - \lfloor 2kd^{\gamma + 1/2}\rfloor\right)}{d}\right)^2 + \textrm{exp terms}\\
        & \leq \pi\sigma_m^2\eta^2 + \left(2\pi\eta(l - k)d^{\gamma}\right)^{2}\left(1 + \frac{1}{\eta\sqrt{d}}\right)^2\left(1 + \frac{2}{\lfloor 2ld^{\gamma + 1/2}\rfloor - \lfloor 2kd^{\gamma + 1/2}\rfloor}\right)^2 + \textrm{exp terms}\\
        & \leq \pi\sigma_m^2\eta^2 + 4\left(1 + \frac{1}{\eta\sqrt{d}}\right)^2\left(2\pi\eta(l - k)d^{\gamma}\right)^2 + \textrm{exp terms}.
    \end{align}
    Let now $\zeta$ satisfy
    \begin{align}
    \label{eq:proof_tightness_uniform_motion_zeta}
        \frac{\gamma}{2} \vee \frac{\beta - \gamma}{2} & < \zeta < 0
    \end{align}
    From Chebyshev's inequality, it follows that
    \begin{align*}
        & \mathbf{P}\left[\left|\exp\left(\frac{2\pi i\lfloor\eta\sqrt{d}\rfloor}{\sqrt{d}}\widetilde{\Xi}^{(d)}_{kd^{\gamma}}\right) - \exp\left(\frac{2\pi i\lfloor\eta\sqrt{d}\rfloor}{\sqrt{d}}\widetilde{\Xi}^{(d)}_{(k + 1)d^{\gamma}}\right)\right| > d^{\zeta}\right]\nonumber\\
        & \leq \pi\eta^2\sigma_m^2d^{-2\zeta} + 16\left(1 + \frac{1}{\eta\sqrt{d}}\right)^2\pi^2\eta^2(l - k)^2d^{2\gamma - 2\zeta} + \textrm{exp terms}.
    \end{align*}
    We now need to formalize the idea that $\exp\left(\frac{2\pi i\lfloor\eta\sqrt{d}\rfloor}{\sqrt{d}}\widetilde{\Xi}^{(d)}_{kd^{\gamma}}\right)$ is ``close'' to $\exp\left(2\pi i\eta\widetilde{\Xi}^{(d)}_{kd^{\gamma}}\right)$. This will be verified if $\widetilde{\Xi}^{(d)}_{kd^{\gamma}}$ is small enough, which happens with high probability; more precisely:
    \begin{align}
        & \mathbf{P}\left[\left|\exp\left(\frac{2\pi i\lfloor\eta\sqrt{d}\rfloor}{\sqrt{d}}\widetilde{\Xi}^{(d)}_{kd^{\gamma}}\right) - \exp\left(2\pi i\eta\widetilde{\Xi}^{(d)}_{kd^{\gamma}}\right)\right| > d^{\zeta}\right]\nonumber\\
        & \leq \mathbf{P}\left[\left|\exp\left(\frac{2\pi i\widetilde{\Xi}^{(d)}_{kd^{\gamma}}}{\sqrt{d}}\right) - 1\right| > d^{\zeta}\right]\\
        & \leq d^{-2\zeta}\left\langle\left|\exp\left(\frac{2\pi i\widetilde{\Xi}^{(d)}_{kd^{\gamma}}}{\sqrt{d}}\right) - 1\right|^2\right\rangle\\
        & \leq d^{-2\zeta}\left[\left(\frac{\pi\lfloor kd^{\gamma}\rfloor}{d}\right)^2 + \frac{\pi\sigma_m^2}{d}\right]\\
        & \leq \left(\pi^2k^2d^{2\gamma - 2\zeta - 2} + \pi\sigma_m^2d^{-2\zeta - 1}\right)\\
        & \leq \mathcal{O}\left(T^2d^{-2\zeta - 2}\right) + \mathcal{O}\left(d^{\beta - 2\zeta - 1}\right).
    \end{align}
    It follows:
    \begin{align}
        \mathbf{P}\left[\left|\exp\left(2\pi i\eta\widetilde{\Xi}^{(d)}_{kd^{\gamma}}\right) - \exp\left(2\pi i\eta\widetilde{\Xi}^{(d)}_{(k + 1)d^{\gamma}}\right)\right| > 3d^{\zeta}\right] & = \mathcal{O}\left(d^{\beta - 2\zeta} + d^{2\gamma - 2\zeta}\right) + \textrm{exp terms}.
    \end{align}
    Finally, from a union bound:
    \begin{align}
        \mathbf{P}\left[\left|\exp\left(2\pi i\eta\widetilde{\Xi}^{(d)}_{kd^{\gamma}}\right) - \exp\left(2\pi i\eta\widetilde{\Xi}^{(d)}_{(k + 1)d^{\gamma}}\right)\right| > 3d^{\zeta} \textrm{ for some $k \in [1, Td^{-\gamma}]$}\right] & = \mathcal{O}\left(d^{-\gamma + \beta - 2\zeta} + d^{\gamma - 2\zeta}\right) + \textrm{exp terms}.
    \end{align}
    Since $\left(\widetilde{\Xi}^{(d)}_t\right)_{t \geq 0}$ is linear between points $kd^{\gamma}$ and $(k + 1)d^{\gamma}$, this implies:
    \begin{align}
        \mathbf{P}\left[\sup_{\substack{0 \leq s, t \leq T\\s \neq t}}\frac{\left|\exp\left(2\pi i\eta\widetilde{\Xi}^{(d)}_t\right) - \exp\left(2\pi i\eta\widetilde{\Xi}^{(d)}_s\right)\right|}{\left|t - s\right|^{\frac{\zeta}{\gamma}}} > 3\right] = \mathcal{O}\left(d^{-\gamma + \beta - 2\zeta} + d^{\gamma - 2\zeta}\right) + \textrm{exp terms},
    \end{align}
    which by equation \ref{eq:proof_tightness_uniform_motion_zeta} vanishes as $d \to \infty$. This proves the sought tightness condition.
\end{proof}
\end{lemma}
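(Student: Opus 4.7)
The plan is to establish the tightness condition by the standard strategy: produce an $L^2$ modulus-of-continuity estimate for the characteristic process $t\mapsto \exp(2\pi i\eta\widetilde{\Xi}^{(d)}_t)$ on a $d^\gamma$-grid, convert it into a probabilistic bound via Chebyshev's inequality, combine the bounds at adjacent grid points through a union bound, and finally leverage the piecewise-linear interpolation to pass from the grid supremum to the continuous supremum. The engine is Proposition~\ref{prop:pseudo_correlation_functions} applied with $n=2$, $m_1=-\lfloor\eta\sqrt d\rfloor$ and $m_2=+\lfloor\eta\sqrt d\rfloor$, since $m_1+m_2=0$ causes the Jacobi-$\theta$ prefactor carrying the wavefunction dispersion to collapse to unity, leaving only the measurement-imprecision Gaussian $e^{-\pi\sigma_m^2 m^2/d}$, the deterministic drift phase $e^{2\pi i m_k\sum_j\delta_j/d}$, and a backaction sum which, by the analysis of Section~\ref{sec:measured_quasi_ideal_clock_scalings}, is $1+\text{(exp.~small in }d)$ for $\beta>-1/2$.

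Using $|e^{ix}-e^{iy}|^2 = 2(1-\cos(x-y))\le (x-y)^2$, I would then read off two contributions to $\langle |\exp(2\pi i\hat\eta_d\widetilde{\Xi}^{(d)}_{ld^\gamma}) - \exp(2\pi i\hat\eta_d\widetilde{\Xi}^{(d)}_{kd^\gamma})|^2\rangle$ where $\hat\eta_d := \lfloor\eta\sqrt d\rfloor/\sqrt d$: a measurement term of order $\sigma_m^2\eta^2\asymp d^\beta\eta^2$ from the Gaussian prefactor, and a drift term of order $\eta^2(l-k)^2 d^{2\gamma}$ from rounding $\hat\eta_d\cdot\tfrac12(\lfloor 2ld^{\gamma+1/2}\rfloor-\lfloor 2kd^{\gamma+1/2}\rfloor)/d$. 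Chebyshev's inequality at cutoff $d^\zeta$ then yields, for adjacent grid points $(k,k+1)$, a single-increment bound of order $d^{\beta-2\zeta}+d^{2\gamma-2\zeta}$, up to the exponentially small errors from Section~\ref{sec:measured_quasi_ideal_clock_scalings}.

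A short auxiliary step controls the replacement of $\hat\eta_d$ by $\eta$ in the statement: I would bound $|e^{2\pi i\eta X}-e^{2\pi i\hat\eta_d X}| \le |e^{2\pi iX/\sqrt d}-1|$ and apply Chebyshev to $\widetilde{\Xi}^{(d)}_{kd^\gamma}$ itself, whose second moment is controlled again through the drift ($O(k^2 d^{2\gamma})$) and the measurement noise ($O(\sigma_m^2)$). After a union bound over the $O(Td^{-\gamma})$ grid points in $[0,T]$, the total probability of any grid increment exceeding $3d^\zeta$ is $O(d^{-\gamma+\beta-2\zeta}+d^{\gamma-2\zeta})$, which vanishes as $d\to\infty$ provided $\zeta$ is chosen in the non-empty interval $(\gamma/2\vee(\beta-\gamma)/2,\,0)$. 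Piecewise linearity of $\widetilde{\Xi}^{(d)}$ between grid points then promotes this to the Hölder-style bound $\sup_{s\ne t\in[0,T]}|e^{2\pi i\eta\widetilde{\Xi}^{(d)}_t}-e^{2\pi i\eta\widetilde{\Xi}^{(d)}_s}|/|t-s|^{\zeta/\gamma}\le 3$ with probability tending to $1$, which trivially implies the tightness claim.

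The main obstacle, and where care is needed, is \emph{not} any individual estimate but the triple balancing of exponents: the scale $d^\gamma$ of the coarse-graining, the cutoff $d^\zeta$ chosen in Chebyshev, and the measurement scale $d^\beta$ must be arranged so that (i) the single-step variance of the characteristic process is $o(d^{2\zeta})$, (ii) the union bound over $d^{-\gamma}$ grid points still converges, and (iii) $\zeta/\gamma$ is positive so the continuous-supremum bound is non-trivial. The hypothesis $\alpha\vee\beta<\gamma<0$ is exactly what is needed for such a $\zeta$ to exist; verifying this compatibility, and tracking the $\lfloor\cdot\rfloor$ rounding errors without losing it, is the only really delicate bookkeeping in the proof.
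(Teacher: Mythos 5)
Your proposal follows essentially the same route as the paper's proof: the same $L^2$ increment estimate on the $d^\gamma$-grid obtained from the two-point pseudo-correlation with $m_1+m_2=0$ (measurement Gaussian plus drift phase, with exponentially small backaction corrections), the same Chebyshev cutoff $d^\zeta$ with $\tfrac{\gamma}{2}\vee\tfrac{\beta-\gamma}{2}<\zeta<0$, the same auxiliary Chebyshev step to replace $\lfloor\eta\sqrt d\rfloor/\sqrt d$ by $\eta$, the same union bound over the $O(Td^{-\gamma})$ grid points, and the same use of piecewise linearity to pass to the continuous-supremum Hölder bound. The exponent bookkeeping you describe matches the paper's, so the argument is correct as proposed.
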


\begin{lemma}
\label{lemma:fd_distributions_uniform_motion}
    Let $n \geq 1$, $\theta_1, \ldots, \theta_n \in \mathbf{R}$ and $t_1, \ldots, t_k \in \mathbf{R}$. The following holds:
    \begin{align}
        \left\langle\prod_{1 \leq k \leq n}\exp\left(2\pi i\theta_k\widetilde{\Xi}^{(d)}_{t_k}\right)\right\rangle & \xrightarrow[d \to \infty]{} e^{2\pi i\sum_{1 \leq k \leq n}\theta_k t_k}\,.
    \end{align}
\begin{proof}
    This essentially follows from applying proposition \ref{prop:pseudo_correlation_functions} with $m_k := \lfloor \theta_k\sqrt{d} \rfloor$ and $I_k := \left\lfloor\left\lfloor t_kd^{-\gamma}\right\rfloor d^{\gamma + 1/2}\right\rfloor$, using continuity arguments as in the proof of lemma \ref{lemma:tightness_uniform_motion} and estimates similar to those of sections \ref{sec:measured_quasi_ideal_clock_derivation_formulae}, \ref{sec:measured_quasi_ideal_clock_scalings}.
\end{proof}
\end{lemma}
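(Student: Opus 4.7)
The plan is to reduce the joint characteristic function to the explicit formula of Proposition~\ref{prop:pseudo_correlation_functions} through two continuity steps and then read off the claimed limit from the asymptotic behaviour of each resulting factor, mirroring the scaling analysis of Section~\ref{sec:measured_quasi_ideal_clock_scalings}. Without loss of generality I assume $t_1 \leq \ldots \leq t_n$ and fix $T \geq t_n$; if two of the $t_k$ coincide I merge the corresponding $\theta_k$'s as a preliminary step so that the strict-inequality hypothesis of Proposition~\ref{prop:pseudo_correlation_functions} is met.

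First I would pass from the interpolated value $\widetilde{\Xi}^{(d)}_{t_k}$ to the nearest sampled value $\widetilde{\Xi}^{(d)}_{s_k}$ with $s_k := \lfloor t_k d^{-\gamma}\rfloor d^{\gamma}$, and hence to $\widetilde{\xi}^{(d)}_{I_k}$ with $I_k := \lfloor 2\lfloor t_k d^{-\gamma}\rfloor d^{\gamma+1/2}\rfloor$. The second-moment estimates produced in the proof of Lemma~\ref{lemma:tightness_uniform_motion} (with the dummy parameter $\eta$ replaced by $\theta_k$) yield $\langle | e^{2\pi i \theta_k \widetilde{\Xi}^{(d)}_{t_k}} - e^{2\pi i \theta_k \widetilde{\Xi}^{(d)}_{s_k}} |^2 \rangle \to 0$, since both contributions $\pi \sigma_m^2 \theta_k^2$ and $(2\pi\theta_k)^2 d^{2\gamma}$ vanish as $d \to \infty$. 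Using $|e^{i\,\cdot}| = 1$ together with a telescoping argument replacing one factor at a time then gives $\langle \prod_k e^{2\pi i \theta_k \widetilde{\Xi}^{(d)}_{t_k}} \rangle - \langle \prod_k e^{2\pi i \theta_k \widetilde{\xi}^{(d)}_{I_k}} \rangle \to 0$. Next, setting $m_k := \lfloor \theta_k \sqrt{d}\rfloor$, a Chebyshev bound based on the variance implicit in Corollary~\ref{cor:quasi_ideal_clock_variance_good_scaling} (obtained by expanding its pseudo-variance formula to second order in $\tfrac{1}{\sqrt{d}}$) shows $\mathbf{P}[|\widetilde{\xi}^{(d)}_{I_k}| \leq 2T] \to 1$, and on this event $|e^{2\pi i \theta_k \widetilde{\xi}^{(d)}_{I_k}} - e^{2\pi i m_k \widetilde{\xi}^{(d)}_{I_k}/\sqrt{d}}| \leq 4\pi T/\sqrt{d} = O(d^{-1/2})$, so the substitution is valid in the limit.

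It then remains to evaluate $\langle \prod_k e^{2\pi i m_k \widetilde{\xi}^{(d)}_{I_k}/\sqrt{d}}\rangle$ via Proposition~\ref{prop:pseudo_correlation_functions}; the resulting four factors I would analyze as follows. The $\sigma_m^2$-dependent prefactor $\prod_k \theta_3(i\sigma_m^2 m_k/d,\, 2i\sigma_m^2/d)/\theta_3(0,\, 2i\sigma_m^2/d)\cdot e^{-\pi\sigma_m^2 m_k^2/d}$ tends to $1$, because $\sigma_m^2 m_k^2/d = c_2 \theta_k^2 d^{\beta}(1+o(1)) \to 0$ (as $\beta < 0$) and the shifted $\theta_3$ is continuous in its first argument. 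The deterministic phase $e^{\frac{2\pi i}{d}\sum_k m_k \sum_{j \leq I_k}\delta_j}$ with $\delta_j = \tfrac{1}{2}$ and $I_k = 2 t_k \sqrt{d}(1+o(1))$ yields $\frac{\pi m_k I_k}{d} = 2\pi \theta_k t_k(1+o(1))$ in each term, so the phase converges to $e^{2\pi i \sum_k \theta_k t_k}$. The $\xi^2$-dependent initial-state prefactor tends to $1$ by Proposition~\ref{prop:variance_suppression_wavefunction} applied with $r = \sum_k m_k$, since $\pi c_1 r^2 d^{\alpha - 1}/2 \sim \pi c_1 (\sum_k \theta_k)^2 d^{\alpha}/2 \to 0$ (using $\alpha < 0$, which is forced by the tightness regime $\alpha \vee \beta < \gamma < 0$).

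The main obstacle is the random-walk expectation, which must also converge to $1$; this is a direct multi-time generalization of Proposition~\ref{prop:probabilistic_expectation_exponentially_suppressed}. In the walk of Proposition~\ref{prop:pseudo_correlation_functions} the only effect of the several $(m_k, I_k)$ is to shift the step distribution at the $n$ designated indices $I_k$ by $m_k/2 = O(\sqrt{d})$, which does not change its qualitative spread $\sim d^{(1-\beta)/2}/\sqrt{2 c_2}$. Because the number $n$ of shifts is fixed and each visits the antipode $\pm\tfrac{d-1}{2}$ with exponentially small probability (by the same reasoning as Lemma~\ref{lemma:probability_antipode_small} after widening $\gamma$ so that $\gamma d/2 > |n_0| + \sum_k |m_k| = O(\sqrt{d})$, which is still permissible for large $d$), a union bound over the $I_n = O(\sqrt{d})$ steps contributes only an error exponentially small in $d^{1/2+\beta}$, exactly as in Proposition~\ref{prop:probabilistic_expectation_exponentially_suppressed}. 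Combining the four factor-wise convergences yields the claimed limit $e^{2\pi i \sum_k \theta_k t_k}$.
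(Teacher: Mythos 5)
Your proposal is correct and follows essentially the same route as the paper's (very terse) proof: reduce to Proposition~\ref{prop:pseudo_correlation_functions} with $m_k=\lfloor\theta_k\sqrt{d}\rfloor$ and $I_k\approx 2t_k\sqrt{d}$ via the continuity arguments of Lemma~\ref{lemma:tightness_uniform_motion}, then check factor by factor---measurement prefactor, phase, initial-state prefactor, and the random-walk expectation handled as in Proposition~\ref{prop:probabilistic_expectation_exponentially_suppressed}---that everything but the phase $e^{2\pi i\sum_k\theta_k t_k}$ tends to $1$. You actually supply more detail than the paper, including the factor of $2$ in $I_k$ (consistent with $\delta_j=\tfrac12$) that the paper's sketch omits.
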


This leads us to the following result:
\begin{theorem}
\label{th:convergence_measured_quasi_ideal_clock_uniform_motion}
    As $d \to \infty$, $\left(\widetilde{\Xi}^{(d)}_t\right)_{t \geq 0}$ converges weakly to a uniform motion.
\begin{proof}
    This follows from applying \cite[theorem
    7.5]{billingsley_convergence_1999}. The convergence of finite-dimensional distributions is given by lemma \ref{lemma:fd_distributions_uniform_motion}, the tightness condition is verified in lemma \ref{lemma:tightness_uniform_motion}.
\end{proof}
\end{theorem}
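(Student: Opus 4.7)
The plan is to invoke the standard two-step criterion for weak convergence of stochastic processes on $C([0,T],\mathbb{R})$ (or equivalently on $C(\mathbb{R}_+,\mathbb{R})$ by restriction), namely \cite[Theorem 7.5]{billingsley_convergence_1999}: a sequence of continuous processes converges weakly to a limit process provided (i) the finite-dimensional distributions converge and (ii) the laws of the sequence form a tight family. Since $\widetilde{\Xi}^{(d)}$ has been defined to be piecewise-linear (hence continuous), the natural ambient space is $C([0,T],\mathbb{R})$ for arbitrary $T>0$, and by a standard projective-limit argument it suffices to establish the two conditions on each such interval.

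For step (i), Lemma \ref{lemma:fd_distributions_uniform_motion} shows that for any finite collection $t_1,\ldots,t_n$ and real coefficients $\theta_1,\ldots,\theta_n$, the joint characteristic function $\langle\prod_k e^{2\pi i\theta_k\widetilde{\Xi}^{(d)}_{t_k}}\rangle$ converges to $e^{2\pi i\sum_k\theta_k t_k}$, which is precisely the characteristic function of the deterministic vector $(t_1,\ldots,t_n)$. By Lévy's continuity theorem applied in $\mathbb{R}^n$, this yields weak convergence of the finite-dimensional marginals to the point mass at $(t_1,\ldots,t_n)$, i.e.\ to the finite-dimensional distributions of the uniform motion $X_t := t$.

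For step (ii), tightness on $C([0,T],\mathbb{R})$ reduces via the Arzelà--Ascoli characterization to a modulus-of-continuity estimate: for each $\varepsilon>0$ one must produce $\delta>0$ such that $\limsup_d\mathbb{P}[\sup_{|s-t|\leq\delta}|\widetilde{\Xi}^{(d)}_t-\widetilde{\Xi}^{(d)}_s|>\varepsilon]=0$. Lemma \ref{lemma:tightness_uniform_motion} establishes exactly the analogous statement with $\widetilde{\Xi}^{(d)}$ replaced by $\exp(2\pi i\eta\widetilde{\Xi}^{(d)})$ for arbitrary $\eta$; I would translate back to a bound on $\widetilde{\Xi}^{(d)}$ itself by choosing $\eta$ small enough that the map $x\mapsto e^{2\pi i\eta x}$ is bijective and bi-Lipschitz on the relevant range, using that the increments are bounded (with overwhelming probability) by $Td^{-\gamma}+\mathcal{O}(d^{\gamma})$ on $[0,T]$, so that no wrap-around of the exponential occurs.

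The main obstacle, already handled inside Lemma \ref{lemma:tightness_uniform_motion}, is to control the modulus of continuity of the increments $\widetilde{\Xi}^{(d)}_{(k+1)d^{\gamma}}-\widetilde{\Xi}^{(d)}_{kd^{\gamma}}$ on a suitable grid using only second-moment information, then to upgrade this to a uniform bound across all $\mathcal{O}(Td^{-\gamma})$ grid points by a union bound and Chebyshev's inequality; the piecewise-linear interpolation then controls behavior between grid points. The delicate input is choosing the exponent $\gamma$ in the interval $(\alpha\vee\beta,0)$ and the auxiliary exponent $\zeta\in(\gamma/2\vee(\beta-\gamma)/2,0)$ so that both the variance contribution $\sigma_m^2\eta^2$ and the drift-like contribution $(l-k)^2d^{2\gamma}$ are absorbed before the union bound blows up. With these two lemmas in hand the proof reduces to citing \cite[Theorem 7.5]{billingsley_convergence_1999} and concluding that the weak limit is the uniform motion $t\mapsto t$.
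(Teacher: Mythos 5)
Your proposal follows exactly the paper's route: the paper's proof is a two-line citation of \cite[theorem 7.5]{billingsley_convergence_1999}, with finite-dimensional convergence supplied by lemma \ref{lemma:fd_distributions_uniform_motion} and tightness by lemma \ref{lemma:tightness_uniform_motion}, which is precisely your steps (i) and (ii). Your additional remarks (L\'evy continuity for the marginals, and converting the tightness bound on $\exp\bigl(2\pi i\eta\widetilde{\Xi}^{(d)}\bigr)$ back to $\widetilde{\Xi}^{(d)}$ while avoiding wrap-around on the ring) merely fill in details the paper leaves implicit, so the argument is correct and essentially identical.
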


We now address the case of the quasi-ideal clock measured in the time eigenbasis. In this setting, we define a family (indexed by the dimension $d$) piecewise-constant continuous-time processes $\left(\widetilde{\Xi}^{(d)}_t\right)_{t \geq 0}$ by $\widetilde{\Xi}^{(d)}_t := \widetilde{\xi}^{(d)}_{\lfloor 2t\sqrt{d} \rfloor}$. We start by stating the analog of proposition \ref{prop:pseudo_correlation_functions} for a quasi-ideal clock measured in the time eigenbasis; the proof also follows analogously.

\begin{proposition}
\label{prop:pseudo_correlation_functions_time_eigenbasis}
    Consider the quasi-ideal clock measured in the time eigenbasis described in section \ref{sec:quasi_ideal_clock_time_basis}. Let $n \geq 1$, $m_1, \ldots, m_n$ integers such that $\left|\sum_{1 \leq l \leq k}m_l\right| < d$ for all $1 \leq k \leq n$ and $I_1, \ldots, I_n$ positive integers such that $2 \leq I_1 < \ldots < I_n$. Then the following holds (with the convention $I_0 := 0$):
    \begin{align}
        & \left\langle\prod_{1 \leq k \leq n}\exp\left(\frac{2\pi im_k\widetilde{\xi}^{(d)}_{I_k}}{\sqrt{d}}\right)\right\rangle = e^{\frac{2\pi i\delta}{d}\sum_{1 \leq k \leq n}m_kI_k}\prod_{1 \leq k \leq n}\left(1 - \left(1 - e^{-2\pi i\delta\sign\left(\sum_{k \leq l \leq n}m_l\right)}\right)\frac{\left|\sum_{k \leq l \leq n}m_l\right|}{d}\right)^{I_k - I_{k - 1}}
    \end{align}
\end{proposition}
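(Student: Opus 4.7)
\emph{Approach.} The plan is to exploit the Markov structure of repeated time-eigenbasis measurement established in \S\ref{sec:quasi_ideal_clock_time_basis}, combined with the diagonalization of the one-step transition matrix $M$ by the discrete Fourier basis $v_r = \frac{1}{\sqrt{d}}(e^{2\pi i r k/d})_{k\in\mathbf{Z}_d}$, with eigenvalues
\begin{equation*}
\lambda_r = e^{-\frac{2\pi i\delta r}{d}}\left(1 - \left(1 - e^{2\pi i\delta\sign r}\right)\frac{|r|}{d}\right),\qquad r\in\left\{-\tfrac{d-1}{2},\ldots,\tfrac{d-1}{2}\right\}.
\end{equation*}
A direct consequence, via $M_{l k} = \frac{1}{d}\sum_r \lambda_r e^{2\pi i r(l-k)/d}$, is the key single-step identity $\mathbf{E}[e^{2\pi i S(l_{I+J}-l_I)/d}\mid l_I] = \lambda_{-S}^{\,J}$, which crucially does \emph{not} depend on the conditioning $l_I$; implicitly the initial state is the time eigenstate $\ket{\theta_0}$ (so that $l_{I_0}:=0$), consistent with formula \eqref{eq:pseudo_correlations_time_eigenbasis}.

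\emph{Main steps.} First, I telescope the exponent by Abel summation. Setting $S_k := \sum_{l=k}^n m_l$, one has
\begin{equation*}
\sum_{k=1}^n m_k\,l_{I_k} = \sum_{k=1}^n S_k\,(l_{I_k}-l_{I_{k-1}}),
\end{equation*}
which recasts the sought pseudo-correlation as $\mathbf{E}\left[\prod_{k=1}^n e^{2\pi i S_k(l_{I_k}-l_{I_{k-1}})/d}\right]$. Next, I peel off conditional expectations from $k=n$ inward using the strong Markov property: conditional on $l_{I_1},\ldots,l_{I_{n-1}}$, the innermost factor equals $\lambda_{-S_n}^{\,I_n-I_{n-1}}$, which is deterministic and factors out of the outer expectation. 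Iterating,
\begin{equation*}
\left\langle\prod_{k=1}^n e^{\frac{2\pi i m_k\widetilde{\xi}_{I_k}}{\sqrt{d}}}\right\rangle = \prod_{k=1}^n\lambda_{-S_k}^{\,I_k-I_{k-1}}.
\end{equation*}
The hypothesis $|S_k|<d$ is used at this step to identify $-S_k\pmod{d}$ with its unique representative in $\{-\frac{d-1}{2},\ldots,\frac{d-1}{2}\}$, where the closed-form expression for $\lambda$ applies; in particular $\sign(-S_k) = -\sign(S_k)$ there, producing the factor $\left(1-(1-e^{-2\pi i\delta\sign S_k})|S_k|/d\right)^{I_k-I_{k-1}}$ appearing in the statement.

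\emph{Closing.} To finish, I regroup the phase prefactors $e^{2\pi i\delta S_k(I_k-I_{k-1})/d}$ extracted from each eigenvalue. Exchanging the order of summation,
\begin{equation*}
\sum_{k=1}^n S_k\,(I_k-I_{k-1}) = \sum_{l=1}^n m_l\sum_{k=1}^l(I_k-I_{k-1}) = \sum_{l=1}^n m_l\,I_l,
\end{equation*}
which reproduces the overall prefactor $e^{\frac{2\pi i\delta}{d}\sum_l m_l I_l}$. I do not expect a substantive obstacle: the whole argument is a direct manipulation of the Markov chain on $\mathbf{Z}_d$ already set up in the preceding section, and the only mildly delicate point is the bookkeeping of representatives modulo $d$ needed to apply the closed-form eigenvalue formula under the hypothesis $|S_k|<d$.
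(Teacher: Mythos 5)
Your route is essentially the paper's own: the paper does not spell out a proof, saying only that it ``follows analogously'' to the computation of \S\ref{sec:quasi_ideal_clock_time_basis} (the Markov transition matrix $M$, its Fourier diagonalization, and Eq.~\eqref{eq:pseudo_correlations_time_eigenbasis}), and that is exactly what you execute; the Abel-summation telescoping plus the observation that the conditional expectation of each increment factor is deterministic is just a clean probabilistic packaging of the same mechanism, and your phase bookkeeping $\sum_k S_k(I_k-I_{k-1})=\sum_l m_l I_l$ and the implicit choice of initial state $\ket{\theta_0}$ are correct.

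One step, however, is not justified as written. You invoke the hypothesis to ``identify $-S_k \pmod d$ with its unique representative in $\{-\tfrac{d-1}{2},\ldots,\tfrac{d-1}{2}\}$ \ldots in particular $\sign(-S_k)=-\sign(S_k)$ there.'' That is only true when $|S_k|\leq\tfrac{d-1}{2}$. In the remaining range allowed by the proposition, $\tfrac{d-1}{2}<|S_k|<d$, the representative is $d\,\sign(S_k)-S_k$, whose sign is $+\sign(S_k)$ and whose magnitude is $d-|S_k|$, so plugging it into the closed-form eigenvalue does not directly yield the stated factor. The conclusion still holds, but it requires the extra one-line identity (for $0<S<d$) $e^{-\frac{2\pi i\delta(d-S)}{d}}\left(1-(1-e^{2\pi i\delta})\tfrac{d-S}{d}\right)=e^{\frac{2\pi i\delta S}{d}}\left(1-(1-e^{-2\pi i\delta})\tfrac{S}{d}\right)$, together with its mirror for negative $S$, to absorb the wrap-around into the phase; with that line added your argument covers the full claimed range. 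Note also that you silently read the hypothesis as bounding the tail sums $S_k=\sum_{l=k}^{n}m_l$ rather than the head sums written in the statement—this is indeed what the proof needs and matches the hypothesis of proposition \ref{prop:pseudo_correlation_functions}, so it is the sensible reading, but it is worth flagging.
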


We now turn to prove the tightness condition required for the application of \cite[theorem 13.5]{billingsley_convergence_1999}.

\begin{lemma}
\label{lemma:tightness_measurement_time_eigenbasis}
Let $\left(\widetilde{\Xi}^{(d)}_t\right)_{t \geq 0}$ be defined as stated above. Fix $T > 0$ and $m \geq 1$. Then for all $r, s, t$ satisfying $0 \leq r < s < t \leq T$, all $\eta \in 2^{-m}\mathbf{Z}$ and all integer $p \geq m$.
\begin{align}
    \mathbf{E}\left[\left|\exp\left(2\pi i
    \widetilde{\Xi}^{(4^p)}_t\right) - \exp\left(2\pi i\widetilde{\Xi}^{(4^p)}_s\right)\right|^2\left|\exp\left(2\pi i\widetilde{\Xi}^{(4^p)}_s\right) - \exp\left(2\pi i\widetilde{\Xi}^{(4^p)}_r\right)\right|^2\right] & = \mathcal{O}\left(\eta^2(t - r)^2\right)
\end{align}
where the $\mathcal{O}$ is independent of $p$ and $r, s, t \in [0, T]$.
\begin{proof}
Expanding the argument of the expectation gives:
\begin{align*}
    & \left|\exp\left(2\pi i\eta\widetilde{\Xi}^{(4^p)}_t\right) - \exp\left(2\pi i\eta\widetilde{\Xi}^{(4^p)}_s\right)\right|^2\left|\exp\left(2\pi i\eta\widetilde{\Xi}^{(4^p)}_s\right) - \exp\left(2\pi i\eta\widetilde{\Xi}^{(4^p)}_r\right)\right|^2\\
    & = \left[2 - \exp\left(2\pi i\eta\left(\widetilde{\Xi}^{(4^p)}_t - \widetilde{\Xi}^{(4^p)}_s\right)\right) - \exp\left(2\pi i\eta\left(\widetilde{\Xi}^{(d)}_s - \widetilde{\Xi}^{(4^p)}_t\right)\right)\right]\\
    & \hspace{0.05\textwidth} \times \left[2 - \exp\left(2\pi i\eta\left(\widetilde{\Xi}^{(4^p)}_s - \widetilde{\Xi}^{(4^p)}_r\right)\right) - \exp\left(2\pi i\eta\left(\widetilde{\Xi}^{(4^p)}_r - \widetilde{\Xi}^{(4^p)}_s\right)\right)\right]\\
    & = 4 - 2\left[\exp\left(2\pi i\eta\left(\widetilde{\Xi}^{(4^p)}_t - \widetilde{\Xi}^{(4^p)}_s\right)\right) + \exp\left(2\pi i\eta\left(\widetilde{\Xi}^{(4^p)}_s - \widetilde{\Xi}^{(4^p)}_t\right)\right) + \exp\left(2\pi i\eta\left(\widetilde{\Xi}^{(4^p)}_s - \widetilde{\Xi}^{(4^p)}_r\right)\right) + \exp\left(2\pi i\eta\left(\widetilde{\Xi}^{(4^p)}_r - \widetilde{\Xi}^{(4^p)}_s\right)\right)\right]\\
    & \hspace{0.05\textwidth} + \exp\left(2\pi i\eta\left(\widetilde{\Xi}^{(4^p)}_t - \widetilde{\Xi}^{(4^p)}_r\right)\right) + \exp\left(2\pi i\eta\eta\left(\widetilde{\Xi}^{(4^p)}_r - \widetilde{\Xi}^{(4^p)}_t\right)\right)\\
    & \hspace{0.05\textwidth} + \exp\left(2\pi i\eta\left(\widetilde{\Xi}^{(4^p)}_t + \widetilde{\Xi}^{(4^p)}_r - 2\widetilde{\Xi}^{(4^p)}_s\right)\right) + \exp\left(2\pi i\eta\left(2\widetilde{\Xi}^{(4^p)}_s - \widetilde{\Xi}^{(4^p)}_t - \widetilde{\Xi}^{(4^p)}_r\right)\right)
\end{align*}
Now, recalling $2^{p}\eta = 2^{p - m}2^m\eta \in \mathbf{Z}$, the expectation of each terms may be computed using proposition \ref{prop:pseudo_correlation_functions_time_eigenbasis} (applied with $d := 4^p$). For example:
\begin{align*}
    \exp\left(2\pi i\eta\left(\widetilde{\Xi}^{(4^p)}_t - \widetilde{\Xi}^{(4^p)}_s\right)\right) & = \exp\left(\frac{2\pi i\eta 2^p}{\sqrt{4^p}}\left(\widetilde{\xi}^{(4^p)}_{\lfloor 2t\sqrt{4^p} \rfloor} - \widetilde{\xi}^{(4^p)}_{\lfloor 2s\sqrt{4^p}\rfloor}\right)\right)\\
    & = e^{\frac{i\pi\eta}{2^p}\left(\lfloor 2^{p + 1}t \rfloor - \lfloor 2^{p + 1}s \rfloor\right)}\left(1 - \frac{\eta}{2^{p - 1}}\right)^{\lfloor 2^{p + 1}t \rfloor - \lfloor 2^{p + 1}s \rfloor}
\end{align*}
The complete result is:
\begin{align*}
    & \mathbf{E}\left[\left|\exp\left(2\pi i
    \widetilde{\Xi}^{(4^p)}_t\right) - \exp\left(2\pi i\widetilde{\Xi}^{(4^p)}_s\right)\right|^2\left|\exp\left(2\pi i\widetilde{\Xi}^{(4^p)}_s\right) - \exp\left(2\pi i\widetilde{\Xi}^{(4^p)}_r\right)\right|^2\right]\\
    & = 4 - 4\cos\left(\frac{\pi\eta}{2^p}\left(\lfloor 2^{p + 1}t \rfloor - \lfloor 2^{p + 1}s \rfloor\right)\right)\left(1 - \frac{\eta}{2^{p - 1}}\right)^{\lfloor 2^{p + 1}t \rfloor - \lfloor 2^{p + 1}s \rfloor} - 4\cos\left(\frac{\pi\eta}{2^p}\left(\lfloor 2^{p + 1}s \rfloor - \lfloor 2^{p + 1}r \rfloor\right)\right)\left(1 - \frac{\eta}{2^{p - 1}}\right)^{\lfloor 2^{p + 1}s \rfloor - \lfloor 2^{p + 1}r \rfloor}\\
    & \hspace{0.05\textwidth} + 2\cos\left(\frac{\pi\eta}{2^p}\left(\lfloor 2^{p + 1}t \rfloor - \lfloor 2^{p + 1}r \rfloor\right)\right)\left(1 - \frac{\eta}{2^{p - 1}}\right)^{\lfloor 2^{p + 1}t \rfloor - \lfloor 2^{p + 1}r \rfloor}\\
    & \hspace{0.05\textwidth} + 2\cos\left(\frac{\pi\eta}{2^p}\left(\lfloor 2^{p + 1}t \rfloor + \lfloor 2^{p + 1}r \rfloor - 2\lfloor 2^{p + 1}s \rfloor\right)\right)\left(1 - \frac{\eta}{2^{p - 1}}\right)^{\lfloor 2^{p + 1}t \rfloor - \lfloor 2^{p + 1}r \rfloor}\\
    & = \mathcal{O}\left(\eta^2(t - r)^2\right) + 4 - 4\left(1 - \frac{\eta}{2^{p - 1}}\right)^{\lfloor 2^{p + 1} t \rfloor - \lfloor 2^{p + 1}s \rfloor} - 4\left(1 - \frac{\eta}{2^{p - 1}}\right)^{\lfloor 2^{p + 1}s \rfloor - \lfloor 2^{p + 1}r + 2 \rfloor} + 4\left(1 - \frac{\eta}{2^{p - 1}}\right)^{\lfloor 2^{p + 1}t\rfloor - \lfloor 2^{p + 1}r \rfloor}\\
    & = \mathcal{O}\left(\eta^2(t - r)^2\right) + 4\left[1 - \left(1 - \frac{\eta}{2^{p - 1}}\right)^{\lfloor 2^{p + 1}t \rfloor - \lfloor 2^{p + 1}s \rfloor}\right]\left[1 - \left(1 - \frac{\eta}{2^{p - 1}}\right)^{\lfloor 2^{p + 1}s \rfloor - \lfloor 2^{p + 1}r \rfloor}\right]\\
    & = \mathcal{O}\left(\eta^2(t - r)^2\right).
\end{align*}
\end{proof}
\end{lemma}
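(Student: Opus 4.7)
The plan is to expand the four-fold product of differences into a linear combination of at most nine two-time pseudo-correlation functions, evaluate each of them with Proposition \ref{prop:pseudo_correlation_functions_time_eigenbasis}, and then exhibit the cancellation which produces the $\mathcal{O}(\eta^2(t-r)^2)$ bound. Concretely, I would first write
\begin{align*}
    \bigl|e^{2\pi i\eta\widetilde{\Xi}^{(4^p)}_t} - e^{2\pi i\eta\widetilde{\Xi}^{(4^p)}_s}\bigr|^2 & = 2 - e^{2\pi i\eta(\widetilde{\Xi}^{(4^p)}_t - \widetilde{\Xi}^{(4^p)}_s)} - e^{-2\pi i\eta(\widetilde{\Xi}^{(4^p)}_t - \widetilde{\Xi}^{(4^p)}_s)},
\end{align*}
do the same for the $(s,r)$ factor, and expand the product into terms of the form $e^{2\pi i\eta(\epsilon_1(\widetilde{\Xi}_t - \widetilde{\Xi}_s) + \epsilon_2(\widetilde{\Xi}_s - \widetilde{\Xi}_r))}$ with $\epsilon_1,\epsilon_2\in\{-1,0,+1\}$.

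The hypothesis $\eta\in 2^{-m}\mathbf{Z}$ with $p\geq m$ is used precisely here: it ensures $\eta\, 2^p\in\mathbf{Z}$, so each term has the shape required by Proposition \ref{prop:pseudo_correlation_functions_time_eigenbasis}, applied with integer labels $m_k \in \{-\eta 2^p,0,+\eta 2^p\}$ at indices $I_k \in \{\lfloor 2^{p+1}r\rfloor,\lfloor 2^{p+1}s\rfloor,\lfloor 2^{p+1}t\rfloor\}$ and increment $\delta=\tfrac12$. Applying the proposition gives each expectation explicitly as the product of a phase $\exp\bigl(i\pi\eta\cdot(\text{gap})/2^p\bigr)$ with a damping factor $(1-\eta/2^{p-1})^{\text{gap}}$, where ``gap'' is the corresponding difference of $\lfloor 2^{p+1}\cdot\rfloor$ values. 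This mirrors the simpler two-time computation already displayed in the body of the proof.

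Next I would collect the nine resulting terms. The four phases sum to cosines, while the damping factors factorize across the disjoint windows $[r,s]$ and $[s,t]$ (a manifestation of the translation invariance of the measured Markov chain, which has independent increments). After collection, setting $N_{ts} := \lfloor 2^{p+1}t\rfloor-\lfloor 2^{p+1}s\rfloor$ and $N_{sr} := \lfloor 2^{p+1}s\rfloor-\lfloor 2^{p+1}r\rfloor$, the expectation should reduce to
\begin{align*}
    4\bigl[1-(1-\eta/2^{p-1})^{N_{ts}}\bigr]\bigl[1-(1-\eta/2^{p-1})^{N_{sr}}\bigr] \;+\; \mathcal{O}\bigl(\eta^2(t-r)^2\bigr),
\end{align*}
where the remainder gathers all terms of the form $\bigl(1-\cos(\pi\eta N/2^p)\bigr)\cdot(1-\eta/2^{p-1})^{N}$ and is controlled by $1-\cos(\alpha)\leq \alpha^2/2$ with $\alpha \lesssim 2\pi\eta(t-r)$. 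The factored main term is then bounded using $1-(1-x)^{N}\leq Nx$ on $x\in[0,1]$: each bracket is at most $N\cdot\eta/2^{p-1} = \mathcal{O}(\eta(t-s))$ and $\mathcal{O}(\eta(s-r))$ respectively, whose product is $\mathcal{O}(\eta^2(t-s)(s-r))\leq \mathcal{O}(\eta^2(t-r)^2)$.

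The main obstacle is bookkeeping and uniformity in $p$: one must (i) carry out the cancellation of the nine terms cleanly enough to expose the factorized form above, and (ii) verify that the implicit constants in $\mathcal{O}(\eta^2(t-r)^2)$ do not blow up as $p\to\infty$. Uniformity is delicate because individual exponents $N$ can be as large as $2^{p+1}T$, so Taylor expansion in $\eta/2^{p-1}$ must be used only at first order before resumming into the bounded quantity $N\cdot\eta/2^{p-1}\lesssim \eta T$; anything higher order introduces errors that are genuinely $\mathcal{O}(\eta^2T^2)$ and therefore absorbable. Once this is controlled, the lemma follows directly.
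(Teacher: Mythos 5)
Your proposal is correct and follows essentially the same route as the paper: expand the product into the nine pseudo-correlation terms, evaluate each with Proposition \ref{prop:pseudo_correlation_functions_time_eigenbasis} (using $\eta 2^p\in\mathbf{Z}$ so the labels are integers), absorb the cosine deviations into $\mathcal{O}(\eta^2(t-r)^2)$, and factor the remaining damping terms as $4\bigl[1-(1-\eta/2^{p-1})^{N_{ts}}\bigr]\bigl[1-(1-\eta/2^{p-1})^{N_{sr}}\bigr]$, bounded via $1-(1-x)^N\leq Nx$. The only difference is that you spell out the final elementary bounds slightly more explicitly than the paper does; no gap.
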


As for the convergence of the finite-dimensional characteristic functions, they follow straightforwardly from equation \ref{eq:pseudo_correlations_time_eigenbasis}:
\begin{lemma}
\label{lemma:fd_distributions_measuremtime_eigenbasis}
Let $n \geq 1$, $\theta_1, \ldots, \theta_n \in \mathbf{R}$ and $t_1, \ldots, t_n \in \mathbf{R}$. Then the following holds (with the convention $t_0 := 0$):
\begin{align}
    \left\langle \prod_{1 \leq k \leq n}\exp\left(2\pi i\theta_k\widetilde{\Xi}^{(d)}_{t_k}\right)\right\rangle & = e^{2\pi i\sum_{1 \leq k \leq n}\theta_kt_k}e^{-2\sum_{1 \leq k \leq n}\left|\sum_{k \leq l \leq n}\theta_l\right|(t_k - t_{k - 1})}\,.
\end{align}
\end{lemma}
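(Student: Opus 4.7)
\emph{Proof plan.} The strategy is to apply Proposition~\ref{prop:pseudo_correlation_functions_time_eigenbasis} with integer parameters chosen so that its characters approximate those in the statement, and then pass to the limit $d\to\infty$. Concretely, I would take $\delta=1/2$ (so that the $j$-th measurement occurs at continuous time $j/(2\sqrt d)$, consistently with the definition $\widetilde{\Xi}^{(d)}_t:=\widetilde{\xi}^{(d)}_{\lfloor 2t\sqrt d\rfloor}$), $I_k:=\lfloor 2t_k\sqrt d\rfloor$, and $m_k:=\lfloor \theta_k\sqrt d\rfloor$. Assuming $0<t_1<\ldots<t_n$, for $d$ large enough the $I_k$ are strictly increasing with $I_1\geq 2$ and the hypothesis $|\sum_{l\geq k} m_l|<d$ is satisfied, so the proposition applies.

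With these choices the analysis splits into two parts. First, the deterministic right-hand side of the proposition has a simple asymptotic form: the phase $\exp(\frac{2\pi i\delta}{d}\sum_k m_k I_k)$ converges to $\exp(2\pi i\sum_k\theta_k t_k)$ by direct substitution, and each factor of the product reduces (since $\delta=1/2$ gives $1-e^{-i\pi\sign(\cdot)}=2$) to $(1-2|\sum_{l\geq k}m_l|/d)^{I_k-I_{k-1}}$. Taking logarithms and using $\log(1-x)=-x+O(x^2)$ together with $|\sum_{l\geq k}m_l|/\sqrt d\to|\sum_{l\geq k}\theta_l|$ and $(I_k-I_{k-1})/\sqrt d\to 2(t_k-t_{k-1})$ produces the stated exponential form, with the precise coefficient coming out of this straightforward expansion.

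Second, and this is where the real work lies, one must justify replacing the character $\exp(\frac{2\pi im_k}{\sqrt d}\widetilde{\xi}^{(d)}_{I_k})$ required by the proposition with $\exp(2\pi i\theta_k\widetilde{\Xi}^{(d)}_{t_k})$ appearing in the statement. Writing $\theta_k\sqrt d=m_k+\alpha_k$ with $\alpha_k\in[0,1)$ and $l_{I_k}:=\sqrt d\,\widetilde{\xi}^{(d)}_{I_k}\in\mathbf{Z}_d$, the two characters differ by a residual phase $\exp(\frac{2\pi i\alpha_k l_{I_k}}{d})$. Since $|l_{I_k}|$ can be as large as $(d-1)/2$, this phase is uncontrolled in the worst case, so a purely pointwise reduction is impossible; one must exploit that typically $|l_{I_k}|=O_P(\sqrt d)$, which makes the residual phase $O_P(1/\sqrt d)\to 0$. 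This tightness is consistent with the convergence to a drifted Cauchy process hinted at in the main text, and I would establish it independently by invoking the proposition with $n=1$ and a small integer $m$: the resulting characteristic function of $\widetilde{\Xi}^{(d)}_{t_k}$ converges pointwise to a continuous limit, and L\'evy's continuity theorem (or a direct Chebyshev-type truncation exploiting $1-\cos(x)\leq x^2/2$) yields the required uniform tightness. Given tightness, a dominated-convergence argument pushes the limit inside the expectation and closes the proof.
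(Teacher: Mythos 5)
Your overall route coincides with the paper's: the paper proves this lemma by a one-line appeal to the exact time-eigenbasis correlation formula (equation \eqref{eq:pseudo_correlations_time_eigenbasis}, equivalently Proposition \ref{prop:pseudo_correlation_functions_time_eigenbasis}), i.e.\ substitute $m_k\sim\theta_k\sqrt{d}$, $I_k=\lfloor 2t_k\sqrt{d}\rfloor$, $\delta=\tfrac12$ and expand. Your second step---justifying the replacement of $\lfloor\theta_k\sqrt{d}\rfloor$ by $\theta_k\sqrt{d}$---addresses a point the paper silently sidesteps (elsewhere it restricts to $\eta\in 2^{-m}\mathbf{Z}$ and $d=4^p$ precisely so that $\eta\sqrt{d}$ is an integer, cf.\ Lemma \ref{lemma:tightness_measurement_time_eigenbasis} and Theorem \ref{th:convergence_measured_quasi_ideal_clock_cauchy_process}), and that is a genuine improvement. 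However, the mechanism you propose for it does not work as stated: applying the proposition with a fixed small integer $m$ only evaluates the characteristic function of $\widetilde{\Xi}^{(d)}_{t_k}$ at the argument $2\pi m/\sqrt{d}\to 0$, whose limit is trivially $1$, so L\'evy's continuity theorem gives nothing; convergence at a fixed argument is exactly what you are trying to establish, so the argument is circular. Fortunately far less than tightness is needed: writing $l_{I_k}=\sqrt{d}\,\widetilde{\Xi}^{(d)}_{t_k}\in\mathbf{Z}_d$, the residual phase is $\exp(2\pi i\alpha_k l_{I_k}/d)$, so it suffices that $l_{I_k}=o_P(d)$, and this follows from your fallback idea made precise: for fixed $\varepsilon>0$, $\mathbf{P}\left[|l_{I_k}|\geq\varepsilon d\right]\leq\left(1-\cos(2\pi\varepsilon)\right)^{-1}\left(1-\mathrm{Re}\left\langle e^{2\pi i l_{I_k}/d}\right\rangle\right)$, and the $m=1$ instance of the exact formula shows the numerator tends to $0$. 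Boundedness of the integrand plus convergence in probability then closes the limit interchange, as you say.

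There is a second, unaddressed problem: you assert that ``the precise coefficient'' drops out of the straightforward expansion, but you never check it, and under the conventions you (and the paper) fix it does not reproduce the stated constant. With $\delta=\tfrac12$ the per-step factor is $1-2\left|\sum_{l\geq k}m_l\right|/d=1-2\left|\sum_{l\geq k}\theta_l\right|/\sqrt{d}\,(1+o(1))$, and there are $I_k-I_{k-1}\approx 2(t_k-t_{k-1})\sqrt{d}$ steps, which yields $e^{-4\sum_k\left|\sum_{l\geq k}\theta_l\right|(t_k-t_{k-1})}$ rather than the claimed $e^{-2\sum_k\cdots}$; the same bookkeeping appears in the proof of Lemma \ref{lemma:tightness_measurement_time_eigenbasis}, where $(1-\eta/2^{p-1})$ is raised to $\approx 2^{p+1}(t-s)$. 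So either the lemma's constant is an internal inconsistency of the paper or a different sampling/$\delta$ convention is intended; a complete proof must confront this rather than wave it through. Relatedly, the statement is written as an exact identity at finite $d$, while your argument (like the paper's) can only deliver the $d\to\infty$ limit, since the floor corrections and the difference between $(1-x)^N$ and $e^{-Nx}$ are nonzero at finite $d$; you should state and prove the result as a limit.
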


\begin{theorem}
\label{th:convergence_measured_quasi_ideal_clock_cauchy_process}
As $p \to \infty$, $\left(\widetilde{\Xi}^{(4^p)}_t\right)_{t \geq 0}$ converges weakly to a Cauchy process with drift.
\begin{proof}
    This follows from \cite[theorem 13.5]{billingsley_convergence_1999}, using the previous two lemmas.
\end{proof}
\end{theorem}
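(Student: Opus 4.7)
The plan is to invoke \cite[theorem 13.5]{billingsley_convergence_1999}, whose two hypotheses are exactly the content of Lemmas \ref{lemma:tightness_measurement_time_eigenbasis} and \ref{lemma:fd_distributions_measuremtime_eigenbasis}. The former supplies the required moment-type tightness bound on the Skorokhod space $D[0,T]$ for every $T > 0$, while the latter gives convergence of the finite-dimensional characteristic functions. Since the sequence $\bigl(\widetilde{\Xi}^{(4^p)}_t\bigr)_{t\geq 0}$ consists of piecewise-constant processes with jumps, it is natural to work in $D[0,\infty)$; by standard arguments it suffices to establish convergence on $D[0,T]$ for each finite $T$ and then extend to $D[0,\infty)$ in the usual way.

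The substantive step is to identify the limit. From Lemma \ref{lemma:fd_distributions_measuremtime_eigenbasis}, the limiting finite-dimensional characteristic function reads
\begin{align*}
\varphi(\theta_1,\ldots,\theta_n) & = \exp\!\Bigl(2\pi i\sum_{1\leq k\leq n}\theta_k t_k\Bigr)\exp\!\Bigl(-2\sum_{1\leq k\leq n}\Bigl|\sum_{k\leq l\leq n}\theta_l\Bigr|(t_k-t_{k-1})\Bigr).
\end{align*}
I would reindex the exponent in the second factor: setting $\eta_k := \theta_k+\theta_{k+1}+\ldots+\theta_n$ (so that $\theta_k = \eta_k-\eta_{k+1}$ with $\eta_{n+1}:=0$), the drift factor becomes $\exp\bigl(2\pi i\sum_k \eta_k(t_k-t_{k-1})\bigr)$. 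This shows that $\varphi$ factorizes over disjoint increments and that, for a single increment of length $\Delta t := t_k-t_{k-1}$ weighted by $\eta_k$, the characteristic function is $\exp(2\pi i \eta_k\Delta t - 2|\eta_k|\Delta t)$. This is precisely the characteristic function of $\Delta t$ plus a Cauchy random variable of scale $\Delta t/\pi$. Hence the finite-dimensional laws coincide with those of $X_t := t + C_t$, where $(C_t)_{t\geq 0}$ is a standard Cauchy process (scale $1/\pi$ per unit time), i.e.\ a Cauchy process with unit drift.

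I would then conclude by verifying that \cite[theorem 13.5]{billingsley_convergence_1999} applies. The finite-dimensional convergence from Lemma \ref{lemma:fd_distributions_measuremtime_eigenbasis} holds provided the limiting marginals have no fixed discontinuities, which is immediate for a Cauchy process with drift (the law of $X_t$ is absolutely continuous for every $t>0$). The moment bound of Lemma \ref{lemma:tightness_measurement_time_eigenbasis} is of the form $\mathbf{E}[|F_t-F_s|^2|F_s-F_r|^2] = \mathcal{O}((t-r)^2)$ with exponent $2\beta = 2 > 1$ and multiplier $(t-r)^{2\alpha}$ with $\alpha = 1$, which meets Billingsley's criterion for tightness of the bounded functionals $F_t := \exp(2\pi i\eta\widetilde{\Xi}_t^{(4^p)})$ for each dyadic $\eta$; by a density argument (the dyadic $\eta$'s suffice to separate measures on the circle, and one passes from circle-valued to real-valued processes by showing that with high probability $\widetilde{\Xi}_t^{(4^p)}$ stays in a bounded interval on $[0,T]$, using the Cauchy-like tail bound inherited from the finite-$p$ characteristic function), the tightness extends to the processes themselves.

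The main obstacle I foresee is bridging between the circle-valued quantities $\exp(2\pi i\eta\widetilde{\Xi}_t^{(4^p)})$---which is what Lemma \ref{lemma:tightness_measurement_time_eigenbasis} controls---and the real-valued Skorokhod-space convergence required to name the limit as ``Cauchy process with drift.'' One resolves this by noting that on any finite interval $[0,T]$ the increments $\widetilde{\Xi}_t^{(4^p)}-\widetilde{\Xi}_s^{(4^p)}$ are, with probability tending to one as $p\to\infty$, confined to a compact set growing only logarithmically in $p$ (from the Cauchy-type characteristic function one extracts tail bounds via a truncation argument), so that the circle-valued tightness lifts to real-valued tightness. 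Once this is done, the combination of tightness and finite-dimensional convergence yields the stated weak convergence, and the explicit form of $\varphi$ identifies the limit as a Cauchy process with unit drift.
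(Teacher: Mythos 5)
Your proposal is correct and follows essentially the same route as the paper: the paper's proof is exactly the one-line invocation of \cite[theorem 13.5]{billingsley_convergence_1999} together with Lemmas \ref{lemma:tightness_measurement_time_eigenbasis} and \ref{lemma:fd_distributions_measuremtime_eigenbasis}. Your additional work---reindexing the characteristic function via $\eta_k=\sum_{k\leq l\leq n}\theta_l$ to identify the limit as a unit-drift Cauchy process, and flagging the passage from tightness of the bounded functionals $\exp(2\pi i\eta\widetilde{\Xi}^{(4^p)}_t)$ to the real-valued process---only fills in details the paper leaves implicit.
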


\section{Some results from calculus}
\label{sec:calculus}

In this appendix, we review some results from calculus that are used in section \ref{sec:quasi_ideal_clock}. Most notably, is the review of some important properties of the Jacobi $\theta$ functions.

\subsection{Non-asymptotic Stirling's formula}
Stirling's formula allows to approximate the factorial (or more generally the $\Gamma$ function) by more elementary functions. In its best-known form, it states:
\begin{align}
    n! & \sim \left(\frac{n}{e}\right)^n\sqrt{2\pi n}\,, \qquad \textrm{ as }n \to \infty\,.
\end{align}
There exist many other versions of the results. Some of these are asymptotic expansions with a variable number of terms while others give explicit lower and upper bound for any finite $n$. One result of the latter kind, proved in \cite{robbins_1955}, will be of particular use:
\begin{theorem}
\label{th:stirling_bounds}
    For all integer $n \geq 1$,
    \begin{equation}
        \left(\frac{n}{e}\right)^n\sqrt{2\pi n}e^{\frac{1}{12n + 1}} \leq n! \leq \left(\frac{n}{e}\right)^n\sqrt{2\pi n}e^{\frac{1}{12n}}\,.
    \end{equation}
\end{theorem}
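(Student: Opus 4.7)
The plan is to follow Robbins' classical argument based on a monotonicity analysis of the sequence
\[
a_n \;\defeq\; \log(n!) - \bigl(n+\tfrac{1}{2}\bigr)\log n + n,
\]
whose limit is $\log\sqrt{2\pi}$ (e.g.\ by Wallis's product, which I would invoke as a standard fact). The two-sided bound in the statement is equivalent to
\[
\frac{1}{12n+1} \;\leq\; a_n - \log\sqrt{2\pi} \;\leq\; \frac{1}{12n},
\]
and to obtain this it is enough to show that $b_n \defeq a_n - \tfrac{1}{12n}$ is monotonically increasing while $c_n \defeq a_n - \tfrac{1}{12n+1}$ is monotonically decreasing; since both converge to $\log\sqrt{2\pi}$, one sandwiches $a_n - \log\sqrt{2\pi}$ between $\tfrac{1}{12n+1}$ and $\tfrac{1}{12n}$.

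The core calculation is the explicit series for the consecutive difference. Using the identity $\log\frac{n+1}{n} = \log\frac{1+1/(2n+1)}{1-1/(2n+1)} = 2\sum_{k\geq 0}\frac{1}{(2k+1)(2n+1)^{2k+1}}$, one computes
\[
a_n - a_{n+1} \;=\; \bigl(n+\tfrac{1}{2}\bigr)\log\tfrac{n+1}{n} - 1 \;=\; \sum_{k\geq 1}\frac{1}{(2k+1)(2n+1)^{2k}}.
\]
Every step thereafter is a comparison of this series against elementary geometric series, which I will carry out next.

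For the upper bound on $a_n-a_{n+1}$, bounding each prefactor $\tfrac{1}{2k+1}$ by $\tfrac{1}{3}$ yields the geometric sum $\tfrac{1}{3}\cdot\frac{1}{(2n+1)^2-1}=\frac{1}{12n(n+1)}=\tfrac{1}{12n}-\tfrac{1}{12(n+1)}$, giving monotonicity of $b_n$. For the lower bound I will use that $\tfrac{1}{2k+1}\geq \tfrac{1}{3}\cdot 3^{-(k-1)}$ for $k\geq 1$ (since $2k+1\leq 3^k$), whence
\[
a_n-a_{n+1} \;>\; \sum_{k\geq 1}\frac{1}{3^k(2n+1)^{2k}} \;=\; \frac{1}{3(2n+1)^2 - 1} \;=\; \frac{1}{12n^2+12n+2}.
\]
The remaining task is the purely algebraic check that $\frac{1}{12n^2+12n+2}\geq \frac{1}{12n+1}-\frac{1}{12(n+1)+1}=\frac{12}{(12n+1)(12n+13)}$, which after cross-multiplication reduces to $24n\geq 11$ and thus holds for every $n\geq 1$; this gives the monotonicity of $c_n$.

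The only delicate point in the plan is the lower bound, and in particular identifying a geometric minorant of the series $\sum_{k\geq 1}\frac{1}{(2k+1)(2n+1)^{2k}}$ strong enough to beat the telescoping difference $\frac{1}{12n+1}-\frac{1}{12(n+1)+1}$; the choice $\frac{1}{3}\cdot 3^{-(k-1)}$ is the key observation and reduces everything to an elementary quadratic inequality. Once both monotonicities are in place, taking $n\to\infty$ in $b_n\leq \log\sqrt{2\pi}\leq c_n$ and exponentiating yields the stated two-sided bound for all $n\geq 1$.
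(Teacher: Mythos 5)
Your proof is correct; the paper itself gives no proof of this theorem but simply cites Robbins (1955), and your argument is essentially Robbins' original one: establish $a_n - a_{n+1} = \sum_{k\ge 1}\frac{1}{(2k+1)(2n+1)^{2k}}$, deduce that $a_n - \tfrac{1}{12n}$ increases and $a_n - \tfrac{1}{12n+1}$ decreases, and identify the common limit $\log\sqrt{2\pi}$ via Wallis. The only deviation is cosmetic: Robbins lower-bounds the series by its first term $\frac{1}{3(2n+1)^2}$ (the telescoping check then reduces to $24n\ge 23$), whereas you sum the geometric minorant $3^{-k}(2n+1)^{-2k}$ to get the slightly stronger bound $\frac{1}{12n^2+12n+2}$ (reducing to $24n\ge 11$); both verifications are valid for all $n\ge 1$.
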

Actually, by convexity of $\mathbf{R}_+ \ni x \longmapsto \log(x!)$ (where $x!$ denotes $\Gamma(x + 1)$), $\mathbf{R}_+ \ni x \longmapsto \log\left(x^{x + 1/2}e^{-x + \frac{1}{12x + 1}}\right)$ and $\mathbf{R}_+ \ni x \longmapsto \log\left(x^{x + 1/2}e^{-x + \frac{1}{12x}}\right)$, this inequality holds even if $n$ is taken to be any positive real.

\subsection{Jacobi $\theta$ functions}
\label{sec:jacobi_theta_functions}

In this section, we review several useful properties of the $\theta$ functions which are of special importance in the study of the quasi-ideal clock. It will be sufficient to focus on the $\theta_3$ function.

The Jacobi $\theta_3$ function is defined for all $z \in \mathbf{C}$, $\tau \in \mathbf{H} = \{ it; t > 0 \}$\footnote{$\mathbf{H}$, the half-plane of complex numbers with positive imaginary part, is usually referred as the \textit{Poincar\'e half-plane}.}:
\begin{align}
\label{eq:theta3_definition}
    \theta_3(z, \tau) & := \sum_{m \in \mathbf{Z}}\exp\left(i\pi\tau m^2 + 2\pi imz\right)\,.
\end{align}
This function enjoys a nice quasi-periodicity property:
\begin{align}
\label{eq:theta3_quasi_periodicity}
    \theta_3(z + a + b\tau, \tau) & = \exp\left(-\pi ib^2\tau - 2\pi ibz\right)\theta_3(z, \tau)\,.
\end{align}
It is also clearly even in $z$:
\begin{align}
    \theta_3(z, \tau) & = \theta_3(-z, \tau)\,.
\end{align}
Using Poisson's summation formula, one can show the following useful transformation property:
\begin{align}
\label{eq:theta3_modular_transformation}
    \theta_3\left(z, -\frac{1}{\tau}\right) & = (-i\tau)^{1/2}\exp\left(i\pi z^2\tau\right)\theta_3(\tau z, \tau)\,.
\end{align}
Applying this identity to transform $\theta_3(z, i\sigma)$ ($\sigma > 0$), one obtains:
\begin{align}
    \theta_3(z, i\sigma) & = \frac{1}{\sigma^{1/2}}\sum_{m \in \mathbf{Z}}\exp\left(-\frac{\pi}{\sigma}(z + m)^2\right)\,.
\end{align}
Therefore, $\theta_3(z, i\sigma)$ can be interpreted as a Gaussian of square width $\frac{\sigma}{2\pi}$, periodized with period $1$. From this representation, combined with proposition \ref{prop:bound_tail_sum_gaussians}, follows the estimate for all $z \in (-1, 1)$:
\begin{align}
\label{eq:bound_tail_theta}
    \frac{1}{\sigma^{1/2}}e^{-\frac{\pi}{\sigma}z^2} \leq |\theta_3(z, i\sigma)| \leq \frac{1}{\sigma^{1/2}}\left(e^{-\frac{\pi}{\sigma}z^2} + \frac{e^{-\frac{\pi}{\sigma}(1 + z)^2}}{1 - e^{-\frac{2\pi}{\sigma}(1 + z)}} + \frac{e^{-\frac{\pi}{\sigma}(1 - z)^2}}{1 - e^{-\frac{2\pi}{\sigma}(1 - z)}}\right)\,.
\end{align}
Finally, another important expression for the $\theta_3$ function is given by the Jacobi triple product identity (see e.g \cite[page 49]{saltzer_1963} and \cite{andrews_1965} for a simple derivation):
\begin{align}
\label{eq:jacobi_triple_product}
    \theta_3\left(z, \tau\right) & = \prod_{n \geq 1}\left(1 - e^{2\pi i\tau}\right)\left(1 + 2\cos(2\pi z)e^{(2n - 1)i\tau} + e^{(4n - 2)i\tau}\right)\,.
\end{align}

The $\theta$ functions play for the discrete Fourier transform a similar role to the Gaussians for the continuous Fourier transform. Roughly speaking, the discrete Fourier transform of $\theta_3$ function of a given width is a $\theta_3$ of ``inversed'' width. More precisely, given a positive integer $N$ and $\xi > 0$, the following relations hold \cite{ruzzi_2006}:
\begin{align}
    \label{eq:theta3_inv_dft}
    \theta_3\left(z + \frac{k}{N}, \frac{i\xi^2}{N}\right) & = \frac{1}{\sqrt{N\xi^2}}\sum_{0 \leq j < N}\theta_3\left(\frac{iz}{\xi^2} - \frac{j}{N}, \frac{i}{N\xi^2}\right)\exp\left(-\frac{\pi N}{\xi^2}z^2 + \frac{2\pi ijk}{N}\right)\,,\\
    \label{eq:theta3_dft}
    \theta_3\left(\frac{iz}{\xi^2} - \frac{k}{N}, \frac{i}{N\xi^2}\right) & = \sqrt{\frac{\xi^2}{N}}\sum_{0 \leq j < N}\theta_3\left(z + \frac{j}{N}, \frac{i\xi^2}{N}\right)\exp\left(\frac{\pi N}{\xi^2}z^2 - \frac{2\pi ijk}{N}\right)\,.
\end{align}
In the sequel, we will be exclusively interested with $\theta_3$ where the second argument is purely imaginary (with positive imaginary part) and we will therefore frequently use the notation $\theta_3(z, i\tau)$ where $\tau > 0$ (instead as $\tau \in \mathbf{H}$ from the initial definitions).

An interesting property is that if one restricts the summation in the definition \ref{eq:theta3_definition} of the $\theta_3$ function to the integers that are congruent to some $r$ modulo $N$ ($N > 0$, $0 \leq r < N$), the result is still a $\theta$ function. Precisely:
\begin{align}
    & \sum_{p \in \mathbf{Z}}\exp\left(-\pi\tau(r + Np)^2 + 2\pi i(r + Np)z\right)\nonumber\\
    & = \exp\left(-\pi r^2\tau + 2\pi irz\right)\theta_3\left(zN + irN\tau, iN^2\tau\right)\label{eq:theta3_sum_congruence_class}\\
    & = \frac{1}{N\tau^{1/2}}\exp\left(-\frac{\pi z^2}{\tau}\right)\theta_3\left(\frac{iz}{N\tau} - \frac{r}{N}, \frac{i}{N^2\tau}\right)\,.
\end{align}

In our study of the quasi-ideal clock, we will use the $\theta_3$ function as wavefunction coefficients. It will therefore be frequently necessary (e.g. to compute scalar products) to rewrite products of $\theta$ functions as a linear combination of such functions. The following proposition gives a general formula for this purpose.
\begin{proposition}
\label{prop:theta3_multiplication}
    Let $a, b$ denote positive integers. Then for all $z, w \in \mathbf{C}, \tau > 0$:
    \begin{align}
        & \theta_3(z, iab\tau)\theta_3(w, i\tau)\nonumber\\
        & = \sum_{0 \leq r < a + b}\exp\left(-\pi r^2\tau + 2\pi irw\right)\theta_3\left(aw - z + ira\tau, ia(a + b)\tau\right)\theta_3\left(z + bw + irb\tau, ib(a + b)\tau\right)\\
        & = \frac{1}{a + b}\sum_{0 \leq r < a + b}\theta_3\left(\frac{aw + z + r}{a + b}, i\frac{a}{a + b}\tau\right)\theta_3\left(\frac{z - bw + r}{a + b}, i\frac{b}{a + b}\tau\right)\,.
    \end{align}
\begin{proof}
One starts with the basic definition of $\theta_3$:
\begin{align}
    \theta_3\left(z, iab\tau\right)\theta_3\left(w, i\tau\right) & = \sum_{\substack{m_1 \in \mathbf{Z}\\m_2 \in \mathbf{Z}}}\exp\left(-\pi\tau abm_1^2 + 2\pi im_1z - \pi\tau m_2^2 + 2\pi im_2w\right)\,.
\end{align}
One then remarks:
\begin{align}
    abm_1^2 + m_2^2 & = \frac{a}{a + b}(m_2 - bm_1)^2 + \frac{b}{a + b}(m_2 + am_1)^2\,.
\end{align}
Therefore:
\begin{align}
    & \theta_3\left(z, iab\tau\right)\theta_3\left(w, i\tau\right)\nonumber\\
    & = \sum_{m_1, m_2 \in \mathbf{Z}}\exp\left(-\pi\tau\frac{a}{a + b}(m_2 - bm_1)^2 - \pi\tau\frac{b}{a + b}(m_2 + am_1)^2 + 2\pi im_1z + 2\pi im_2w\right)\\
    & = \sum_{m_1, m_2 \in \mathbf{Z}}\exp\left(-\pi\tau\frac{a}{a + b}m_2^2 - \pi\tau\frac{b}{a + b}(m_2 + (a + b)m_1)^2 + 2\pi im_1(z + bw) + 2\pi im_2w\right)\\
    & = \sum_{m_1, m_2 \in \mathbf{Z}}\exp\left(-\pi\tau\frac{a}{a + b}m_2^2 - \pi\tau\frac{b}{a + b}(m_2 + (a + b)m_1)^2 + 2\pi i(m_2 + (a + b)m_1)\frac{z + bw}{a + b}\right.\nonumber\\
    & \left.\hspace{0.05\textwidth} + 2\pi im_2\frac{aw - z}{a + b}\right)\,.
\end{align}
We now break the sum according to the congruence $r$ of $m_2$ modulo $a + b$ (setting $m_2 := r + (a + b)p$) and for all fixed $m_2$, perform the sum first over $m_1$, then over $p$ using equation \ref{eq:theta3_sum_congruence_class}:
\begin{align}
    & \theta_3\left(z, iab\tau\right)\theta_3\left(w, i\tau\right)\nonumber\\
    & = \sum_{\substack{0 \leq r < a + b\\p \in \mathbf{Z}}}\exp\left(-\pi\tau\frac{a}{a + b}(r + (a + b)p)^2 + 2\pi i(r + (a + b)p)\frac{aw - z}{a + b}\right)\nonumber\\
    & \hspace{0.05\textwidth} \times \exp\left(-\pi r^2\frac{b}{a + b}\tau + 2\pi ir\frac{z + bw}{a + b}\right)\theta_3\left(z + bw + irb\tau, ib(a + b)\tau\right)\\
    & = \sum_{0 \leq r < a + b}\exp\left(-\pi r^2\frac{a}{a + b}\tau + 2\pi ir\frac{aw - z}{a + b}\right)\theta_3\left(aw - z + ira\tau, ia(a + b)\tau\right)\nonumber\\
    & \hspace{0.05\textwidth} \times \exp\left(-\pi r^2\frac{b}{a + b}\tau + 2\pi ir\frac{z + bw}{a + b}\right)\theta_3\left(z + bw + irb\tau, ib(a + b)\tau\right)\\
    & = \sum_{0 \leq r < a + b}\exp\left(-\pi r^2\tau + 2\pi irw\right)\theta_3\left(aw - z + ira\tau, ia(a + b)\tau\right)\theta_3\left(z + bw + irb\tau, ib(a + b)\tau\right)\,.
\end{align}
Now, observe that from the transformation property \ref{eq:theta3_modular_transformation},
\begin{align}
    \theta_3(z, iab\tau)\theta_3(w, i\tau) & = \frac{1}{(ab\tau)^{1/2}\tau^{1/2}}\exp\left(-\frac{\pi}{ab\tau}z^2 - \frac{\pi}{\tau}w^2\right)\theta_3\left(i\frac{z}{ab\tau}, \frac{i}{ab\tau}\right)\theta_3\left(i\frac{w}{\tau}, \frac{i}{\tau}\right)\,.
\end{align}
One may apply the identity just derived to the product $\theta_3\left(i\frac{w}{\tau}, \frac{i}{\tau}\right)\theta_3\left(i\frac{z}{ab\tau}, \frac{i}{ab\tau}\right)$, making the substitutions (in this order) $\tau \to \frac{1}{ab\tau}, z \to i\frac{w}{\tau}, w \to i\frac{z}{ab\tau}$. This gives:
\begin{align}
    \theta_3\left(z, iab\tau\right)\theta_3\left(w, i\tau\right) & = \frac{1}{(ab\tau)^{1/2}\tau^{1/2}}\exp\left(-\frac{\pi}{ab\tau}z^2 - \frac{\pi}{\tau}w^2\right)\nonumber\\
    & \hspace{0.05\textwidth} \times \sum_{0 \leq r < a + b}\exp\left(-\frac{\pi r^2}{ab\tau} - \frac{2\pi rz}{ab\tau}\right)\nonumber\\
    & \hspace{0.15\textwidth} \times \theta_3\left(i\frac{z + r - bw}{b\tau}, i\frac{a + b}{b\tau}\right)\theta_3\left(i\frac{aw + z + r}{a\tau}, i\frac{a + b}{a\tau}\right)\,.
\end{align}
Transforming again the $\theta_3$ functions above by property \ref{eq:theta3_modular_transformation}, the exponentials cancel and one obtains:
\begin{align}
    \theta_3\left(z, iab\tau\right)\theta_3\left(w, i\tau\right) & = \frac{1}{a + b}\sum_{0 \leq r < a + b}\theta_3\left(\frac{aw + z + r}{a + b}, i\frac{a}{a + b}\tau\right)\theta_3\left(\frac{z - bw + r}{a + b}, i\frac{b}{a + b}\tau\right)\,,
\end{align}
which is the stated result.
\end{proof}
\end{proposition}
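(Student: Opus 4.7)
The plan is to prove the two equalities by unfolding the series definition of $\theta_3$ and exploiting a simple quadratic identity together with the modular transformation~\eqref{eq:theta3_modular_transformation}. The whole argument is algebraic; no analytic estimates are needed.

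For the first equality I would start from
\begin{align*}
\theta_3(z, iab\tau)\theta_3(w, i\tau) = \sum_{m_1,m_2\in\mathbf{Z}}\exp\!\left(-\pi ab\tau m_1^2 - \pi\tau m_2^2 + 2\pi i m_1 z + 2\pi i m_2 w\right),
\end{align*}
and then invoke the decoupling identity
\begin{align*}
abm_1^2 + m_2^2 = \tfrac{a}{a+b}(m_2 - bm_1)^2 + \tfrac{b}{a+b}(m_2 + am_1)^2,
\end{align*}
which is the crux of the computation. Substituting into the exponent and noting that $m_2+am_1$ differs from $m_2-bm_1$ by $(a+b)m_1$, I would reindex $m_2 = r + (a+b)p$ with $0\leq r<a+b$ and $p\in\mathbf{Z}$, so that the sum over $m_1$ and the sum over $p$ become sums of Gaussian exponentials along arithmetic progressions of step $a+b$. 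Each such arithmetic-progression sum is a $\theta_3$ function by~\eqref{eq:theta3_sum_congruence_class}. Grouping the surviving factors into one theta function of modulus $ia(a+b)\tau$ and another of modulus $ib(a+b)\tau$, and collecting the remaining $r$-dependent prefactors, yields exactly the first claimed identity.

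For the second equality I would apply the modular transformation~\eqref{eq:theta3_modular_transformation} to each of the four theta factors appearing on the two sides of the first identity, with the substitutions $\tau\to 1/(ab\tau)$ and $(z,w)\to(iw/\tau,iz/(ab\tau))$ applied to the first identity itself. On the left-hand side this produces an overall prefactor $((ab\tau)^{1/2}\tau^{1/2})^{-1}\exp(-\pi z^2/(ab\tau)-\pi w^2/\tau)$; on the right-hand side each of the two theta factors contributes its own prefactor. A short computation shows that the Gaussian prefactors depending on $r$, $z$, and $w$ exactly cancel between the two sides, and the residual $\theta_3$ factors are precisely those of the second form, with moduli $ia\tau/(a+b)$ and $ib\tau/(a+b)$, while the factor $1/(a+b)$ comes from the product of the $(-i\tau)^{1/2}$ factors produced by the four applications of~\eqref{eq:theta3_modular_transformation}.

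The main obstacle I expect is bookkeeping rather than any conceptual difficulty: one must be careful about the placement of factors of $i$, about signs arising from the substitutions $(z,w)\to(iw/\tau,iz/(ab\tau))$, about which of the two theta functions on the right carries the modulus $a$ versus $b$, and about ensuring that the reindexing $m_2 = r+(a+b)p$ matches the convention of~\eqref{eq:theta3_sum_congruence_class}. A single sign error propagates through the whole identity, so I would write both sides term by term in $r$ and check that they match explicitly for, say, $(a,b)=(1,1)$ and $(a,b)=(1,2)$ as a sanity check before committing to the general form.
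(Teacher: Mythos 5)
Your proposal is correct and follows essentially the same route as the paper: the same quadratic decoupling identity $abm_1^2+m_2^2=\tfrac{a}{a+b}(m_2-bm_1)^2+\tfrac{b}{a+b}(m_2+am_1)^2$, the same reindexing $m_2=r+(a+b)p$ with the congruence-class summation formula to obtain the first identity, and the same modular-transformation trick with the substitutions $\tau\to 1/(ab\tau)$, $(z,w)\to(iw/\tau,\,iz/(ab\tau))$ applied to the first identity to obtain the second. No substantive differences from the paper's argument.
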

This formula proves especially useful when $aw = -z$ or $bw = z$ since in this case only one of the $\theta$ functions in the sum depends on $z$. As a simple application of the previous results, one can compute (where $N > 0$ is an integer, $z \in \mathbf{R}$ and $\xi > 0$):
\begin{align}
    & \sum_{0 \leq j < N}\theta_3\left(z + \frac{j}{N}, \frac{i\xi^2}{N}\right)^2\nonumber\\
    & = \sum_{0 \leq j < N}\theta_3\left(z + \frac{j}{N}, \frac{i\xi^2}{N}\right)\theta_3\left(z + \frac{j}{N}, \frac{i\xi^2}{N}\right)\\
    & = \sum_{0 \leq j < N}\frac{1}{2}\left[\theta_3\left(z + \frac{j}{N}, \frac{i\xi^2}{2N}\right)\theta_3\left(0, \frac{i\xi^2}{2N}\right) + \theta_3\left(z + \frac{j}{N} + \frac{1}{2}, \frac{i\xi^2}{2N}\right)\theta_3\left(\frac{1}{2}, \frac{i\xi^2}{2N}\right)\right]\\
    & \hspace{0.1\textwidth} \qquad (\textrm{proposition \ref{prop:theta3_multiplication} with }
    a = b = 1)\nonumber\\
    & = \frac{1}{2}\sqrt{\frac{2N}{\xi^2}}\left[\exp\left(-\frac{2\pi N}{\xi^2}z^2\right)\theta_3\left(\frac{2iz}{\xi^2}, \frac{2i}{N\xi^2}\right)\theta_3\left(0, \frac{i\xi^2}{2N}\right)\right.\nonumber\\
    & \left.\hspace{0.15\textwidth} + \exp\left(-\frac{2\pi N}{\xi^2}\left(z + \frac{1}{2}\right)^2\right)\theta_3\left(\frac{2i}{\xi^2}\left(z + \frac{1}{2}\right), \frac{2i}{N\xi^2}\right)\theta_3\left(\frac{1}{2}, \frac{i\xi^2}{2N}\right)\right]\\
    & \hspace{0.1\textwidth} (\textrm{equation } \ref{eq:theta3_dft})\nonumber\\
    & = \frac{N}{2}\left[\theta_3\left(zN, \frac{iN\xi^2}{2}\right)\theta_3\left(0, \frac{i\xi^2}{2N}\right) + \theta_3\left(\left(z + \frac{1}{2}\right)N, \frac{iN\xi^2}{2}\right)\theta_3\left(\frac{1}{2}, \frac{i\xi^2}{2N}\right)\right]\,.
\end{align}

The following proposition is useful to approximate $\theta$ functions by Gaussians (which allows to bound the tail of a convolution of $\theta$ functions among other things):
\begin{proposition}
\label{prop:theta3_approx_1}
    Let $\tau > 0$ and $z \in \left(-\frac{1}{2}, \frac{1}{2}\right)$. Then the following holds:
    \begin{align}
        \label{eq:approx_theta_im_arg}
        \theta_3(i\tau z, i\tau) & = 1 + \varepsilon\,,\\
        |\varepsilon| & \leq 2\frac{e^{-\pi\tau(1 - 2|z|)}}{1 - e^{-2\pi\tau}}\,.
    \end{align}
\begin{proof}
The bound on $\varepsilon$ follows from proposition \ref{prop:bound_tail_sum_gaussians}.   
\begin{align}
    \varepsilon & := \sum_{m \neq 0}e^{-\pi\tau(m + z)^2 + \pi\tau z^2}\\
    & = \sum_{m \geq 1}e^{-\pi\tau(m + z)^2 + \pi\tau z^2} + \sum_{m \leq -1}e^{-\pi\tau(m + z)^2 + \pi\tau z^2}\\
    & \leq \frac{e^{-\pi\tau(z + 1)^2}e^{\pi\tau z^2}}{1 - e^{-\pi\tau(3 + 2z)}} + \frac{e^{-\pi\tau(1 - z)^2}e^{\pi\tau z^2}}{1 - e^{-\pi\tau(3 - 2z)}}\\
    & \leq 2\frac{e^{-\pi\tau(1 - 2|z|)}}{1 - e^{-2\pi\tau}}\,.&
\end{align}
\end{proof}
\end{proposition}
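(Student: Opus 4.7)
The plan is to prove Proposition \ref{prop:theta3_approx_1} by directly expanding the $\theta_3$ series and completing the square to isolate the $m=0$ contribution. Starting from the definition \eqref{eq:theta3_definition},
\[
\theta_3(i\tau z, i\tau) = \sum_{m \in \mathbf{Z}} e^{-\pi\tau m^2 - 2\pi\tau m z} = \sum_{m \in \mathbf{Z}} e^{-\pi\tau(m+z)^2 + \pi\tau z^2}\,.
\]
The $m=0$ term equals $1$ identically, so $\varepsilon := \sum_{m \neq 0} e^{-\pi\tau(m+z)^2 + \pi\tau z^2}$, and the task reduces to bounding this tail.

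Next I would split the sum into its positive and negative index halves. For $m \geq 1$, the exponents can be rewritten as $-\pi\tau m(m+2z)$, and consecutive ratios equal $e^{-\pi\tau(2m+1+2z)}$, which is decreasing in $m$. The largest ratio is therefore the first, $e^{-\pi\tau(3+2z)}$, and since $|z| < \tfrac12$ it lies strictly below $1$. A standard geometric majorization then yields
\[
\sum_{m \geq 1} e^{-\pi\tau(m+z)^2 + \pi\tau z^2} \leq \frac{e^{-\pi\tau(1+2z)}}{1 - e^{-\pi\tau(3+2z)}}\,,
\]
and the symmetric computation for $m \leq -1$ gives $\frac{e^{-\pi\tau(1-2z)}}{1 - e^{-\pi\tau(3-2z)}}$. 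This is exactly what the referenced Proposition \ref{prop:bound_tail_sum_gaussians} (the tail bound for Gaussian sums) would deliver if invoked as a black box.

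Finally I would collapse these two bounds into the stated symmetric form. Since $1 \pm 2z \geq 1 - 2|z|$, both numerators are controlled by $e^{-\pi\tau(1 - 2|z|)}$; and since $3 \pm 2z > 2$, both denominators satisfy $1 - e^{-\pi\tau(3 \pm 2z)} > 1 - e^{-2\pi\tau}$ by monotonicity of $x \mapsto 1 - e^{-x}$. Summing the two halves then gives
\[
|\varepsilon| \leq 2\,\frac{e^{-\pi\tau(1 - 2|z|)}}{1 - e^{-2\pi\tau}}\,.
\]
There is no substantive obstacle here; the only point requiring mild care is confirming the common ratio of the geometric bound is $<1$, which is automatic from $\tau > 0$ and $|z| < \tfrac12$. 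The whole argument is essentially a one-line tail estimate made quantitative.
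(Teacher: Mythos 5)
Your proof is correct and follows essentially the same route as the paper: expand the series, note the $m=0$ term is $1$, split the tail into $m\geq 1$ and $m\leq -1$, bound each half geometrically (your explicit ratio argument is exactly what the cited Gaussian-tail proposition packages), and then symmetrize using $1\pm 2z\geq 1-2|z|$ and $3\pm 2z>2$. Nothing further is needed.
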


In the study of the measured quasi-ideal clock, one will consider random walks on a ring where the probability distributions for the jumps will be given by $\theta$ functions. Since we will be interested in estimating the marginal distribution of each step of this random walk, one needs an approximation for the iterated convolution of a $\theta$ function with itself. This is provided by the following proposition:
\begin{proposition}
\label{prop:estimate_convolution_theta}
    Given an odd integer $d > 0$, an integer $k \geq 1$ and a positive real $\tau > 0$, the following estimate holds for all integer $j \in \left[-\frac{d - 1}{2}, \frac{d - 1}{2}\right]$:
    \begin{align}
        & \sum_{-\frac{d - 1}{2} \leq j_1, \ldots, j_{k - 1} \leq \frac{d - 1}{2}}\left[\prod_{1 \leq l \leq k - 1}\theta_3\left(\frac{j_l}{d}, i\tau\right)\right]\theta_3\left(\frac{j - \sum_{1 \leq l \leq k - 1}j_l}{d}, i\tau\right)\nonumber\\
        & \hspace{0.05\textwidth} = d^{k - 1}\theta_3\left(\frac{j}{d}, ik\tau\right) + \varepsilon\,,\\
        |\varepsilon| & \leq d^{k - 1}(k - 1)\left(1 + \frac{2e^{-\pi\tau (k - 1)}}{1 - e^{-2\pi\tau (k - 1)}}\right)\frac{2e^{-\pi\tau\frac{(d + 1)^2}{4}}}{1 - e^{-\pi\tau d^2}}\left(1 + \frac{2e^{-\pi\tau d}}{1 - e^{-\pi\tau d^2}}\right)^{k - 2}\nonumber\\
        & \hspace{0.1\textwidth} + d^{k - 1}\frac{2e^{-\pi\tau\frac{(d + 1)^2}{4}}}{1 - e^{-\pi\tau d}}\left(1 + \frac{2e^{-\frac{\pi\tau d^2}{4}}}{1 - e^{-\pi\tau d^2}}\right)^{k - 1}\nonumber\\
        & \hspace{0.1\textwidth} + d^{k - 1}\frac{2e^{-\frac{\pi\tau kd^2}{2}}}{1 - e^{-\pi \tau k d}}\,.
    \end{align}
\begin{proof}
    \begin{align}
        & \sum_{-\frac{d - 1}{2} \leq j_1, \ldots, j_{k - 1} \leq \frac{d - 1}{2}}\left[\prod_{1 \leq l \leq k - 1}\theta_3\left(\frac{j_l}{d}, i\tau\right)\right]\theta_3\left(\frac{j - \sum_{1 \leq l \leq k - 1}j_l}{d}, i\tau\right)\nonumber\\
        & = \sum_{\substack{-\frac{d - 1}{2} \leq j_1, \ldots, j_{k - 1} \leq \frac{d - 1}{2}\\m_1, \ldots, m_k \in \mathbf{Z}}}\exp\left(-\pi\tau\sum_{1 \leq l \leq k}m_l^2 + \frac{2\pi i}{d}\sum_{1 \leq l \leq k - 1}m_lj_l + \frac{2\pi i}{d}m_k\left(j - \sum_{1 \leq l \leq k - 1}j_l\right)\right)\\
        & = d^{k - 1}\sum_{\substack{m_k \in \mathbf{Z}\\p_1, \ldots, p_{k - 1} \in \mathbf{Z}}}\exp\left(-\pi\tau m_k^2 - \pi\tau\sum_{1 \leq l \leq k - 1}(m_k + p_ld)^2 + \frac{2\pi i}{d}m_kj\right)\,.
    \end{align}
    Now, consider the summation over $p_1, \ldots, p_{k - 1}$ for some fixed $m_k, -\frac{d - 1}{2} \leq \frac{d - 1}{2}$:
    \begin{align}
        & \sum_{p_1, \ldots, p_{k - 1} \in \mathbf{Z}}\exp\left(-\pi\tau\sum_{1 \leq l \leq k - 1}(m_k + p_ld)^2\right)\nonumber\\
        & = \prod_{1 \leq l \leq k - 1}\sum_{p_l \in \mathbf{Z}}\exp\left(-\pi\tau d^2\left(\frac{m_k}{d} + p_l\right)^2\right)\\
        & = \prod_{1 \leq l \leq k - 1}\left(\exp(-\pi\tau m_k^2) + \sum_{p_l \in \mathbf{Z} - \{0\}}\exp\left(-\pi\tau d^2\left(\frac{m_k}{d} + p_l\right)^2\right)\right)\,.
    \end{align}
    For all $l$, one has:
    \begin{align}
        \sum_{p_l \in \mathbf{Z} - \{0\}}\exp\left(-\pi\tau d^2\left(\frac{m_k}{d} + p_l\right)\right) & \leq \frac{2e^{-\frac{\pi\tau (d + 1)^2}{4}}}{1 - e^{-\pi\tau d^2}}
    \end{align}
    by proposition \ref{prop:bound_tail_sum_gaussians}, so that
    \begin{align}
        & \left|\sum_{p_1, \ldots, p_{k - 1} \in \mathbf{Z}}\exp\left(-\pi\tau\sum_{1 \leq l \leq k - 1}(m_k + p_ld)^2\right) - e^{-\pi\tau(k  - 1)m_k^2}\right|\nonumber\\
        & \leq \left(e^{-\pi\tau m_k^2} + \frac{2e^{-\frac{\pi\tau(d + 1)^2}{4}}}{1 - e^{-\pi\tau d^2}}\right)^{k - 1} - e^{-\pi\tau(k - 1)m_k^2}\,,
    \end{align}
    and
    \begin{align}
        & \left|d^{k - 1}\sum_{p_1, \ldots, p_{k - 1} \in \mathbf{Z}}\exp\left(-\pi\tau m_k^2 - \pi\tau\sum_{1 \leq l \leq k - 1}(m_k + p_ld)^2 + \frac{2\pi im_kj}{d}\right) - d^{k - 1}e^{-\pi\tau km_k^2 + \frac{2\pi im_kj}{d}}\right|\nonumber\\
        & \leq d^{k - 1}e^{-\pi\tau m_k^2}\left[\left(e^{-\pi\tau m_k^2} + \frac{2e^{-\frac{\pi\tau(d + 1)^2}{4}}}{1 - e^{-\pi\tau d^2}}\right) - e^{-\pi\tau(k - 1)m_k^2}\right]\\
        & \leq d^{k - 1}e^{-\pi\tau m_k^2}(k - 1)\frac{2e^{-\pi\tau\frac{(d + 1)^2}{4}}}{1 - e^{-\pi\tau d^2}}\left(e^{-\pi\tau m_k^2} + \frac{2e^{-\pi\tau\frac{(d + 1)^2}{4}}}{1 - e^{-\pi\tau d^2}}\right)^{k - 2}\\
        & = d^{k - 1}(k - 1)e^{-\pi\tau(k - 1)m_k^2}\frac{2e^{-\pi\tau\frac{(d + 1)^2}{4}}}{1 - e^{-\pi\tau d^2}}\left(1 + \frac{2e^{-\pi\tau\frac{(d + 1)^2}{4} + \pi\tau m_k^2}}{1 - e^{-\pi\tau d^2}}\right)^{k - 2}\\
        & \leq d^{k - 1}(k - 1)e^{-\pi\tau(k - 1)m_k^2}\frac{2e^{-\pi\tau\frac{(d + 1)^2}{4}}}{1 - e^{-\pi\tau d^2}}\left(1 + \frac{2e^{-\pi\tau d}}{1 - e^{-\pi\tau d^2}}\right)^{k - 2}\,.
    \end{align}
    Summing this error over all $m_k, -\frac{d - 1}{2} \leq m_k \leq \frac{d - 1}{2}$, one obtains an error which is bounded by:
    \begin{align}
        \varepsilon_1 & = d^{k - 1}(k - 1)\left(1 + \frac{2e^{-\pi\tau (k - 1)}}{1 - e^{-2\pi\tau (k - 1)}}\right)\frac{2e^{-\pi\tau\frac{(d + 1)^2}{4}}}{1 - e^{-\pi\tau d^2}}\left(1 + \frac{2e^{-\pi\tau d}}{1 - e^{-\pi\tau d^2}}\right)^{k - 2}\,.
    \end{align}
    Next, for fixed $m_k$ such that $|m_k| \geq \frac{d + 1}{2}$, the sum
    \begin{align}
        & d^{k - 1}\sum_{p_1, \ldots, p_{k - 1} \in \mathbf{Z}}\exp\left(-\pi\tau m_k^2 -\pi\tau\sum_{1 \leq l \leq k - 1}(m_k + p_ld)^2 + \frac{2\pi im_kj}{d}\right)
    \end{align}
    can be upper bounded by
    \begin{align}
        & d^{k - 1}e^{-\pi\tau m_k^2}\left(1 + \frac{2e^{-\frac{\pi\tau d^2}{4}}}{1 - e^{-\pi\tau d^2}}\right)^{k - 1}
    \end{align}
    and summing over all such $m_k$, one obtains the following bound on the error
    \begin{align}
        \varepsilon_2 & = d^{k - 1}\frac{2e^{-\pi\tau\frac{(d + 1)^2}{4}}}{1 - e^{-\pi\tau d}}\left(1 + \frac{2e^{-\frac{\pi\tau d^2}{4}}}{1 - e^{-\pi\tau d^2}}\right)^{k - 1}\,.
    \end{align}
    Finally, one can easily bound:
    \begin{align}
        \left|d^{k - 1}\sum_{|m_k| \geq \frac{d + 1}{2}}e^{-\pi\tau km_k^2 + \frac{2\pi im_kj}{d}}\right| & \leq d^{k - 1}\frac{2e^{-\frac{\pi\tau kd^2}{2}}}{1 - e^{-\pi \tau k d}}\\
        & =: \varepsilon_3\,.
    \end{align}
    Putting all of this together, we have shown that up to an error $\varepsilon_1 + \varepsilon_2 + \varepsilon_3$
    \begin{align}
        & \sum_{-\frac{d - 1}{2} \leq j_1, \ldots, j_{k - 1} \leq \frac{d - 1}{2}}\left[\prod_{1 \leq l \leq k - 1}\theta_3\left(\frac{j_l}{d}, i\tau\right)\right]\theta_3\left(\frac{j - \sum_{1 \leq l \leq k - 1}j_l}{d}, i\tau\right)
    \end{align}
    amounts to
    \begin{align}
        & d^{k - 1}\sum_{m_k \in \mathbf{Z}}e^{-\pi\tau km_k^2 + \frac{2\pi ijm_k}{d}}\\
        & = d^{k - 1}\theta_3\left(\frac{j}{d}, ik\tau\right)\,.
    \end{align}
\end{proof}
\end{proposition}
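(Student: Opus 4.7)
The plan is to expand each $\theta_3$ factor using its Fourier-series definition $\theta_3(z,i\tau)=\sum_{m\in\mathbf{Z}} e^{-\pi\tau m^2+2\pi imz}$, exploit discrete orthogonality on $\mathbf{Z}_d$ to collapse the iterated convolution to a single Gaussian sum modulo corrections, and then bound the three natural error sources by Gaussian tails via Proposition \ref{prop:bound_tail_sum_gaussians}. After substitution, the left-hand side becomes a $k$-fold sum over integers $m_1,\ldots,m_k$ times the sum over $j_1,\ldots,j_{k-1}\in\{-\tfrac{d-1}{2},\ldots,\tfrac{d-1}{2}\}$ of $\exp\bigl(\tfrac{2\pi i}{d}\sum_{l<k} j_l(m_l-m_k)\bigr)$. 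Since $j_l$ ranges over a complete residue system mod $d$, each $j_l$-summation yields $d\cdot\mathbf{1}_{m_l\equiv m_k\,[d]}$, pinning $m_l=m_k+p_l d$ for some $p_l\in\mathbf{Z}$ and forcing $m_k\in[-\tfrac{d-1}{2},\tfrac{d-1}{2}]$.

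With this reduction, the left-hand side equals
\[
d^{k-1}\sum_{m_k=-(d-1)/2}^{(d-1)/2} e^{-\pi\tau m_k^2+\tfrac{2\pi i m_k j}{d}}\prod_{l=1}^{k-1}\Bigl(\sum_{p_l\in\mathbf{Z}} e^{-\pi\tau(m_k+p_ld)^2}\Bigr),
\]
whereas the target expression is $d^{k-1}\theta_3(j/d,ik\tau)=d^{k-1}\sum_{m\in\mathbf{Z}} e^{-\pi k\tau m^2+\tfrac{2\pi i mj}{d}}$. I would then compare the two by introducing three successive approximations. First, for each $|m_k|\leq\tfrac{d-1}{2}$, I would replace $\sum_{p_l}e^{-\pi\tau(m_k+p_ld)^2}$ by its central term $e^{-\pi\tau m_k^2}$; the tail $\sum_{p_l\neq 0}e^{-\pi\tau(m_k+p_ld)^2}$ is controlled via Proposition \ref{prop:bound_tail_sum_gaussians} by $2e^{-\pi\tau(d+1)^2/4}/(1-e^{-\pi\tau d^2})$. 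Second, for $|m_k|>\tfrac{d-1}{2}$, which does not occur in the original sum but would be needed to complete it to a full $\theta_3(j/d,ik\tau)$, I would bound the discarded contribution directly. Third, I would restore the complementary large-$|m_k|$ tail of the target $\theta_3(j/d,ik\tau)$ sum.

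The first error is handled by the algebraic identity $\prod_{l=1}^{k-1}(a_l+b_l)-\prod_l a_l=\sum_{l}(\prod_{l'<l} a_{l'})\,b_l\,\prod_{l'>l}(a_{l'}+b_{l'})$ with $a_l=e^{-\pi\tau m_k^2}$ and $b_l=\sum_{p_l\neq 0}e^{-\pi\tau(m_k+p_ld)^2}$. This produces $k-1$ terms, each containing one Gaussian tail factor and $k-2$ factors $(a_l+b_l)\leq e^{-\pi\tau m_k^2}(1+2e^{-\pi\tau d}/(1-e^{-\pi\tau d^2}))$; summing over $|m_k|\leq\tfrac{d-1}{2}$ against $e^{-\pi\tau(k-1)m_k^2}$ yields the geometric prefactor $(1+2e^{-\pi\tau(k-1)}/(1-e^{-2\pi\tau(k-1)}))$ and produces the first claimed error. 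The second error arises by bounding the summand for $|m_k|\geq\tfrac{d+1}{2}$ by $e^{-\pi\tau m_k^2}(1+2e^{-\pi\tau d^2/4}/(1-e^{-\pi\tau d^2}))^{k-1}$ and summing the Gaussian tail, yielding the second bound. The third simply estimates $|\theta_3(j/d,ik\tau)-\sum_{|m|\leq(d-1)/2}e^{-\pi k\tau m^2+\ldots}|\leq 2e^{-\pi k\tau d^2/2}/(1-e^{-\pi k\tau d})$, multiplied by $d^{k-1}$.

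The main obstacle is purely bookkeeping: making sure the $(1+\text{small})^{k-2}$ and $(1+\text{small})^{k-1}$ factors appear in the right terms, and that the three error sources are disjointly handled so that the three contributions combine exactly into the stated three-term bound rather than into cross-terms that would inflate the estimate. Once one commits to the decomposition ``$p_l\neq 0$ tails for interior $m_k$'' versus ``exterior $m_k$ with all $p_l$'' versus ``exterior $m$ in the target $\theta_3$'', the derivation reduces to straightforward geometric-series manipulations; no modular transformation of $\theta_3$ or product identity is needed.
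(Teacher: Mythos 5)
Your proposal is correct and follows essentially the same route as the paper's proof: Fourier-expand every $\theta_3$ factor, use discrete orthogonality over $\mathbf{Z}_d$ to pin $m_l = m_k + p_l d$, and control exactly the three error sources that become the paper's $\varepsilon_1,\varepsilon_2,\varepsilon_3$ (interior-$m_k$ tails in the $p_l$ sums, exterior-$m_k$ contributions, and the exterior tail of the target $\theta_3\left(\tfrac{j}{d}, ik\tau\right)$). The only slip is in the narration: the orthogonality does not force $m_k\in\left[-\tfrac{d-1}{2},\tfrac{d-1}{2}\right]$ — $m_k$ still runs over all of $\mathbf{Z}$ — so the $|m_k|\geq\tfrac{d+1}{2}$ terms do occur in the reduced left-hand side and are precisely what your second bound (with its $\left(1+\tfrac{2e^{-\pi\tau d^2/4}}{1-e^{-\pi\tau d^2}}\right)^{k-1}$ factor) discards; with that correction your three-term decomposition coincides with the paper's.
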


\subsection{Miscellaneous}
In this section, we state various results that will be used either in section \ref{sec:quasi_ideal_clock} or to prove some properties of Jacobi $\theta$ functions in section \ref{sec:jacobi_theta_functions}.

\begin{proposition}[Bounds on the tails of sums of Gaussians]
\label{prop:bound_tail_sum_gaussians}
    Let $\alpha, \beta > 0$ and $z \in \mathbf{R}$. Then for all integer $a > (-z) \vee 0$,
    \begin{align}
        \sum_{n \geq a}n^{\beta}e^{-\alpha (z + n)^2} & \leq e^{\frac{\beta^2}{4a^2\alpha}}\frac{a^{\beta}e^{-\alpha (a + z)^2}}{1 - e^{-2\alpha (a + z)}}\,.
    \end{align}
    If furthermore $a > \frac{\beta}{\alpha}$,
    \begin{align}
        \sum_{n \geq a}n^{\beta}e^{-\alpha (z + n)^2} & \leq \frac{a^{\beta}e^{-\alpha (a + z)^2}}{1 - e^{-2\alpha (a + z)}}\,.
    \end{align}
\begin{proof}
    Letting $n := a + k, k \geq 0$,
    \begin{align}
        \log\left(n^{\beta}e^{-\alpha(z + n)^2}\right) & = \beta \log(n) - \alpha (z + n)^2\\
        & = \beta\log(a) + \beta\log\left(1 + \frac{k}{a}\right) - \alpha(a + z)^2 - 2\alpha(a + z)k - \alpha k^2\\
        & \leq \beta\log(a) - \alpha(a + z)^2 - 2\alpha(a + z)k + \beta\frac{k}{a} - \alpha k^2\\
        & = \beta\log(a) - \alpha(a + z)^2 - 2\alpha(a + z)k - \alpha k\left(k - \frac{\beta}{a\alpha}\right)\,.
    \end{align}
    Generally speaking, the last term is upper-bounded by $\frac{\beta^2}{4a^2\alpha}$. If $a \geq \frac{\beta}{\alpha}$, it is $\leq 0$. Summing the exponential of this over $k \geq 0$ gives the result.
\end{proof}
\end{proposition}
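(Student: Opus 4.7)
The plan is to reduce the tail sum to a convergent geometric series by factoring out the peak term and controlling the polynomial prefactor with a single logarithmic inequality. First I would set $n = a + k$ with $k \geq 0$ and expand
\[
    -\alpha(z+n)^2 = -\alpha(z+a)^2 - 2\alpha(z+a)k - \alpha k^2,
\]
so that $e^{-\alpha(z+a)^2}$ pulls out of the sum. Because the hypothesis $a > (-z) \vee 0$ forces $z+a > 0$, the factor $e^{-2\alpha(z+a)k}$ decays geometrically in $k$, which is ultimately what will give a convergent series with the stated denominator $1 - e^{-2\alpha(a+z)}$.

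Next I would handle the polynomial factor. Writing $(a+k)^\beta = a^\beta\bigl(1 + k/a\bigr)^\beta$ and using $\log(1+x) \leq x$ for $x \geq 0$ gives $(a+k)^\beta \leq a^\beta e^{\beta k/a}$. Combining with the $e^{-\alpha k^2}$ factor already extracted leaves a residual weight $e^{\beta k/a - \alpha k^2} = e^{-\alpha k\left(k - \beta/(a\alpha)\right)}$. This is the exponential of a downward-opening quadratic in $k$, maximized at $k = \beta/(2a\alpha)$ with maximal value $\beta^2/(4a^2\alpha)$, which is precisely the overall prefactor appearing in the first inequality. Pulling this constant out of the sum and summing the geometric series $\sum_{k \geq 0} e^{-2\alpha(a+z)k} = 1/(1 - e^{-2\alpha(a+z)})$ finishes the general bound.

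For the sharper second inequality I would exploit the extra hypothesis $a > \beta/\alpha$, which makes $\beta/(a\alpha) < 1$. Under this condition, for every integer $k \geq 1$ one has $k - \beta/(a\alpha) > 0$, hence $e^{-\alpha k(k-\beta/(a\alpha))} \leq 1$; equality holds trivially at $k = 0$. This means the residual factor never exceeds $1$, so the $e^{\beta^2/(4a^2\alpha)}$ correction is no longer needed and the bound reduces to the cleaner form.

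There is no real obstacle here: the whole argument is a routine discrete Laplace-type estimate. The only points that deserve care are (i) checking that $a + z > 0$ so that the geometric series actually converges, which is precisely why the hypothesis $a > (-z) \vee 0$ is imposed, and (ii) being consistent about whether the residual is being bounded uniformly (using the quadratic maximum) or term-by-term (using the sign of $k - \beta/(a\alpha)$), so as not to lose the improvement in the second inequality.
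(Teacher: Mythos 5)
Your proposal is correct and follows essentially the same route as the paper's proof: substitute $n = a+k$, use $\log(1+k/a)\leq k/a$ to bound the polynomial factor, control the residual quadratic $-\alpha k\left(k-\tfrac{\beta}{a\alpha}\right)$ by its maximum $\tfrac{\beta^2}{4a^2\alpha}$ (or by $0$ when $a\geq\beta/\alpha$), and sum the geometric series in $e^{-2\alpha(a+z)k}$. No gaps to report.
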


\begin{proposition}
    For $\alpha, \beta, c, \varepsilon > 0$, the following holds:
\begin{equation}
\label{eq:power_vs_exp_inequality}
    d^{\beta}\exp(-cd^{\alpha}) < \varepsilon \textrm{ if } d > \max\left\{0, \frac{\beta}{\alpha c}\left(1 + \frac{1}{a_0}\log\left(\frac{\beta}{\alpha c}\right)\right) - \frac{\log(\varepsilon)}{a_0c}\right\}^{\frac{1}{\alpha}}\textrm{ or } \varepsilon > \left(\frac{\beta}{\alpha ce}\right)^{\frac{\beta}{\alpha}}\,,
\end{equation}
where
\begin{equation}
    a_0 := 1 + W(-e^{-2}) \approx 0.841\,;
\end{equation}
with $W$ denoting the Lambert $W$ function.
\begin{proof}
    By rescaling, it suffices to treat the case $c = 1, \alpha = 0$. The function:
    \begin{align}
        x & \longmapsto \log\left(x^{\beta}\exp(-x)\right) = \beta\log(x) - x
    \end{align}
    is concave and attains its maximum at $x = \beta$. Therefore, if $\varepsilon > \left(\frac{\beta}{e}\right)^{\beta}$, the inequality
    \begin{align}
        d^{\beta}\exp(-d) & < \varepsilon
    \end{align}
    holds for all $d > 0$. Otherwise, let us set $x := \beta(1 + t), t \geq 0$ and look for a lower bound on $t$ such that for all $t$ greater than this lower bound: $\beta\log(x) - x < \log(\varepsilon)$ is satisfied. For any $t_0 > 0$, one may write (bouding the concave function $t \longmapsto \log(1 + t) - t - 1$ by its tangent at $t = t_0$):
    \begin{align}
        \beta\log(x) - x & = \beta\log(\beta) + \beta\left(\log(1 + t) - t - 1\right)\\
        & \leq \beta\log(\beta) + \beta\left(\log(1 + t_0) - \frac{1 + 2t_0}{1 + t_0} - \frac{t_0}{1 + t_0}t\right)\,.
    \end{align}
    Therefore, $\beta\log(x) - x$ is certainly smaller than $\log(\varepsilon)$ if the r.h.s of the inequality is, i.e. if
    \begin{align}
        t & > \frac{1 + t_0}{t_0}\left(\log(\beta) - \frac{\log(\varepsilon)}{\beta} + \log(1 + t_0) - \frac{1 + 2t_0}{1 + t_0}\right)\,.
    \end{align}
    It is convenient to choose $t_0 > 0$ such that $\log(1 + t_0) - \frac{1 + 2t_0}{1 + t_0}$. This is achieved for $t_0 := e^{2 + W\left(e^{-2}\right)}$, for which $\frac{t_0}{1 + t_0} = a_0 =  1 + W\left(e^{-2}\right)$. Therefore, for $x > \beta\left(1 + \frac{1}{a_0}\left(\log(\beta) - \frac{\log(\varepsilon)}{\beta}\right)\right)$
    we do have $\beta\log(x) - x < \log(\varepsilon)$ i.e. $x^{\beta}\exp(-x) < \varepsilon$.
\end{proof}
\end{proposition}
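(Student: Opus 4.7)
The plan is to reduce to a single-variable inequality in $x := cd^\alpha$ and then exploit a tangent-line bound coming from concavity. With this substitution the claim becomes $x^{\beta'}e^{-x} < c^{\beta'}\varepsilon$ where $\beta' := \beta/\alpha$. The function $\phi(x) := \beta'\log x - x$ is strictly concave with unique maximum at $x = \beta'$ and maximum value $\beta'(\log\beta' - 1) = \log\bigl((\beta'/e)^{\beta'}\bigr)$; thus whenever $\varepsilon > (\beta/(\alpha c e))^{\beta/\alpha}$ the desired inequality holds for every $d > 0$, disposing of the second disjunct.

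For the first disjunct, parametrize $x = \beta'(1+t)$ with $t \geq 0$. The inequality rearranges to $g(t) := \log(1+t) - t - 1 < \beta'^{-1}\log\varepsilon - \log(\beta'/c)$, so it suffices to produce an explicit lower threshold on $t$ guaranteeing this. Since $g$ is concave, its tangent at any point $t_0 > 0$ gives a global upper bound $g(t) \leq \log(1+t_0) - (1+2t_0)/(1+t_0) - (t_0/(1+t_0))\,t$, which reduces the problem to a linear inequality in $t$.

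The cleanest bound arises by choosing $t_0$ so that the constant term in the tangent vanishes, i.e.\ $(1+t_0)\log(1+t_0) = 1+2t_0$. Setting $v := 1+t_0$ this becomes $\log v = 2 - 1/v$, and substituting $w := -1/v$ converts it into $we^w = -e^{-2}$, i.e.\ $w = W(-e^{-2})$; hence $t_0/(1+t_0) = 1 - 1/v = 1 + W(-e^{-2}) =: a_0$. The remaining linear inequality then becomes $t > a_0^{-1}\bigl[\log(\beta'/c) - \beta'^{-1}\log\varepsilon\bigr]$; back-substituting $t = x/\beta' - 1$ and $x = cd^\alpha$ gives exactly the threshold stated in the proposition.

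The main obstacle is recognizing this transcendental choice of $t_0$: one must see that the Lambert-$W$ value is the natural one making the tangent correction vanish, so that the otherwise-messy constant collapses into the clean prefactor $a_0^{-1}$. Everything else --- the rescaling, the concavity argument, and the final algebraic back-substitution --- is routine calculus, and a routine check that the global maximum of $\phi$ matches the boundary case $\varepsilon = (\beta/(\alpha c e))^{\beta/\alpha}$ glues the two cases together consistently.
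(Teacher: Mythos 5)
Your proposal is correct and follows essentially the same route as the paper's proof: reduce to the single-variable function $\beta'\log x - x$ with $\beta'=\beta/\alpha$, dispose of the case $\varepsilon > \left(\frac{\beta}{\alpha c e}\right)^{\beta/\alpha}$ via the maximum at $x=\beta'$, and otherwise bound the concave function $\log(1+t)-t-1$ by its tangent at the point where the constant term vanishes, which is exactly the Lambert-$W$ choice giving the slope $a_0$ and hence the stated threshold. The only differences are cosmetic: you carry $c$ and $\alpha$ explicitly (and state the $W(-e^{-2})$ computation cleanly, consistent with the proposition's definition of $a_0$), whereas the paper rescales to $c=1$, $\alpha=1$ first; the substance is identical.
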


\begin{proposition}
    For all $x > 0$,
    \begin{equation}
    \label{eq:1_over_1_minus_exp_bounds}
        \frac{1}{x} + \frac{1}{2} < \frac{1}{1 - e^{-x}} < \frac{1}{x} + \frac{1}{2} + \frac{x}{12}\,.
    \end{equation}
\begin{proof}
    Let us first prove the lower bound. First, observe $\frac{\mathrm{d}}{\mathrm{d}x}\left(\frac{1}{1 - e^{-x}} - \frac{1}{x} - \frac{1}{2}\right) = \frac{1}{x^2} + \frac{1}{2 - 2\cosh(x)}$. Using the power series expansion of $\cosh$, this is straightforwardly $> 0$. Since furthermore $\frac{1}{1 - e^{-x}} - \frac{1}{x} - \frac{1}{2}$ is $0$ at $x = 0$, the lower bound follows.
    
    For the upper bound, $\frac{\mathrm{d}}{\mathrm{d}x}\left(\frac{1}{1 - e^{-x}} - \frac{1}{x} - \frac{1}{2} - \frac{x}{12}\right) = \frac{12 - x^2}{12x^2} + \frac{1}{2 - 2\cosh(x)}$. If $x > \sqrt{12}$, both terms in this sum are negative so the derivative is negative. For $x < \sqrt{12}$,
    \begin{align}
        \left(\frac{12 - x^2}{12x^2}\right)^{-1} & = \sum_{k \geq 0}\frac{x^{2(k + 1)}}{(12)^k}\\
        \left(\frac{1}{2 - 2\cosh(x)}\right)^{-1} & = -2\sum_{k \geq 1}\frac{x^{2k}}{(2k)!}
    \end{align}
\end{proof}
\end{proposition}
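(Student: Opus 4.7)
The plan is to establish both inequalities by proving monotonicity of two auxiliary functions,
\[
f(x) \defeq \frac{1}{1 - e^{-x}} - \frac{1}{x} - \frac{1}{2}, \qquad g(x) \defeq f(x) - \frac{x}{12}.
\]
Both $f$ and $g$ vanish as $x \to 0^+$: this follows from the standard expansion $\frac{1}{1 - e^{-x}} = \frac{1}{x} + \frac{1}{2} + \frac{x}{12} + O(x^3)$, obtained by inverting $1 - e^{-x} = x(1 - x/2 + x^2/6 - \cdots)$ (or via l'H\^opital's rule applied twice). Given these boundary values, the lower bound reduces to $f'(x) > 0$ and the upper bound to $g'(x) < 0$ for all $x > 0$, so the whole proof becomes a pair of derivative estimates.

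For $f$, direct differentiation gives $f'(x) = \tfrac{1}{x^2} - \tfrac{1}{2\cosh x - 2}$, so $f'(x) > 0$ is equivalent to $2\cosh x - 2 > x^2$, i.e.\ $\cosh x > 1 + x^2/2$; this is immediate from the series $\cosh x = 1 + x^2/2 + \sum_{k \geq 2} x^{2k}/(2k)!$, every remaining term being strictly positive. For $g$ one computes $g'(x) = \tfrac{12 - x^2}{12 x^2} - \tfrac{1}{2\cosh x - 2}$, and I would split the analysis into two cases. When $x \geq \sqrt{12}$ the first term is $\leq 0$ while the second is strictly positive, so $g'(x) < 0$ with no further work. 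The substantive case is $0 < x < \sqrt{12}$, where both summands are positive; taking reciprocals, the goal becomes
\[
\frac{12 x^2}{12 - x^2} > 2\cosh x - 2.
\]

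Both sides admit convergent power-series expansions on this interval, namely $\frac{12 x^2}{12 - x^2} = \sum_{k \geq 1} x^{2k}/12^{k-1}$ (geometric series) and $2\cosh x - 2 = \sum_{k \geq 1} 2\, x^{2k}/(2k)!$. The desired inequality would then follow from the coefficient-wise bound $(2k)! \geq 2 \cdot 12^{k-1}$ for all $k \geq 1$, with strict inequality for at least one $k$; this factorial-versus-exponential comparison is the one genuinely delicate step of the argument, and is what I expect to be the main obstacle. I would handle it by direct computation for small $k$---the first two coefficients coincide exactly ($2! = 2$ and $4! = 24$), while at $k = 3$ one gets the strict inequality $6! = 720 > 288 = 2 \cdot 12^2$---and then extend to $k \geq 3$ by induction using the ratio $(2k+2)(2k+1)/12 \geq 1$, which is valid for every $k \geq 2$. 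The strict gap introduced at $k = 3$ ensures the series inequality is strict throughout $(0, \sqrt{12})$, which combined with $g(0^+) = 0$ gives $g < 0$ on $(0, \infty)$ and completes the argument.
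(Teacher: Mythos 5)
Your proposal is correct and follows essentially the same route as the paper: both prove the lower bound by showing $\frac{1}{1-e^{-x}}-\frac{1}{x}-\frac12$ is increasing from a zero limit at $x\to 0^+$, and the upper bound by showing the derivative of the same quantity minus $\frac{x}{12}$ is negative, splitting at $x=\sqrt{12}$ and comparing the reciprocals $\frac{12x^2}{12-x^2}$ and $2\cosh x-2$ via their power series. The only (minor) divergence is in the coefficient comparison $(2k)!\geq 2\cdot 12^{k-1}$, which you settle by checking small $k$ and an elementary induction on the ratio $(2k+2)(2k+1)/12$, whereas the paper invokes its non-asymptotic Stirling bound for the tail terms; your handling is, if anything, simpler and equally rigorous.
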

But 
\begin{align}
    2\sum_{k \geq 1}\frac{x^{2k}}{(2k)!} - \sum_{k \geq 0}\frac{x^{2(k + 1)}}{(12)^k} & = 2\sum_{k \geq 0}\frac{x^{2k + 2}}{(2k + 2)!} - \sum_{k \geq 0}\frac{x^{2(k + 1)}}{(12)^k}\\
    & \leq 2\sum_{k \geq 4}\frac{x^{2k + 2}}{(2k + 2)!} - \sum_{k \geq 4}\frac{x^{2(k + 1)}}{(12)^k}\\
    & \leq 2\sum_{k \geq 4}\frac{x^{2(k + 1)}}{e^{\frac{1}{24(k + 1) + 1}}\left(\frac{2k + 2}{e}\right)^{2k + 2}\sqrt{2\pi(2k + 2)}} - \sum_{k \geq 4}\frac{x^{2(k + 1)}}{(12)^k}\\
    & \leq \frac{2}{\sqrt{2\pi \times 10}\left(\frac{10}{e}\right)^2}\sum_{k \geq 4}\frac{x^{2(k + 1)}}{\left(\frac{10}{e}\right)^{2k}} - \sum_{k \geq 4}\frac{x^{2(k + 1)}}{(12)^k}\\
    & \leq 0
\end{align}
This shows that $\frac{\mathrm{d}}{\mathrm{d}x}\left(\frac{1}{1 - e^{-x}} - \frac{1}{x} - \frac{1}{2} - \frac{x}{12}\right) = \frac{12 - x^2}{12x^2} + \frac{1}{2 - 2\cosh(x)}$ also for $x > 12$. Therefore, $\frac{1}{1 - e^{-x}} - \frac{1}{x} - \frac{1}{2} - \frac{x}{12}$ decreases with $x$ for $x > 0$ and the upper bound follows.

\begin{proposition}
    For all $x > 0$,
    \begin{align}
    \label{eq:exp_over_1_minus_expr_bound}
        \frac{e^{-x}}{1 - e^{-2x}} & \geq \frac{1}{2x} - \frac{x}{12}\,.
    \end{align}
\begin{proof}
    For all $x > 0$,
    \begin{align}
        \left(\frac{1}{2x} - \frac{x}{12}\right)^{-1} & = \frac{12x}{6 - x^2}\\
        & = 2\sum_{k \geq 0}\frac{x^{2k + 1}}{6^k}
    \end{align}
    \begin{align}
        \left(\frac{e^{-x}}{1 - e^{-2x}}\right)^{-1} & = 2\sinh(x)\\
        & = 2\sum_{k \geq 0}\frac{x^{2k + 1}}{(2k + 1)!}
    \end{align}
    But for $k \geq 3$,
    \begin{align}
        (2k + 1)! & \geq (2k + 1)^{2k + 1}e^{-(2k + 1)}\sqrt{2\pi(2k + 1)}\\
        & \geq 7\sqrt{2\pi 7}\left(\frac{7}{e}\right)^{2k}\\
        & \geq 6^k
    \end{align}
    and this is initially seen to hold for $k \in \{ 0, 1, 2 \}$ as well. This shows that $\left(\frac{e^{-x}}{1 - e^{-2x}}\right)^{-1} \leq \left(\frac{1}{2x} - \frac{x}{12}\right)^{-1}$, which is the claim.
\end{proof}
\end{proposition}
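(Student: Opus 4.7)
The plan is to rewrite the left-hand side as $\frac{1}{2\sinh(x)}$, since $\frac{e^{-x}}{1-e^{-2x}} = \frac{1}{e^x - e^{-x}}$, and the right-hand side as $\frac{6-x^2}{12x}$. The inequality then reads $\frac{1}{2\sinh(x)} \geq \frac{6-x^2}{12x}$. The first observation I would make is that whenever $x \geq \sqrt{6}$ the right-hand side is non-positive while the left-hand side is strictly positive, so the claim is immediate in that range. The work therefore reduces to the window $0 < x < \sqrt{6}$.

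In this window, $6-x^2>0$, so I can clear denominators safely: the inequality is equivalent to $\sinh(x) \leq \frac{6x}{6-x^2}$. Both sides are odd analytic functions of $x$ with nice power series, and this is the key step. Expanding,
\begin{align*}
\sinh(x) &= \sum_{k \geq 0}\frac{x^{2k+1}}{(2k+1)!}, &
\frac{6x}{6-x^2} &= \frac{x}{1 - x^2/6} = \sum_{k \geq 0}\frac{x^{2k+1}}{6^k},
\end{align*}
where the second series converges on $|x|<\sqrt{6}$, exactly the regime we are in. Comparing term by term, the inequality will follow once I establish $(2k+1)! \geq 6^k$ for every integer $k \geq 0$.

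This last estimate is straightforward by induction on $k$. The base case $k=0$ gives $1 \geq 1$; assuming $(2k+1)! \geq 6^k$, I get $(2k+3)! = (2k+3)(2k+2)(2k+1)! \geq (2k+3)(2k+2)\cdot 6^k$, and since $(2k+3)(2k+2) \geq 3\cdot 2 = 6$ for every $k \geq 0$, the induction closes. Combining this with the term-by-term series comparison on $(0,\sqrt{6})$ and the trivial observation on $[\sqrt{6},\infty)$ completes the proof.

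There is essentially no hard step here; the only minor pitfall is bookkeeping around the sign change at $x=\sqrt{6}$ and making sure the power series comparison is done inside its radius of convergence. An alternative I would briefly consider is differentiating $f(x) := \frac{1}{2\sinh(x)} - \frac{1}{2x} + \frac{x}{12}$ and arguing monotonicity from $f(0^+)=0$, but this turns out to be messier than the power-series route because $f'$ is not obviously single-signed on all of $(0,\infty)$, whereas the term-by-term comparison of Taylor coefficients is clean and self-contained.
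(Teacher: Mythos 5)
Your proof is correct and follows essentially the same route as the paper's: both reduce the claim to the term-by-term comparison of the series $\sinh(x)=\sum_{k\geq 0}x^{2k+1}/(2k+1)!$ with $\sum_{k\geq 0}x^{2k+1}/6^k$, hinging on the coefficient inequality $(2k+1)!\geq 6^k$. Your version is in fact slightly tidier, since you establish $(2k+1)!\geq 6^k$ by a one-line induction rather than the paper's Stirling bound, and you explicitly dispose of the range $x\geq\sqrt{6}$ (where the right-hand side is non-positive) and the radius of convergence, points the paper's proof passes over silently.
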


\begin{proposition}
    For all $x \in \left(0, \frac{1}{2^{3/4}}\right)$ (for example), the following holds:
    \begin{align}
        \label{eq:log_1_minus_x_bounds}
        -x - x^2 & \leq \log(1 - x) \leq -x - \frac{x^2}{2}\,.
    \end{align}
    \begin{proof}
        This follows easily from the series expansion $\ln(1 - x) = -\sum_{k \geq 1}\frac{x^k}{k}$.
    \end{proof}
\end{proposition}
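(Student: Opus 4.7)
The plan is to use the Taylor expansion $\log(1-x) = -\sum_{k\geq 1} x^k/k$, valid for $|x|<1$, and then separately establish the upper and lower inequalities as bounds on the tail $\sum_{k\geq 3} x^k/k$.

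First I would handle the upper bound, which is the easy half. Rearranging, the claim $\log(1-x) \leq -x - x^2/2$ is equivalent to $\sum_{k\geq 3} x^k/k \geq 0$, which is obvious for $x \in (0,1)$ since every term is nonnegative. This inequality in fact holds on the entire interval $(0,1)$ and does not require the restriction $x < 2^{-3/4}$.

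The lower bound is where the range restriction enters. Rearranging $-x - x^2 \leq \log(1-x)$ yields the equivalent statement
\begin{align}
\sum_{k\geq 3}\frac{x^k}{k} & \leq \frac{x^2}{2}.
\end{align}
I would bound the left-hand side crudely by pulling out the factor $1/k \leq 1/3$ for $k\geq 3$:
\begin{align}
\sum_{k\geq 3}\frac{x^k}{k} & \leq \frac{1}{3}\sum_{k\geq 3}x^k = \frac{x^3}{3(1-x)}.
\end{align}
It thus suffices to show $x^3/(3(1-x)) \leq x^2/2$, i.e.\ $2x \leq 3(1-x)$, i.e.\ $x \leq 3/5$. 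Since $2^{-3/4} < 3/5$ (as $(2^{-3/4})^4 = 1/8 < 81/625 = (3/5)^4$), the hypothesis $x \in (0, 2^{-3/4})$ indeed implies $x \leq 3/5$, and the lower bound follows.

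There is no real obstacle here; the only substantive point is checking that the elementary tail bound $\sum_{k\geq 3} x^k/k \leq x^3/(3(1-x))$ combined with the arithmetic inequality $2^{-3/4} < 3/5$ gives enough room to conclude. A sharper factor such as $1/4$ or further tail analysis could extend the range, but the stated interval $(0, 2^{-3/4})$ is already covered by the above estimate.
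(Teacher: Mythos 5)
Your proof is correct and follows the same route as the paper, which simply invokes the series expansion $\ln(1-x)=-\sum_{k\geq 1}x^k/k$; you have merely filled in the elementary tail estimate (and the check $2^{-3/4}<3/5$) that the paper leaves implicit. No changes needed.
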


\begin{proposition}
    For all $x > \frac{\log(3)}{2}$,
    \begin{align}
    \label{eq:inequality_tanh}
        \tanh(x) < 1 - \frac{3}{2}e^{-2x}\,.
    \end{align}
    \begin{proof}
        This is immediate from $\tanh(x) = \frac{1 - e^{-2x}}{1 + e^{-2x}}$.
    \end{proof}
\end{proposition}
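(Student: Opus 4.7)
The plan is to reduce the inequality to an elementary algebraic statement via the standard identity $\tanh(x) = (1 - e^{-2x})/(1 + e^{-2x})$. First I would rewrite the left-hand side as
\begin{align*}
    \tanh(x) & = 1 - \frac{2e^{-2x}}{1 + e^{-2x}},
\end{align*}
so that the claim $\tanh(x) < 1 - \frac{3}{2}e^{-2x}$ becomes the equivalent assertion
\begin{align*}
    \frac{2e^{-2x}}{1 + e^{-2x}} & > \frac{3}{2}e^{-2x}.
\end{align*}

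Since $e^{-2x} > 0$, I can cancel this common factor on both sides, reducing the problem to $\frac{2}{1 + e^{-2x}} > \frac{3}{2}$. Clearing the (positive) denominator yields $4 > 3(1 + e^{-2x})$, i.e.\ $e^{-2x} < \frac{1}{3}$. Taking logarithms (using the monotonicity of $\log$) gives $-2x < -\log(3)$, which is precisely the hypothesis $x > \frac{\log(3)}{2}$.

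The main obstacle is essentially nonexistent: every step in the chain above is reversible, so the proposition actually extends to an equivalence $\tanh(x) < 1 - \frac{3}{2}e^{-2x} \Longleftrightarrow x > \frac{\log(3)}{2}$, and the whole argument amounts to a two-line algebraic computation together with a single invocation of the monotonicity of $\log$. No estimates on $\tanh$ beyond its closed-form expression in terms of $e^{-2x}$ are needed.
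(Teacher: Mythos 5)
Your proof is correct, and it is exactly the argument the paper intends: the paper's one-line proof ("immediate from $\tanh(x) = \frac{1 - e^{-2x}}{1 + e^{-2x}}$") is precisely the algebraic reduction you spell out, ending with $e^{-2x} < \frac{1}{3}$, i.e.\ $x > \frac{\log(3)}{2}$. Your observation that the steps are reversible, so the inequality is in fact equivalent to the hypothesis, is a correct (if unneeded) bonus.
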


\bibliographystyle{apsrev}
\bibliography{bibliography}

\begin{thebibliography}{9}%
\makeatletter
\providecommand \@ifxundefined [1]{%
 \@ifx{#1\undefined}
}%
\providecommand \@ifnum [1]{%
 \ifnum #1\expandafter \@firstoftwo
 \else \expandafter \@secondoftwo
 \fi
}%
\providecommand \@ifx [1]{%
 \ifx #1\expandafter \@firstoftwo
 \else \expandafter \@secondoftwo
 \fi
}%
\providecommand \natexlab [1]{#1}%
\providecommand \enquote  [1]{``#1''}%
\providecommand \bibnamefont  [1]{#1}%
\providecommand \bibfnamefont [1]{#1}%
\providecommand \citenamefont [1]{#1}%
\providecommand \href@noop [0]{\@secondoftwo}%
\providecommand \href [0]{\begingroup \@sanitize@url \@href}%
\providecommand \@href[1]{\@@startlink{#1}\@@href}%
\providecommand \@@href[1]{\endgroup#1\@@endlink}%
\providecommand \@sanitize@url [0]{\catcode `\\12\catcode `\$12\catcode
  `\&12\catcode `\#12\catcode `\^12\catcode `\_12\catcode `\%12\relax}%
\providecommand \@@startlink[1]{}%
\providecommand \@@endlink[0]{}%
\providecommand \url  [0]{\begingroup\@sanitize@url \@url }%
\providecommand \@url [1]{\endgroup\@href {#1}{\urlprefix }}%
\providecommand \urlprefix  [0]{URL }%
\providecommand \Eprint [0]{\href }%
\providecommand \doibase [0]{http://dx.doi.org/}%
\providecommand \selectlanguage [0]{\@gobble}%
\providecommand \bibinfo  [0]{\@secondoftwo}%
\providecommand \bibfield  [0]{\@secondoftwo}%
\providecommand \translation [1]{[#1]}%
\providecommand \BibitemOpen [0]{}%
\providecommand \bibitemStop [0]{}%
\providecommand \bibitemNoStop [0]{.\EOS\space}%
\providecommand \EOS [0]{\spacefactor3000\relax}%
\providecommand \BibitemShut  [1]{\csname bibitem#1\endcsname}%
\let\auto@bib@innerbib\@empty
\bibitem [{\citenamefont {Woods}\ \emph {et~al.}(2019)\citenamefont {Woods},
  \citenamefont {Silva},\ and\ \citenamefont
  {Oppenheim}}]{woods_autonomous_2019}%
  \BibitemOpen
  \bibfield  {author} {\bibinfo {author} {\bibfnamefont {M.~P.}\ \bibnamefont
  {Woods}}, \bibinfo {author} {\bibfnamefont {R.}~\bibnamefont {Silva}}, \ and\
  \bibinfo {author} {\bibfnamefont {J.}~\bibnamefont {Oppenheim}},\ }\href
  {\doibase 10.1007/s00023-018-0736-9} {\bibfield  {journal} {\bibinfo
  {journal} {Annales Henri Poincar{\'e}}\ }\textbf {\bibinfo {volume} {20}},\
  \bibinfo {pages} {125} (\bibinfo {year} {2019})};\ \Eprint
  {http://arxiv.org/abs/1607.04591} {arXiv:1607.04591 [quant-ph]} \BibitemShut
  {NoStop}%
\bibitem [{\citenamefont {Woods}\ \emph {et~al.}(2018)\citenamefont {Woods},
  \citenamefont {Silva}, \citenamefont {P{\"u}tz}, \citenamefont {Stupar},\
  and\ \citenamefont {Renner}}]{woods_quantum_2018}%
  \BibitemOpen
  \bibfield  {author} {\bibinfo {author} {\bibfnamefont {M.~P.}\ \bibnamefont
  {Woods}}, \bibinfo {author} {\bibfnamefont {R.}~\bibnamefont {Silva}},
  \bibinfo {author} {\bibfnamefont {G.}~\bibnamefont {P{\"u}tz}}, \bibinfo
  {author} {\bibfnamefont {S.}~\bibnamefont {Stupar}}, \ and\ \bibinfo {author}
  {\bibfnamefont {R.}~\bibnamefont {Renner}},\ }\Eprint {http://arxiv.org/abs/1806.00491} {arXiv:1806.00491
  [quant-ph]} \BibitemShut {NoStop}%
\bibitem [{\citenamefont {Braginsky}\ and\ \citenamefont
  {Khalili}(1992)}]{braginsky_quantum_1992}%
  \BibitemOpen
  \bibfield  {author} {\bibinfo {author} {\bibfnamefont {V.~B.}\ \bibnamefont
  {Braginsky}}\ and\ \bibinfo {author} {\bibfnamefont {F.~Y.}\ \bibnamefont
  {Khalili}},\ }\href {\doibase 10.1017/CBO9780511622748} {\emph {\bibinfo
  {title} {Quantum {{Measurement}}}}}\ (\bibinfo  {publisher} {{Cambridge
  University Press}},\ \bibinfo {year} {1992})\BibitemShut {NoStop}%
\bibitem [{\citenamefont {Diaconis}(1988)}]{diaconis_group_1988}%
  \BibitemOpen
  \bibfield  {author} {\bibinfo {author} {\bibfnamefont {P.}~\bibnamefont
  {Diaconis}},\ }\href {\doibase 10.1214/lnms/1215467412} {\emph {\bibinfo
  {title} {Group Representations in Probability and Statistics}}},\ \bibinfo
  {series} {Lecture Notes-Monograph Series}, Vol.~\bibinfo {volume} {11}\
  (\bibinfo  {publisher} {{Institute of Mathematical Statistics}},\ \bibinfo
  {address} {{Hayward, Calif}},\ \bibinfo {year} {1988})\BibitemShut {NoStop}%
\bibitem [{\citenamefont {Billingsley}(1999)}]{billingsley_convergence_1999}%
  \BibitemOpen
  \bibfield  {author} {\bibinfo {author} {\bibfnamefont {P.}~\bibnamefont
  {Billingsley}},\ }\href {\doibase 10.1002/9780470316962} {\emph {\bibinfo
  {title} {Convergence of {{Probability Measures}}}}},\ Wiley {{Series}} in
  {{Probability}} and {{Statistics}}\ (\bibinfo  {publisher} {{John Wiley \&
  Sons, Inc.}},\ \bibinfo {address} {{Hoboken, NJ, USA}},\ \bibinfo {year}
  {1999})\BibitemShut {NoStop}%
\bibitem [{\citenamefont {Robbins}(1955)}]{robbins_1955}%
  \BibitemOpen
  \bibfield  {author} {\bibinfo {author} {\bibfnamefont {H.}~\bibnamefont
  {Robbins}},\ }\href {\doibase 10.2307/2308012} {\bibfield  {journal}
  {\bibinfo  {journal} {The American Mathematical Monthly}\ }\textbf {\bibinfo
  {volume} {62}},\ \bibinfo {pages} {26} (\bibinfo {year} {1955})}\BibitemShut
  {NoStop}%
\bibitem [{\citenamefont {Saltzer}(1963)}]{saltzer_1963}%
  \BibitemOpen
  \bibfield  {author} {\bibinfo {author} {\bibfnamefont {C.}~\bibnamefont
  {Saltzer}},\ }\href {\doibase 10.1137/1005047} {\bibfield  {journal}
  {\bibinfo  {journal} {SIAM Review}\ }\textbf {\bibinfo {volume} {5}},\
  \bibinfo {pages} {162–162} (\bibinfo {year} {1963})}\BibitemShut {NoStop}%
\bibitem [{\citenamefont {Andrews}(1965)}]{andrews_1965}%
  \BibitemOpen
  \bibfield  {author} {\bibinfo {author} {\bibfnamefont {G.~E.}\ \bibnamefont
  {Andrews}},\ }\href {\doibase 10.1090/s0002-9939-1965-0171725-x} {\bibfield
  {journal} {\bibinfo  {journal} {Proceedings of the American Mathematical
  Society}\ }\textbf {\bibinfo {volume} {16}},\ \bibinfo {pages} {333–333}
  (\bibinfo {year} {1965})}\BibitemShut {NoStop}%
\bibitem [{\citenamefont {Ruzzi}(2006)}]{ruzzi_2006}%
  \BibitemOpen
  \bibfield  {author} {\bibinfo {author} {\bibfnamefont {M.}~\bibnamefont
  {Ruzzi}},\ }\href {\doibase 10.1063/1.2209770} {\bibfield  {journal}
  {\bibinfo  {journal} {Journal of Mathematical Physics}\ }\textbf {\bibinfo
  {volume} {47}},\ \bibinfo {pages} {063507} (\bibinfo {year} {2006})};\
  \Eprint {http://arxiv.org/abs/math-ph/0604018} {arXiv:math-ph/0604018}
  \BibitemShut {NoStop}%
\end{thebibliography}%


\begin{thebibliography}{35}%
\makeatletter
\providecommand \@ifxundefined [1]{%
 \@ifx{#1\undefined}
}%
\providecommand \@ifnum [1]{%
 \ifnum #1\expandafter \@firstoftwo
 \else \expandafter \@secondoftwo
 \fi
}%
\providecommand \@ifx [1]{%
 \ifx #1\expandafter \@firstoftwo
 \else \expandafter \@secondoftwo
 \fi
}%
\providecommand \natexlab [1]{#1}%
\providecommand \enquote  [1]{``#1''}%
\providecommand \bibnamefont  [1]{#1}%
\providecommand \bibfnamefont [1]{#1}%
\providecommand \citenamefont [1]{#1}%
\providecommand \href@noop [0]{\@secondoftwo}%
\providecommand \href [0]{\begingroup \@sanitize@url \@href}%
\providecommand \@href[1]{\@@startlink{#1}\@@href}%
\providecommand \@@href[1]{\endgroup#1\@@endlink}%
\providecommand \@sanitize@url [0]{\catcode `\\12\catcode `\$12\catcode
  `\&12\catcode `\#12\catcode `\^12\catcode `\_12\catcode `\%12\relax}%
\providecommand \@@startlink[1]{}%
\providecommand \@@endlink[0]{}%
\providecommand \url  [0]{\begingroup\@sanitize@url \@url }%
\providecommand \@url [1]{\endgroup\@href {#1}{\urlprefix }}%
\providecommand \urlprefix  [0]{URL }%
\providecommand \Eprint [0]{\href }%
\providecommand \doibase [0]{http://dx.doi.org/}%
\providecommand \selectlanguage [0]{\@gobble}%
\providecommand \bibinfo  [0]{\@secondoftwo}%
\providecommand \bibfield  [0]{\@secondoftwo}%
\providecommand \translation [1]{[#1]}%
\providecommand \BibitemOpen [0]{}%
\providecommand \bibitemStop [0]{}%
\providecommand \bibitemNoStop [0]{.\EOS\space}%
\providecommand \EOS [0]{\spacefactor3000\relax}%
\providecommand \BibitemShut  [1]{\csname bibitem#1\endcsname}%
\let\auto@bib@innerbib\@empty
\bibitem [{\citenamefont {Braginsky}\ and\ \citenamefont
  {Khalili}(1992)}]{braginsky_quantum_1992}%
  \BibitemOpen
  \bibfield  {author} {\bibinfo {author} {\bibfnamefont {V.~B.}\ \bibnamefont
  {Braginsky}}\ and\ \bibinfo {author} {\bibfnamefont {F.~Y.}\ \bibnamefont
  {Khalili}},\ }\href {\doibase 10.1017/CBO9780511622748} {\emph {\bibinfo
  {title} {Quantum {{Measurement}}}}}\ (\bibinfo  {publisher} {{Cambridge
  University Press}},\ \bibinfo {year} {1992})\BibitemShut {NoStop}%
\bibitem [{\citenamefont {Kippenberg}\ and\ \citenamefont
  {Vahala}(2008)}]{kippenberg_cavity_2008}%
  \BibitemOpen
  \bibfield  {author} {\bibinfo {author} {\bibfnamefont {T.~J.}\ \bibnamefont
  {Kippenberg}}\ and\ \bibinfo {author} {\bibfnamefont {K.~J.}\ \bibnamefont
  {Vahala}},\ }\href {\doibase 10.1126/science.1156032} {\bibfield  {journal}
  {\bibinfo  {journal} {Science}\ }\textbf {\bibinfo {volume} {321}},\ \bibinfo
  {pages} {1172} (\bibinfo {year} {2008})}\BibitemShut {NoStop}%
\bibitem [{\citenamefont {Cripe}\ \emph {et~al.}(2019)\citenamefont {Cripe},
  \citenamefont {Aggarwal}, \citenamefont {Lanza}, \citenamefont {Libson},
  \citenamefont {Singh}, \citenamefont {Heu}, \citenamefont {Follman},
  \citenamefont {Cole}, \citenamefont {Mavalvala},\ and\ \citenamefont
  {Corbitt}}]{cripe_measurement_2019}%
  \BibitemOpen
  \bibfield  {author} {\bibinfo {author} {\bibfnamefont {J.}~\bibnamefont
  {Cripe}}, \bibinfo {author} {\bibfnamefont {N.}~\bibnamefont {Aggarwal}},
  \bibinfo {author} {\bibfnamefont {R.}~\bibnamefont {Lanza}}, \bibinfo
  {author} {\bibfnamefont {A.}~\bibnamefont {Libson}}, \bibinfo {author}
  {\bibfnamefont {R.}~\bibnamefont {Singh}}, \bibinfo {author} {\bibfnamefont
  {P.}~\bibnamefont {Heu}}, \bibinfo {author} {\bibfnamefont {D.}~\bibnamefont
  {Follman}}, \bibinfo {author} {\bibfnamefont {G.~D.}\ \bibnamefont {Cole}},
  \bibinfo {author} {\bibfnamefont {N.}~\bibnamefont {Mavalvala}}, \ and\
  \bibinfo {author} {\bibfnamefont {T.}~\bibnamefont {Corbitt}},\ }\href
  {\doibase 10.1038/s41586-019-1051-4} {\bibfield  {journal} {\bibinfo
  {journal} {Nature}\ }\textbf {\bibinfo {volume} {568}},\ \bibinfo {pages}
  {364} (\bibinfo {year} {2019})}\BibitemShut {NoStop}%
\bibitem [{\citenamefont {{LIGO Scientific
  Collaboration}}(2013)}]{ligo_scientific_collaboration_enhanced_2013}%
  \BibitemOpen
  \bibfield  {author} {\bibinfo {author} {\bibnamefont {{LIGO Scientific
  Collaboration}}},\ }\href {\doibase 10.1038/nphoton.2013.177} {\bibfield
  {journal} {\bibinfo  {journal} {Nature Photonics}\ }\textbf {\bibinfo
  {volume} {7}},\ \bibinfo {pages} {613} (\bibinfo {year} {2013})}\BibitemShut
  {NoStop}%
\bibitem [{lig(2019)}]{ligopr}%
  \BibitemOpen
  \href@noop {} {\enquote {\bibinfo {title} {{LIGO} and {Virgo} resume search
  for ripples in space and time},}\ }\bibinfo {howpublished}
  {\href{https://www.ligo.caltech.edu/news/ligo20190326}{www.ligo.caltech.edu/news/ligo20190326}}
  (\bibinfo {year} {2019})\BibitemShut {NoStop}%
\bibitem [{\citenamefont {Caves}(1981)}]{caves_quantum-mechanical_1981}%
  \BibitemOpen
  \bibfield  {author} {\bibinfo {author} {\bibfnamefont {C.~M.}\ \bibnamefont
  {Caves}},\ }\href {\doibase 10.1103/PhysRevD.23.1693} {\bibfield  {journal}
  {\bibinfo  {journal} {Physical Review D}\ }\textbf {\bibinfo {volume} {23}},\
  \bibinfo {pages} {1693} (\bibinfo {year} {1981})}\BibitemShut {NoStop}%
\bibitem [{\citenamefont {Braginski{\u \i}}\ and\ \citenamefont
  {Vorontsov}(1975)}]{braginskii_quantum-mechanical_1975}%
  \BibitemOpen
  \bibfield  {author} {\bibinfo {author} {\bibfnamefont {V.~B.}\ \bibnamefont
  {Braginski{\u \i}}}\ and\ \bibinfo {author} {\bibfnamefont {Y.~I.}\
  \bibnamefont {Vorontsov}},\ }\href {\doibase 10.1070/PU1975v017n05ABEH004362}
  {\bibfield  {journal} {\bibinfo  {journal} {Soviet Physics Uspekhi}\ }\textbf
  {\bibinfo {volume} {17}},\ \bibinfo {pages} {644} (\bibinfo {year}
  {1975})}\BibitemShut {NoStop}%
\bibitem [{\citenamefont {Braginsky}\ \emph {et~al.}(1980)\citenamefont
  {Braginsky}, \citenamefont {Vorontsov},\ and\ \citenamefont
  {Thorne}}]{braginsky_quantum_1980}%
  \BibitemOpen
  \bibfield  {author} {\bibinfo {author} {\bibfnamefont {V.~B.}\ \bibnamefont
  {Braginsky}}, \bibinfo {author} {\bibfnamefont {Y.~I.}\ \bibnamefont
  {Vorontsov}}, \ and\ \bibinfo {author} {\bibfnamefont {K.~S.}\ \bibnamefont
  {Thorne}},\ }\href {\doibase 10.1126/science.209.4456.547} {\bibfield
  {journal} {\bibinfo  {journal} {Science}\ }\textbf {\bibinfo {volume}
  {209}},\ \bibinfo {pages} {547} (\bibinfo {year} {1980})}\BibitemShut
  {NoStop}%
\bibitem [{\citenamefont {Caves}\ \emph {et~al.}(1980)\citenamefont {Caves},
  \citenamefont {Thorne}, \citenamefont {Drever}, \citenamefont {Sandberg},\
  and\ \citenamefont {Zimmermann}}]{caves_measurement_1980}%
  \BibitemOpen
  \bibfield  {author} {\bibinfo {author} {\bibfnamefont {C.~M.}\ \bibnamefont
  {Caves}}, \bibinfo {author} {\bibfnamefont {K.~S.}\ \bibnamefont {Thorne}},
  \bibinfo {author} {\bibfnamefont {R.~W.~P.}\ \bibnamefont {Drever}}, \bibinfo
  {author} {\bibfnamefont {V.~D.}\ \bibnamefont {Sandberg}}, \ and\ \bibinfo
  {author} {\bibfnamefont {M.}~\bibnamefont {Zimmermann}},\ }\href {\doibase
  10.1103/RevModPhys.52.341} {\bibfield  {journal} {\bibinfo  {journal}
  {Reviews of Modern Physics}\ }\textbf {\bibinfo {volume} {52}},\ \bibinfo
  {pages} {341} (\bibinfo {year} {1980})}\BibitemShut {NoStop}%
\bibitem [{\citenamefont {Thorne}\ \emph {et~al.}(1978)\citenamefont {Thorne},
  \citenamefont {Drever}, \citenamefont {Caves}, \citenamefont {Zimmermann},\
  and\ \citenamefont {Sandberg}}]{thorne_quantum_1978}%
  \BibitemOpen
  \bibfield  {author} {\bibinfo {author} {\bibfnamefont {K.~S.}\ \bibnamefont
  {Thorne}}, \bibinfo {author} {\bibfnamefont {R.~W.~P.}\ \bibnamefont
  {Drever}}, \bibinfo {author} {\bibfnamefont {C.~M.}\ \bibnamefont {Caves}},
  \bibinfo {author} {\bibfnamefont {M.}~\bibnamefont {Zimmermann}}, \ and\
  \bibinfo {author} {\bibfnamefont {V.~D.}\ \bibnamefont {Sandberg}},\ }\href
  {\doibase 10.1103/PhysRevLett.40.667} {\bibfield  {journal} {\bibinfo
  {journal} {Physical Review Letters}\ }\textbf {\bibinfo {volume} {40}},\
  \bibinfo {pages} {667} (\bibinfo {year} {1978})}\BibitemShut {NoStop}%
\bibitem [{\citenamefont {Koopman}(1931)}]{koopman_hamiltonian_1931}%
  \BibitemOpen
  \bibfield  {author} {\bibinfo {author} {\bibfnamefont {B.~O.}\ \bibnamefont
  {Koopman}},\ }\href {\doibase 10.1073/pnas.17.5.315} {\bibfield  {journal}
  {\bibinfo  {journal} {Proceedings of the National Academy of Sciences}\
  }\textbf {\bibinfo {volume} {17}},\ \bibinfo {pages} {315} (\bibinfo {year}
  {1931})}\BibitemShut {NoStop}%
\bibitem [{\citenamefont {Tsang}\ and\ \citenamefont
  {Caves}(2012)}]{tsang_evading_2012}%
  \BibitemOpen
  \bibfield  {author} {\bibinfo {author} {\bibfnamefont {M.}~\bibnamefont
  {Tsang}}\ and\ \bibinfo {author} {\bibfnamefont {C.~M.}\ \bibnamefont
  {Caves}},\ }\href {\doibase 10.1103/PhysRevX.2.031016} {\bibfield  {journal}
  {\bibinfo  {journal} {Physical Review X}\ }\textbf {\bibinfo {volume} {2}},\
  \bibinfo {pages} {031016} (\bibinfo {year} {2012})};\ \Eprint
  {http://arxiv.org/abs/1203.2317} {arXiv:1203.2317 [quant-ph]} \BibitemShut
  {NoStop}%
\bibitem [{\citenamefont {Tsang}\ and\ \citenamefont
  {Caves}(2010)}]{tsang_coherent_2010}%
  \BibitemOpen
  \bibfield  {author} {\bibinfo {author} {\bibfnamefont {M.}~\bibnamefont
  {Tsang}}\ and\ \bibinfo {author} {\bibfnamefont {C.~M.}\ \bibnamefont
  {Caves}},\ }\href {\doibase 10.1103/PhysRevLett.105.123601} {\bibfield
  {journal} {\bibinfo  {journal} {Physical Review Letters}\ }\textbf {\bibinfo
  {volume} {105}},\ \bibinfo {pages} {123601} (\bibinfo {year} {2010})};\
  \Eprint {http://arxiv.org/abs/1006.1005} {arXiv:1006.1005 [quant-ph]}
  \BibitemShut {NoStop}%
\bibitem [{\citenamefont {Boulebnane}\ \emph {et~al.}(2019)\citenamefont
  {Boulebnane}, \citenamefont {Renes},\ and\ \citenamefont {Woods}}]{sami}%
  \BibitemOpen
  \bibfield  {author} {\bibinfo {author} {\bibfnamefont {S.}~\bibnamefont
  {Boulebnane}}, \bibinfo {author} {\bibfnamefont {J.~M.}\ \bibnamefont
  {Renes}}, \ and\ \bibinfo {author} {\bibfnamefont {M.~P.}\ \bibnamefont
  {Woods}},\ }\href@noop {} {\bibfield  {journal} {\bibinfo  {journal} {In
  preparation.}\ } (\bibinfo {year} {2019})}\BibitemShut {NoStop}%
\bibitem [{\citenamefont {Pauli}(1926)}]{pauli_quantentheorie_1926}%
  \BibitemOpen
  \bibfield  {author} {\bibinfo {author} {\bibfnamefont {W.}~\bibnamefont
  {Pauli}},\ }in\ \href {\doibase 10.1007/978-3-642-99593-4_1} {\emph {\bibinfo
  {booktitle} {{Quanten}}}},\ \bibinfo {series and number} {{Handbuch der
  Physik}},\ \bibinfo {editor} {edited by\ \bibinfo {editor} {\bibfnamefont
  {W.}~\bibnamefont {Bothe}}, \bibinfo {editor} {\bibfnamefont
  {J.}~\bibnamefont {Franck}}, \bibinfo {editor} {\bibfnamefont
  {P.}~\bibnamefont {Jordan}}, \bibinfo {editor} {\bibfnamefont
  {H.}~\bibnamefont {Kulenkampff}}, \bibinfo {editor} {\bibfnamefont
  {R.}~\bibnamefont {Ladenburg}}, \bibinfo {editor} {\bibfnamefont
  {W.}~\bibnamefont {Noddack}}, \bibinfo {editor} {\bibfnamefont
  {W.}~\bibnamefont {Pauli}}, \bibinfo {editor} {\bibfnamefont
  {P.}~\bibnamefont {Pringsheim}}, \ and\ \bibinfo {editor} {\bibfnamefont
  {H.}~\bibnamefont {Geiger}}}\ (\bibinfo  {publisher} {{Springer Berlin
  Heidelberg}},\ \bibinfo {address} {{Berlin, Heidelberg}},\ \bibinfo {year}
  {1926})\ pp.\ \bibinfo {pages} {1--278}\BibitemShut {NoStop}%
\bibitem [{\citenamefont {Galapon}(2002)}]{galapon_paulis_2002}%
  \BibitemOpen
  \bibfield  {author} {\bibinfo {author} {\bibfnamefont {E.}~\bibnamefont
  {Galapon}},\ }\href {\doibase 10.1098/rspa.2001.0874} {\bibfield  {journal}
  {\bibinfo  {journal} {Proceedings of the Royal Society of London. Series A:
  Mathematical, Physical and Engineering Sciences}\ }\textbf {\bibinfo {volume}
  {458}},\ \bibinfo {pages} {451} (\bibinfo {year} {2002})};\ \Eprint
  {http://arxiv.org/abs/quant-ph/9908033} {arXiv:quant-ph/9908033} \BibitemShut
  {NoStop}%
\bibitem [{\citenamefont {Julsgaard}\ \emph {et~al.}(2001)\citenamefont
  {Julsgaard}, \citenamefont {Kozhekin},\ and\ \citenamefont
  {Polzik}}]{julsgaard_experimental_2001}%
  \BibitemOpen
  \bibfield  {author} {\bibinfo {author} {\bibfnamefont {B.}~\bibnamefont
  {Julsgaard}}, \bibinfo {author} {\bibfnamefont {A.}~\bibnamefont {Kozhekin}},
  \ and\ \bibinfo {author} {\bibfnamefont {E.~S.}\ \bibnamefont {Polzik}},\
  }\href {\doibase 10.1038/35096524} {\bibfield  {journal} {\bibinfo  {journal}
  {Nature}\ }\textbf {\bibinfo {volume} {413}},\ \bibinfo {pages} {400}
  (\bibinfo {year} {2001})};\ \Eprint {http://arxiv.org/abs/quant-ph/0106057}
  {arXiv:quant-ph/0106057} \BibitemShut {NoStop}%
\bibitem [{\citenamefont {Wasilewski}\ \emph {et~al.}(2010)\citenamefont
  {Wasilewski}, \citenamefont {Jensen}, \citenamefont {Krauter}, \citenamefont
  {Renema}, \citenamefont {Balabas},\ and\ \citenamefont
  {Polzik}}]{wasilewski_quantum_2010}%
  \BibitemOpen
  \bibfield  {author} {\bibinfo {author} {\bibfnamefont {W.}~\bibnamefont
  {Wasilewski}}, \bibinfo {author} {\bibfnamefont {K.}~\bibnamefont {Jensen}},
  \bibinfo {author} {\bibfnamefont {H.}~\bibnamefont {Krauter}}, \bibinfo
  {author} {\bibfnamefont {J.~J.}\ \bibnamefont {Renema}}, \bibinfo {author}
  {\bibfnamefont {M.~V.}\ \bibnamefont {Balabas}}, \ and\ \bibinfo {author}
  {\bibfnamefont {E.~S.}\ \bibnamefont {Polzik}},\ }\href {\doibase
  10.1103/PhysRevLett.104.133601} {\bibfield  {journal} {\bibinfo  {journal}
  {Physical Review Letters}\ }\textbf {\bibinfo {volume} {104}},\ \bibinfo
  {pages} {133601} (\bibinfo {year} {2010})};\ \Eprint
  {http://arxiv.org/abs/0907.2453} {arXiv:0907.2453 [quant-ph]} \BibitemShut
  {NoStop}%
\bibitem [{\citenamefont {M{\o}ller}\ \emph {et~al.}(2017)\citenamefont
  {M{\o}ller}, \citenamefont {Thomas}, \citenamefont {Vasilakis}, \citenamefont
  {Zeuthen}, \citenamefont {Tsaturyan}, \citenamefont {Balabas}, \citenamefont
  {Jensen}, \citenamefont {Schliesser}, \citenamefont {Hammerer},\ and\
  \citenamefont {Polzik}}]{moller_quantum_2017}%
  \BibitemOpen
  \bibfield  {author} {\bibinfo {author} {\bibfnamefont {C.~B.}\ \bibnamefont
  {M{\o}ller}}, \bibinfo {author} {\bibfnamefont {R.~A.}\ \bibnamefont
  {Thomas}}, \bibinfo {author} {\bibfnamefont {G.}~\bibnamefont {Vasilakis}},
  \bibinfo {author} {\bibfnamefont {E.}~\bibnamefont {Zeuthen}}, \bibinfo
  {author} {\bibfnamefont {Y.}~\bibnamefont {Tsaturyan}}, \bibinfo {author}
  {\bibfnamefont {M.}~\bibnamefont {Balabas}}, \bibinfo {author} {\bibfnamefont
  {K.}~\bibnamefont {Jensen}}, \bibinfo {author} {\bibfnamefont
  {A.}~\bibnamefont {Schliesser}}, \bibinfo {author} {\bibfnamefont
  {K.}~\bibnamefont {Hammerer}}, \ and\ \bibinfo {author} {\bibfnamefont
  {E.~S.}\ \bibnamefont {Polzik}},\ }\href {\doibase 10.1038/nature22980}
  {\bibfield  {journal} {\bibinfo  {journal} {Nature}\ }\textbf {\bibinfo
  {volume} {547}},\ \bibinfo {pages} {191} (\bibinfo {year} {2017})};\ \Eprint
  {http://arxiv.org/abs/1608.03613} {arXiv:1608.03613 [quant-ph]} \BibitemShut
  {NoStop}%
\bibitem [{\citenamefont {Khalili}\ and\ \citenamefont
  {Polzik}(2018)}]{khalili_overcoming_2018}%
  \BibitemOpen
  \bibfield  {author} {\bibinfo {author} {\bibfnamefont {F.~Y.}\ \bibnamefont
  {Khalili}}\ and\ \bibinfo {author} {\bibfnamefont {E.~S.}\ \bibnamefont
  {Polzik}},\ }\href {\doibase 10.1103/PhysRevLett.121.031101} {\bibfield
  {journal} {\bibinfo  {journal} {Physical Review Letters}\ }\textbf {\bibinfo
  {volume} {121}},\ \bibinfo {pages} {031101} (\bibinfo {year} {2018})};\
  \Eprint {http://arxiv.org/abs/1710.10405} {arXiv:1710.10405 [quant-ph]}
  \BibitemShut {NoStop}%
\bibitem [{\citenamefont {Woods}\ \emph {et~al.}(2019)\citenamefont {Woods},
  \citenamefont {Silva},\ and\ \citenamefont
  {Oppenheim}}]{woods_autonomous_2019}%
  \BibitemOpen
  \bibfield  {author} {\bibinfo {author} {\bibfnamefont {M.~P.}\ \bibnamefont
  {Woods}}, \bibinfo {author} {\bibfnamefont {R.}~\bibnamefont {Silva}}, \ and\
  \bibinfo {author} {\bibfnamefont {J.}~\bibnamefont {Oppenheim}},\ }\href
  {\doibase 10.1007/s00023-018-0736-9} {\bibfield  {journal} {\bibinfo
  {journal} {Annales Henri Poincar{\'e}}\ }\textbf {\bibinfo {volume} {20}},\
  \bibinfo {pages} {125} (\bibinfo {year} {2019})};\ \Eprint
  {http://arxiv.org/abs/1607.04591} {arXiv:1607.04591 [quant-ph]} \BibitemShut
  {NoStop}%
\bibitem [{\citenamefont {Salecker}\ and\ \citenamefont
  {Wigner}(1958)}]{salecker_quantum_1958}%
  \BibitemOpen
  \bibfield  {author} {\bibinfo {author} {\bibfnamefont {H.}~\bibnamefont
  {Salecker}}\ and\ \bibinfo {author} {\bibfnamefont {E.~P.}\ \bibnamefont
  {Wigner}},\ }\href {\doibase 10.1103/PhysRev.109.571} {\bibfield  {journal}
  {\bibinfo  {journal} {Physical Review}\ }\textbf {\bibinfo {volume} {109}},\
  \bibinfo {pages} {571} (\bibinfo {year} {1958})}\BibitemShut {NoStop}%
\bibitem [{\citenamefont {Peres}(1980)}]{peres_measurement_1980}%
  \BibitemOpen
  \bibfield  {author} {\bibinfo {author} {\bibfnamefont {A.}~\bibnamefont
  {Peres}},\ }\href {\doibase 10.1119/1.12061} {\bibfield  {journal} {\bibinfo
  {journal} {American Journal of Physics}\ }\textbf {\bibinfo {volume} {48}},\
  \bibinfo {pages} {552} (\bibinfo {year} {1980})}\BibitemShut {NoStop}%
\bibitem [{\citenamefont {Van~Trees}(2001)}]{van_trees_detection_2001}%
  \BibitemOpen
  \bibfield  {author} {\bibinfo {author} {\bibfnamefont {H.~L.}\ \bibnamefont
  {Van~Trees}},\ }\href {\doibase 10.1002/0471221082} {\emph {\bibinfo {title}
  {Detection, Estimation, and Modulation Theory, {{Part I}}: {{Detection}},
  {{Estimation}}, and {{Linear Modulation Theory}}}}}\ (\bibinfo  {publisher}
  {{Wiley}},\ \bibinfo {address} {{New York; Chichester}},\ \bibinfo {year}
  {2001})\BibitemShut {NoStop}%
\bibitem [{\citenamefont
  {Tsang}(2009{\natexlab{a}})}]{tsang_time-symmetric_2009}%
  \BibitemOpen
  \bibfield  {author} {\bibinfo {author} {\bibfnamefont {M.}~\bibnamefont
  {Tsang}},\ }\href {\doibase 10.1103/PhysRevLett.102.250403} {\bibfield
  {journal} {\bibinfo  {journal} {Physical Review Letters}\ }\textbf {\bibinfo
  {volume} {102}},\ \bibinfo {pages} {250403} (\bibinfo {year}
  {2009}{\natexlab{a}})};\ \Eprint {http://arxiv.org/abs/0904.1969}
  {arXiv:0904.1969 [quant-ph]} \BibitemShut {NoStop}%
\bibitem [{\citenamefont {Tsang}(2009{\natexlab{b}})}]{tsang_optimal_2009}%
  \BibitemOpen
  \bibfield  {author} {\bibinfo {author} {\bibfnamefont {M.}~\bibnamefont
  {Tsang}},\ }\href {\doibase 10.1103/PhysRevA.80.033840} {\bibfield  {journal}
  {\bibinfo  {journal} {Physical Review A}\ }\textbf {\bibinfo {volume} {80}},\
  \bibinfo {pages} {033840} (\bibinfo {year} {2009}{\natexlab{b}})};\ \Eprint
  {http://arxiv.org/abs/0906.4133} {arXiv:0906.4133 [quant-ph]} \BibitemShut
  {NoStop}%
\bibitem [{\citenamefont {Tsang}(2010)}]{tsang_optimal_2010}%
  \BibitemOpen
  \bibfield  {author} {\bibinfo {author} {\bibfnamefont {M.}~\bibnamefont
  {Tsang}},\ }\href {\doibase 10.1103/PhysRevA.81.013824} {\bibfield  {journal}
  {\bibinfo  {journal} {Physical Review A}\ }\textbf {\bibinfo {volume} {81}},\
  \bibinfo {pages} {013824} (\bibinfo {year} {2010})};\ \Eprint
  {http://arxiv.org/abs/0909.2432} {arXiv:0909.2432 [quant-ph]} \BibitemShut
  {NoStop}%
\bibitem [{\citenamefont {Barchielli}\ \emph {et~al.}(1983)\citenamefont
  {Barchielli}, \citenamefont {Lanz},\ and\ \citenamefont
  {Prosperi}}]{barchielli_statistics_1983}%
  \BibitemOpen
  \bibfield  {author} {\bibinfo {author} {\bibfnamefont {A.}~\bibnamefont
  {Barchielli}}, \bibinfo {author} {\bibfnamefont {L.}~\bibnamefont {Lanz}}, \
  and\ \bibinfo {author} {\bibfnamefont {G.~M.}\ \bibnamefont {Prosperi}},\
  }\href {\doibase 10.1007/BF01906270} {\bibfield  {journal} {\bibinfo
  {journal} {Foundations of Physics}\ }\textbf {\bibinfo {volume} {13}},\
  \bibinfo {pages} {779} (\bibinfo {year} {1983})}\BibitemShut {NoStop}%
\bibitem [{\citenamefont {Caves}\ and\ \citenamefont
  {Milburn}(1987)}]{caves_quantum-mechanical_1987}%
  \BibitemOpen
  \bibfield  {author} {\bibinfo {author} {\bibfnamefont {C.~M.}\ \bibnamefont
  {Caves}}\ and\ \bibinfo {author} {\bibfnamefont {G.~J.}\ \bibnamefont
  {Milburn}},\ }\href {\doibase 10.1103/PhysRevA.36.5543} {\bibfield  {journal}
  {\bibinfo  {journal} {Physical Review A}\ }\textbf {\bibinfo {volume} {36}},\
  \bibinfo {pages} {5543} (\bibinfo {year} {1987})}\BibitemShut {NoStop}%
\bibitem [{\citenamefont {Wiseman}\ and\ \citenamefont
  {Milburn}(2009)}]{wiseman_quantum_2009}%
  \BibitemOpen
  \bibfield  {author} {\bibinfo {author} {\bibfnamefont {H.~M.}\ \bibnamefont
  {Wiseman}}\ and\ \bibinfo {author} {\bibfnamefont {G.~J.}\ \bibnamefont
  {Milburn}},\ }\href {\doibase 10.1017/CBO9780511813948} {\emph {\bibinfo
  {title} {Quantum {{Measurement}} and {{Control}}}}}\ (\bibinfo  {publisher}
  {{Cambridge University Press}},\ \bibinfo {year} {2009})\BibitemShut
  {NoStop}%
\bibitem [{\citenamefont {Billingsley}(1999)}]{billingsley_convergence_1999}%
  \BibitemOpen
  \bibfield  {author} {\bibinfo {author} {\bibfnamefont {P.}~\bibnamefont
  {Billingsley}},\ }\href {\doibase 10.1002/9780470316962} {\emph {\bibinfo
  {title} {Convergence of {{Probability Measures}}}}},\ Wiley {{Series}} in
  {{Probability}} and {{Statistics}}\ (\bibinfo  {publisher} {{John Wiley \&
  Sons, Inc.}},\ \bibinfo {address} {{Hoboken, NJ, USA}},\ \bibinfo {year}
  {1999})\BibitemShut {NoStop}%
\bibitem [{\citenamefont {Tsang}\ \emph {et~al.}(2011)\citenamefont {Tsang},
  \citenamefont {Wiseman},\ and\ \citenamefont
  {Caves}}]{tsang_fundamental_2011}%
  \BibitemOpen
  \bibfield  {author} {\bibinfo {author} {\bibfnamefont {M.}~\bibnamefont
  {Tsang}}, \bibinfo {author} {\bibfnamefont {H.~M.}\ \bibnamefont {Wiseman}},
  \ and\ \bibinfo {author} {\bibfnamefont {C.~M.}\ \bibnamefont {Caves}},\
  }\href {\doibase 10.1103/PhysRevLett.106.090401} {\bibfield  {journal}
  {\bibinfo  {journal} {Physical Review Letters}\ }\textbf {\bibinfo {volume}
  {106}},\ \bibinfo {pages} {090401} (\bibinfo {year} {2011})};\ \Eprint
  {http://arxiv.org/abs/1006.5407} {arXiv:1006.5407 [quant-ph]} \BibitemShut
  {NoStop}%
\bibitem [{\citenamefont {Woods}\ \emph {et~al.}(2018)\citenamefont {Woods},
  \citenamefont {Silva}, \citenamefont {P{\"u}tz}, \citenamefont {Stupar},\
  and\ \citenamefont {Renner}}]{woods_quantum_2018}%
  \BibitemOpen
  \bibfield  {author} {\bibinfo {author} {\bibfnamefont {M.~P.}\ \bibnamefont
  {Woods}}, \bibinfo {author} {\bibfnamefont {R.}~\bibnamefont {Silva}},
  \bibinfo {author} {\bibfnamefont {G.}~\bibnamefont {P{\"u}tz}}, \bibinfo
  {author} {\bibfnamefont {S.}~\bibnamefont {Stupar}}, \ and\ \bibinfo {author}
  {\bibfnamefont {R.}~\bibnamefont {Renner}},\ }\Eprint {http://arxiv.org/abs/1806.00491} {arXiv:1806.00491
  [quant-ph]} \BibitemShut {NoStop}%
\bibitem [{\citenamefont {Woods}\ and\ \citenamefont
  {Alhambra}(2019)}]{woods_continuous_2019}%
  \BibitemOpen
  \bibfield  {author} {\bibinfo {author} {\bibfnamefont {M.~P.}\ \bibnamefont
  {Woods}}\ and\ \bibinfo {author} {\bibfnamefont {{\'A}.~M.}\ \bibnamefont
  {Alhambra}},\ }\Eprint
  {http://arxiv.org/abs/1902.07725} {arXiv:1902.07725 [quant-ph]} \BibitemShut
  {NoStop}%
\bibitem [{\citenamefont {Faist}\ \emph {et~al.}(2019)\citenamefont {Faist},
  \citenamefont {Nezami}, \citenamefont {Albert}, \citenamefont {Salton},
  \citenamefont {Pastawski}, \citenamefont {Hayden},\ and\ \citenamefont
  {Preskill}}]{faist_continuous_2019}%
  \BibitemOpen
  \bibfield  {author} {\bibinfo {author} {\bibfnamefont {P.}~\bibnamefont
  {Faist}}, \bibinfo {author} {\bibfnamefont {S.}~\bibnamefont {Nezami}},
  \bibinfo {author} {\bibfnamefont {V.~V.}\ \bibnamefont {Albert}}, \bibinfo
  {author} {\bibfnamefont {G.}~\bibnamefont {Salton}}, \bibinfo {author}
  {\bibfnamefont {F.}~\bibnamefont {Pastawski}}, \bibinfo {author}
  {\bibfnamefont {P.}~\bibnamefont {Hayden}}, \ and\ \bibinfo {author}
  {\bibfnamefont {J.}~\bibnamefont {Preskill}},\ }\Eprint
  {http://arxiv.org/abs/1902.07714} {arXiv:1902.07714 [quant-ph]} \BibitemShut {NoStop}%
\end{thebibliography}%


\end{document}